\documentclass[11pt]{article}
\usepackage[letterpaper, margin=1.0in]{geometry}
\usepackage[utf8]{inputenc}
\usepackage{tabularx}
\usepackage{microtype}
\usepackage{hyperref}
\usepackage{amsmath,amssymb,amsfonts,amsthm}
\usepackage{mathtools}
\usepackage{enumerate}
\usepackage{color}
\usepackage{xspace}
\usepackage{soul}
\usepackage{tikz}
\usepackage{thm-restate}
\usetikzlibrary{shapes,shapes.callouts,shadows,arrows,backgrounds,%
matrix,patterns,arrows,decorations.pathmorphing,decorations.pathreplacing,%
positioning,fit,calc,decorations.text,arrows.meta,quotes%
}

\usepackage{cases}
\usepackage{multirow}
\usepackage{caption,subcaption}
\usepackage{todonotes}

\theoremstyle{plain}
  \newtheorem{theorem}{Theorem}
  \newtheorem{corollary}[theorem]{Corollary}
  \newtheorem{observation}[theorem]{Observation}
  \newtheorem{lemma}[theorem]{Lemma}
  \newtheorem{proposition}[theorem]{Proposition}
  \newtheorem{claim}{Claim}[theorem]

  \newtheorem*{theorem*}{Theorem}
  \newtheorem*{corollary*}{Corollary}
  \newtheorem*{lemma*}{Lemma}
  \newtheorem*{proposition*}{Proposition}
  \newtheorem*{claim*}{Claim}
  
\theoremstyle{definition}
  \newtheorem{definition}[theorem]{Definition}
  \newtheorem{example}[theorem]{Example}

  \newtheorem*{definition*}{Definition}
  \newtheorem*{example*}{Example}
  \newtheorem*{question*}{Question}
  \newtheorem*{conjecture*}{Conjecture}
  \newtheorem*{remark*}{Remark}

\newenvironment{claimproof}
{\noindent {\em Proof of claim:} }
{\hfill $\diamond$ \smallskip}

\DeclarePairedDelimiter\norm{\lVert}{\rVert}

\mathchardef\mhyphen="2D

\newcommand*{\eps}{\varepsilon}

\newcommand*{\cA}{\mathcal{A}}

\newcommand*{\cC}{\mathcal{C}}

\newcommand*{\cF}{\mathcal{F}}

\newcommand*{\cI}{\mathcal{I}}

\newcommand*{\cP}{\mathcal{P}}

\newcommand*{\cT}{\mathcal{T}}

\newcommand*{\cY}{\mathcal{Y}}

\newcommand*{\FF}{\mathbb{F}}
\newcommand*{\GG}{\mathbb{G}}

\newcommand*{\NN}{\mathbb{N}}

\newcommand*{\QQ}{\mathbb{Q}}
\newcommand*{\RR}{\mathbb{R}}

\newcommand*{\ZZ}{\mathbb{Z}}

\newcommand*{\rx}{\mathsf{x}}
\newcommand*{\ry}{\mathsf{y}}
\newcommand*{\rz}{\mathsf{z}}

\newcommand{\bigoh}{\mathcal{O}}
\newcommand{\bigohs}{\mathcal{O}^*}
\newcommand{\cc}[1]{{\mbox{\textnormal{\textsf{#1}}}}\xspace}  %
\newcommand{\SB}{\{\,}
\newcommand{\SM}{\;{|}\;}
\newcommand{\SE}{\,\}}

\renewcommand{\P}{\cc{P}}
\newcommand{\NP}{\cc{NP}}
\newcommand{\FPT}{\cc{FPT}}

\newcommand{\Weft}{{\cc{W}}}
\newcommand{\W}[1]{{\Weft}{{[#1]}}}

\newcommand*{\ETH}{\textsc{ETH}\xspace}

\DeclareMathOperator{\ord}{ord}
\DeclareMathOperator{\lsu}{lsu}
\DeclareMathOperator{\nxt}{next}

\newcommand*{\GAP}[1]{\ensuremath{\textsc{Gap}_{#1}}}

\newcommand*{\rep}[1]{\ensuremath{#1^{\sim}}}
\newcommand*{\res}[1]{\ensuremath{#1^+}}

\DeclareMathOperator{\cost}{cost}
\DeclareMathOperator{\mincost}{mincost}

\usepackage{boxedminipage}
\newcommand{\pbDef}[3]{%
  \noindent
  \begin{center}
  \begin{boxedminipage}{0.98 \columnwidth}
  {\sc #1}\\[5pt]
  \begin{tabular}{l p{0.70 \columnwidth}}
  {\sc Instance}: & #2\\
  {\sc Question}: & #3
  \end{tabular}
  \end{boxedminipage}
  \end{center}
}

\newcommand{\pbDefP}[4]{%
  \noindent
  \begin{center}
  \begin{boxedminipage}{0.98 \columnwidth}
  {\sc #1}\\[5pt]
  \begin{tabular}{l p{0.70 \columnwidth}}
  {\sc Instance}: & #2\\
  {\sc Parameter:} & #3\\
  {\sc Question}: & #4
  \end{tabular}
  \end{boxedminipage}
  \end{center}
}

\newcommand{\pbDefGap}[5]{%
  \noindent
  \begin{center}
  \begin{boxedminipage}{0.98 \columnwidth}
  {\sc #1}\\[5pt]
  \begin{tabular}{l p{0.80 \columnwidth}}
    {\sc Instance}: & #2\\
    {\sc Parameter:} & #3\\
    {\sc Goal}: & Distinguish between the following cases: \\
    {\quad (YES)} & #4 \\
    {\quad (NO)} & #5
  \end{tabular}
  \end{boxedminipage}
  \end{center}
}

\newcommand*{\csp}[1]{\ensuremath{\textsc{CSP}(#1)}}
\newcommand*{\mincsp}[1]{\ensuremath{\textsc{MinCSP}(#1)}}
\newcommand*{\lin}[2]{\textsc{\ensuremath{#1}-Lin(\ensuremath{#2})}}
\newcommand*{\linideal}[3]{\textsc{\ensuremath{#1}-Lin(\ensuremath{#2})-over-\ensuremath{#3}}}
\newcommand*{\minlin}[2]{\textsc{Min-\ensuremath{#1}-Lin(\ensuremath{#2})}}

\newcommand*{\mintwolin}{\textsc{Min-2-Lin}}
\newcommand*{\minthreelin}{\textsc{Min-3-Lin}}

\newcommand*{\minlinideal}[3]{\textsc{Min-\ensuremath{#1}-Lin(\ensuremath{#2})-over-\ensuremath{#3}}}

\DeclareMathOperator{\Ann}{Ann}

\newcommand*{\ringelement}[1]{\mathsf{#1}}

\def \rx {\ringelement{x}\xspace}
\def \ry {\ringelement{y}\xspace}

\mathchardef\mhyphen="2D
\newcommand*{\reals}{\mathbb{R}}
\newcommand*{\integers}{\mathbb{Z}}
\newcommand*{\rationals}{\mathbb{Q}}
\newcommand*{\naturals}{\mathbb{N}}
\newcommand*{\range}[2]{\ensuremath{\{#1,\dots,#2\}}}

\newcommand*{\GAPMLD}{\ensuremath{\textsc{Gap}_{\gamma}\mhyphen\textsc{MLD}_{p}}}

\newcommand{\BA}[1]{#1_G}
\newcommand{\con}{\textsf{eqn}}
\newcommand{\sep}{\textsf{sep}}
\newcommand{\ed}{\textsf{ed}}
\newcommand{\comp}[1]{\overline{#1}}
\newcommand{\EQ}{\Gamma}
\newcommand{\CCC}{\mathcal{C}}
\newcommand{\compI}[2]{#1_{#2}}

\newcommand{\clasn}[1]{\tau_{#1}}

\usepackage[noend]{algpseudocode}
\usepackage{algorithm,algorithmicx}

\algnewcommand\algorithmicinput{\textbf{Input:}}
\algnewcommand\INPUT{\item[\algorithmicinput]}

\algnewcommand\algorithmicoutput{\textbf{Output:}}
\algnewcommand\OUTPUT{\item[\algorithmicoutput]}

\title{Towards a Parameterized Approximation Dichotomy of MinCSP for Linear Equations over Finite Commutative Rings}

 \author{
   Konrad K. Dabrowski\thanks{Newcastle University, UK, \texttt{konrad.dabrowski@newcastle.ac.uk}} \and
   Peter Jonsson\thanks{Link{\"o}ping University, Sweden, \texttt{peter.jonsson@liu.se}} \and
   Sebastian Ordyniak\thanks{University of Leeds, UK, \texttt{sordyniak@gmail.com}} \and
   George Osipov\thanks{Link{\"o}ping University, Sweden, \texttt{george.osipov@pm.me}} \and
   Magnus Wahlstr{\"o}m\thanks{Royal Holloway, University of London, UK, \texttt{Magnus.Wahlstrom@rhul.ac.uk}}
 }


\date{}

\begin{document}

\maketitle


\begin{abstract}
We consider the $\minlin{r}{R}$ problem: for a system $S$ of 
length-$r$ linear equations
over a ring $R$, find 
$Z \subseteq S$ of minimum cardinality
such that $S-Z$ is satisfiable. The problem is \NP-hard and
UGC-hard to approximate within any constant even when
$r = |R| = 2$, so we focus on parameterized approximability with solution size as the parameter.
For a large class of infinite rings $R$ called {\em Euclidean domains}, 
Dabrowski, Jonsson, Ordyniak, Osipov, and Wahlström [SODA-2023] obtained
an FPT-algorithm for $\minlin{2}{R}$ based on Wahlström's [SODA-2017] LP-based approach.
In this paper, we consider $\minlin{r}{R}$ for finite commutative rings $R$,
initiating a line of research with the goal of
separating problems that are FPT-approximable within a constant from those that are not.
A major motivation is that our project is a promising step
(and a necessary consequence) for even more ambitious 
classification projects concerning finite-domain MinCSP and VCSP.

Dabrowski et al.'s algorithm is limited to domains, i.e. rings without zero divisors
($a \cdot b = 0$
  implies $a = 0$ or $b = 0$),
which are only fields among finite commutative rings.
  Handling zero divisors
  is a severe obstacle for the LP-based approach. 
In response, we introduce a class of finite commutative
rings ({\em Bergen rings}) and develop a constant-factor FPT-approximation algorithm for them, thus proving approximability for chain rings, principal ideal rings, and $\ZZ_m$ for all $m \geq 2$.
  We present a novel domain abstraction method that iteratively solves tighter and tighter relaxations
  by abstracting away parts of the domain 
  using carefully chosen ring element equivalences.
  In each level, equations are reformulated as
  graphs where solutions can be 
  identified with a particular collection of cuts.
  The abstractions may hide obstructions 
  so that they only become visible at later levels and this
  makes it difficult to find an optimal solution.
  To deal with this, we use a strategy based on shadow removal [Marx \& Razgon, STOC-2011] to compute 
  solutions that (1) are at most a constant factor
  away from the optimum and (2) allow us to reduce the domain for all variables simultaneously.
  We complement the algorithmic result with powerful lower bounds.
  For $r \geq 3$, we show that the problem 
  is not FPT-approximable within any constant (unless $\FPT =  \W{1}$) by
  exploiting connections to coding theory.
  We identify the class of
  \emph{non-Helly} rings
  for which $\minlin{2}{R}$ is not FPT-approximable because
  the binary equations over such rings can encode
  long equations over finite fields. 
  Under \ETH, we also rule out $(2-\eps)$-approximation 
  for every $\eps > 0$ for \emph{non-lineal} rings. 
  This class includes e.g. rings $\ZZ_{pq}$
  where $p$ and $q$ are distinct primes,
  and is tight for such rings.    
  Our work lays the foundation for a geometric approach based on the convex discrete analysis framework by Murota and others. This approach proved to be crucial for relating
various ring-theoretic concepts and constructing informative
examples when charting the approximability landscape.
\end{abstract}

%


\newpage

\vspace*{-1.5cm}
\tableofcontents

\section{Introduction}

We begin this introductory section by providing a ring-centric background and then broaden the perspective into CSP territory.
We continue by presenting our contributions, providing a technical overview of the results, and finally outline the structure of the paper.

\paragraph{Ring background.}
Systems of linear equations 
are ubiquitous~\cite{grcar2011ordinary} and
methods such as Gaussian elimination can solve such systems 
over various algebraic structures.
There is a plethora of applications for
classical infinite-domain rings such as the rationals ${\mathbb Q}$
and the integers ${\mathbb Z}$. 
Finite rings also have many important applications: coding theory is a striking example where finite fields (and other rings) are fundamental.
We do not attempt to survey this research field but recommend the textbook~\cite{Bini:Flamini:FCR}.
Even more exotic rings are important in coding theory
(see 
the textbook~\cite{Shi:etal:CaR}):
{\em chain rings} 
\cite{Dougherty:etal:ijict2010,Honold:Landjev:ejc2000} and {\em principal ideal rings}~\cite{Dougherty:etal:dcc2009,Kalachi:Kamche:amc2023} (including
${\mathbb Z}_m$ when $m$ is not a prime power~\cite{Blake:ic72}) are concrete examples.
Such rings have additionally been used in work on, for instance, 
Gröbner bases~\cite{Eder:Hofmann:jsc2021,Norton:Salagean:endm2001} and
other kinds of algebraic computation~\cite{Georgieva:sjc2016,Mikhailov:Nechaev:dma2004}.
All of the ring families mentioned above will be encountered later
in this paper.
Linear equation systems over finite rings are solvable
in polynomial time~\cite[Corollary 5.4.2]{Jeavons:etal:jacm97} (this result
was later rediscovered using a different approach~\cite{Arvind:Vijayaraghavan:stacs2005}) but these methods are not well suited for handling inconsistent systems.
We let
$\minlin{r}{R}$ denote the problem where the input is
a system of linear equations (with at most $r$ variables per equation) over a ring $R$ together
with an integer $k$,
and the question is if the system can be made consistent by removing at most $k$ equations.

Let us now take a look at $\minlin{r}{R}$.
The problem $\minlin{2}{R}$ is \NP-hard for all non-trivial
commutative finite rings by the VCSP dichotomy result~\cite{Kolmogorov:etal:sicomp2017}.
$\minlin{2}{R}$ seems hard to polynomial-time approximate, too:
it is conjectured that $\mintwolin$ for finite fields
is not polynomial-time approximable within any constant under
the Unique Games Conjecture (UGC); 
see Definition~3 in~\cite{khot2016candidate} and the following discussion.
Concerning the parameterized complexity of
$\minlin{r}{R}$, an early result shows
that
$\minlin{2}{\ZZ_2}$ is in \FPT\
while $\minlin{3}{\ZZ_2}$ is \W{1}-hard~\cite{crowston2013parameterized}.
This is generalized in~\cite{Dabrowski:etal:soda2023}:
$\minlin{3}{R}$ is \W{1}-hard for {\em all} non-trivial
rings and $\minlin{2}{R}$ is in \FPT\ when $R$ is a {\em Euclidean domain}.
While covering many important examples
(including $\rationals$ and $\integers)$,
some of the most well-known rings (such as $\ZZ_m$ with $m$ non-prime) are not Euclidean domains---a {\em domain} is a ring not
containing
{\em zero divisors}, i.e. 
$ab = 0$ implies $a = 0$ or $b = 0$.
Wedderburn's Little Theorem  (see~\cite{Herstein:amm61}) states that a finite commutative ring is either
a field or it contains zero divisors,
so domains are rare among finite commutative
rings.
Zero divisors often lead to \W{1}-hardness: It is shown in~\cite{Dabrowski:etal:soda2023} that
whenever a commutative ring $R$ is the direct sum of
nontrivial rings, then
 \minlin{2}{R} is \W{1}-hard. This implies (via the Chinese Remainder Theorem) that 
$\minlin{2}{\ZZ_m}$ is \W{1}-hard if $m$ contains two distinct prime factors.

\paragraph{CSP background.} 
The {\em constraint satisfaction problem} (CSP) is a framework for
combinatorial problems. A CSP$(\Gamma)$ has a domain $D$ and
a set of relations $\Gamma$ (the {\em constraint language}) 
that specifies allowed constraints.
The input is a constraint set
and the question is if there is an assignment of values from $D$ that satisfies them. 
Examples include fundamental problems such as $k$-SAT and $k$-{\sc Colouring}.
This framework is suited for complexity characterizations for all problems in a class of constraint languages---so-called {\em dichotomy theorems}. One example is the celebrated result by (independently) Bulatov~\cite{Bulatov:focs2017} and Zhuk~\cite{Zhuk:jacm2020}:
if the domain of $\Gamma$ is finite, then CSP$(\Gamma)$ is either in \P or \NP-complete.
Dichotomy results are important since they provide fundamental information about
the problem class at hand and 
push the algorithmic toolbox to its limit.

When a CSP$(\Gamma)$ instance is not satisfiable, one may look for an
assignment that violates as few constraints as possible---we let $\mincsp{\Gamma}$
denote this generalization of $\minlin{r}{R}$.
 The computational complexity of finite-domain  $\mincsp{\Gamma}$ is fully understood due to the \P/\NP-hardness dichotomy~\cite{Kolmogorov:etal:sicomp2017}.
Unfortunately, it is typically \NP-hard, so it is an obvious target for applying parameterized complexity. Indeed, taking the number of unsatisifed constraints $k$
as the parameter, many cases of $\mincsp{\Gamma}$ are in \FPT: a classical example is the \textsc{Almost 2-SAT} problem~\cite{Razgon:O'Sullivan:jcss2009}.
While there is an FPT/W[1]-hardness dichotomy for the two-element domain~\cite{Kim:etal:soda2023}, the general picture is unclear.
This motivates studying $\minlin{r}{R}$, since a complete dichotomy for 
$\mincsp{\Gamma}$ is impossible without a
dichotomy theorem for $\minlin{r}{R}$.

Providing a dichotomy theorem for $\minlin{r}{R}$ may be very difficult. 
The FPT dichotomy for  \textsc{MinCSP} in the Boolean case~\cite{Kim:etal:soda2023}, which is normally relatively easy to handle compared to other settings, arguably took a decade to finish, as it involved settling the FPT status of \textsc{$\ell$-Chain SAT} and \textsc{Coupled Min-Cut} by developing \emph{flow augmentation}; see discussion in~\cite{KimKPW21flow,Kim:etal:soda2023}. This result also relies on the structure of Boolean languages ({\em Post's lattice}), which appear tamer than the structure of languages based on equations over finite commutative rings. 
Constant-factor FPT-approximable Boolean MinCSPs were separated
from those that are not almost ten years earlier in ~\cite{Bonnet:etal:esa2016}.
This success story suggests that analysing approximability
of MinCSP may be advantageous.
Let $c \geq 1$ be a constant. A factor-$c$ {\em FPT-approximation
algorithm} for $\mincsp{\Gamma}$
takes an instance
$(I,k)$ and
either returns that there is no solution of size at most $k$ or returns that there is 
a solution of size at most $c \cdot k$. The running time of the algorithm is bounded by 
$f(k) \cdot \norm{I}^{O(1)}$ where $f: \naturals \rightarrow \naturals$
is some computable function. Thus, there is more
time to compute the solution (compared to polynomial-time
approximation) but the algorithm may output an oversized solution (unlike an exact FPT algorithm).
This combination has received a rapidly increasing interest.
Well-known algorithmic contributions include 
methods for designing FPT approximation algorithms
for problems that are in \FPT but with faster running 
times~\cite{Lokshtanov:etal:soda2021},
and FPT approximation algorithms for graph
parameters, see~\cite{Korhonen:focs2021}.
Notable hardness results are the ones showing that {\sc Clique}, {\sc Biclique} and {\sc Dominating Set} (and many related problems) admit no FPT approximation algorithm (within various factors)~\cite{Chalermsook:etal:sicomp2020,Lin:jacm2018}
and PCP-like approaches to FPT inapproximability~\cite{Guruswami:etal:stoc2024}.
More on FPT approximability can be found in
the surveys~\cite{Feldmann:etal:algorithms2020} and~\cite{Marx:tcj2008}. 

If $\mincsp{\Gamma}$ is not FPT-approximable within
a constant, then $\mincsp{\Gamma}$ is not in \FPT, and hardness results for FPT-approximability are stronger than hardness results for 
exact solvability in FPT time.
However, it may be easier to
study algorithms that achieve constant-factor approximations in FPT time
compared to exact FPT algorithms.
One reason is pointed out in~\cite{Bonnet:etal:esa2016}: 
the constant-factor FPT approximability of \textsc{MinCSP} is closed under slightly restricted {\em pp-definitions} and is, at least
in principle, amenable to powerful algebraic methods~\cite{Barto:etal:polymorphisms}. For exact solvability of \textsc{MinCSP}s, we have
a less powerful toolbox, i.e. constructions such as {\em proportional implementations}~\cite{Khanna:etal:sicomp2000,Kim:etal:soda2023} and an algebraic
theory based on {\em fractional polymorphisms}~\cite{Kolmogorov:etal:sicomp2017}.
Another reason is that constant-factor FPT-approximability
is preserved under certain decompositions of rings.
Suppose $R_i$, $i \in \{1,2\}$, are rings such that $\minlin{2}{R_i}$ is constant-factor FPT-approximable.
If $R=R_1 \oplus R_2$ is the direct sum of rings $R_1,R_2$, then
$\minlin{2}{R}$ is also constant-factor FPT-approximable, by Proposition~\ref{prop:sumapproximation}.
A similar result does not hold if we consider exact solvability in FPT time: MinCSP$(\ZZ_2 \oplus \ZZ_3)$
is \W{1}-hard ~\cite[Section~6.2]{Dabrowski:etal:soda2023} while both MinCSP$({\mathbb Z}_2)$ and
MinCSP$({\mathbb Z}_3)$ are in \FPT. Decomposition of rings is one
of our main themes; for instance, our FPT approximability algorithm  is based on decomposing a ring along a chain of ideals.

\paragraph{Summary of results.}
We initiate a project for proving dichotomy theorems
for $\minlin{r}{R}$ that separate (1)
problems that are in \FPT from those that are not, and
(2)
problems that are FPT-approximable within a constant from those that are not.
The tracks are not independent: hardness results from track (2) are directly applicable within track (1), and algorithmic results from track (2) may be refined into exact
algorithms. A prime example is the 
{\sc Multicut} problem where a factor-2 FPT-approximability
algorithm by Marx and Razgon~\cite{marx2009constant}
was later refined (by the same authors) into an exact algorithm~\cite{marx2014fixed}. A CSP-oriented example is
the {\sc Coupled Min Cut} problem---a notorious problem known as
a barrier to the study of \textsc{MinCSP}. It was early noted to have
a simple factor-2 FPT approximation algorithm (basically by combining \cite{Bonnet:etal:esa2016} and \cite{marx2014fixed})
but an exact algorithm was elusive until
the introduction of flow augmentation many years later~\cite{KimKPW21flow}.

We begin with some terminology. 
Let $(R,+,\cdot)$ be a finite commutative ring.
An {\em ideal} in $R$ is a subset $I \subseteq R$ such that
(1) $(I,+)$ is a subgroup of $(R,+)$ and 
  (2) for every $d \in R$ and every $x\in I$, 
  the product $dx$ is in $I$. Given elements $r_1,\dots,r_m \in R$, we let $(r_1,\dots,r_m)$ denote
the ideal generated by these elements.
The ring $R$ is a field if and only if $R$ only has the two trivial ideals $\{0\}$ and $R$, while
$R$ is a {\em chain ring} if its ideals are totally ordered under set inclusion.

For arbitrary $A \subseteq R$,
the {\em annihilator} of $A$ is the set $\Ann(A)=\{b \in R \; | \; ab = 0 \text{ for all } a \in A\}$, and this set
is always an ideal.
A ring is {\em lineal} if its annihilators are totally ordered under set inclusion~\cite{Marks:Mazurek:ijm2016}.
An annihilator of the form $\Ann(\{a\})$ is a {\em one-element} annihilator.
A \emph{coset} of an ideal $I$
is a set $a + I = \{ a + i \mid i \in I \}$
with $a \in R$.
A triple $C_1, C_2, C_3$ of one-element annihilator cosets of $R$
forms a \emph{tangle} if
$C_i \cap C_j \neq \emptyset$
for all $i,j \in \{1,2,3\}$ with $i \neq j$,
and $C_1 \cap C_2 \cap C_3 = \emptyset$.
We say that $R$ is \emph{Helly}
if it does not admit a tangle.
The term Helly has been used with varying meanings when studying different
algebraic structures: for instance,
Domokos and Szabó~\cite{Domokos:Szabo:jlms2011} use the full set of cosets in Abelian groups when defining Helly properties. We stress that we only consider the Helly property
on cosets of one-element annihilators in this paper.

A commutative ring is {\em local} if it has a unique
maximal ideal (an ideal $I \subsetneq R$ is maximal if there is
no ideal $J \subseteq R$ such that $I \subsetneq J \subsetneq R$).
Local rings are the building blocks of more complex rings. The {\em direct sum} of two rings 
 $R_1=(D_1;+_1,\cdot_1)$ and $R_2=(D_2;+_2,\cdot_2)$ 
 is denoted $R_1 \oplus R_2 = (R; +, \cdot)$.
 Its domain $R$ consists of the ordered pairs 
 $\{(d_1,d_2) \mid d_1 \in D_1, d_2 \in D_2\}$ 
 and the operations are defined coordinate-wise:
 $(d_1,d_2)+(d'_1,d'_2)=(d_1+_1 d'_1,d_2+_2 d'_2)$ and
 $(d_1,d_2) \cdot (d'_1,d'_2)=(d_1 \cdot_1 d'_1,d_2 \cdot_2 d'_2)$.

\begin{theorem}[Theorem 3.1.4 in \cite{Bini:Flamini:FCR}] \label{thm:sumoflocalrings}
Every finite commutative ring $R$ is isomorphic to a
direct sum $\bigoplus _{i=1}^{n} R_{i}$, where each $R_i$ is a commutative
local ring.
\end{theorem}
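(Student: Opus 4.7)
The plan is to derive this classical decomposition via the Chinese Remainder Theorem applied to a system of pairwise comaximal ideals built from the maximal ideals of $R$.

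First, I would observe that, being finite, $R$ has only finitely many ideals, hence finitely many maximal ideals $\mm_1, \ldots, \mm_n$, and the descending chain condition on ideals is trivially satisfied. Let $J = \bigcap_{i=1}^n \mm_i$ denote the Jacobson radical. The descending chain $J \supseteq J^2 \supseteq \cdots$ stabilizes in finitely many steps at some $J^k$; a standard Artinian argument (pick a minimal ideal $I \subseteq R$ with $J^k I \neq 0$, observe $J(J^k I) = J^k I$, and apply Nakayama to reach a contradiction) shows $J^k = 0$. Alternatively, one may simply note that every element of $J$ is nilpotent in a finite commutative ring and combine this with finite generation of $J$ to bound the nilpotency index uniformly.

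Second, distinct maximal ideals are comaximal, $\mm_i + \mm_j = R$ for $i \neq j$. A routine induction upgrades this to $\mm_i^k + \mm_j^k = R$, and comaximality in a commutative ring entails $\mm_i^k \cap \mm_j^k = \mm_i^k \mm_j^k$. Iterating yields
$$\bigcap_{i=1}^n \mm_i^k \;=\; \prod_{i=1}^n \mm_i^k \;\subseteq\; \left(\bigcap_{i=1}^n \mm_i\right)^k \;=\; J^k \;=\; 0.$$
Applying the Chinese Remainder Theorem to the pairwise comaximal ideals $\mm_1^k, \ldots, \mm_n^k$ then provides an isomorphism
$$R \;\cong\; R \Big/ \bigcap_{i=1}^n \mm_i^k \;\cong\; \bigoplus_{i=1}^n R / \mm_i^k.$$
Each summand $R_i := R/\mm_i^k$ is local: its ideals are in order-preserving bijection with ideals of $R$ containing $\mm_i^k$, and any such maximal ideal must be maximal in $R$ and therefore equal to $\mm_i$, so the unique maximal ideal of $R_i$ is $\mm_i/\mm_i^k$.

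The main obstacle will be the nilpotency step $J^k = 0$: this is the only place where the finiteness of $R$ is genuinely exploited beyond a counting argument, and it underpins both the coprimality of the $\mm_i^k$ and the vanishing of their intersection. Everything else is routine bookkeeping with comaximal ideals and the Chinese Remainder isomorphism.
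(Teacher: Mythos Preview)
Your proof is correct and follows the standard Artinian decomposition argument. However, note that the paper does not actually prove this theorem: it is stated as a citation (Theorem~3.1.4 in \cite{Bini:Flamini:FCR}) and used as a black box throughout. There is therefore no ``paper's own proof'' to compare against; your argument is essentially the classical one that would be found in the cited reference or any commutative algebra text treating Artinian rings.

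One minor remark on presentation: the containment $\prod_{i=1}^n \mm_i^k \subseteq \bigl(\bigcap_{i=1}^n \mm_i\bigr)^k$ as written deserves a word of justification, since a priori one only has $\prod_i \mm_i \subseteq \bigcap_i \mm_i$. The cleanest route is to note that ideal multiplication is commutative, so $\prod_i \mm_i^k = \bigl(\prod_i \mm_i\bigr)^k \subseteq J^k$; you clearly have this in mind, but spelling it out would tighten the exposition.
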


Let us now summarize our results.
We introduce {\em Bergen rings} (named after the city where they were invented), which form the basis for our algorithmic results.
These rings are defined via
a particular decomposition along a chain of ideals
such that levels are related in a structure-preserving way.
The definition alone provides limited guidance
for constructing MinCSP algorithms;
our algorithmic results can be viewed as an investigation
into exploiting structural decompositions for MinCSP algorithms.
The formal definition
is somewhat involved so we defer it to Section~\ref{sec:bergenrings};
an informal description will be given shortly.
We use the following chain of ring inclusions when presenting
our results.
\[{\rm field} \subsetneq {\rm chain} \subsetneq {\rm Bergen} \subseteq {\rm lineal} \subsetneq {\rm Helly}.\]
The fields are a strict subset of the chain rings since fields only contain
two ideals.
The other inclusions follow from Lemma~\ref{lem:chain-is-bergen}, Proposition~\ref{prop:bergenislineal} and 
Proposition~\ref{prop:lineal-is-helly}, respectively.
We have concrete rings that separate the classes of chain and Bergen rings
($\ZZ_2[\rx,\ry]/(\rx^2,\rx\ry,\ry^2)$ from Section~\ref{sec:bergenrings}) and the classes of lineal and Helly
rings ($\ZZ_2[\rx,\ry] / (\rx^3,\rx\ry,\ry^3)$ from Section~\ref{sec:annihilators}), but it is currently open whether
every lineal ring is Bergen or not.
There are large interesting classes of rings that are Bergen but not chain rings: one example is given in Lemma~\ref{lem:simplebergenexample} and more elaborate examples are presented in Section~\ref{sec:geometry}.
Finally,
lineal rings are local (Proposition~\ref{prop:lineal-is-local}) and local rings need not be 
 Helly
(as witnessed by $\ZZ_2[\rx,\ry] / (\rx^2,\ry^2)$ from Section~\ref{sec:annihilators}).

Arbitrarily pick a finite, commutative, non-trivial ring $R$.
We prove the following results. 

\medskip

\noindent
\fbox{
\parbox{0.95\textwidth}{

\begin{itemize}
\item
$\minlin{r}{R}$ is not FPT-approximable within any constant when $r \geq 3$ unless \FPT =  \W{1}
(Corollary~\ref{cor:gap-min3lin(ring)-is-hard}).

\item
If $R$ is not local,
then $R$ is the direct sum of local rings
$R_1,\dots,R_n$ (Theorem~\ref{thm:sumoflocalrings}) and
$\minlin{2}{R}$ is FPT-approximable within a constant if and only if $\minlin{2}{R_1},\dots,\allowbreak\minlin{2}{R_n}$ are FPT-approximable within a constant (Proposition~\ref{prop:sumapproximation}).

\item
If $R$ is local and not Helly, then
$\minlin{2}{R}$ is not FPT-approximable within any constant unless \FPT = \W{1} (Theorem~\ref{thm:non-helly-hard}).

\item
If
$R$ is local, Helly and not lineal, then
$\minlin{2}{R}$ 
is not FPT-approximable within $2-\eps$ for any $\eps > 0$
unless the {\em Exponential-Time Hypothesis} (ETH) is false~(Theorem~\ref{thm:incomparable-annihilators}).

\item
If $R$ is local, Helly and lineal, then
$\minlin{2}{R}$ is FPT-approximable within some constant if $R$ is Bergen (Theorem~\ref{thm:bergen-algo}).\\
\end{itemize}
\vspace*{-5mm}
}
}

\medskip
\noindent
The FPT-inapproximability for non-Helly rings is particularly
surprising, since one might naively have expected the language of
binary linear equations over $R$ to have a \emph{majority
polymorphism},
(see~\cite{Barto:etal:polymorphisms}), 
which would exclude all
constructions we are aware of 
for
FPT-inapproximability beyond constant factors;
see Theorem~48 and Conjecture~50 of~\cite{Barto:etal:polymorphisms}.

Every ring $\ZZ_{p^q}$ with $p$ prime is a chain ring and
the Chinese Remainder Theorem immediately implies that $\minlin{2}{{\mathbb Z}_m}$ is
FPT-approximable
within a constant for arbitrary $m \in {\mathbb N}$.
A {\em principal ideal ring} is a ring
where every ideal is generated by a single element.
Our results imply that $\minlin{2}{R}$ is FPT-approximable
within a constant whenever $R$ is a finite and commutative principal ideal ring
since such rings are direct sums of chain rings (see \cite{Hungerford:pjm68} or
\cite[Proposition 2.7]{Dougherty:etal:dcc2009}).

We continue by describing the technical contributions behind the results
outlined above.

\paragraph{FPT approximation algorithm.}
Let $R$ be a finite commutative ring.
By iterative compression
and branching, 
we can reduce $\minlin{2}{R}$ to
a version with \emph{simple} instances:
all binary equations are of the form
$x = r \cdot y$ with $r \in R$ and
all unary equations are crisp (i.e. undeletable) and of the form
$x = r$ for some $r \in R$.
Observe that the binary equations are homogeneous,
so the all-zero assignment satisfies them,
with the only obstacle being the unary equations.

We start by describing a constant-approximation
algorithm for $\minlin{2}{R}$ when $R$ is a field because 
it nicely illustrates our approach.\footnote{We remark that $\minlin{2}{\FF}$ for finite fields $\FF$
can be solved in FPT time~\cite{ChitnisCHPP16contract,Dabrowski:etal:soda2023,iwata2016half}.}
Let $(S, k)$ be a simple instance and
construct a graph $G = G(S)$ with vertices $x_i$ 
for every $x \in V(G)$ and $i \in R \setminus \{0\}$
and special vertices $s$ and $t$.
For unary equations in $S$,
add undeletable edges $sx_r$ if the equation
is the form $x = r$ for some $r \neq 0$,
and undeletable edges $x_rt$ for every $r \in R \setminus \{0\}$
if the equation is of the form $x = 0$ 
(to make an edge undeletable, 
it is sufficient to add many parallel copies).
For every binary equation $e$ of the form $x = r \cdot y$,
construct a bundle of edges $B_e = \{x_{ri} y_{i} : i \in R, ri \neq 0\}$
and add these edges to $G$.
This completes the construction.
There is a useful correspondence between certain
$st$-cuts in $G(S)$ and assignments to $S$.
Formally, for a set of vertices $X \subseteq V(G)$,
let $\delta(X)$ be the set of edges with exactly one
endpoint in $X$, i.e. $\delta(X)$ is the cut
separating $X$ and $\overline{X}$.
If $U \subseteq V(G)$ is such that
$s \in U$, $t \notin U$ and there is at most one 
vertex $x_i \in U$ for any $x \in V(S)$,
we say that $\delta(U)$ is a \emph{conformal $st$-cut}.
Conformal cuts in $G$ correspond to assignments to $S$:
to construct an assignment for a conformal cut $U$,
set $x$ to $i$ if $x_i \in U$ for some $i \in R \setminus \{0\}$,
and set $x$ to $0$ otherwise;
the reverse direction of the correspondence is obvious.

Let $b(U)$ be the number of edge bundles $B_e$
intersected by $\delta(U)$.
Then $R$ being a field implies that
$b(U)$ is exactly the cost of the assignment corresponding to $U$.
More specifically, consider an equation $e = (x = r \cdot y)$,
suppose $x_i, y_j \in U$ and $B_e \cap \delta(U) = \emptyset$.
Then $i = rj$ because the edge $x_{rj} y_{j}$ is uncut
and both its endpoints are reachable from $s$, 
so the assignment corresponding to $U$ satisfies $e$.
This guarantee is represented by $B_e$ being a matching:
think of an edge $x_i y_j$ as 
an encoding of the constraint 
``$x = i$ if and only if $y = j$''.
To complete the algorithm for fields, it suffices
to compute a conformal cut $U$ with $b(U) \leq k$.
Observe that $b(U) \leq k$ implies $\delta(U) \leq (|R|-1)k$
because the bundles $B_e$ are of size $|R|-1$ and disjoint.
One obtains an $(|R|-1)$-factor 
approximation in single-exponential FPT time
using a branching algorithm in the style of
\textsc{Digraph Pair Cut}~\cite{kratsch2020representative}.
If $R$ is not a field, then
the bundles $B_e$ stop being matchings:
simply consider 
an equation $e = (x = 2y)$ over $\ZZ_8$
(second graph from the left in Figure~\ref{fig:z8-matchings}). 
Note that both $y_2$ and $y_6$ are adjacent to $x_4$ in $B_e$.
Moreover, if $x=4$ then we require either $y=2$ or $y=6$, hence the 
dependencies cannot be captured by binary edges (even if one were to 
use directed graphs).
Thus, we lose the connection between
the number of bundles intersected by a conformal cut
and the cost of the corresponding assignment.

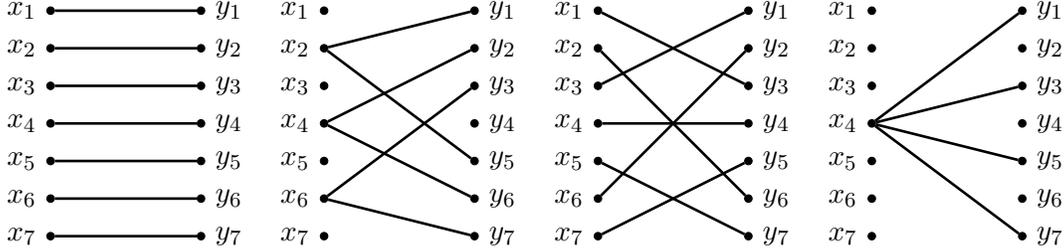
\begin{figure}
 \centering
 \begin{tikzpicture}[node distance=0.5cm]
    \tikzstyle{no}=[draw,circle, inner sep=1pt, fill]
    \tikzstyle{ed}=[draw,color=black, line width=1pt]

    \draw
    node[no, label=left:$x_1$] (l1) {}
    node[no, below of=l1, label=left:$x_2$] (l2) {}
    node[no, below of=l2, label=left:$x_3$] (l3) {}
    node[no, below of=l3, label=left:$x_4$] (l4) {}
    node[no, below of=l4, label=left:$x_5$] (l5) {}
    node[no, below of=l5, label=left:$x_6$] (l6) {}
    node[no, below of=l6, label=left:$x_7$] (l7) {}
    ;
    \draw
    (l1) +(2cm,0cm) node[no, label=right:$y_1$] (r1) {}
    node[no, below of=r1, label=right:$y_2$] (r2) {}
    node[no, below of=r2, label=right:$y_3$] (r3) {}
    node[no, below of=r3, label=right:$y_4$] (r4) {}
    node[no, below of=r4, label=right:$y_5$] (r5) {}
    node[no, below of=r5, label=right:$y_6$] (r6) {}
    node[no, below of=r6, label=right:$y_7$] (r7) {}
    ;

    \draw
    (l1) edge[ed] (r1)
    (l2) edge[ed] (r2)
    (l3) edge[ed] (r3)
    (l4) edge[ed] (r4)
    (l5) edge[ed] (r5)
    (l6) edge[ed] (r6)
    (l7) edge[ed] (r7)
    ;
\end{tikzpicture}
 \begin{tikzpicture}[node distance=0.5cm]
    \tikzstyle{no}=[draw,circle, inner sep=1pt, fill]
    \tikzstyle{ed}=[draw,color=black, line width=1pt]

    \draw
    node[no, label=left:$x_1$] (l1) {}
    node[no, below of=l1, label=left:$x_2$] (l2) {}
    node[no, below of=l2, label=left:$x_3$] (l3) {}
    node[no, below of=l3, label=left:$x_4$] (l4) {}
    node[no, below of=l4, label=left:$x_5$] (l5) {}
    node[no, below of=l5, label=left:$x_6$] (l6) {}
    node[no, below of=l6, label=left:$x_7$] (l7) {}
    ;
    \draw
    (l1) +(2cm,0cm) node[no, label=right:$y_1$] (r1) {}
    node[no, below of=r1, label=right:$y_2$] (r2) {}
    node[no, below of=r2, label=right:$y_3$] (r3) {}
    node[no, below of=r3, label=right:$y_4$] (r4) {}
    node[no, below of=r4, label=right:$y_5$] (r5) {}
    node[no, below of=r5, label=right:$y_6$] (r6) {}
    node[no, below of=r6, label=right:$y_7$] (r7) {}
    ;

    \draw
    (l2) edge[ed] (r1)
    (l4) edge[ed] (r2)
    (l6) edge[ed] (r3)
    (l2) edge[ed] (r5)
    (l4) edge[ed] (r6)
    (l6) edge[ed] (r7)
    ;
\end{tikzpicture}
 \begin{tikzpicture}[node distance=0.5cm]
    \tikzstyle{no}=[draw,circle, inner sep=1pt, fill]
    \tikzstyle{ed}=[draw,color=black, line width=1pt]

    \draw
    node[no, label=left:$x_1$] (l1) {}
    node[no, below of=l1, label=left:$x_2$] (l2) {}
    node[no, below of=l2, label=left:$x_3$] (l3) {}
    node[no, below of=l3, label=left:$x_4$] (l4) {}
    node[no, below of=l4, label=left:$x_5$] (l5) {}
    node[no, below of=l5, label=left:$x_6$] (l6) {}
    node[no, below of=l6, label=left:$x_7$] (l7) {}
    ;
    \draw
    (l1) +(2cm,0cm) node[no, label=right:$y_1$] (r1) {}
    node[no, below of=r1, label=right:$y_2$] (r2) {}
    node[no, below of=r2, label=right:$y_3$] (r3) {}
    node[no, below of=r3, label=right:$y_4$] (r4) {}
    node[no, below of=r4, label=right:$y_5$] (r5) {}
    node[no, below of=r5, label=right:$y_6$] (r6) {}
    node[no, below of=r6, label=right:$y_7$] (r7) {}
    ;

    \draw
    (l3) edge[ed] (r1)
    (l6) edge[ed] (r2)
    (l1) edge[ed] (r3)
    (l4) edge[ed] (r4)
    (l7) edge[ed] (r5)
    (l2) edge[ed] (r6)
    (l5) edge[ed] (r7)
    ;
\end{tikzpicture}
 \begin{tikzpicture}[node distance=0.5cm]
    \tikzstyle{no}=[draw,circle, inner sep=1pt, fill]
    \tikzstyle{ed}=[draw,color=black, line width=1pt]

    \draw
    node[no, label=left:$x_1$] (l1) {}
    node[no, below of=l1, label=left:$x_2$] (l2) {}
    node[no, below of=l2, label=left:$x_3$] (l3) {}
    node[no, below of=l3, label=left:$x_4$] (l4) {}
    node[no, below of=l4, label=left:$x_5$] (l5) {}
    node[no, below of=l5, label=left:$x_6$] (l6) {}
    node[no, below of=l6, label=left:$x_7$] (l7) {}
    ;
    \draw
    (l1) +(2cm,0cm) node[no, label=right:$y_1$] (r1) {}
    node[no, below of=r1, label=right:$y_2$] (r2) {}
    node[no, below of=r2, label=right:$y_3$] (r3) {}
    node[no, below of=r3, label=right:$y_4$] (r4) {}
    node[no, below of=r4, label=right:$y_5$] (r5) {}
    node[no, below of=r5, label=right:$y_6$] (r6) {}
    node[no, below of=r6, label=right:$y_7$] (r7) {}
    ;

    \draw
    (l4) edge[ed] (r1)
    (l4) edge[ed] (r3)
    (l4) edge[ed] (r5)
    (l4) edge[ed] (r7)
    ;
\end{tikzpicture}
 \caption{Graphs $B_e$ corresponding to equations
 $x = 1 \cdot y$, $x = 2 \cdot y$, $x = 3 \cdot y$ and
 $x = 4 \cdot y$.}
 \label{fig:z8-matchings}
\end{figure}

One idea for solving $\minlin{2}{R}$
for non-fields $R$ is to retain the 
``if and only if'' semantics of an edge in the associated graph
by matching \emph{sets} of values rather than individual values.
More specifically, let us partition $R \setminus \{0\}$ into
classes $C_1, \dots, C_\ell$ and build $G(S)$
with vertices $s$, $t$ and $x_{C_1}, \dots, x_{C_\ell}$ for all $x \in V(S)$.
One can also see this partition as an equivalence relation $\equiv$.
To keep the matching structure for an equation $e$,
we want an edge $x_{C_i} y_{C_j}$ in $B_e$ 
to mean ``$x \in C_i$ if and only if $y \in C_j$''.
For fields, we used the most refined partition
(every nonzero element is a class of its own).
For other rings, 
a coarser partition is needed:
e.g. if $R = \ZZ_8$, then 
$2$ and $6$ have to be in the same class (think of
$x = 2y$).
However, simply taking a coarser partition is not sufficient:
indeed, the coarsest partition 
(putting all non-zero elements in the same class)
has the required structure,
but it only distinguishes between zero and nonzero values, 
and is not very useful algorithmically.
Thus, we need to strike a balance
by choosing a partition
coarse enough to guarantee
the matching structure and
refined enough to preserve
useful information about the equations.

The quest for the right definition of 
a useful partition leads to one of our main algorithmic ideas.
First, we generalize $\minlin{2}{R}$ into $\minlinideal{2}{R}{I}$ where
$I \subseteq R$ is an ideal and we only allow variables to take values from $I$.
We illustrate by $\ZZ_8$.
Consider the chain of ideals
$(1) \supsetneq (2) \supsetneq (4) \supsetneq (0)$ 
in $\ZZ_8$ where $(1) = \ZZ_8$ and $(0) = \{0\}$.
We can partition $(1) \setminus \{0\}$, 
$(2) \setminus \{0\}$ and $(4) \setminus \{0\}$
in a way that allows us to perform a chain of reductions
from $\minlinideal{2}{R}{(1)}$ to
$\minlinideal{2}{R}{(2)}$ to
$\minlinideal{2}{R}{(4)}$ down to
$\minlinideal{2}{R}{(0)}$.
The catch is that in each reduction step
we increase the cost, but only by a constant factor.
Thus, if we start with a set of equations $S$ 
that admits an assignment $V(S) \to (1)$ of cost at most $k$,
then we end up with a set of equations $S'$ 
that admits an assignment $V(S') \to (0)$
of cost at most $c \cdot k$.
Observe that checking the latter condition
only requires polynomial time because $0$ is 
the only available value.
The steps never reduce the cost,
so if every assignment $V(S) \to (1)$
has cost greater than $c \cdot k$,
then every assignment $V(S') \to (0)$
has cost greater than $c \cdot k$.
Thus, our algorithm is a $c$-approximation.

We now take a closer look at the domain-shrinking step,
i.e. the reduction from $\minlin{2}{R}$ over ideal $I$ to
the problem over ideal $I' \subsetneq I$.
We require a partition of $I \setminus \{0\}$
to respect the matching property and 
an additional absorption property:
if values $i \equiv j$, then $i - j \in I'$.
Such a partition for ideal $(1)$ of $\ZZ_8$
can have classes $\{1,3,5,7\}$, $\{2,6\}$ and $\{4\}$.
For intuition, it is instructive 
to view values in base 2,
i.e. $\{001_2, 011_2, 101_2, 111_2\}$,
$\{010_2, 110_2\}$ and $\{100_2\}$---two values are in the same class
if and only if they have the same number
of trailing zeros.\footnote{A partition can be defined
in a similar way for more general rings
$\ZZ_{p^q}$ where $p > 2$, but with an extra
condition that two elements in the same class 
also have the same least significant nonzero digit.}
Thus, the step from $(1)$ to $(2)$ can be thought of
as aligning the number of trailing zeros of the values
assigned to the variables.
Having an assignment $\tau$ of classes to the variables,
we can concentrate on
assignments that agree with $\tau$,
and use the absorption property to rewrite
equations over $(1)$ into equations over $(2)$.
For example, if we have an equation 
$x = 2y$ over $(1)$ and the class assignment is
$\tau(x) = \{2,6\}$ and $\tau(y) = \{1,3,5,7\}$,
then we rewrite it as $x'+2 = 2(y'+3)$ over $(2)$,
where the $2$ on the left hand side and 
the $3$ on the right hand side are (arbitrarily chosen) 
representatives from the classes 
$\{2,6\}$ and $\{1,3,5,7\}$, respectively.
Note that this represents a reversible transformation of the solution space
(by mapping $x'=x-2$ and $y'=y-3)$, so we do not gain or lose any solutions.
Moreover, for any solution to $x=2y$ that respects $\tau$
we have $x \equiv 2$ and $y \equiv 3$, hence
$x', y' \in (2)$ by the absorption property.
Thus, $x=2y$ is translated to a new (non-homogeneous) equation
in $x'$ and $y'$ over $(2)$.

Based on this,
we call a finite commutative ring $R$
\emph{Bergen} if it contains a chain of ideals 
$R = I_0 \supset I_1 \supset \dots \supset I_\ell = \{0\}$
such that for every $0 \leq i < \ell$,
$I_i \setminus \{0\}$ admits a partition
with the matching property and 
absorption property into $I_{i+1}$.
Unfortunately, choosing the chain for 
a given ring $R$ is nontrivial.
For chain rings, the chain of ideals works (Lemma~\ref{lem:chain-is-bergen})
but there are Bergen rings that are not chain rings (Lemma~\ref{lem:simplebergenexample}).
Bergen rings are lineal 
(Proposition~\ref{prop:bergenislineal}),
which might suggest using the chain of
annihilators, but examples
in~\ref{ssec:knt-347} show that 
this does not always work.

It remains to show how to use a partition of $I \setminus \{0\}$
that has the matching property and
the absorption property into $I' \subsetneq I$.
Recall that our goal is to reduce from $\minlinideal{2}{R}{I}$
to $\minlinideal{2}{R}{I'}$.
Let $\EQ^{\neq 0}$ be the set of classes in the partition.
Let $(S, k)$ be a simple instance and $G = G(S)$
be the corresponding graph.
Now we have a one-to-one correspondence
between class assignments $V(S) \to \EQ^{\neq 0}$
and conformal cuts in $G$.
And again, if $(S, k)$ is a yes-instance,
then there is a conformal cut $U$ such that $b(U) \leq k$.
However, now we cannot use a simple branching algorithm
in the style of \textsc{Digraph Pair Cut} since some conformal cuts of low cost
correspond to class assignments of high cost.
For an extreme example over $\ZZ_8$,
consider a system of equations
\begin{equation}
  \label{eq:z8-example}
  \tag{$\triangle$}
  S = \{x = 4, 2a = x, 3a = b, 3b = c, 3c = a\}.
\end{equation}
An equation of the form $x = 2y$ over $\ZZ_8$
implies that a matching partition of $(1) \setminus \{0\}$ has to put
$2$ and $6$ in one class $A$ and match it
with the class $B$ containing $4$. 
By construction of $G(S)$,
the vertex $t$ is isolated and 
the connected component $U$ of $s$ contains
vertices $s$, $x_B$, $a_A$, $b_A$ and $c_A$.
Hence, $\delta(U)$ is empty and conformal,
but the cost of $S$ is at least one because
the system is inconsistent (follows from
considering both possible values for $a$, which are $2$ and $6$).
In fact, the cost is exactly one because
it is sufficient to delete $2a = x$.
By making vertex-disjoint copies of $S$,
we can obtain instances of arbitrarily high cost
while the corresponding graphs
admit conformal cuts of zero size.

We also mention that the problem is not amenable
to LP-branching based on biased graphs~\cite{wahlstrom2017lp}
or the more general important balanced subgraphs
which was the main tool used in~\cite{Dabrowski:etal:soda2023}.
These methods can deal with cycle obstructions
in graphs that have the so-called {\em theta property}:
if there is a chord in a cyclic obstruction,
then at least one of the newly created cycles
must also be an obstruction.
However, our obstructions do not have this property.
Consider an instance 
\[ T = \{a = 1, a = 3b, b = 3c, c = 3d, d = 3e, e = 3a, 2a = x, x = 2c\} \]
of $\minlinideal{2}{\ZZ_8}{(1)}$. 
Note that all units in $\ZZ_8$ are in the same class, and call this class $A$.
Let $B$ be the class containing $2$.
Then $G(T)$ contains a cycle on vertices $a_A$, $b_A$, $c_A$, $d_A$, $e_A$
which is an obstruction since it corresponds to 
an unsatisfiable set of equations
$\{a = 1, a = 3b, b = 3c, c = 3d, d = 3e \}$.
Moreover, there is a chord connecting $a_A$ and $c_A$ through $x_B$,
but the subinstances
$\{a = 1, a = 3b, b = 3c, 2a = x, x = 2c\}$ and
$\{a = 1, c = 3d, d = 3e, e = 3a, 2a = x, x = 2c\}$
corresponding to the newly created cycles are satisfiable.
For similar reasons, the powerful method of flow augmentation 
does not appear to directly solve the problem. 

Instead, we use a sophisticated greedy method based on shadow removal.
Shadow removal, introduced by Marx and Razgon~\cite{marx2014fixed}
and refined in~\cite{chitnis2015directed}, provides a way
of exploring small transversals to (often implicit)
families of connected subgraphs.
Let $G$ be a graph and $\cF$ be a family of
connected subgraphs in $G$ that contain vertex $s$.
We are interested in transversals of $\cF$ of size at most $k$,
i.e. sets of at most $k$ edges that intersect every subgraph in $\cF$. 
For instance, if we are interested in $st$-cuts, 
then $\cF$ could be the set of all $st$-walks.
In our case, the conformal cut $\delta_{\sf opt} = \delta(U_{\sf opt})$ 
corresponding to an optimal class assignment is 
a transversal for all $st$-walks and all walks that
contains $s$, $x_A$ and $x_B$ for the same vertex $x$
and different classes $A$ and $B$.
More interestingly, $\delta_{\sf opt}$ is also a transversal
for subgraphs induced by $U \subseteq V(G)$ 
such that $\delta(U)$ is a conformal cut,
but the class assignment corresponding to $U$
cannot be extended into a satisfying assignment.
The system of equations in~\eqref{eq:z8-example}
and the connected component of $s$ described below it
provide an example of such a conformal cut.
Intuitively, the optimal conformal cut $\delta_{\sf opt}$ 
takes into account unsatisfiability occurring ``down the line'',
i.e. the obstructions that are not visible over the current ideal,
but will become apparent over the next ideal.
Formally, shadow removal with additional branching steps 
allows us to compute a transversal to the same 
family of subgraphs as those intersected by $\delta_{\sf opt}$,
which are precisely all $sv$-walks with $v \notin U_{\sf opt}$.
However, since we dismantle the bundles of edges
corresponding to an equation and treat them individually,
the size of the transversal we compute is 
$|\delta_{\sf opt}| \leq b \cdot k$,
where $b$ is the maximum bundle size.
When translating back into equations, this means that 
we may have to delete up to $b \cdot k$ equations, 
while an optimal assignment can delete $k$.
This is the reason the cost increases in each step of our algorithm.

Finally, we note that, as with the example of equations over finite
fields earlier, if we wanted to present an algorithm purely for
integer rings $\ZZ_m$, then many steps could be simplified.
As noted, we can focus on the case where $m=p^q$ for $p$ prime
and $q \geq 2$.
We
use the equivalence relation $\equiv$ described above, where $a \equiv b$ if $a$ and $b$,
written in base $p$, have the same number of trailing zeroes and the
same least significant non-zero digit. Then, instead of progressing to
the problem \textsc{Min-2-Lin$(\ZZ_{p^q})$-over-$(p)$}, thanks to the
regularity of the ring structure, we can recurse directly into the
problem \textsc{Min-2-Lin$(\ZZ_{p^{q-1}})$}. This may generalize
slightly, perhaps to \textsc{Min-2-Lin} over chain rings, 
but for more complex lineal rings the strategy will fail.
Similarly, consider the ring $R=\FF[\rx_1,\ldots,\rx_n]/((\rx_1,\ldots,\rx_n)^{d+1})$
of bounded-degree multivariate polynomials over a finite field $\FF$.
Define an equivalence relation where $p \equiv q$ for $p, q \in R$
if $p$ and $q$ have the same coefficients over their minimum-degree monomials.
Then we can reuse the same partition $\equiv$ of $R\setminus \{0\}$ 
over all ideals $I_i$ in the algorithm, which again simplifies some steps.
Unfortunately, for sufficiently complex Bergen rings (specifically, when
$R$ is not \emph{fully convex}; see Section~\ref{sec:geometry}),
this strategy will again fall short. Thus,
Bergen rings reflect the structures arising in more complex lineal
rings. 


\paragraph{Lower bounds.} We complement the approximation algorithm
for Bergen rings with a number of hardness results.
We prove the following for finite, commutative, non-trivial
rings $R$:

\begin{enumerate}
\itemsep0em
\item
$\minlin{r}{R}$ is \W{1}-hard to FPT-approximate within any constant when $r \geq 3$,

\item
$\minlin{2}{R}$ is \W{1}-hard to FPT-approximate within any constant if
$R$ is not Helly, and

\item
$\minlin{2}{R}$ is \ETH-hard to FPT-approximate within $2-\eps$
$(\eps > 0)$ if $R$ is not lineal. 

\end{enumerate}

The result for $\minlin{r}{R}$, $r \geq 3$, is based on
the additive structure of $R$ only and is thus
applicable to $\minlin{r}{G}$ where $G$ is an non-trivial Abelian group.
On the other hand, results for $\minlin{2}{R}$ must exploit
the multiplicative structure since
$\minlin{2}{G}$ is in \FPT whenever
$G$ is a finite Abelian group (by
reduction to the {\sc Group Feedback Edge Set} problem~\cite{guillemot2011FPT}).

Consider $\minlin{3}{G}$ when $r \geq 3$
and $G$ is a finite non-trivial Abelian group. Hardness for
this problem clearly carries over to finite commutative rings and
longer
equations.
Hardness results for 
$\minlin{3}{G}$ include the following:
Theorem~6.1 in \cite{Dabrowski:etal:soda2023} show that
$\minlin{3}{G}$ is \W{1}-hard to solve exactly and
it is known that $\minlin{3}{G}$ is not constant-factor approximable in polynomial time if
 P $\neq$ \NP (by combining results in~\cite{haastad2001some}
and~\cite{dalmau2013robust}).
Our generalization
shows that $\minlin{3}{G}$ is not FPT-approximable within
any constant if \FPT $\neq$ \W{1}.
We reduce from a fundamental problem in coding
theory: the
maximum likelihood decoding problem over $\ZZ_p$ with $p$ prime (\GAPMLD). 
We are given a matrix $A \in \ZZ_p^{n \times m}$, a vector $y \in \ZZ_p^n$, and
a $k \in {\mathbb N}$, and the goal is to distinguish the case when there
is a nonzero vector $x \in \ZZ_p^m$
with Hamming weight at most $k$ such that $Ax = y$ from the
case when for all $x \in \ZZ_p^m$
with Hamming weight at most $\gamma \cdot k$, $Ax \neq y$. 
This problem is \W{1}-hard for every 
$\gamma \geq 1$ and prime $p$~\cite[Theorem~5.1]{bhattacharyya2021parameterized}.
Consider an instance $(A, y, k)$ of $\GAPMLD$. The linearity of the equations
in $Ax=y$ gives us a reduction to
$\GAP{\gamma}$-$\minlin{*}{G}$ where $G$ is the additive group over $\ZZ_p$:
make an instance with $x_1, \dots, x_n$ as variables,
add row equations as crisp equations,
and add soft equations $x_j = 0$ for all $j \in [n]$.
Then an assignment of cost at most $\gamma k$
can only choose vectors $x$ with at most $\gamma k$ nonzero entries.
To show that $\GAP{\gamma}$-$\minlin{3}{G}$ is hard to 
FPT-approximate, we algebraically rewrite
long row equations into group equations of length 3,
and transfer the result to all finite groups using the structure
theorem of finite Abelian groups (Theorem~\ref{thm:fundamental-abelian-group}).

\begin{sloppypar}
Next, we prove that
$\minlin{2}{R}$ is \W{1}-hard to FPT-approximate within any constant when
$R$ is not Helly by a reduction from
$\minlin{3}{\FF}$ to $\minlin{2}{R}$, where $\FF$
is a suitably chosen finite field.
We illustrate using the non-Helly ring
$R = \ZZ_2[\rx,\ry]/(\rx^2,\ry^2)$.
The hardness of $\minlin{2}{R}$
is based on
``simulating'' certain ternary
equations over $\FF$ using
binary equations over $R$.
In this case,
we may choose $\FF=\ZZ_2$.
An element $r \in R$ is a sum $r_{\sf unit} + r_x \rx + r_y \ry + r_{xy} \rx\ry$,
where $r_{\sf unit}, r_x, r_y, r_{xy} \in \ZZ_2$.
Consider an equation $a + b + c = 0$ over $\ZZ_2$
and the following three equations over $R$
with variables $a'$, $b'$, $c'$ and $v'$: 
(1) $\rx \cdot v' = \rx\ry \cdot b'$, 
(2) $\ry \cdot v' = \rx\ry \cdot a'$, and 
(3) $(\rx+\ry) \cdot v' = - \rx\ry \cdot c'$.
Summing up the first two equations,
we get $(\rx + \ry) \cdot v' = \rx\ry \cdot (a' + b')$.
Together with the third equation,
this implies $\rx\ry \cdot (a' + b' + c') = 0$.
On the one hand, if an assignment 
$\phi : \{a,b,c\} \to \{0,1\}$ satisfies $a + b + c = 0$,
then setting $a' \mapsto \phi(a)$,
$b' \mapsto \phi(b)$, $c' \mapsto \phi(c)$
and $v' \mapsto \phi(a) \rx + \phi(b) \ry$
satisfies the equations over $R$.
On the other hand,
if $\phi' : \{a',b',c'\} \to R$
satisfies $\rx\ry \cdot (a' + b' + c') = 0$,
then $\phi'(a')_{\sf unit} + \phi'(b)_{\sf unit} + \phi'(c)_{\sf unit} = 0$,
so setting $a \mapsto \phi'(a')_{\sf unit}$,
$b \mapsto \phi'(b')_{\sf unit}$ and
$c \mapsto \phi'(c')_{\sf unit}$
satisfies $a + b + c = 0$.
We lift this idea to arbitrary non-Helly rings $R$. First,
assume that $R$ is local; this can be done
without loss of generality by Proposition~\ref{prop:sumapproximation}(1).
We then exploit connections between $R$ and its 
{\em residue field}, i.e.
the quotient $\FF=R/M$ where $M$ is the unique maximal ideal of $R$.
We start the reduction from $\minlin{3}{\FF}$
and use a tangle in $R$ of a particularly simple kind
for converting 
equations over $\FF$ into (at most) binary equations over $R$.
In the example above, the target tangle is $\Ann(\rx)$, $\Ann(\ry)$ and
$\Ann(\rx+\ry)+\rx$.
Correctness follows from
$\FF$ being the residue field of $R$: we have a projection operator that enables us to turn a solution over $R$
into a solution over $\FF$.
\end{sloppypar}

We finally consider $\minlin{2}{R}$ when $R$ is
not lineal.
We prove that $\minlin{2}{R}$ is \ETH-hard to FPT-approximate 
within $2-\eps$, $\eps > 0$, by a 
reduction from 2CSP parameterized by the number of variables.
2CSP is a variant of CSP with arbitrary {\em binary} constraints.
The \emph{Parameterized Inapproximability Hypothesis (PIH)}~\cite{lokshtanov2020parameterized} postulates
that the gap version of 2CSP is not in FPT, i.e.
no FPT algorithm can distinguish a satisfiable instance
from an instance where every assignment
satisfies at most $\eps$-fraction of constraints
for \emph{any} $0 < \eps < 1$.
In a recent breakthrough~\cite{Guruswami:etal:stoc2024}, it is shown that
the ETH implies the PIH, and also that
the gap version of 2CSP is not in FPT
parameterized by the number of constraints~\cite{guruswami2024almost}.
We reduce 2CSP to \textsc{Paired Min Cut}: the input is
a graph $G$ with vertices $s$ and $t$
such that the $st$-maxflow in $G$ is $2k$,
and a set of pairwise disjoint edge pairs $\cP$;
the goal is to find an $st$-cut in $G$
that intersects as few pairs in $\cP$ as possible.
\textsc{Paired Min Cut} is
\W{1}-hard~\cite{marx2009constant} and
we strengthen the result to FPT-inapproximability.
To illustrate the idea, 
consider a 2CSP instance with domain $\{1,\dots,n\}$ and $k$
constraints.
Let $v$ be a variable in the 2CSP.
A simple choice gadget for $v$ is an $st$-path of length $n$.
We modify it as follows:
replace every edge $d$
with a set $\cP_d(v)$ of internally disjoint paths;
each path in $\cP_d(v)$ has length $n$ and corresponds to a constraint of the 2CSP that involves $v$.
A minimal $st$-cut has to choose $d \in \{1,\dots,n\}$
for every variable $v$ and intersect every path in $\cP_d(v)$ in exactly one edge out of $n$.
We exploit this to encode constraints $R(x,y)$ of the 2CSP:
for every $(d,d') \in R$, consider the path in $\cP_d(x)$ and the path in $\cP_{d'}(y)$
corresponding to the constraint $R(x, y)$, and
pair up the edge $d'$ on the first path with edge $d$ on the second path.
Intuitively, if the cut respects the constraint $R(x,y)$,
it can cut two edges at cost one, otherwise it has to pay two.
Moreover, the $st$-maxflow in the resulting graph is $2k$,
so every $st$-mincut contains $2k$ edges.
If there is an $st$-mincut that intersects at most $(2 - \eps)k$ pairs,
then it contains at least $\eps k$ pairs of edges
and respects at least $\eps k$ constraints of the 2CSP.
We conclude that, under \ETH,
\textsc{Paired MinCut} is not 
$(2-\eps)$-approximable in FPT time.
The hardness result for non-lineal rings
follows by a reduction from 
\textsc{Paired Min Cut} using constructions
similar to the \W{1}-hardness proofs in
\cite{dabrowski2023parameterized,Dabrowski:etal:soda2023}.
Note that for direct sums of two finite fields,
e.g. $\ZZ_6 = \ZZ_2 \oplus \ZZ_3$, we have a $2$-approximation
FPT algorithm by Proposition~\ref{prop:sumapproximation}.
Hence, our hardness results imply optimality under \ETH in this case.

\paragraph{Geometric approach.}
In order to get a better view of lineal and Helly rings,
we study them for the special case of \emph{monomial} rings.
Consider a polynomial ring $R=\FF[\rx_1,\ldots,\rx_n]$
over some field $\FF$ with indeterminates $\rx_1$, \ldots, $\rx_n$.
A \emph{monomial} of $R$ is a formal product $\rx_1^{a_1} \cdots \rx_n^{a_n}$ 
with exponents $a_i \in \naturals$.
A \emph{monomial ideal} in $R$ is an ideal generated by
monomials. Finally, a \emph{monomial ring} is a ring $R$
that can be represented as
$R=\FF[\rx_1,\ldots,\rx_n]/I$ where $I$ is a monomial ideal.
We always assume that the field $\FF$ is finite. 
Monomial rings have a significantly easier structure than
arbitrary rings. In particular, they can be captured geometrically.
Each monomial $m=\rx_1^{a_1} \cdots \rx_n^{a_n}$ corresponds to a
point $(a_1,\ldots,a_n) \in \naturals^n$. For a point $\alpha \in \naturals^n$,
let $\rx^\alpha=\rx_1^{\alpha_1} \cdots \rx_n^{\alpha_n}$,
let $Z$ contain all points $\alpha \in \naturals^n$
such that $\rx^\alpha \in I$ in $R$ and let $N=\naturals^n \setminus Z$.
Then the bipartition $(N,Z)$ of $\naturals^n$ represents the split of
nonzero and zero monomials in $R$, and $R$ can be
analyzed in geometric terms.

We use this viewpoint for connecting monomial rings to various properties from \emph{discrete convex analysis}. This
field studies functions $f \colon \integers^n \to \RR$
that behave like discrete analogues to convex functions
(see Murota's book~\cite{Murota:DCA}). Our connection goes most directly to
Kashimura, Numata, and Takemura~\cite{KashimuraNT13} who study discrete
analogues of the separation of convex shapes by a hyperplane.
One says that a shape $S \subseteq \integers^n$ is
\emph{hole-free} if every integer point in its convex hull is
contained in $S$. It is \emph{1-convex} if for any $p, q \in S$,
all integer points on the line from $p$ to $q$ are in $S$.
Note that these are a stronger and a weaker discrete analogue of
the notion of a convex continuous shape. Moreover, 
for $(N,Z)$ as above we say that $(N,Z)$ are \emph{separated by a hyperplane}
if there is an affine hyperplane $H$ disjoint from $N \cup Z$
that separates $N$ from $Z$, and adopting a term from~\cite{KashimuraNT13}
we say that $(N,Z)$ satisfies \emph{Condition P} if there are no
points $p_1, p_2 \in N$ and $q_1, q_2 \in Z$ such that $p_1+p_2=q_1+q_2$.
Let $R=\FF[\rx_1,\ldots,\rx_n]/I$ be a monomial
ring, let $N$ be the set of exponents of non-zero monomials in $R$
and $Z$ the set of exponents of zero monomials in $R$. 
We find the following (see also Figure~\ref{fig:convex-hulls-intro}).
\begin{enumerate}
\itemsep0em
\item $R$ is lineal if and only if $(N,Z)$ satisfies Condition~P
\item The case that $(N,Z)$ are separated by a hyperplane is
  a strictly stronger condition that we refer to as $R$ being
  \emph{fully convex}
\item If $Z$ is hole-free then $R$ is Helly, and
  if $R$ is Helly then $Z$ is 1-convex. 
\end{enumerate}
The language of discrete geometry seems well suited
to characterize the potentially tractable classes of monomial lineal and Helly rings. 
We also note that more naive interpretations of convexity fail to capture
these classes. On the one hand, even if both $N$ and $Z$ are hole-free,
this does not guarantee  that $R$ is lineal. On the other hand, requiring
that the convex hulls of $N$ and $Z$ are disjoint (which is equivalent
to the fully convex case) is too restrictive, and not met by all lineal rings.

\begin{figure}
\centering
\footnotesize

\begin{subfigure}{.45\textwidth}
  \centering
\begin{tikzpicture}[scale = 0.78]
    \fill[gray,opacity=0.5] (0,0) -- (5,0) -- (5,4) -- (0,4) -- cycle;

    \draw[thick,->] (0,0) -- (5,0);
    \draw[thick,->] (0,0) -- (0,4);
    
    \node at (-0.2, -0.2) {0};
    \node at (4,    -0.2) {4};
    \node at (1,    -0.2) {1};
    \node at (2,    -0.2) {2};
    \node at (3,    -0.2) {3};
    \node at (-0.2,    1) {1};
    \node at (-0.2,    2) {2};
    \node at (-0.2,    3) {3};

    \draw[gray, fill=white] (0,0) rectangle (1,1);
    \draw[gray, fill=white] (1,0) rectangle (2,1);
    \draw[gray, fill=white] (2,0) rectangle (3,1);
    
   \draw[gray, fill=white] (0,1) rectangle (1,2);
    
    \draw[gray, fill=white] (0,2) rectangle (1,3);

    \draw[thick] (0,0) -- (0,3) -- (1,3) -- (1,1) -- (3,1) -- (3,0) -- cycle;

    \draw[dotted] (0,3) -- (1,1) -- (3,0);
    
\end{tikzpicture}
  \caption{$R=\FF[\rx,\ry] / (\rx^3,\rx\ry,\ry^3)$. The shaded region is hole-free so $R$ is Helly. However, $R$ is not lineal since $(2,0)+(0,2)=(1,1)+(1,1)$ contradicts Condition~P.\\ \phantom{.}}
  \label{fig:sub1}
\end{subfigure} \hspace{1em}%
\begin{subfigure}{.45\textwidth}
  \centering
 \begin{tikzpicture}[scale = 0.78]
    \fill[gray,opacity=0.5] (0,0) -- (5,0) -- (5,4) -- (0,4) -- cycle;

    \draw[thick,->] (0,0) -- (5,0);
    \draw[thick,->] (0,0) -- (0,4);
    
    \node at (-0.2, -0.2) {0};
    \node at (4,    -0.2) {4};
    \node at (1,    -0.2) {1};
    \node at (2,    -0.2) {2};
    \node at (3,    -0.2) {3};
    \node at (-0.2,    1) {1};
    \node at (-0.2,    2) {2};
    \node at (-0.2,    3) {3};

    \draw[gray, fill=white] (0,0) rectangle (1,1);
    \draw[gray, fill=white] (1,0) rectangle (2,1);
    \draw[gray, fill=white] (2,0) rectangle (3,1);
    \draw[gray, fill=white] (3,0) rectangle (4,1);
    
    \draw[gray, fill=white] (0,1) rectangle (1,2);
    \draw[gray, fill=white] (1,1) rectangle (2,2);
    \draw[gray, fill=white] (2,1) rectangle (3,2);
    \draw[gray, fill=white] (3,1) rectangle (4,2);
    
    \draw[gray, fill=white] (0,2) rectangle (1,3);

    \draw[dotted] (0,3) -- (1,2) -- (1,3) -- cycle;
    \draw[dotted] (1,2) -- (4,2) -- (4,0) -- cycle;

    \draw[thick] (0,0) -- (0,3) -- (1,3) -- (1,2) -- (4,2) -- (4,0) -- cycle;

 \draw[thick,->] (6.14,3.44) -- (3,1);

\node[draw] at (6.3,3.7) {{\bf !}};
   
\end{tikzpicture}
  \caption{$R=\FF[\rx,\ry] / (\rx^4,\rx^2\ry,\ry^3)$. The shaded region is not hole-free due to the marked point. Furthermore, it is on the line from $(2,2)$ to $(4,0)$, hence $R$ is not Helly and \minlin{2}{R} is not FPT-approximable.}
  \label{fig:sub2}
\end{subfigure}
\caption{Illustration of the geometric approach for two monomial rings $R=\FF[\rx,\ry]/I$. The shaded region contains exponents $(a,b)$ such that $\rx^a\ry^b=0$ in $R$. The dotted line marks its convex hull.}
\label{fig:convex-hulls-intro}
\end{figure}
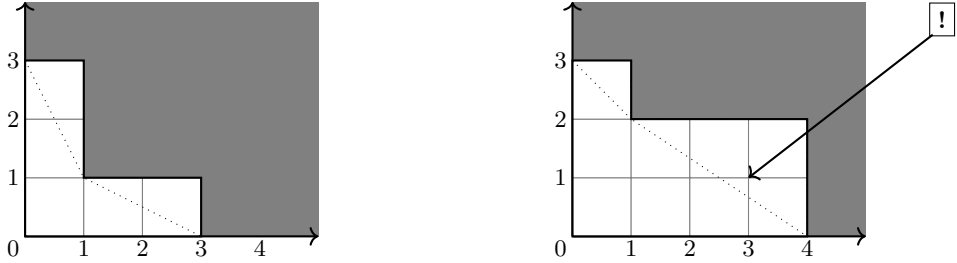

Let us exemplify. First, let $d \in \naturals$ and 
consider the ring $R=\FF[\rx_1,\ldots,\rx_n]/I$ over some finite field $\FF$ 
where $I$ contains all monomials of degree more than $d$. 
Then, $R$ is fully convex and Bergen, and our algorithm for approximating \minlin{2}{R}
would proceed in order of degree, starting by fixing the unit terms,
then the linear terms, and so on. A more general fully convex ring can
be defined by a notion akin to ``weighted degree'', i.e.\ a generic fully convex
ring can be defined as $R=\FF[\rx_1,\ldots,\rx_n]/I$ by weights $w_1, \ldots, w_n \in \QQ_{\geq 0}$ and a threshold $T$
where a monomial $\rx^\alpha$ is in $I$ if and only if $\sum_i w_i \alpha_i \geq T$.
However, using the geometric approach we can identify monomial rings that are lineal
but not fully convex.
Let $R_{\rm KNT}=\integers_2[\rx,\ry,\rz]/I$ where
$I$ contains all monomials with exponents above the half-plane spanned by
$(5,0,0)$, $(0,4,0)$ and $(0,0,3)$ and in addition $\rx^2\ry\rz \in I$. 
Then $R_{\rm KNT}$ is clearly not fully convex, but we show that it is Bergen.
In the other direction, let $R_{347}=\integers_2[\rx,\ry,\rz]/I$ where $I$
contains all monomials with exponents on or above the half-plane spanned by
$(3,0,0)$, $(0,4,0)$, and $(0,0,7)$ except $I$ does not contain $\rx\ry\rz^3$.
With the above notation, $R_{\rm KNT}$ has an $N$-set that is not hole-free,
and $R_{347}$ is not Z-hole-free, yet we show that it is Bergen.
In fact, their structure goes a long way towards justifying the definition of Bergen rings
and the structure of the algorithm.
With this in mind, we consider settling the complexity of \minlin{2}{R} for monomial rings $R$ 
to be an important step towards a full dichotomy, since
the geometric approach and monomial rings have been the basis
for our developments (such as the Bergen rings) and forms the best perspective
through which to understand the various classes of rings we consider.

\medskip

\noindent
{\bf Roadmap.}
The remainder of the paper is structured as follows.
We begin by introducing some necessary preliminaries in Section~\ref{sec:prelims}
and continue by taking a closer look at ideals and annihilators in Section~\ref{sec:ideals}.
Our algorithmic result can be found in Section~\ref{sec:approximation-algorithm}, the hardness
results in Section~\ref{sec:lowerbounds}, and we introduce the geometric approach in
Section~\ref{sec:geometry}.
We conclude the paper with a brief discussion of our results and future
research directions in Section~\ref{sec:discussion}.
Some of the examples in this paper have been discovered using
straightforward computer-assisted search. We stress that the properties of these
examples can, in principle, be verified by hand.

\section{Preliminaries}
\label{sec:prelims}
We assume familiarity with the basics of graph theory, linear and abstract algebra, and combinatorial optimization throughout the article.
The necessary material can be found in, for instance, the textbooks by Diestel~\cite{Diestel:GT}, Artin~\cite{Artin:A}, and Schrijver~\cite{Schrijver:PE}, respectively.
We use the following graph-theoretic terminology in what follows.
Let $G$ be an undirected graph.
We write $V(G)$ and $E(G)$ to denote the vertices and edges of $G$, respectively.
If $U \subseteq V(G)$, then the {\em subgraph of $G$ induced
by $U$} is the graph $G'$ with
$V(G')=U$ and $E(G')=\{ \{v, w\} \mid v,w \in U \; {\rm and} \; \{v, w\} \in E(G)\}$. 
We denote this graph by $G[U]$.
If $Z$ is a subset of edges in $G$, we write $G-Z$ to denote the graph~$G'$ with $V(G') = V(G)$ and $E(G') = E(G) \setminus Z$.
For $X,Y \subseteq V(G)$, 
an \emph{$(X,Y)$-cut} is a subset of edges $Z$
such that $G - Z$ does not contain a path with 
one endpoint in $X$ and another in $Y$.
When $X,Y$ are singleton sets $X=\{x\}$ and
$Y=\{y\}$, we simplify the notation and write $xy$-cut instead
of $(X,Y)$-cut. We say that $Z$ is an $(X,Y)$-cut \emph{closest} to
$X$ if there is no $(X,Y)$-cut $Z'$ with $|Z'|\leq |Z|$ such that the
set of vertices reachable from $X$ in $G-Z'$ is a strict subset of the
set of vertices reachable from $X$ in $G-Z$.

\subsection{Linear Equations over a Ring}

A {\em ring} is an Abelian group (whose operation is called {\em addition}), 
with a second binary operation called {\em multiplication} that is associative, 
is distributive over the addition operation, and has an identity element.
We will exclusively consider {\em commutative rings}, 
where multiplication is a commutative operator. 
We assume henceforth that rings are non-trivial 
in the sense that they contain at least two distinct elements.
Let $R=(R,+,\cdot)$ denote such a ring.
Examples are the integers $\ZZ$, 
finite fields $\FF_{p^n}$, and rings $\ZZ_m$
with integer domain $\{0,\dots,m-1\}$ and addition and multiplication modulo $m$.
We let $0$ denote the additive identity element and 
$1$ the multiplicative identity element.
An element $d \in R$ is a {\em zero divisor} if $d \neq 0$ and
there exists an element $0 \neq d' \in R$ such that $d \cdot d'=0$.
We let $R^+$ denote the Abelian additive group $(R,+)$ and $R^\times$
the commutative multiplicative monoid $(R,\cdot)$.
A {\em unit} in $R$ is an invertible element for the multiplication of the ring, i.e. an element $d \in R$ is a unit if there exists $d' \in R$ such that
$d \cdot d'=1$. The units of $R$ under multiplication form a group within the monoid $R^\times$.

An expression $c_1 \cdot x_1+\dots+c_r \cdot x_r=c$ is a {\em (linear) equation over $R$}  
if $c_1,\dots,c_r,c \in R$ and $x_1,\dots,x_r$ are variables with domain $R$.
This equation is {\em homogeneous} if $c=0$.
Let $S$ denote a set (or equivalently a system) of equations over $R$.
We let $V(S)$ denote the variables appearing in $S$, and we
say that $S$ is {\em consistent} if there is an assignment
$\varphi : V(S) \rightarrow R$
that satisfies all equations in $S$. 
An instance of the computational problem $\lin{r}{R}$
is a system $S$ of equations in at most $r$ variables
over $R$, and the question is whether $S$ is consistent.
We allow some equations in an instance to be
soft (i.e. deletable at unit cost) and
crisp (i.e. undeletable).
We study the following computational problem.

\pbDefP{$\minlin{r}{R}$}
{A (multi)set $S$ of equations over $R$ with at most $r$
variables per equation,
a subset $S^\infty \subseteq S$ of crisp equations
and an integer $k$.}
{$k$.}
{Is there a set $Z \subseteq S \setminus S^\infty$ 
such that $S - Z$ is consistent and 
$|Z| \leq k$?}

We use crisp equations merely for convenience 
since they can be modelled by $k+1$ copies of 
the same soft equation.
For an assignment $\alpha : V(S) \to R$,
let $\cost_S(\alpha)$ be $\infty$ if $\alpha$
does not satisfy a crisp equation and
the number of unsatisfied soft equations otherwise.
We drop the subscript $S$ when it is clear from context.
We also write $\mincost(S)$ to denote the minimum cost
of an assignment to $S$.

\subsection{Parameterized Complexity}

In parameterized 
algorithmics~\cite{book/DowneyF99,book/FlumG06,book/Niedermeier06}
the runtime of an algorithm is studied with respect to
the input size~$n$ and a parameter $p \in \NN$.
The basic idea is to find a parameter that describes the structure of
the instance such that the combinatorial explosion can be confined to
this parameter.
In this respect, the most favourable complexity class is \FPT
(\emph{fixed-parameter tractable}),
which contains all problems that can be decided by an algorithm
running in $f(p)\cdot n^{O(1)}$ time, where $f$ is a computable
function.
Problems that can be solved within such a time bound are said to be 
\emph{fixed-parameter tractable} (FPT).

We will prove that certain problems are not in $\FPT$
and this requires some extra machinery.
A {\em parameterized problem} is, formally speaking, a subset of $\Sigma^* \times {\mathbb N}$
where $\Sigma$ is the input alphabet. Reductions between parameterized problems need to take
the parameter into account. To this end, we will use {\em parameterized reductions} (or FPT-reductions).
Let $L_1$ and $L_2$ denote parameterized problems with $L_1 \subseteq \Sigma_1^* \times {\mathbb N}$
and $L_2 \subseteq \Sigma_2^* \times {\mathbb N}$. 
A parameterized reduction from $L_1$ to $L_2$ is a
mapping $P: \Sigma_1^* \times {\mathbb N} \rightarrow \Sigma_2^* \times {\mathbb N}$
such that
(1) $(x, k) \in  L_1$ if and only if $P((x, k)) \in L_2$, (2) the mapping can be computed
by an FPT-algorithm with respect to the parameter $k$, and (3) there is a computable function $g : {\mathbb N} \rightarrow {\mathbb N}$ 
such that for all $(x,k) \in L_1$ if $(x', k') = P((x, k))$, then $k' \leq g(k)$.

The class \W{1} contains all problems that are FPT-reducible to \textsc{Independent Set} when parameterized
by the size of the solution, i.e. the number of vertices in the independent set.
Showing \W{1}-hardness (by an FPT-reduction) for a problem rules out the existence of a fixed-parameter
algorithm under the standard assumption $\FPT \neq \W{1}$.

\subsection{Approximation} \label{sec:approx}

We refer the reader to the textbook~\cite{ausiello2012complexity} 
for more information about polynomial-time approximability, and 
to the surveys~\cite{Marx:tcj2008} and~\cite{Feldmann:etal:algorithms2020}
for an introduction to parameterized approximability.
We will exclusively consider minimisation problems in what follows.
Formally speaking, a {\em minimisation problem} is a 3-tuple $(X, sol, cost)$ where

\begin{enumerate}
\item
$X$ is the set of instances. 

\item
For an instance $x \in X$, $sol(x)$ is the set of feasible solutions
for $x$, the length of each $y \in sol(x)$ is polynomially bounded in $\norm{x}$, and it is decidable in
polynomial time (in $\norm{x}$) whether $y \in sol(x)$ holds for given $x$ and $y$. 

\item
For an instance
$x \in X$ with feasible solution $y$, $cost(x, y)$ is a polynomial-time computable positive integer. 
\end{enumerate}

The objective of the minimisation problem $(X, sol, cost)$ is to find an solution 
$z$ of minimum cost for a given instance $x \in X$ 
i.e. a solution $z$ with $cost(x, z) = opt(x)$ where $opt(x) = \min \{cost(x, y) \mid y \in sol(x)\}$.
If $y$ is a solution for the instance $x$, then the performance ratio of $y$ is defined as 
$cost(x, y)/opt(x)$. For $c \geq 1$, we say that an algorithm is a factor-$c$ 
{\em approximation} algorithm if it always computes a solution with performance ratio at most $c$
i.e. the cost of solutions returned by the algorithm is at most $c$ times greater than the minimum cost.
The constant $c$ is sometimes replaced by a function of the instance $x$,
but this is not needed in this paper.

We continue with
FPT-approximability of minimization problems. 
For a minimization problem $(X, sol, cost)$ parameterized by the solution
size $k$, a factor-$c$ {\em FPT-approximation algorithm} 
is an algorithm that 
\begin{enumerate}
\item
takes a tuple $(x,k)$ as input where $x \in X$ is an instance and $k \in \NN$,

\item
either returns that there is no solution of size at most $k$ or returns a solution of size at
most $c \cdot k$, and

\item
runs in time $f(k) \cdot \norm{x}^{O(1)}$ where $f: \NN \rightarrow \NN$ 
is some computable function.
\end{enumerate}

Thus, an FPT approximation algorithm is given more time to compute the solution (compared to a polynomial-time
approximation algorithm) and it may output a slightly oversized solution (unlike an exact FPT algorithm). There is a very direct connection between a finite direct sum of rings and 
its FPT-approximability.

\begin{proposition} \label{prop:sumapproximation}
Suppose that the ring $R$ is 
  isomorphic to a
direct sum $\bigoplus _{i=1}^{n} R_{i}$.
\begin{enumerate}
\item
If $\minlin{2}{R_i}$ is not constant-factor approximable for some $i \in [n]$, then
$R$ is not constant-factor approximable.

\item
If $\minlin{2}{R_i}$ is factor-$c_i$ FPT-approximable for $i \in [n]$, then $\minlin{2}{R}$ admits an
  FPT-approximation algorithm with approximation factor $\sum_{i=1}^n c_i$.
\end{enumerate}
\end{proposition}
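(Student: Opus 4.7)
The plan is to leverage the coordinate-wise structure of $R = \bigoplus_{i=1}^n R_i$: an equation $ax+by=c$ over $R$ decomposes into $n$ equations via the projections $\pi_i \colon R \to R_i$, and an assignment $\varphi \colon V \to R$ satisfies it if and only if every $\pi_i \circ \varphi$ satisfies the corresponding projected equation. Conversely, any $n$-tuple of component-wise assignments combines into an assignment over $R$. Both parts follow from this dictionary once one is careful about crisp/soft status and the solution-size parameter.

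For Part~(1), I would prove the contrapositive via a parameter-preserving FPT-reduction from $\minlin{2}{R_i}$ to $\minlin{2}{R}$. Let $\iota_i \colon R_i \to R$ be the embedding that places an element in coordinate $i$ and zero elsewhere; it preserves addition and multiplication (although not $1$, which is immaterial since we only use it on coefficients and constants). Given an instance $(S, S^\infty, k)$ over $R_i$, lift every equation $ax+by=c$ to $\iota_i(a)x + \iota_i(b)y = \iota_i(c)$ over $R$, retaining crisp/soft labels and the parameter. The $j$-th projection of a lifted equation is trivially satisfied for $j\neq i$ and coincides with the original for $j=i$; hence its satisfaction by $\varphi \colon V \to R$ depends only on $\pi_i \circ \varphi$, and conversely any assignment $\psi$ over $R_i$ lifts via $\iota_i \circ \psi$ with the same cost. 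Thus optimal costs match exactly, so any constant-factor FPT-approximation for $\minlin{2}{R}$ immediately yields one for $\minlin{2}{R_i}$.

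For Part~(2), given an instance $(S, S^\infty, k)$ over $R$, I would build $n$ projected instances $I_i$ over $R_i$ by replacing every equation $e \in S$ with $\pi_i(e)$ (preserving crisp/soft status), and run the given factor-$c_i$ FPT-approximation $A_i$ on $(I_i, k)$ for each $i$. If some $A_i$ reports "no solution of size $\leq k$", then no such solution exists for the original either, since projecting any $Z \subseteq S \setminus S^\infty$ of size $\leq k$ would witness one for $I_i$. Otherwise each $A_i$ returns a set $Z_i \subseteq S \setminus S^\infty$ of size at most $c_i \cdot k$ such that $I_i - Z_i$ is consistent, and we return $Z := \bigcup_{i=1}^n Z_i$, of size at most $\bigl(\sum_i c_i\bigr) k$. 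Since $Z \supseteq Z_i$, each $I_i - Z$ is consistent; picking witnesses $\varphi_i \colon V(S) \to R_i$ and combining coordinate-wise produces $\varphi \colon V(S) \to R$ satisfying $S - Z$, because each surviving equation is satisfied iff all of its projections are.

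I do not foresee any genuine obstacle. The only care needed is to verify that the projection/lifting dictionary respects crisp versus soft equations (which it does by construction), and that the reductions preserve the parameter $k$ so that the composed running time remains of FPT form.
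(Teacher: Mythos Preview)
Your proposal is correct and follows essentially the same approach as the paper: both parts use the coordinate-wise decomposition of equations under the isomorphism $R \cong \bigoplus_i R_i$, with Part~(1) embedding $R_i$ into the $i$-th coordinate and Part~(2) projecting to each $R_i$, running the component algorithms, and combining the resulting solutions by union (equivalently, coordinate-wise assembly of assignments). Your write-up is in fact slightly more careful than the paper's about crisp/soft status and about the embedding $\iota_i$ not preserving the multiplicative identity.
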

\begin{proof}
{\em Case 1.}
Assume without loss of generality that $i=1$.
There is an obvious approximability-preserving polynomial-time reduction from
$\minlin{2}{R_1}$ to $\minlin{2}{R}$ by replacing elements $r_1 \in R_1$ with
$(r_1,0,\dots,0) \in R$.

\medskip

\noindent
{\em Case 2.}
Since $R \cong \bigoplus _{i=1}^{n} R_{i}$, there exists a
ring isomorphism $r \mapsto (r_1,\dots,r_n)$, 
where $r_i \in R_i$ for all $i \in \{1,\ldots,n\}$.
If $e$ is the equation $ax+by=c$ over $R$, then we
let $e_i$ denote the equation $a_i x + b_i y = c_i$ over $R_i$.
By this isomorphism, 
the equation $e$ has a solution if and only if $e_i$ has a solution for
every $i \in \{1,\ldots,n\}$.

Let $I=(S,k)$ denote an arbitrary instance of
$\minlin{2}{R}$ with $S=\{e^1,\dots,e^m\}$. Let 
$S_i=(\{e^1_i,\dots,e^m_i\},k)$ and $I_i=(S_i,k)$, $i \in \{1,\ldots,n\}$.
Compute $c_i$-approximate solutions to $I_i$, $i \in \{1,\ldots,n\}$.
If at least one of these instances have no solution, 
then $I$ has no solution since every such instance
is a relaxation of $I$.
Otherwise, let $\alpha_i : V \to R_i$ be
solutions to $I_i$, $i \in \{1,\dots,n\}$,
and define $\alpha(v) \in R$ for all $v \in V$ 
to be the preimage of
$(\alpha_1(v_1), \dots, \alpha_n(v_n))$.
Observe that $\alpha$ violates equation $e$
if and only if there is $i \in \{1,\dots,n\}$
such that $\alpha_i$ violates $e_i$.
Hence, $\alpha$ violates at most $k \cdot \sum_{i=1}^n c_i$ equations,
and we are done.
\end{proof}

\emph{Gap problems} are commonly used to formulate approximation
problems as decision problems.
Let $\gamma \geq 1$ be a constant,
$r \in \NN$ be a positive integer
and $R$ be a ring.
We study the following problem.

\pbDefGap{$\GAP{\gamma}$-$\minlin{r}{R}$}
{An instance $(S, k)$ of $\minlin{r}{R}$.}
{$k$.}
{there exists $Z \subseteq S$, $|S| \leq k$ such that $S - Z$ is consistent,}
{for all $Z \subseteq S$, $|S| \leq \gamma k$, the instance $S - Z$ is inconsistent.}

Observe that $\GAP{1}$-$\minlin{r}{R}$ is the same problem as $\minlin{r}{R}$.
To see the connection to the previous definition of approximation,
suppose $\minlin{r}{R}$ admits a factor-$\gamma$ FPT-approximation algorithm.
We claim this implies that $\GAP{\gamma}$-$\minlin{r}{R}$ is in \FPT.
If $(S, k)$ is a YES-instance of the gap problem,
then the approximation algorithm 
applied to $(S, k)$
produces a solution of size $\leq \gamma k$.
If $(S, k)$ is a NO-instance,
then the approximation algorithm 
cannot find a solution of cost $\leq \gamma k$, and
correctly reports that
no solution of size $\leq k$ exists.
The contrapositive of this claim is useful for proving
of FPT-inapproximability within any constant factor
under the standard assumption \FPT $\neq$ \W{1}.

\begin{observation}
  If $\GAP{\gamma}$-$\minlin{r}{R}$ is \W{1}-hard,
  then $\minlin{r}{R}$ does not admit $c$-factor
  FPT-approximation algorithms for any constant $c$
  unless \FPT = \W{1}.
\end{observation}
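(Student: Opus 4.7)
My plan is to derive the observation as the direct contrapositive of the claim that was sketched in the paragraph just before it: \emph{if $\minlin{r}{R}$ admits a factor-$\gamma$ FPT-approximation algorithm, then $\GAP{\gamma}$-$\minlin{r}{R}$ lies in $\FPT$.} The entire argument is only a few lines; the main work is to state the correctness of the decision reduction cleanly.

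I would proceed by contradiction: assume that for some constant $c$ there is a factor-$c$ FPT-approximation algorithm $\mathcal{A}$ for $\minlin{r}{R}$, and suppose that $\GAP{\gamma}$-$\minlin{r}{R}$ is $\W{1}$-hard for some $\gamma\geq c$. I would build a decision procedure $\mathcal{B}$ for $\GAP{\gamma}$-$\minlin{r}{R}$ that simply runs $\mathcal{A}$ on the input $(S,k)$, outputs YES if $\mathcal{A}$ returns a set $Z$ with $|Z|\leq ck$ and $S-Z$ consistent, and outputs NO if $\mathcal{A}$ reports ``no solution of size at most~$k$''.

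Next I would verify correctness by cases. For a YES instance we have $\mincost(S)\leq k$, so the specification of $\mathcal{A}$ forbids it from reporting ``no solution of size at most~$k$'' and forces it to return a valid $Z$ with $|Z|\leq ck\leq \gamma k$; hence $\mathcal{B}$ outputs YES. For a NO instance we have $\mincost(S)>\gamma k\geq ck$, so no $Z$ of size at most $ck$ can make $S-Z$ consistent, meaning $\mathcal{A}$ cannot return such a set and must report ``no solution of size at most~$k$''; hence $\mathcal{B}$ outputs NO. Since $\mathcal{B}$ inherits the $f(k)\cdot\|S\|^{O(1)}$ running time of $\mathcal{A}$, we conclude that $\GAP{\gamma}$-$\minlin{r}{R}\in\FPT$, contradicting its $\W{1}$-hardness unless $\FPT=\W{1}$.

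There is no real obstacle, only a minor point of quantifier hygiene: $\W{1}$-hardness of $\GAP{\gamma}$ at a fixed ratio $\gamma$ directly excludes $c$-factor FPT-approximations only for $c\leq\gamma$, so an application aimed at \emph{any} constant $c$ must instantiate the observation with $\gamma$ at least as large as the targeted $c$. This is precisely how the observation will be invoked in the lower-bound sections, where the gap hardness is established at arbitrary constants $\gamma$.
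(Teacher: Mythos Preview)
Your proposal is correct and is essentially identical to the paper's own argument, which is precisely the contrapositive of the sketch in the paragraph preceding the observation. Your remark on quantifier hygiene is apt: as literally stated, hardness at a fixed $\gamma$ only rules out $c$-factor approximations for $c\leq\gamma$, and the paper indeed applies the observation only in settings where $\GAP{\gamma}$-hardness is established for every constant $\gamma\geq 1$.
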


\section{Ideals in Rings}
\label{sec:ideals}

Ideals are a central concept in the theory of rings and they are
often used for analysing the structure of rings, in particular with
respect to divisibility properties.
Another important aspect of ideals is that they enable us to create new rings using natural
constructions. 
Basic properties of ideals are discussed and explained in Section~\ref{sec:ideals-lattices}
together
with some ideal-based constructions.
In Section~\ref{sec:annihilators}, we take a closer look at a particular kind
of ideals known as {\em annihilators}. Annihilators are closely
connected to the solvability of linear equations over rings and
are thus highly relevant for our purposes.

\subsection{Basics of Ideals}
\label{sec:ideals-lattices}

\begin{table}
\scriptsize
  \centering
   \begin{tabular}{|c||c|cccc|ccc|} \hline
    $+$ & $0$ & $1$ & $1+\rx$ & $1+\ry$ & $1+\rx+\ry$ & $\rx$ & $\ry$ & $\rx+\ry$\\
    \hline \hline
    $0$ & $0$ & $1$ & $1+\rx$ & $1+\ry$ & $1+\rx+\ry$ & $\rx$ & $\ry$ & $\rx+\ry$\\\hline
    $1$ & $1$ & $0$ & $\rx$ & $\ry$ & $\rx+\ry$ & $1+\rx$ & $1+\ry$ & $1+\rx+\ry$\\
    $1+\rx$ & $1+\rx$ & $\rx$ & $0$ & $\rx+\ry$ & $\ry$ & $1$ & $1+\rx+\ry$ & $1+\ry$\\
    $1+\ry$ & $1+\ry$ & $\ry$ & $\rx+\ry$ & $0$ & $\rx$ & $1+\rx+\ry$ & $1$ & $1+\rx$\\
    $1+\rx+\ry$ & $1+\rx+\ry$ & $\rx+\ry$ & $\ry$ & $\rx$ & $0$ & $1+\ry$ & $1+\rx$ & $1$\\\hline
    $\rx$ & $\rx$ & $1+\rx$ & $1$ & $1+\rx+\ry$ & $1+\ry$ & $0$ & $\rx+\ry$ & $\ry$\\
    $\ry$ & $\ry$ & $1+\ry$ & $1+\rx+\ry$ & $1$ & $1+\rx$ & $\rx+\ry$ & $0$ & $\rx$\\
    $\rx+\ry$ & $\rx+\ry$ & $1+\rx+\ry$ & $1+\ry$ & $1+\rx$ & $1$ & $\ry$ & $\rx$ & $0$\\ \hline
  \end{tabular}

\bigskip
  
  \begin{tabular}{|c||c|cccc|ccc|} \hline
  $\cdot$  & $0$ & $1$ & $1+\rx$ & $1+\ry$ & $1+\rx+\ry$ & $\rx$ & $\ry$ & $\rx+\ry$\\
    \hline \hline
    $0$ & $0$ & $0$ & $0$ & $0$ & $0$ & $0$ & $0$ & $0$\\\hline
    $1$ & $0$ & $1$ & $1+\rx$ & $1+\ry$ & $1+\rx+\ry$ & $\rx$ & $\ry$ & $\rx+\ry$\\
    $1+\rx$ & $0$ & $1+\rx$ & $1$ & $1+\rx+\ry$ & $1+\ry$ & $\rx$ & $\ry$ & $\rx+\ry$\\
    $1+\ry$ & $0$ & $1+\ry$ & $1+\rx+\ry$ & $1$ & $1+\rx$ & $\rx$ & $\ry$ & $\rx+\ry$\\
    $1+\rx+\ry$ & $0$ & $1+\rx+\ry$ & $1+\ry$ & $1+\rx$ & $1$ & $\rx$ & $\ry$ & $\rx+\ry$\\\hline
    $\rx$ & $0$ & $\rx$ & $\rx$ & $\rx$ & $\rx$ & $0$ & $0$ & $0$\\
    $\ry$ & $0$ & $\ry$ & $\ry$ & $\ry$ & $\ry$ & $0$ & $0$ & $0$\\
    $\rx+\ry$ & $0$ & $\rx+\ry$ & $\rx+\ry$ & $\rx+\ry$ & $\rx+\ry$ & $0$ & $0$ & $0$\\ \hline
  \end{tabular}
  \caption{Addition and multiplication tables for $\ZZ_2[\rx,\ry]/(\rx^2,\rx\ry,\ry^2)$. The additive group is isomorphic to $\ZZ_{2} \times \ZZ_{2} \times \ZZ_{2}$.}
  \label{tb:mt-z2-li}
\end{table}

Let $(R, +, \cdot)$ be a commutative ring.
An {\em ideal} in $R$ is a subset $I \subseteq R$ such that
\begin{enumerate}[(1)]
  \item $(I,+)$ is a subgroup of $(R,+)$ and 
  \item for every $d \in R$ and every $x\in I$, 
  the product $dx$ is in $I$. 
\end{enumerate}
Given a set of elements $S = \{s_1,\dots,s_n\} \subseteq R$, 
we let both $(S)$ and $(s_1,\dots,s_n)$ denote the smallest ideal
that contains the elements in $S$---such an ideal always exists.
For an element $s \in R$, we use the notation $(s)$ and $sR$ interchangeably.
An ideal $I$ in $R$ is \emph{principal} if $I = (s)$ for some $s \in R$.
Principal ideals are directly connected to divisibility in rings.
Given elements $u,v \in R$, we say that $u$ {\em divides} $v$ if there
is an element $w \in R$ such that $uw=v$, and $u$ divides $v$
if and only if $(v) \subseteq (u)$.
Note that the ring~$R$ is itself an ideal and is called the \emph{unit ideal} since $R = 1R$; 
an ideal is \emph{proper} if it is not the unit ideal.
A proper ideal is \emph{maximal} if it is not contained in any other proper ideal, and
a ring is {\em local} if it has a unique maximal ideal.

A \emph{coset} of an ideal $I$
is the set $a + I = \{ a + i \mid i \in I \}$
defined for some element $a \in R$.
Two cosets $a + I$ and $b + I$ with $a,b \in R$ are either equal or disjoint, 
so the cosets of an
ideal $I$ decomposes $R$ into disjoint subsets of equal size.
In other words, they are equivalence classes
of the equivalence relation 
that holds for $(a,b) \in R^2$ if and only if $a - b \in I$.
These equivalence class are sometimes written as $a \mod I$ and called the 
{\em residue class} of $a$ modulo $I$.
The \emph{quotient ring} $R / I$ has
the cosets $a + I$ for all $a \in R$ as elements,
with addition and multiplication defined 
using the corresponding operations in $R$, namely
$(a + I) + (b + I) = (a +_R b) + I$ and
$(a+I) \cdot (b+I) = (a \cdot_R b)+I$.
The additive identity in this ring
is $(0+I)$ and the multiplicative identity is $(1+I)$.
It is easy to verify that $\ZZ_m$ is isomorphic to $\ZZ/(m\ZZ)$ for all $m \in \NN$.

We continue with rings based on polynomials.
Given $n$ indeterminates $\rx_1,\dots,\rx_n$,
a {\em monomial} is a formal product 
$\rx_1^{\alpha_1}\rx_2^{\alpha_2} \cdots \rx_n^{\alpha_n}$
with nonnegative exponents $\alpha_1,\dots,\alpha_n$, and it has
{\em degree} $\alpha_1+\dots+\alpha_n$.
A {\em polynomial} in these indeterminates over a ring $R$ 
is a finite linear combination of monomials
with coefficients taken from the ring $R$.
The set of polynomials is denoted $R[\rx_1,\dots,\rx_n]$ and it can be equipped 
(in the standard way) with addition and multiplication that 
turns it into a ring---we use the same notation for this ring.
If $R$ is commutative, then the ring $R[\rx_1,\dots,\rx_n]$ is also commutative.
The rings
$\ZZ_2[\rx,\ry]/(\rx^2,\rx\ry,\ry^2)$ and
$\ZZ_2[\rx,\ry]/(\rx^2,\ry^2)$
will appear as recurring examples in what follows.
The addition and multiplication tables for 
the former can be found in Table~\ref{tb:mt-z2-li}.
In the second ring, $\rx \ry \neq 0$, and it has twice as many elements as the first one:
elements of the first ring can be written as $a_0 + a_x \rx + a_y \ry$ for $a_0, a_x, a_y \in \ZZ_2$,
while elements of the second one can be written as 
$a_0 + a_x \rx + a_y \ry + a_{xy} \rx\ry$ for $a_0, a_x, a_y, a_{xy} \in \ZZ_2$.

Every finite commutative local ring is isomorphic
to a polynomial ring (with coefficients taken from a finite field)
factored by an ideal. Since we are exclusively interested in finite
commutative rings, we verify this by using a result by Ganske and McDonald (more
general results can be obtained by using
Cohen's structure theorem~\cite{Cohen:tams46}).
Recall that the characteristic of a ring $R$ is the smallest positive integer such
that $1 + 1 + \dots + 1$ ($n$ times) equals $0$, and that
the characteristic of a  finite local commutative ring is a prime 
power~\cite[p. 522]{Ganske:McDonald:rmjm73}.
A {\em ring homomorphism} is a structure-preserving function between two rings: if $R$ and $S$ are rings, then a ring homomorphism is a function $\phi : R \rightarrow S$ 
 such that (1)
$\phi(a+b)=\phi(a)+\phi(b)$ for all $a,b \in R$,
(2)
$\phi(ab)=\phi(a)\phi(b)$ for all $a,b \in R$, and (3)
$f(1_R)=1_S$. 
We note that the characteristic of $R$ can equivalently be defined
via the unique ring homomorphism $\pi : \ZZ \to R$:
the characteristic of $R$ 
is the positive integer $n \in \NN$ such that
$\pi^{-1}(0) = n \ZZ$.

\begin{theorem}[Theorem 6.1 in \cite{Ganske:McDonald:rmjm73}]
\label{thm:localstructure}
Let $R$ be a finite commutative local ring with
characteristic $p^r$ where $p$ is a prime and $r$ is a positive integer. Let the maximal ideal $M$ of $R$ have minimal generating set $\{u_1,\dots,u_n\}$.
Then, there is a ring homomorphism from ${\mathbb Z}_{p^r}[x_1,\dots,x_{n+1}]$
to $R$.
\end{theorem}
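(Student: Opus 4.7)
The plan is to construct a surjective ring homomorphism $\phi : \ZZ_{p^r}[x_1,\dots,x_{n+1}] \to R$ explicitly, which in particular yields the desired homomorphism. Three ingredients go into the construction: (i) the characteristic map, (ii) the generators $u_1,\ldots,u_n$ of $M$, and (iii) a lift of a primitive element of the residue field $R/M$. I would first define the map on coefficients. Since $R$ has characteristic $p^r$, the unique ring homomorphism $\pi\colon\ZZ\to R$ sending $1\mapsto 1_R$ has kernel $p^r\ZZ$ and hence factors through $\bar\pi\colon\ZZ_{p^r}\to R$. Composing $\bar\pi$ with the quotient $R\to R/M$ gives a map whose kernel is the maximal ideal $(p)\subseteq \ZZ_{p^r}$ (note $p\in M$ because $p\cdot 1_R$ is not a unit in a local ring of characteristic $p^r$), so $R/M$ contains a copy of $\FF_p$ and is therefore a finite extension $\FF_{p^k}$.

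Next, I would pick a primitive element $\bar a \in R/M$ of $R/M$ over $\FF_p$ and any lift $a\in R$. The universal property of polynomial rings then lets me define $\phi$ by $\phi|_{\ZZ_{p^r}}=\bar\pi$, $\phi(x_i)=u_i$ for $i=1,\dots,n$, and $\phi(x_{n+1})=a$. This already produces a ring homomorphism, so strictly speaking the theorem statement is proved; the substance of the argument is surjectivity.

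For surjectivity, let $S=\phi(\ZZ_{p^r}[x_1,\dots,x_{n+1}])$. Since $\bar a$ generates $R/M$ over $\FF_p$ and $\FF_p$ lies in the image of $\bar\pi$ modulo $M$, every residue class in $R/M$ is represented by a polynomial in $a$ with $\ZZ_{p^r}$-coefficients, so $S+M=R$. Writing an arbitrary element of $M$ as $\sum r_i u_i$ with $r_i\in R$ and decomposing each $r_i=s_i+m_i$ with $s_i\in S$ and $m_i\in M$ yields $M\subseteq S+M^2$, and by induction $S+M^j=S+M^{j+1}$ for every $j\geq 1$. The argument terminates because $R$ is a finite commutative local ring: $M$ coincides with the nilradical and hence $M^N=0$ for some $N$. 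Combining these steps, $S=S+M=S+M^2=\dots=S+M^N=S$, so $S=R$.

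The main subtlety — and the only point needing genuine ring-theoretic input — is the choice of a \emph{single} extra indeterminate $x_{n+1}$, which rests on $R/M$ being a simple extension of $\FF_p$, which in turn depends on $R$ being finite. Everything else (well-definedness of $\bar\pi$, the generator-decomposition of $M$, and termination via nilpotency of $M$) is a routine application of the universal property of polynomial rings and the standard structure theory of finite local rings. The hardest part to present cleanly is articulating the Nakayama-style ascending argument $S+M^j=S+M^{j+1}$ without circularity; I would emphasize that the inductive step uses only that $u_1,\ldots,u_n$ generate $M$ as an $R$-ideal, not as an $S$-ideal.
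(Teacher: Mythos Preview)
The paper does not supply its own proof of this statement; it is quoted verbatim as Theorem~6.1 of Ganske--McDonald and used only as a black box to derive the subsequent corollary that every finite commutative local ring is a quotient of some $\ZZ_{p^r}[x_1,\dots,x_n]$. So there is nothing in the paper to compare against.

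Your argument is correct and is essentially the standard one. You rightly observe that the literal statement is almost trivial (once $\bar\pi\colon\ZZ_{p^r}\to R$ is in hand, the universal property of polynomial rings lets you send the $x_i$ anywhere), and that the real content --- needed for the corollary --- is surjectivity. Your surjectivity proof is sound: the image $S$ covers $R/M$ because the lift $a$ of a primitive element generates $R/M\cong\FF_{p^k}$ over the image of $\ZZ_{p^r}$, and then the filtration $R=S+M=S+M^2=\cdots=S+M^N=S$ works precisely because the generators $u_i$ lie in $S$, so degree-$j$ monomials in the $u_i$ are already in $S$ and each element of $M^j$ splits as something in $S$ plus something in $M^{j+1}$ once you decompose coefficients via $R=S+M$. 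Nilpotency of $M$ in a finite local ring terminates the chain. Your closing remark that the inductive step needs only that the $u_i$ generate $M$ as an $R$-ideal (not an $S$-ideal) is exactly the point that makes the bootstrap go through.
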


Let $\phi$ be a homomorphism from a ring $R$ to a ring $S$.
The {\em kernel} of $\phi$, defined as ${\rm ker}(\phi) = \{a \in R \; | \; \phi(a) = 0\}$, is an ideal in $R$, and every ideal in $R$ arises from some ring homomorphism in this way.

\begin{theorem}
\label{thm:ring-fundamental}
 (Fundamental Theorem of Ring Homomorphisms). Suppose that $R$ and $S$ are rings,
and that $\phi : R \rightarrow S$ is a ring homomorphism. Then $R/ {\rm ker}(\phi)$ is isomorphic to $\phi(R)$.
\end{theorem}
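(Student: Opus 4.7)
The plan is to prove this classical first isomorphism theorem by exhibiting an explicit isomorphism $\psi : R/\ker(\phi) \to \phi(R)$. Let $K = \ker(\phi)$, which is an ideal of $R$ as noted just before the theorem statement, so the quotient ring $R/K$ is well defined. I would define $\psi$ on cosets by the natural rule $\psi(a + K) = \phi(a)$, and then verify the four standard properties: well-definedness, being a ring homomorphism, injectivity, and surjectivity.

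First, for well-definedness, suppose $a + K = b + K$. Then $a - b \in K$, so $\phi(a-b) = 0$, and since $\phi$ is a homomorphism this yields $\phi(a) = \phi(b)$. Hence the value $\psi(a+K)$ depends only on the coset, not on the representative $a$. Next, I would verify that $\psi$ is a ring homomorphism by pushing the quotient-ring operations through: using the definitions $(a+K)+(b+K)=(a+b)+K$ and $(a+K)\cdot(b+K)=(ab)+K$ from the preliminaries, together with $\phi$ respecting $+$, $\cdot$, and $1$, one gets $\psi((a+K)+(b+K)) = \phi(a+b) = \phi(a)+\phi(b) = \psi(a+K)+\psi(b+K)$, and similarly for multiplication, while $\psi(1+K) = \phi(1) = 1_S$.

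For injectivity, suppose $\psi(a+K) = 0$; then $\phi(a) = 0$, so $a \in K$, which means $a + K$ is the zero element of $R/K$. Hence $\psi$ has trivial kernel and is injective. For surjectivity, note that every element of $\phi(R)$ has the form $\phi(a)$ for some $a \in R$, and $\phi(a) = \psi(a+K)$ by definition, so $\psi$ maps onto $\phi(R)$. Combining these two properties, $\psi$ is a bijective ring homomorphism from $R/K$ to $\phi(R)$, i.e., a ring isomorphism.

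No step here presents a genuine obstacle; the proof is purely a matter of unwinding the definitions of quotient ring, kernel, and ring homomorphism. If anything, the only place to be careful is the well-definedness check, since that is where the ideal structure of $K$ (rather than merely being a subset) implicitly enters through the coset equivalence relation from Section~\ref{sec:ideals-lattices}. Everything else is a direct verification.
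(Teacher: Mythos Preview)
Your proof is correct and is the standard argument for the first isomorphism theorem. The paper does not actually prove this statement; it is quoted as a classical result without proof, so there is nothing to compare against.
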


We arrive at the following observation by combining the previous two theorems.

\begin{corollary} \label{cor:localringstandard}
Every finite local commutative ring is isomorphic to
${\mathbb Z}_{p^r}[x_1,\dots,x_n]/I$ where $p$ is prime, $n$ and $r$ are positive integers, and $I$ is an ideal.
\end{corollary}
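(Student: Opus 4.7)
The plan is to combine Theorem~\ref{thm:localstructure} with the Fundamental Theorem of Ring Homomorphisms (Theorem~\ref{thm:ring-fundamental}) in a direct way. Let $R$ be a finite commutative local ring with unique maximal ideal $M$. The characteristic of $R$ is a prime power $p^r$ by the remark preceding Theorem~\ref{thm:localstructure}, and $M$ is finitely generated (since $R$ itself is finite), so we can fix a minimal generating set $\{u_1, \dots, u_n\}$ of $M$.

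Next, I would invoke Theorem~\ref{thm:localstructure} to obtain a ring homomorphism $\phi : \mathbb{Z}_{p^r}[x_1, \dots, x_{n+1}] \to R$. The step that needs a brief check is surjectivity of $\phi$: inspecting the construction in Ganske--McDonald, the images of $x_1, \dots, x_n$ are chosen to be precisely the generators $u_1, \dots, u_n$ of $M$, and $x_{n+1}$ is mapped to an element whose image, together with the prime subring image (coming from the canonical map $\mathbb{Z}_{p^r} \hookrightarrow \mathbb{Z}_{p^r}[x_1,\dots,x_{n+1}]$), generates all units of $R$ modulo $M$. Since every element of $R$ can be written as a unit plus an element of $M$, and since both parts lie in $\phi(\mathbb{Z}_{p^r}[x_1, \dots, x_{n+1}])$, we conclude $\phi$ is surjective.

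With surjectivity in hand, apply Theorem~\ref{thm:ring-fundamental} to $\phi$. Setting $I := \ker(\phi)$, which is an ideal of $\mathbb{Z}_{p^r}[x_1, \dots, x_{n+1}]$, the theorem yields
\[
\mathbb{Z}_{p^r}[x_1, \dots, x_{n+1}] / I \;\cong\; \phi\bigl(\mathbb{Z}_{p^r}[x_1, \dots, x_{n+1}]\bigr) \;=\; R.
\]
Relabelling the number of indeterminates ($n+1 \mapsto n$) gives the statement of the corollary. The main (mild) obstacle is the surjectivity check; everything else is essentially bookkeeping on top of the two quoted theorems.
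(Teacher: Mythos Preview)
Your approach---apply Theorem~\ref{thm:localstructure} to obtain a homomorphism, check surjectivity, then invoke Theorem~\ref{thm:ring-fundamental}---is exactly the paper's, whose entire proof is the one-line remark that the corollary follows by combining these two theorems. One minor slip in your surjectivity sketch: not every element of $R$ is ``a unit plus an element of $M$'' (elements of $M$ itself, including $0$, are not), and showing $M$ lies in the image is not immediate from the $u_i$ generating $M$ as an $R$-ideal---the standard fix bootstraps from $S+M=R$ to $S=R$ using nilpotency of $M$---but this is routine and in any case surjectivity is part of Ganske--McDonald's original statement.
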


\subsection{Annihilators}
\label{sec:annihilators}

Let $R$ be a commutative ring.
For $a \in R$, let $\Ann_R(a)=\{x \in R \mid ax=0\}$ 
be the {\em annihilator} of $a$.
Note that annihilators are ideals of $R$,
and we say that an ideal $I$ is a 
\emph{one-element annihilator} if there exists $a \in R$
such that $I = \Ann(a)$.
We drop the subscript in $\Ann_R$ when the ring $R$ is clear from the context.
Finite commutative rings without zero divisors
are fields by 
Wedderburn's Little Theorem (see, for instance, \cite{Herstein:amm61}) 
and they have
exactly two annihilators: 
$\Ann(0)=R$ and $\Ann(a)=\{0\}$ for all $0 \neq a \in R$.
Rings with zero divisors have more annihilators: for example,
in the ring $\ZZ_2[\rx,\ry]/(\rx^2,\rx\ry,\ry^2)$ 
we have $\Ann(\rx)=\{0,\rx,\ry,\rx+\ry\}$.

Annihilators characterize the set of solutions
to two-variable linear equations over $R$.
Consider a binary linear equation $ax + by = c$,
where $a,b,c \in R$ and $x,y$ are variables.
Suppose $(r_1, r_2) \in R^2$ satisfies the equation,
i.e. $ar_1 + br_2 = c$.
Pick any $a' \in \Ann(a)$ and note that
$(r_1 + a', r_2)$ is still a solution
since $a(r_1 + a') + br_2 = ar_1 + aa' + br_2 = ar_1 + br_2 = c$ in $R$.
Furthermore, any pair $(q_1, q_2) \in R^2$
such that $q_1 \in r_1 + \Ann(a)$ and $q_2 \in r_2 + \Ann(b)$
satisfies the equation.
Given these observations, it seems likely that the annihilators of $R$
play a key role when studying $\minlin{2}{R}$
and this is indeed true, as we will see in the rest of this paper. We
now describe two classes of rings that are
defined via their annihilators: {\em lineal} and {\em Helly} rings.
A commutative ring $R$ is called {\em lineal} if its
annihilators are totally ordered under set inclusion~\cite{Marks:Mazurek:ijm2016}. 
Examples of lineal rings include $\ZZ_{p^m}$
when $p$ is prime and $\ZZ_2[\rx,\ry]/(\rx^2,\rx\ry,\ry^2)$.
A non-example is the ring $\ZZ_2[\rx,\ry]/(\rx^2,\ry^2)$ since
annihilators $\Ann(\rx)$ and $\Ann(\ry)$ are incomparable:
we have $\rx \in \Ann(\rx)$, $\ry \not\in \Ann(\rx)$, 
$\rx \not\in \Ann(\ry)$, and $\ry \in \Ann(\ry)$.

It is sometimes convenient to use an alternative
definition of lineality: $R$
is lineal if and only if 
$R$ \emph{does not} contain elements $a,b,c,d$ such that $ab=0,ad \neq 0, bc \neq 0, cd=0$. 
We refer to this as the {\em magic square property}.
The validity of the reformulated definition
follows from \cite[Proposition 2.2]{Marks:Mazurek:ijm2016}.
The magic square property implies that
the lineality of a ring is fully determined by its one-element annihilators.

\begin{proposition}[also follows from Theorem~2.1 in \cite{Marks:Mazurek:ijm2016}]
\label{prop:one-elem-lineal}
Let $R$ be a ring. The one-element annihilators of $R$ are totally ordered if and only if the annihilators of $R$ are totally ordered.
\end{proposition}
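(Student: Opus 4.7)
The forward direction is immediate: every one-element annihilator is itself an annihilator, so if the full collection of annihilators is a chain under inclusion, its sub-collection of one-element annihilators is as well. The interesting direction is the converse, which I would establish via its contrapositive. That is, I would show that if $R$ has two incomparable annihilators $\Ann(A)$ and $\Ann(B)$, then $R$ already has two incomparable one-element annihilators.

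The plan is to extract witnesses from the incomparability. Suppose $\Ann(A)$ and $\Ann(B)$ are incomparable. Pick $x \in \Ann(A) \setminus \Ann(B)$ and $y \in \Ann(B) \setminus \Ann(A)$. Because $x \notin \Ann(B)$, there is some $b \in B$ with $xb \neq 0$; because $y \notin \Ann(A)$, there is some $a \in A$ with $ya \neq 0$. Since $x \in \Ann(A)$ and $a \in A$, we have $xa = 0$; similarly $yb = 0$. So the four elements $a,b,x,y$ satisfy
\[ xa = 0, \qquad yb = 0, \qquad xb \neq 0, \qquad ya \neq 0. \]

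From this, I would read off the desired incomparability of two one-element annihilators: the relations $xa=0$ and $ya \neq 0$ give $x \in \Ann(a)$ and $y \notin \Ann(a)$, while $yb=0$ and $xb \neq 0$ give $y \in \Ann(b)$ and $x \notin \Ann(b)$. Hence $\Ann(a)$ and $\Ann(b)$ are incomparable one-element annihilators, completing the contrapositive. I do not expect any serious obstacle here; the only subtle point is avoiding circularity by not invoking the magic-square characterization of lineality (which, once proved via this proposition, would make the appeal circular). The argument above is essentially the magic-square pattern restated at the level of elements, so it gives a self-contained derivation while making clear why the magic-square reformulation works: the four relations extracted above are exactly the pattern $ab=0,\ ad\neq 0,\ bc\neq 0,\ cd=0$ after relabeling $(a,b,c,d) \mapsto (x,a,y,b)$.
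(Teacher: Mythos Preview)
Your proof is correct. The algebraic core---extracting four elements $x,y,a,b$ witnessing a ``magic square'' pattern and reading off two incomparable one-element annihilators from them---is the same idea the paper uses. The packaging differs: you run a direct contrapositive (incomparable general annihilators $\Rightarrow$ incomparable one-element annihilators), while the paper instead verifies that the one-element chain condition forces the magic-square property and then invokes the externally cited equivalence ``magic square property $\Leftrightarrow$ lineal'' from \cite{Marks:Mazurek:ijm2016} to conclude. Your version is self-contained and avoids that black box, which is exactly the circularity concern you flag; the paper's version is a line shorter but leans on the citation. Nothing of substance separates the two arguments.
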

\begin{proof} 
The backward direction is obvious. For the other direction, we assume that the one-element annihilators of $R$ are totally ordered.
Arbitrarily choose elements $a,b,c,d \in R$
such that $ab=cd=0$. Assume $ad \neq 0$ and
$bc \neq 0$.
We see that $b \in \Ann(a)$ and $d \not\in \Ann(a)$.
Similarly, $b \not\in \Ann(c)$ and $d \in \Ann(c)$. It follows that
$\Ann(a)$ and $\Ann(c)$ are incomparable.
Hence, at least one of $ad$ and $bc$ equals 0
so the magic square property implies that the annihilators of $R$ are totally
ordered.
\end{proof}

The magic square property also implies that lineal rings are local.

\begin{proposition} \label{prop:lineal-is-local}
Every finite commutative lineal ring is local.
\end{proposition}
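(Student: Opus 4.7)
The plan is to argue by contradiction using Theorem~\ref{thm:sumoflocalrings} (the structural decomposition of finite commutative rings) together with the magic square property introduced just before the statement. Suppose $R$ is a finite commutative lineal ring that is not local. By Theorem~\ref{thm:sumoflocalrings}, $R$ is isomorphic to a direct sum $\bigoplus_{i=1}^{n} R_i$ of local rings, and since $R$ itself is not local, at least two of the summands are nontrivial; without loss of generality $n \geq 2$ and $R_1,R_2$ are nontrivial.

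The key step is then to exhibit a magic square inside $R$ using the idempotents provided by the direct sum decomposition. I would take
\[
a = d = (1_{R_1}, 0, 0, \dots, 0), \qquad b = c = (0, 1_{R_2}, 0, \dots, 0),
\]
interpreted through the isomorphism $R \cong \bigoplus_{i=1}^{n} R_i$. A coordinate-wise computation then shows $ab = cd = 0$ (the two elements have disjoint supports), while $ad = a \neq 0$ and $bc = b \neq 0$, since $1_{R_1}$ and $1_{R_2}$ are nonzero in their respective rings. This is exactly a magic square, contradicting the assumed lineality of $R$.

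The main obstacle---really only a bookkeeping point---is that the magic square property is stated for elements of $R$ while our natural witnesses live in the direct sum. The remedy is simply to transport the elements back through the isomorphism, which preserves multiplication and sends $0$ to $0$. Once this is done the contradiction is immediate, and we conclude that $R$ must be local. I note that this proof uses nothing about lineality beyond the magic square reformulation, so in particular it does not need the full strength of Proposition~\ref{prop:one-elem-lineal}.
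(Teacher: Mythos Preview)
Your proof is correct and essentially identical to the paper's: both use Theorem~\ref{thm:sumoflocalrings} to decompose $R$ and then take the orthogonal idempotents $e_1=(1,0,\ldots,0)$ and $e_2=(0,1,0,\ldots,0)$ to violate the magic square property. The only cosmetic difference is that the paper phrases the contradiction contrapositively (lineality forces $e_1^2=0$ or $e_2^2=0$), whereas you exhibit the magic square $a=d=e_1$, $b=c=e_2$ directly.
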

\begin{proof}
Suppose to the contrary that $R$ is a finite commutative lineal ring that is not local. Then $R$ is
isomorphic to the direct sum of commutative local rings
$R_1,\dots,R_k$ by
Theorem~\ref{thm:sumoflocalrings}. Consider the elements
$a=(1,0,\dots,0)$ and $b=(0,1,0,\dots,0)$ in $R$.
We see that $ab=ba=(0,\dots,0)$. The
magic square property implies that at least
one of $a^2$ and $b^2$ equal $(0,\dots,0)$ and
this is impossible since $1^2=1$.
\end{proof}

We turn our attention to another class of rings defined by
their one-element annihilators: {\em Helly} rings.
A family of sets $\cF$ is \emph{Helly}
if for every triple $S_1, S_2, S_3 \in \cF$,
if every pair intersects, then
all three sets have a common element,
i.e.
$S_1 \cap S_2 \neq \emptyset$, $S_2 \cap S_3 \neq \emptyset$, and $S_1 \cap S_3 \neq \emptyset$ together implies that
$S_1 \cap S_2 \cap S_3 \neq \emptyset$.
A counterexample $(S_1, S_2, S_3) \in \cF^3$ to the Helly property is
called a \emph{tangle}; it is a triple of sets in $\cF$
in which every pair intersects, but all three have no common element.
We say that a ring $R$ is \emph{Helly}
if it does not admit a tangle over its one-element annihilator cosets.
If a ring $R$ admits a tangle, then it also admits a tangle with simple
structure.

\begin{lemma} 
  \label{lem:homogenised-tangle}
  Let $R$ be a finite commutative ring.
  If $R$ is non-Helly, then there exist
  elements $a,b,c,d \in R$ such that
  $\Ann(a)$, $\Ann(b)$ and $\Ann(c) + d$
  form a tangle.
\end{lemma}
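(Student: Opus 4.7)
The plan is to show that a tangle can always be translated so that two of its cosets pass through~$0$ and are therefore equal to the corresponding annihilators. Since $R$ is non-Helly by assumption, there exist one-element annihilator cosets $C_1 = \Ann(a) + e_1$, $C_2 = \Ann(b) + e_2$, and $C_3 = \Ann(c) + e_3$ (for some $a,b,c \in R$ and $e_1, e_2, e_3 \in R$) such that $C_1 \cap C_2$, $C_1 \cap C_3$, $C_2 \cap C_3$ are all nonempty but $C_1 \cap C_2 \cap C_3 = \emptyset$.

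The main step is to exploit the pairwise intersection of $C_1$ and $C_2$. Pick any $e \in C_1 \cap C_2$ and translate everything by $-e$, defining $C_i' := C_i - e$ for $i \in \{1,2,3\}$. Since $e \in \Ann(a) + e_1$, we have $e - e_1 \in \Ann(a)$, hence $e_1 - e \in \Ann(a)$, so
\[
C_1' = \Ann(a) + (e_1 - e) = \Ann(a).
\]
An identical computation, using $e \in C_2$, gives $C_2' = \Ann(b)$. Finally, set $d := e_3 - e$, so that $C_3' = \Ann(c) + d$. Thus $C_1'$, $C_2'$, $C_3'$ are exactly three cosets of the prescribed form.

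It remains to observe that translation by the fixed element $-e$ preserves all set-intersection data, so $(C_1', C_2', C_3')$ remains a tangle. Indeed, for any $i \neq j$,
\[
C_i' \cap C_j' = (C_i - e) \cap (C_j - e) = (C_i \cap C_j) - e,
\]
and analogously $C_1' \cap C_2' \cap C_3' = (C_1 \cap C_2 \cap C_3) - e$. Hence pairwise intersections remain nonempty (in fact $0 \in C_1' \cap C_2'$) and the triple intersection remains empty. This yields the required $a,b,c,d \in R$ and completes the proof.

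There is no substantial obstacle: the argument is essentially a normalization by translation. The only thing to be careful about is that translating all three cosets by the same element does preserve ``being a one-element annihilator coset''—but this is immediate since a coset of $\Ann(c)$ shifted by any element of $R$ is again a coset of $\Ann(c)$.
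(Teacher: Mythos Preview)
Your proof is correct and follows essentially the same approach as the paper: pick an element in $C_1 \cap C_2$, translate all three cosets by its negative so that the first two become the annihilators themselves, and observe that translation preserves the tangle property.
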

\begin{proof}
Let $C_i = \Ann(e_i) + f_i$ for $i \in \{1,2,3\}$ form a tangle in $R$.
Suppose $g \in C_1 \cap C_2$.
Then $g = g_1 + f_1 = g_2 + f_2$ for 
some $g_1 \in \Ann(e_1)$ and $g_2 \in \Ann(e_2)$.
Consider cosets
$C'_i = C_i - g$ for $i \in \{1,2,3\}$.
Note that $C'_1 = \Ann(e_1) + f_1 - g_1 - f_1 = \Ann(e_1)$ and, 
analogously, $C'_2 = \Ann(e_2) + f_2 - g_2 - f_2 = \Ann(e_2)$.
Note that $C'_1, C'_2, C'_3$ adhere to the form required by the lemma.
It remains to show that $C'_1, C'_2, C'_3$ is indeed a tangle.
Observe that if $x \in C_i \cap C_j$,
then $x - g \in C'_i \cap C'_j$ by definition.
Moreover, if there is $y \in C'_1 \cap C'_2 \cap C'_3$,
then $y + g \in C_1 \cap C_2 \cap C_3$.
However, $C_1 \cap C_2 \cap C_3 = \emptyset$,
hence no such $y$ exists, and $C'_1, C'_2, C'_3$ is indeed a tangle.
Setting $a = e_1$, $b = e_2$, $c = e_3$ and $d = f_3 - g$ completes the proof.
\end{proof}

This kind of tangles are an important part of the non-approximability result in Section~\ref{sec:inapprox-non-helly}.
They are also useful for relating the set of lineal rings to the set
of Helly rings. 

\begin{proposition} \label{prop:lineal-is-helly}
Every finite commutative lineal ring is Helly.
\end{proposition}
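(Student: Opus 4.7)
My plan is to prove the contrapositive: if $R$ is non-Helly, then $R$ is not lineal. The key leverage comes from Lemma~\ref{lem:homogenised-tangle}, which supplies a tangle in a convenient ``half-homogenised'' form where two of the three cosets are actually annihilators themselves. This essentially collapses the problem onto the total order of one-element annihilators.

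Concretely, the first step is to assume $R$ is non-Helly and invoke Lemma~\ref{lem:homogenised-tangle} to obtain elements $a, b, c, d \in R$ such that $\Ann(a)$, $\Ann(b)$, and $\Ann(c) + d$ form a tangle. From the pairwise non-emptiness I would extract two witnesses: an element $v \in \Ann(a) \cap (\Ann(c) + d)$ (so $av = 0$ and $cv = cd$) and an element $w \in \Ann(b) \cap (\Ann(c) + d)$ (so $bw = 0$ and $cw = cd$). The third pairwise intersection $\Ann(a) \cap \Ann(b)$ is automatic since both contain $0$ and will not be needed. The empty triple intersection says that no $z \in R$ satisfies simultaneously $az = 0$, $bz = 0$, and $cz = cd$.

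The second step is to suppose for contradiction that $R$ is lineal and pit this assumption against the tangle. By definition (using Proposition~\ref{prop:one-elem-lineal} if desired), the one-element annihilators $\Ann(a)$ and $\Ann(b)$ are comparable. If $\Ann(a) \subseteq \Ann(b)$, then the witness $v$ sits in $\Ann(a) \subseteq \Ann(b)$ and in $\Ann(c) + d$, hence in all three cosets of the tangle, contradicting that their triple intersection is empty. The symmetric case $\Ann(b) \subseteq \Ann(a)$ yields the same contradiction using $w$ in place of $v$. Either way lineality fails, so the contrapositive gives the proposition.

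The main obstacle is almost entirely front-loaded into Lemma~\ref{lem:homogenised-tangle}: once two of the three tangle cosets have been normalised to pass through $0$, the total order on annihilators is incompatible with an empty triple intersection, and the finishing argument is one line. Thus the only place I would need to be careful is Step 1, specifically in translating the membership statement $v \in \Ann(c) + d$ into the ring-theoretic identity $cv = cd$, and in confirming that neither potential comparability direction for $\Ann(a)$ and $\Ann(b)$ offers an escape. After that, no finiteness or structural input beyond the definitions is required, so the argument works uniformly for all finite commutative lineal rings.
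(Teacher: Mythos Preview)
Your proposal is correct and follows essentially the same approach as the paper: invoke Lemma~\ref{lem:homogenised-tangle} to normalise the tangle so that two of the cosets are $\Ann(a)$ and $\Ann(b)$, then use lineality to compare them and observe that the triple intersection collapses to a nonempty pairwise intersection. The only cosmetic difference is that the paper argues purely set-theoretically (writing $\Ann(a)\cap\Ann(b)\cap(\Ann(c)+d)=\Ann(a)\cap(\Ann(c)+d)\neq\emptyset$), while you name an explicit witness $v$; the translation to the identity $cv=cd$ you mention is unnecessary and indeed never used.
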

\begin{proof}
Assume to the contrary that $R$ is lineal but not Helly.
Lemma~\ref{lem:homogenised-tangle} implies that there exists
elements $a,b,c,d \in R$ such that  $\Ann(a)$, $\Ann(b)$ and $\Ann(c) + d$
form a tangle.
Since $R$ is lineal, $\Ann(a)$ and $\Ann(b)$ are ordered under set inclusion.
Without loss of generality, assume $\Ann(a) \subseteq \Ann(b)$.
Then, $\emptyset = \Ann(a) \cap \Ann(b) \cap (\Ann(c)+d) = \Ann(a) \cap (\Ann(c)+d) \neq \emptyset$ 
which contradicts the Helly property.
\end{proof}

We exemplify the ring classes defined above with a non-Helly ring
$\ZZ_2[\rx,\ry] / (\rx^2,\ry^2)$ and a Helly non-lineal ring
$\ZZ_2[\rx,\ry] / (\rx^3,\rx\ry,\ry^3)$.
First of all, $\ZZ_2[\rx,\ry] / (\rx^2,\ry^2)$ is not Helly
since $\Ann(\rx)$, $\Ann(\ry)$ and $\Ann(\rx + \ry) + \rx$ form a tangle
with elements $0$, $\rx$ and $\ry + \rx\ry$
in pairwise intersections,
but no common element since
$\Ann(\rx + \ry) + \rx = \{\rx, \ry, \rx + \rx\ry, \ry + \rx\ry\}$
while
$\rx, \rx + \rx\ry \notin \Ann(\ry)$
and
$\ry, \ry + \rx\ry \notin \Ann(\rx)$.
Checking that $\ZZ_2[\rx,\ry] / (\rx^3,\rx\ry,\ry^3)$
is not lineal is easy, too: we see that $\rx^2 \neq 0$ and $\ry^2 \neq 0$ but $\rx\ry=0$ so non-lineality follows from the
magic square property. Proving that $\ZZ_2[\rx,\ry] / (\rx^3,\rx\ry,\ry^3)$ is Helly is more difficult.
It can, naturally, be done by tedious algebraic manipulations
or by a computer-assisted case analysis.
However, the geometric approach that we will introduce later
offers, in many cases, a superior method. 
We thus postpone the verification until the end of Section~\ref{sec:monomial-rings}.

We finally demonstrate that Helly rings are not necessarily local.
The next proposition shows that $R_1 \oplus R_2$ is Helly whenever $R_1$ and $R_2$
are Helly. If $R_1$ and $R_2$ are non-trivial rings, then $R_1 \oplus R_2$ is obviously
not a local ring.

 \begin{proposition} \label{prop:helly-direct-sum}
 Let $R$ be a finite commutative ring. If $R$ is isomorphic to a direct sum of Helly rings, then $R$ is Helly.
 \end{proposition}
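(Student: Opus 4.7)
The plan is to argue by contradiction via a reduction to the coordinate rings, using Lemma~\ref{lem:homogenised-tangle} to normalize the shape of any hypothetical tangle and then the fact that annihilators in a direct sum decompose coordinate-wise.

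First I would reduce to the case of a direct sum of two Helly rings $R \cong R_1 \oplus R_2$; the general case follows by induction on the number of summands. Suppose, for contradiction, that $R$ is not Helly. By Lemma~\ref{lem:homogenised-tangle}, there are elements $a,b,c,d \in R$ such that $C^{(1)} := \Ann_R(a)$, $C^{(2)} := \Ann_R(b)$, and $C^{(3)} := \Ann_R(c)+d$ form a tangle, i.e.\ they pairwise intersect but have empty triple intersection.

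Next I would exploit the coordinate-wise structure of the direct sum. Writing $a=(a_1,a_2)$, $b=(b_1,b_2)$, $c=(c_1,c_2)$, $d=(d_1,d_2)$, one checks directly from the definitions that
\[ \Ann_R((a_1,a_2)) = \Ann_{R_1}(a_1) \times \Ann_{R_2}(a_2), \]
and hence each $C^{(i)}$ splits as $A^{(i)}_1 \times A^{(i)}_2$, where $A^{(i)}_k$ is a coset of a one-element annihilator in $R_k$ (specifically $A^{(3)}_k = \Ann_{R_k}(c_k) + d_k$). Since intersection commutes with the product decomposition, the condition $C^{(i)} \cap C^{(j)} \neq \emptyset$ is equivalent to $A^{(i)}_k \cap A^{(j)}_k \neq \emptyset$ holding for both $k \in \{1,2\}$.

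Finally I would apply the Helly hypothesis in each coordinate. From the pairwise-nonempty hypothesis in $R$, we obtain pairwise-nonempty intersections of $A^{(1)}_k, A^{(2)}_k, A^{(3)}_k$ in each $R_k$. Since $R_k$ is Helly, there exists $e_k \in A^{(1)}_k \cap A^{(2)}_k \cap A^{(3)}_k$ for $k=1,2$. Then $(e_1,e_2) \in C^{(1)} \cap C^{(2)} \cap C^{(3)}$, contradicting that the triple intersection in $R$ is empty. This completes the contradiction and shows $R$ is Helly. The main (and essentially only) subtlety is verifying the coordinate-wise decomposition of annihilator cosets and their intersections; the rest is a clean transfer of the Helly property through the direct sum.
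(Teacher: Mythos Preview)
Your proof is correct and follows essentially the same approach as the paper: decompose annihilator cosets coordinate-wise in the direct sum, apply the Helly property in each factor, and reassemble. The paper argues directly (not by contradiction), handles all $n$ summands at once rather than by induction, and does not invoke Lemma~\ref{lem:homogenised-tangle} since the coordinate-wise decomposition already works for arbitrary one-element annihilator cosets without normalization---but these are cosmetic differences, not substantive ones.
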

 \begin{proof}
 Assume $R \cong \bigoplus_{i=1}^n R_i$ where $R_i$ are
 finite commutative Helly rings.
 We view an element $d \in R$ as a vector $(d_1,\dots,d_n)$
 and let $d[i]=d_i$, $1 \leq i \leq n$. We extend the notation to sets 
 $D \subseteq R$ by letting
 $D[i]=\{d[i] \; | \; d \in D\}$.
 Arbitrarily choose three one-element annihilator cosets
 $C_a=(a'_1,\dots,a'_n)+\Ann((a_1,\dots,a_n))$,
   $C_b=(b'_1,\dots,b'_n)+\Ann((b_1,\dots,b_n))$, and
    $C_c=(c'_1,\dots,c'_n)+\Ann((c_1,\dots,c_n))$. Assume
    $C_a \cap C_b \neq \emptyset$,  $C_a \cap C_c \neq \emptyset$,
 and  $C_b \cap C_c \neq \emptyset$.
    It follows that
    $C_a[i] \cap C_b[i] \neq \emptyset$,
     $C_a[i] \cap C_c[i] \neq \emptyset$, and 
     $C_b[i] \cap C_c[i] \neq \emptyset$ for
    $1 \leq i \leq n$. The ring $R_i$ is
 Helly so    $C_a[i] \cap C_b[i] \cap C_c[i] \neq \emptyset$.
 Define $d=(d_1,\dots,d_n)$ where $d_i$ is arbitrarily chosen in $C_a[i] \cap C_b[i] \cap C_c[i]$.
 The element $d$ is in $C_a \cap C_b \cap C_c$ so $R$ is Helly.
\end{proof}

\section{Approximation Algorithm for Bergen Rings}
\label{sec:approximation-algorithm}

This section is devoted to our fpt-approximation algorithm for Bergen
rings, i.e. we will show the following theorem.

\begin{restatable}{theorem}{thembergenalgo} \label{thm:bergen-algo}
  If $R$ is a Bergen ring,
  then $\GAP{\gamma}$-$\minlin{2}{R}$ 
  is randomized fixed-parameter tractable for 
  some constant $\gamma$ that only depends on $R$.
\end{restatable}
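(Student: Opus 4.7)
The plan is to combine iterative compression, a reduction to \emph{simple} instances, and an ideal-chain recursion driven by the Bergen structure, with shadow removal used inside each recursive step. By iterative compression together with an FPT branching step and standard algebraic rewriting, I would first reduce $\minlin{2}{R}$ to simple instances where every binary equation has the form $x = r \cdot y$ and every unary equation is crisp of the form $x = r$. The Bergen property then supplies a chain of ideals $R = I_0 \supsetneq I_1 \supsetneq \dots \supsetneq I_\ell = \{0\}$ such that each $I_i \setminus \{0\}$ admits a partition with both the matching property and the absorption property into $I_{i+1}$. The algorithm proceeds through the chain, reducing $\minlinideal{2}{R}{I_i}$ to $\minlinideal{2}{R}{I_{i+1}}$: given a ``good'' class assignment $\tau$ at level $i$, the absorption property lets us rewrite each equation $x = ry$ as an equation over $I_{i+1}$ via arbitrary class representatives of $\tau$. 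Because $I_\ell = \{0\}$ forces the all-zero assignment, the cost of the final rewritten instance can be read off and compared against $\gamma k$.

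The central step is the single-level reduction, which I would implement via a graph $G(S)$ with vertices $s$, $t$, and $x_C$ for each variable $x$ and class $C$, with undeletable edges encoding the unary equations and a bundle $B_e$ of edges encoding each binary equation. The matching property guarantees that every $B_e$ is a matching, so conformal $st$-cuts $\delta(U)$ are in bijection with class assignments, and the number of bundles hit by $\delta(U)$ upper-bounds the number of equations violated by the corresponding assignment. One cannot, however, simply look for the cheapest conformal cut: as the $\ZZ_8$ example in the introduction shows, a conformal cut of zero cost can correspond to an assignment that turns out to be unsatisfiable once the next ideal is exposed. Because these hidden obstructions need not satisfy the theta property exploited by biased-graph LP-branching or flow augmentation, neither of those techniques applies directly.

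Instead I would use shadow removal in the style of Marx--Razgon, as refined by Chitnis~et~al., viewing the optimal conformal cut $\delta_{\sf opt}=\delta(U_{\sf opt})$ as a transversal for the family $\cF$ of all $s$-rooted walks meeting $\overline{U_{\sf opt}}$---a family that implicitly includes not only the obviously unsatisfiable walks but also those exposing obstructions visible only after absorption. After a random shadow-sampling step together with greedy closest-cut extraction and branching on the individual edges inside bundles, one obtains an edge transversal of size at most $b\cdot k$, where $b$ bounds the size of a bundle; dismantling bundles is precisely what forces the factor-$b$ blow-up per level. Translating the transversal back yields a class assignment whose cost is within a factor $b$ of optimal and which is guaranteed to extend through the subsequent reductions. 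Composing $\ell$ such steps gives overall approximation factor $b^{O(\ell)}$, a constant depending only on $R$, within FPT time in $k$. The main obstacle is showing that the shadow-removal invariant is preserved under the branching on bundle edges and under the handover to $I_{i+1}$: one must argue that after sampling and branching, the remaining instance still admits a transversal of the required size whose associated class assignment extends to a solution of bounded cost at the next level. Randomness enters only through the standard shadow-sampling step and can be derandomized in the usual way if desired.
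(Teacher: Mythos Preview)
Your proposal is correct and follows essentially the same approach as the paper: reduction to simple instances via iterative compression and branching (Theorem~\ref{the:simple-instances}), the Bergen ideal-chain recursion with the absorption property used to pass from $I_i$ to $I_{i+1}$ (Lemma~\ref{lem:next-level}), and shadow removal applied to the class-assignment graph to find a conformal cut agreeing with a bounded-cost assignment (Theorem~\ref{the:get-class-assignment}). The only cosmetic difference is that, after shadow removal produces the set $W$, the paper branches on which connected components of $G[W]$ are reachable from $s$ (handling $W$-undecided variables and then ``unsatisfied'' components) rather than on individual bundle edges, but this is an implementation detail of the same idea.
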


The section is organised as follows. We start with a technical
overview in Section~\ref{ssec:overview} that summarizes all the main ideas and main parts of
the algorithm. There we also provide the motivation for the notion of
Bergen rings for which we provide examples and context in
Section~\ref{sec:bergenrings}.
We then delve deeper into the various parts of the
algorithm, i.e. Section~\ref{sec:solving-equations-with-cosets}
provides the base case of the algorithm allowing us to solve
$\lin{2}{R}$ over cosets, Section~\ref{ssec:comp-hom-branch}
provides the preprocessing part of the algorithm, and
Section~\ref{ssec:class-assignments} provides the main working horse of the
algorithm and shows how to compute class
assignments using minimum cuts and shadow removal.

\subsection{Technical Overview}\label{ssec:overview}

We begin by presenting the algorithmic idea 
from a high-level perspective and then gradually 
introduce the necessary formal details. 
Consider $\linideal{2}{R}{I}$ -- a generalization of $\lin{2}{R}$
where every variable is restricted to values from 
a subset $I \subseteq R$.
The problem $\minlinideal{2}{R}{I}$
as well as the gap version
are defined analogously.
Note that the original problem is a special case with $I=R$.
We choose subsets $I$ to be ideals
so that the problem $\linideal{2}{R}{I}$ is 
solvable in polynomial time (see Section~\ref{sec:solving-equations-with-cosets}).

Intuitively, our algorithm is designed to 
take an instance of $\linideal{2}{R}{I}$ and 
produce an instance $\linideal{2}{R}{I'}$
with $|I'| < |I|$ while increasing the cost by 
at most a constant factor.
Thus, we make progress by reducing the number of 
values that each variable can take, and by
repeatedly applying this algorithm, 
we eventually arrive at a trivial problem.
A solution for the latter can be
transformed it into a solution for the original instance,
and since the number of steps is at most $|R|$ and
each step increases the cost at most linearly,
our solution is within a constant factor of the optimum.

An equation is \emph{simple}
if it is either a binary equation
of the form $u = rv$ for some $r \in R$
or a crisp unary equation
$u = r$ for some $r \in R$.
An instance $S$ of $\lin{2}{R}$ is simple
if every equation in $S$ is simple.
We use iterative compression,
homogenization and branching to assume that
the input instances $S$ are simple.
The details are deferred to 
Section~\ref{ssec:comp-hom-branch},
where we prove the following lemma.

\begin{theorem}
  \label{the:simple-instances}
  Let $R$ be a finite commutative ring, 
  $I$ be an ideal in $R$ and
  $\gamma \geq 1$ be a constant. 
  If $\GAP{\gamma}$ $\minlinideal{2}{R}{I}$
  restricted to simple instances is in \FPT,
  then $\GAP{\gamma}$ $\minlinideal{2}{R}{I}$ is in \FPT.
\end{theorem}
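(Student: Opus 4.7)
The plan is to apply iterative compression, reducing to a disjoint-compression subproblem where a reference assignment $\alpha$ satisfies all but a bounded set $Z$ of soft equations, and then to convert each such subinstance into a simple one using homogenization together with auxiliary-variable gadgets. Concretely, I would process the soft equations of $S$ one at a time while maintaining a pair $(Z_i, \alpha_i)$ with $|Z_i| \leq \gamma k$ and $\alpha_i : V(S) \to I$ satisfying every crisp equation and every soft equation in $\{e_1, \dots, e_i\} \setminus Z_i$. Whenever $|Z_i|$ hits $\gamma k + 1$ I invoke the compression subroutine described next; failure of compression certifies a NO answer, since any solution of size $\leq k$ for $S$ is also a solution for every intermediate instance.

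In the compression step I branch over all subsets $W \subseteq Z$ with $|W| \leq k$, interpreting $W$ as the candidate $Z^* \cap Z$ for the optimal solution $(Z^*, \alpha^*)$ of size at most $k$. For each $W$ I delete $W$ and promote $Z \setminus W$ to crisp, with residual budget $k - |W|$. The reference $\alpha$ still satisfies all original crisp equations and every soft equation outside $Z$, but fails on the newly crisp $Z \setminus W$, which is the main obstruction to homogenization. To resolve this, I further branch over all assignments $\phi$ from the variables appearing in $Z \setminus W$ (at most $2(\gamma k + 1)$ in total) into $I$, discarding those branches in which $\phi$ does not satisfy every equation of $Z \setminus W$. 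The overall branching factor is $\binom{\gamma k + 1}{\leq k} \cdot |I|^{2(\gamma k + 1)}$, which is FPT in $k$.

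Having fixed $W$ and $\phi$, I homogenize by substituting $x \mapsto x + \alpha(x)$ in every variable; this preserves the ideal $I$ as the domain and turns every equation that $\alpha$ satisfies into a homogeneous one of the same cost status, while the equations of $Z \setminus W$ become redundant given the pinnings $x' = \phi(x) - \alpha(x)$ for $x \in V(Z \setminus W)$ that $\phi$ supplies. Each remaining homogeneous binary equation $au + bv = 0$ is then rewritten with a fresh auxiliary variable $w \in I$ as the pair of simple binary equations $w = a u$ and $w = -b v$; when the original was soft, one of the two is kept soft so that cost is preserved by a bijection on deletion sets. Homogeneous unary equations $au = 0$ are rewritten using auxiliaries $z, w \in I$ as the simple equations $z = 0$, $w = a u$, and $w = z$, the last of these carrying the softness when appropriate. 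The pinnings inherited from $\phi$ are already simple unary equations. The resulting instance is simple, lies over $I$, and admits a cost-preserving correspondence with solutions of the disjoint-compression subinstance.

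Finally I invoke the assumed FPT algorithm for $\GAP{\gamma}$-$\minlinideal{2}{R}{I}$ on simple instances with budget $k - |W|$; if it returns a solution of size $\leq \gamma(k - |W|)$, combining it with $W$ yields a solution of size $\leq |W| + \gamma(k - |W|) \leq \gamma k$, while if no $(W, \phi)$ branch produces a solution, I conclude that no solution of size $\leq k$ exists. The main conceptual obstacle is that the simple fragment simultaneously forbids non-homogeneous binary equations and binary equations whose two coefficients are genuinely different, so the reduction must perform homogenization and the $w$-gadget in a single pass; the step that makes this possible for $\alpha$-violating equations is the exhaustive branching on $V(Z \setminus W)$, which eliminates those equations \emph{before} homogenization rather than trying to homogenize them.
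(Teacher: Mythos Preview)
Your proposal is correct and follows essentially the same approach as the paper: iterative compression to obtain a set $Z$ of at most $\gamma k+1$ soft equations with $S-Z$ satisfiable, branching on assignments to the variables of $Z$, homogenization via the reference assignment, and the auxiliary-variable gadget $w=au$, $w=-bv$ to split a homogeneous binary equation into simple ones. The only organizational difference is that the paper branches directly over all $\beta\colon V(Z)\to I$ and sets the residual budget to $k-\cost_Z(\beta)$, whereas you first branch on $W\subseteq Z$ and then on $\phi$ over $V(Z\setminus W)$ satisfying $Z\setminus W$; both are valid and yield the same FPT bound.
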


Now, let $\mincost(S,I)$ for an instance $S$ of 
$\lin{2}{R}$ and ideal $I$
be the minimum cost of an assignment $V(S) \to I$.
We abbreviate $\mincost(S, R)$ to $\mincost(S)$.
Consider two ideals $I' \subset I \subseteq R$.
Roughly, we want a procedure that transforms
an instance $S$ of $\lin{2}{R}$ into an instance $S'$ 
such that
$\mincost(S',I') \leq \gamma \cdot \mincost(S, I)$,
where $\gamma$ only depends on $I$.
For this, we require certain structure from $I$ and $I'$.
Let $\equiv$ be an equivalence relation on $I$ with classes $\EQ$. 
We sometimes abuse notation, for $r \in R$,
write $\EQ(r)$ to denote the equivalence class in $\EQ$ 
that contains $r$.
If the classes in $\EQ$ are cosets of ideals in $R$, 
we say that $\equiv$ is a \emph{coset partition of $I$}.
We say that $\equiv$ is \emph{matching} if
\begin{itemize}
  \item $\{0\}$ is an equivalence class in $\EQ$,
  \item for every $i,j \in I$ and $r \in R$,
  if $i \equiv j$ then $ri \equiv rj$, and
  \item for every $i,j \in I$ and $r \in R$,
  if $i \not\equiv j$, then either $ri \not\equiv rj$ or $ri=rj=0$.
\end{itemize}
Let $\EQ^{\neq 0} = \EQ \setminus \{\{0\}\}$.
The name ``matching'' comes from 
considering bipartite graphs $G^{\neq 0}_r$
defined by binary equations $u = rv$ 
for every $r \in R$ as follows:
let $V(G_r) = \EQ^{\neq 0} \uplus \EQ^{\neq 0}$ 
and let there be an edge between two classes 
$C_1$ on the left and $C_2$ on the right 
if and only if $i = rj$ for some 
$i \in C_1$ and $j \in C_2$.
Then $\equiv$ being matching implies
that every graph $G^{\neq 0}_r$ is a matching, 
i.e. every vertex has degree at most one.

The matching property of $\equiv$ is crucial
for the main algorithmic lemma, which we
state below and prove in Section~\ref{ssec:class-assignments}.
Let $\tau : V(S) \to \Gamma$ be a class assignment.
We say that an assignment 
$\alpha : V(S) \to I$ \emph{agrees with} $\tau$
if the value $\alpha(v)$ belongs to the class $\tau(v)$ 
for all $v \in V(S)$.

\begin{theorem}
  \label{the:get-class-assignment}
  Let $R$ be a finite commutative ring and
  $I$ be an ideal in $R$ that
  admits a matching coset partition $\equiv$
  with $d$ equivalence classes.
  There is a randomized algorithm that takes
  a simple instance $S$ of $\lin{2}{R}$,
  $\equiv$ and integer $k$,
  and in $\bigohs(2^{O(dk \log dk})$ time
  returns a class assignment $\tau : V(S) \to \EQ$
  such that the following holds:
  if $\mincost(S,I) \leq k$, then
  with probability at least
  $2^{-O(d^2k^2)}$,
  there exists an assignment $V(S) \to I$ 
  of cost at most $dk$
  that agrees with $\tau$.
\end{theorem}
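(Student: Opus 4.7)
The plan is to implement the graph-based strategy from Section~\ref{ssec:overview} using shadow removal in the style of Marx and Razgon~\cite{marx2014fixed}, later refined by Chitnis et al.~\cite{chitnis2015directed}. First, construct the auxiliary graph $G = G(S)$ with vertices $s$, $t$, and one vertex $x_C$ per variable $x$ and non-zero class $C \in \EQ^{\neq 0}$. Each crisp unary equation is encoded by undeletable edges incident to $s$ or $t$, and each binary equation $e = (u = r v)$ is represented by a bundle $B_e = \{u_{\EQ(ri)} v_{\EQ(i)} : i \in I,\ ri \neq 0\}$. The matching property of $\equiv$ forces each bundle to be a matching of size at most $d-1$, and conformal cuts $\delta(U)$ (sets with $s \in U$, $t \notin U$, and at most one $x_C$ in $U$ per variable $x$) biject with class assignments $\tau : V(S) \to \EQ$ via $\tau(x) = C$ whenever $x_C \in U$ (and $\tau(x) = \{0\}$ otherwise).

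An assignment $\alpha : V(S) \to I$ of cost at most $k$ gives rise to a conformal cut $\delta(U_{\text{opt}})$ intersecting at most $k$ bundles, hence of edge-cost at most $(d-1)k$. Conversely, the matching and absorption properties of $\equiv$ let one lift any conformal cut whose class assignment $\tau$ is compatible with some such $\alpha$ into an $I$-assignment that agrees with $\tau$ with cost bounded by the number of intersected bundles. The task is therefore to compute a conformal cut whose class assignment admits an agreeing $I$-assignment of cost at most $dk$.

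The main obstacle is that a minimum conformal cut need not correspond to such a class assignment, as witnessed by system~\eqref{eq:z8-example}: an obstruction can live inside a single coset and only become visible at the next ideal-chain level, so that a conformal cut of zero edge-cost may still encode an inconsistent set of equations. This rules out a direct min-cut approach and motivates the use of torso-based shadow removal. Concretely, mark each edge of $G$ independently as undeletable with a probability $p = p(d,k)$ chosen so that, with probability $2^{-O(d^2 k^2)}$, the shadow of $\delta(U_{\text{opt}})$ (the edges separated from $\overline{U_{\text{opt}}}$ by the cut) is entirely undeletable while $\delta(U_{\text{opt}})$ itself survives. Under this event, a closest-to-$s$ minimum cut in the resulting torso reproduces the reachability structure of $U_{\text{opt}}$. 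Combining this with a branching scheme of depth $O(dk \log dk)$ that enumerates the at most $(d-1)k$ bundles cut by $\delta(U_{\text{opt}})$ together with the specific edge selected from each yields the claimed running time $\bigohs(2^{O(dk \log dk)})$.

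Finally, extract $\tau(x)$ as the unique class $C$ with $x_C$ on the $s$-side of the computed cut. Under the success event the cut is shadow-free with respect to $U_{\text{opt}}$, so that the optimal $\alpha$ can be lifted to an assignment agreeing with $\tau$ at cost at most $dk$: the $(d-1)k$ bundles crossed by the new cut account for the equations that must now be deleted, in addition to the $k$ equations already violated by $\alpha$. The delicate step I anticipate as hardest is verifying that shadow-freeness is exactly the right notion to guarantee extendibility at the class-assignment level; this requires combining the matching property (so that the cut's topology reflects class compatibility on individual equations) with the absorption property (so that the lifted assignment lies in $I$), and it is precisely this interplay around which the definition of Bergen rings, together with its compatible chain of ideals and coset partitions, is designed.
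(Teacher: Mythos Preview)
Your high-level architecture matches the paper's, but the plan has a genuine gap at the step where you extract the class assignment after shadow removal. The claim that ``a closest-to-$s$ minimum cut in the resulting torso reproduces the reachability structure of $U_{\text{opt}}$'' is not what shadow removal provides: it yields a set $W$ such that \emph{some} minimal $sA$-cut $Y'$ of size at most $(d-1)k$ lies inside $\delta(W)$, where $A$ is the (unknown) set of vertices unreachable from $s$ in $G-\sep(\ed(Z))$. You cannot compute a minimum $st$-cut and be done; the paper instead branches on which connected components of $G[W]$ land on the $t$-side, exploiting that $Y' \subseteq \delta(W)$ and that each choice ``cut this component off'' consumes budget. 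Your description of the branching (``enumerate the at most $(d-1)k$ bundles cut by $\delta(U_{\text{opt}})$'') is not implementable as stated, since $U_{\text{opt}}$ is unknown and the number of candidate bundles is $\Theta(|S|)$.

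More seriously, you are missing what the paper calls the \emph{unsatisfied-components} step. After the component branching one has a conformal cut and hence a class assignment $\tau$, but a component $C$ of $G[W]$ left on the $s$-side may carry equations that are inconsistent once its variables are pinned to the cosets $\tau$ prescribes---this is precisely the hidden obstruction in example~\eqref{eq:z8-example}, and shadow-freeness alone does not rule it out. The paper detects such components in polynomial time by solving $\lin{2}{R}$ over cosets (Proposition~\ref{prop:in-ideal}) and then branches once more: at most $k$ unsatisfied components may be kept on the $s$-side (since $\comp{Z}$ must contain an equation from each one), and the rest are cut off at further edge cost. This second branching is where the ``$+k$'' in the final bound $dk=(d-1)k+k$ actually originates, and without it there is no argument that the returned $\tau$ agrees with any value assignment of cost $\leq dk$. (A minor point: the absorption property plays no role in this theorem---only the matching property of $\equiv$ is used here; absorption enters later, in Lemma~\ref{lem:next-level}.)
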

We mention that the algorithm in the previous theorem
can be derandomized using standard methods~\cite{chitnis2015directed}.
While technical details are deferred to
Section~\ref{ssec:class-assignments},
we mention here that the 
equivalence class $\{0\}$ plays a special role:
since all binary equations in
$S$ are homogeneous,
they are satisfied by the all-zero assignment.
Intuitively, this observation
allows us to ``greedily''
assign value $0$ to the variables,
with the only obstacle being the crisp
unary equations in $S$.
After phrasing the class assignment problem
in terms of cuts in a certain auxiliary graph,
we apply the shadow removal technique.
We also mention that Theorem~\ref{the:get-class-assignment}
is the only place in the algorithm
where we sacrifice optimality
and resort to approximation.

Recall that our goal is to reduce
an instance of
$\linideal{2}{R}{I}$ into an instance of
\\$\linideal{2}{R}{I'}$.
To connect $I$ with $I'$, we require that 
\[
  i \equiv j \implies i - j \in I'
\]
for all $i, j \in I$.
To explain the utility of this property,
we need some definitions.
Choose an arbitrary representative element $\rep{C}$
from every equivalence class $C \in \EQ$.
Furthermore, for $i \in I$, 
let $\rep{i} = \rep{\EQ(i)}$ and
$\res{i} = i - \rep{i}$.
Now let $e$ be a simple equation and 
$\tau(V(e)) \to \EQ$ be a class assignment to its variables.
If $e = (u = r)$, then 
define equation $e' = \nxt(e, \tau, \equiv)$ as
\[ 
  v' = r - \rep{\tau(v)}.
\] 
If $e = (u = rv)$, then 
define equation $e' = \nxt(e, \tau, \equiv)$ as
\[
  u' = rv' + (r \rep{\tau(v)} - \rep{\tau(u)}).
\]

\begin{lemma}
  \label{lem:next-level}
  Let $R$ be a finite commutative ring and
  $I' \subseteq I$ be ideals in $R$.
  Suppose $\equiv$ is a matching coset partition of $I$
  such that $i \equiv j \implies i - j \in I'$
  for all $i,j \in I$.
  Let $e$ be a simple equation in $R$
  and $\tau : V(e) \to \EQ$
  be a class assignment.
  Then $e$ admits a satisfying assignment in $I$
  that agrees with $\tau$ 
  if and only if $e' = \nxt(e, \tau, \equiv)$ is satisfiable in $I'$.
\end{lemma}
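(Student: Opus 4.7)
The plan is to prove the two directions of the ``iff'' by a direct case analysis on the structure of the simple equation $e$, handling the unary case $e = (u = r)$ and the binary case $e = (u = rv)$ in parallel. The key observation in both directions is that the maps $\alpha(w) \mapsto \alpha(w) - \rep{\tau(w)}$ and $\alpha'(w) \mapsto \rep{\tau(w)} + \alpha'(w)$ are formal inverses and convert solutions of $e$ into solutions of $e'$, provided one can control class membership.

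For the forward direction, starting from $\alpha \colon V(e) \to I$ that satisfies $e$ and agrees with $\tau$, I would set $\alpha'(w) = \alpha(w) - \rep{\tau(w)}$. Since $\alpha(w)$ and $\rep{\tau(w)}$ both lie in the class $\tau(w)$, the absorption hypothesis $i \equiv j \Rightarrow i - j \in I'$ gives $\alpha'(w) \in I'$, so $\alpha'$ is a valid $I'$-assignment. A direct computation then verifies that $\alpha'$ satisfies $e'$: for the binary case the identity $r\alpha'(v) + r\rep{\tau(v)} - \rep{\tau(u)} = r\alpha(v) - \rep{\tau(u)} = \alpha(u) - \rep{\tau(u)} = \alpha'(u)$ is essentially the definition of $\nxt$, and the unary case is even more immediate.

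For the backward direction I would reverse this bijection: from $\alpha' \colon V(e) \to I'$ satisfying $e'$, define $\alpha(w) = \rep{\tau(w)} + \alpha'(w)$. Reversing the algebra above shows $\alpha$ satisfies $e$. The nontrivial part is arguing that $\alpha(w) \in \tau(w)$, i.e.\ that $\alpha$ agrees with $\tau$. For the unary case this reduces to observing that $u' = r - \rep{\tau(u)} \in I'$ forces $r \equiv \rep{\tau(u)}$, placing the reconstructed $\alpha(u) = r$ in $\tau(u)$. For the binary case the matching property is the workhorse: multiplication by $r$ respects the equivalence $\equiv$, so $r\tau(v)$ sits inside a single class $\EQ(r\rep{\tau(v)})$, and the constant correction built into $\nxt(e,\tau,\equiv)$ is precisely chosen so that this class coincides with $\tau(u)$, forcing $\alpha(u) = r\alpha(v)$ into $\tau(u)$.

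The main obstacle is this class-membership verification in the backward direction: the reconstruction $\rep{\tau(w)} + \alpha'(w)$ a priori only lives in the $I'$-coset of $\rep{\tau(w)}$, which can in principle contain several equivalence classes of $\equiv$. Showing that the matching property, together with the specific algebraic form of $e'$, pins $\alpha(w)$ down to exactly the prescribed class $\tau(w)$ is the technical heart of the lemma; the remainder of the argument is essentially bookkeeping with the definitions of $\rep{\cdot}$, $\res{\cdot}$, and $\nxt$.
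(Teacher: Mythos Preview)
Your forward direction and the ``satisfies $e$'' half of the backward direction are correct and coincide with the paper's proof, which is extremely terse: it simply writes down the mutually inverse maps $\alpha'(v')=\res{\alpha(v)}$ and $\beta(v)=\rep{\tau(v)}+\beta'(v')$ and asserts the conclusions without further computation.

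The gap is in your plan to verify that the reconstructed $\beta$ \emph{agrees with} $\tau$. Your unary argument (``$r-\rep{\tau(u)}\in I'$ forces $r\equiv\rep{\tau(u)}$'') invokes the \emph{converse} of the absorption hypothesis, which is not assumed and in general fails; the matching property does not save the binary case either. Indeed, the agreement claim is false for arbitrary $\tau$: take $R=\ZZ_8$, $I=R$, $I'=(2)$ with the paper's standard partition $\{0\},\{1,3,5,7\},\{2,6\},\{4\}$, let $e=(v=6)$ and set $\tau(v)=\{4\}$. Then $e'=(v'=2)$ is satisfiable in $I'$, yet the unique solution $v=6$ of $e$ lies in $\{2,6\}\neq\tau(v)$. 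So no argument can ``pin $\alpha(w)$ down to exactly the prescribed class $\tau(w)$'' without restricting $\tau$. The paper's one-line backward direction asserts agreement but does not justify it either, and the only consequence actually used downstream (in the proof of Theorem~\ref{thm:bergen-algo}) is that $e$ is satisfiable in $I$, which your construction does establish. The correct repair is therefore to drop the agreement verification from your backward direction rather than to search for a proof via the matching property.
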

\begin{proof}
  If $\alpha : V(e) \to I$ is a satisfying assignment
  that agrees with $\tau$, then $\alpha' : V(e') \to I'$
  defined as $\alpha'(v') = \res{\alpha(v)}$ for all $v \in V(e)$
  satisfies $e'$.
  On the other hand, if $\beta' : V(e') \to I'$
  satisfies $e'$, then
  $\beta : V(e) \to I$ defined as 
  $\beta(v) = \rep{\tau(v)} + \beta'(v)$
  for all $v \in V(e)$ agrees with $\tau$
  and satisfies $e$.
\end{proof}

Intuitively, Theorem~\ref{the:get-class-assignment}
provides us with a class assignment that agrees with
a value assignment of cost at most $O(k)$,
and Lemma~\ref{lem:next-level} allows us to
use this class assignment to reduce the set of values 
available to all variables, making the problem easier.
We combine all algorithmic and algebraic ingredients
presented so far in the following definition and theorem.

\begin{definition}[Bergen ring]
  A finite commutative ring $R$ is \emph{Bergen}
  if it admits a chain of ideals
  $R = I_0 \supset I_1 \supset \dots \supset I_\ell = \{0\}$
  such that, for all $0 \leq i < \ell$,
  ideal $I_i$ admits a matching coset partition $\equiv_i$
  and $a \equiv_i b \implies a - b \in I_{i+1}$
  for all $a,b \in I_i$.
\end{definition}

Let $\cI = \{(I_i)_{i=0}^{\ell}, (\equiv_i)_{i=0}^{\ell-1}\}$ be
a \emph{Bergen witness} for ring $R$,
i.e. a chain of ideals and corresponding equivalence relations
that satisfy the definition above. Note that a ring $R$ can only have
a finite number of Bergen witnesses.
Let $\gamma(\cI) = \prod_{i=0}^{\ell-1} d_i$,
where $d_i$ is the number of equivalence classes in $\equiv_i$,
and let $\gamma(R)$
be the minimum $\gamma(\cI)$
over all Bergen witnesses.

\thembergenalgo*
\begin{proof}
  Let $I_0 \supset I_1 \supset \dots \supset I_\ell$
  be the chain of ideals and
  $\equiv_0, \dots, \equiv_{\ell-1}$ be the
  the equivalence relations
  witnessing that $R$ is Bergen.
  Furthermore, let $\EQ_i$ be the equivalence
  classes of $\equiv_i$ and assume without loss of generality
  $\prod_{j=0}^{\ell-1} |\EQ_j| = \gamma(R)$.
  
  Define partial products
  $\gamma_i = \prod_{j=i}^{\ell-1} |\EQ_j|$
  for $0 \leq i < \ell$ and let $\gamma_\ell = 1$.
  We shall prove that $\GAP{\gamma_i}$ $\minlinideal{2}{R}{I_i}$
  for $0 \leq i \leq \ell$ is in FPT
  by reverse induction on $i$.
  First, note that $I_\ell = \{0\}$ and
  $\GAP{1}$ $\minlinideal{2}{R}{i}$ is in polynomial time:
  assign $0$ to all variables, check that no
  crisp equations are broken 
  (otherwise, the cost is infinite),  
  then count the number of broken 
  soft constraints and compare it to $k$.

  Now suppose $\GAP{\gamma_{i+1}}$ $\minlinideal{2}{R}{I_{i+1}}$
  is in FPT for some $0 \leq i < \ell$,
  and let $\cA$ be the algorithm solving it.
  Let $(S, k)$ be an instance of
  $\GAP{\gamma_i}$ $\minlinideal{2}{R}{I_{i}}$.
  By Theorem~\ref{the:simple-instances},
  we may assume without loss of generality
  that $S$ is simple.
  Run the algorithm from Theorem~\ref{the:get-class-assignment}
  to obtain a class assignment $\tau : V(S) \to \EQ_i$.
  Define the instance $S'$ of $\linideal{2}{R}{I_{i+1}}$
  as $S' = \{ \nxt(e, \tau, \equiv_i) \mid e \in S \}$,
  run $\cA$ on input $(S', |\EQ_i|k)$ and return its answer.
  This clearly requires FPT time.

  For correctness, first assume that 
  $(S, k)$ is a yes-instance,
  i.e. $\mincost(S, I_i) \leq k$.
  Then with probability $2^{-O(k^2)}$,
  there exists an assignment $\alpha : V(S) \to I_i$
  of cost $|\EQ_i|k$
  that agrees with the class assignment $\tau$.
  Then, by Lemma~\ref{lem:next-level},
  $\mincost(S', I_{i+1}) \leq |\EQ_i|k$,
  and $\cA$ accepts, and we accept as well.
  Now suppose $\cA$ accepts $(S', |\EQ_i|k)$,
  i.e. $\mincost(S', I_{i+1}) \leq 
  \gamma_{i+1} \cdot |\EQ_i| k \leq \gamma_i k$.
  Then, by Lemma~\ref{lem:next-level},
  $\mincost(S, I_i) \leq \gamma_i k$,
  hence it is a yes-instance,
  and our answer is correct.
\end{proof}

\subsection{Examples of Bergen Rings} \label{sec:bergenrings}

We will now present a few examples of Bergen rings.
Before we begin, we note that it is clearly a decidable problem to check whether a finite ring $R$ is Bergen via exhaustive enumeration of ideal chains and equivalence relations.
We first verify that
every finite field ${\mathbb F}$ is a Bergen ring. The field ${\mathbb F}$ only have two ideals
${\mathbb F}=I_0 \supseteq I_1 = \{0\}$ and we can choose $\equiv_0$ so that every $\{a\}$, $a \in {\mathbb F}$, is an equivalence class of $\equiv_0$.
Clearly, all of these equivalence classes are cosets since $\{0\}$ is
an ideal in ${\mathbb F}$.
Now consider ${\mathbb Z}_4$. Here, we have the chain of ideals $\{0,1,2,3\} \supset \{0,2\} \supset \{0\}$ and we let 
the equivalence classes of $\equiv_0$ be $\{0\}$, $\{2\}$, and $\{1,3\}$, and
the equivalence classes of $\equiv_1$ be $\{0\}$ and $\{2\}$.
Note that $\{1,3\}=\{0,2\}+1$ so it is a coset in ${\mathbb Z}_4$.
With these examples as inspiration, we can verify that every finite chain ring
is Bergen. We first study how the ideals in a chain ring are generated.

\begin{lemma}
  \label{lem:chain-basis}
  Let $R$ be a finite commutative chain ring
  with ideals
  $R = I_0 \supset I_1 \supset \dots \supset I_\ell \supset I_{\ell+1} = \emptyset$.
  Then there exist elements $g_0, g_1, \dots, g_\ell$ such that
  $I_j = (g_j, \dots, g_\ell)$ for all $0 \leq j \leq \ell+1$.
\end{lemma}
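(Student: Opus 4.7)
The plan is to build the generators $g_\ell, g_{\ell-1}, \ldots, g_0$ by downward induction on $j$. Before starting, I would observe that since $R$ is a finite chain ring, its ideals form a unique totally ordered finite chain, so the sequence in the hypothesis must enumerate \emph{all} ideals of $R$. In particular, no ideal of $R$ sits strictly between $I_{j+1}$ and $I_j$ for any $0 \le j \le \ell$. This is the key structural fact the proof will rely on.

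For the base case $j = \ell+1$, I would adopt the convention that $(g_{\ell+1},\ldots,g_\ell)$ is the ideal generated by the empty set, namely $\{0\}$, which equals $I_{\ell+1}$.

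For the inductive step, assume $I_{j+1} = (g_{j+1},\ldots,g_\ell)$. Since $I_j \supsetneq I_{j+1}$, pick any element $g_j \in I_j \setminus I_{j+1}$, and set
\[
  J := (g_j, g_{j+1}, \ldots, g_\ell) \;=\; (g_j) + I_{j+1}.
\]
Then $J \subseteq I_j$ because $g_j \in I_j$ and $I_{j+1} \subseteq I_j$; and $J \supsetneq I_{j+1}$ because $g_j \in J \setminus I_{j+1}$. Since the listed chain contains every ideal of $R$, nothing lies strictly between $I_{j+1}$ and $I_j$, and therefore $J = I_j$, completing the induction.

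The only subtle point I expect is the claim that the given sequence exhausts all ideals, so no intermediate ideal can interfere with the conclusion $J = I_j$. This is immediate from the chain ring property combined with $|R|<\infty$: the ideals are linearly ordered, and any finite linearly ordered set has a unique maximal chain. Once this is pinned down, the rest is routine.
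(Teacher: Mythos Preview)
Your proof is correct and follows essentially the same approach as the paper: reverse induction on $j$, picking $g_j \in I_j \setminus I_{j+1}$ and observing that $(g_j)+I_{j+1}$ must equal $I_j$ because no ideal lies strictly between them. You are in fact slightly more explicit than the paper in justifying that the listed chain exhausts all ideals of $R$.
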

\begin{proof}
  We use reverse induction on $j$ starting with 
  $j = \ell+1$ and working towards $j = 0$.
  In the base case with $j = \ell+1$, the claim is trivially true.
  Now suppose $I_{j+1} = (g_{j+1}, \dots, g_\ell)$,
  pick any $r \in I_{j} \setminus I_{j+1}$.
  Consider the ideal $(r, g_j, \dots, g_\ell)$.
  Since $R$ is a chain ring, it equals an ideal $I_k$
  for some $k \leq j$.
  Moreover, $I_{j} \supseteq (r, g_j, \dots, g_\ell) \supsetneq I_{j+1}$, so $k = j$.
\end{proof}

\begin{lemma}
  \label{lem:chain-is-bergen}
  Every chain ring is Bergen.
\end{lemma}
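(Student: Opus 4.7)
The plan is to exhibit a Bergen witness for a finite commutative chain ring $R$ using the ideal chain itself, equipped with equivalence relations induced by cosets of the strictly smaller ideals.

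First, I would invoke the standard structure theory for finite commutative chain rings, which follows quickly from Lemma~\ref{lem:chain-basis}: $R$ is local, its maximal ideal is principal, say $I_1 = (\pi)$, and then by induction $I_k = (\pi^k)$ for $0 \leq k \leq \ell$ with $\pi^\ell = 0$. A direct consequence, which I would prove in one short step, is that every nonzero $a \in R$ admits a representation $a = u\pi^{\sigma(a)}$ with $u \in R$ a unit and $\sigma(a) \in \{0,\ldots,\ell-1\}$ the unique integer with $a \in I_{\sigma(a)} \setminus I_{\sigma(a)+1}$.

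Next, for each $0 \leq i < \ell$ I define $\equiv_i$ on $I_i$ by declaring $\{0\}$ to be one class and letting the class of any nonzero $a \in I_i$ be the coset $a + I_{\sigma(a)+1}$. Each class is a coset of an ideal of $R$ (namely $I_{\sigma(a)+1}$, and $I_\ell$ for the class $\{0\}$), so $\equiv_i$ is a coset partition of $I_i$. The absorption condition $a \equiv_i b \Rightarrow a - b \in I_{i+1}$ is immediate, since either $a=b=0$ or $a-b \in I_{\sigma(a)+1} \subseteq I_{i+1}$ (using $\sigma(a) \geq i$).

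Finally, I verify the matching property by a short case analysis built around the representation $a = u\pi^{\sigma(a)}$ and similarly $r = v\pi^m$ (or $r=0$), so that $ra = vu\pi^{m+\sigma(a)}$, which vanishes precisely when $m+\sigma(a) \geq \ell$. The preservation direction reduces to the computation $ra - rb = r(a-b) \in r\cdot I_{\sigma(a)+1} \subseteq I_{m+\sigma(a)+1}$, while in the separation direction one splits on $\sigma(a) \neq \sigma(b)$ versus $\sigma(a) = \sigma(b)$ with $a-b \notin I_{\sigma(a)+1}$; in the second case $a-b$ has the form $u_w\pi^{\sigma(a)}$ with $u_w$ a unit, so $r(a-b) = vu_w \pi^{m+\sigma(a)}$ is either zero or itself lies in $I_{m+\sigma(a)} \setminus I_{m+\sigma(a)+1}$. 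I do not foresee a serious obstacle: the only step requiring care is that whenever both $ra$ and $rb$ are nonzero they share the same level $m+\sigma(a)$, which is exactly what the ``unit times $\pi^k$'' normal form delivers.
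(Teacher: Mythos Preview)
Your proof is correct and takes essentially the same approach as the paper: both use the ideal chain itself as the Bergen witness, with equivalence classes of the form $a + I_{\sigma(a)+1}$, and verify the matching property via the level structure $\sigma$ (the paper's $\ord$). Your presentation through a single principal generator $\pi$ and the normal form $a = u\pi^{\sigma(a)}$ is slightly more direct than the paper's $(\ord,\lsu)$ decomposition built on Lemma~\ref{lem:chain-basis}, but the underlying computation is the same.
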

\begin{proof}
  Let $R$ be a finite commutative chain ring
  with ideals
  $R = I_0 \supset I_1 \supset \dots \supset I_\ell \supset I_{\ell+1} = \emptyset$.
  Use Lemma~\ref{lem:chain-basis} to obtain
  elements $g_0, \dots, g_\ell$ such that
  $I_j = (g_j, \dots, g_\ell)$ for all $0 \leq j \leq \ell+1$.
  Observe that $g_j \notin (g_{j+1}, \dots, g_\ell)$ for all $j$,
  so every element $r \in R$ can be uniquely represented as
  $r = \sum_{j=0}^{\ell} r_{j} g_{j}$, where all $r_j$ are units.
  Let $\vec{r} = (r_0, \dots, r_\ell)$ be the vector of units
  corresponding to $r$, and for $r \neq 0$, let
  $\ord(r) = \min \{ i : r_i \neq 0 \}$
  be the index of the first nonzero coordinate in $\vec{r}$
  and 
  $\lsu(r) = r_{\ord(r)}$ be the least significant unit.
  Define $\ord(0) = \ell + 1$ and $\lsu(0) = 0$.
  Observe that for every $r \in R$, we have
  $r \in I_{\ord(r)} \setminus I_{\ord(r)+1}$.
  Moreover, for every $p, q \in R$,
  we have 
  \begin{equation}
    \label{eq:ord}
    \ord(p \cdot q) = \min(\ell+1, \ord(p) + \ord(q))
  \end{equation} 
  and
  \begin{equation}
    \label{eq:lsu}
    \lsu(p \cdot q) = \lsu(p) \cdot \lsu(q).
  \end{equation}

  To show that $R$ is Bergen,
  we use the chain $I_0 \supset \dots \supset I_{\ell}$
  and one equivalence relation $\equiv$ for all $i$ defined as
  $a \equiv b$ if and only if $\ord(a) = \ord(b)$ and $\lsu(a) = \lsu(b)$.
  First, observe that if $a \equiv b$ and $a, b \in I_i$, then
  $\ord(a - b) > \ord(a)$ since
  $\ord(a) = \ord(b)$ and $\lsu(a) = \lsu(b)$,
  thus $a - b \in I_{i+1}$.
  Moreover, the equivalence classes of $\equiv$ in ideal $I_i$ are
  $C_{p,q} = \{ r \in I_i : \ord(r) = p, \lsu(r) = q \}$,
  which are exactly cosets $q + I_{p+1}$, 
  so $\equiv$ is a coset partition.
  Now we claim that $\equiv$ has the matching property.
  Since $\ord(0) = \ell+1$ and $\ord(a) \leq \ell$ for all $a \neq 0$,
  we have that $\{0\}$ is an equivalence class.
  Moreover, if $a \equiv b$, then
  $ra \equiv rb$ for all $r \in R$ by 
  \eqref{eq:ord}~and~\eqref{eq:lsu}.
  Finally, if $a \not\equiv b$, then
  for all $r \in R$,
  \eqref{eq:ord}~and~\eqref{eq:lsu}
  imply that $\ord(ra) = \ord(rb) = \ell+1$ or
  $\ord(ra) \neq \ord(rb)$ or $\lsu(ra) \neq \lsu(rb)$.  
\end{proof}

\begin{sloppypar}
Certainly,
there are Bergen rings which are not chain rings.
For instance, $R = \ZZ_2[\rx_1,\rx_2]/(\rx_1^2,\rx_1\rx_2,\rx_2^2)$
is not a chain ring because the ideals
$(\rx_1)$ and $(\rx_2)$ are incomparable.
On the other hand, the chain of ideals
$R \supset (\rx_1,\rx_2) \supset (0)$
together with equivalence relation $\equiv_0$
with classes $\{0\}$, $\{\rx_1\}$, $\{\rx_2\}$, $\{\rx_1+\rx_2\}$ and
$\{1, 1+\rx_1, 1+\rx_2, 1+\rx_1+\rx_2\}$ and
equivalence relation $\equiv_1$ with classes $\{0\}$, $\{\rx_1\}$, $\{\rx_2\}$, 
and $\{\rx_1+\rx_2\}$
show that
$R$ is Bergen.
This ring is a member of a large class of non-chain Bergen rings. 
\end{sloppypar}

\begin{lemma} \label{lem:simplebergenexample}
Arbitrarily choose
$w_1, \ldots, w_n, T \in \QQ_{\geq 0}$ and a finite field $\FF$. Let
$R=\FF[\rx_1,\ldots,\rx_n]/I$ 
where the ideal $I$ is defined such that $\rx_1^{\alpha_1} \cdot \dots \cdot \rx_n^{\alpha_n}$ is in $I$ if and only if $\sum_{i=1}^n w_i \alpha_i \geq T$.
Then, $R$
is Bergen and $R$ is not a chain ring
when $n \geq 2$.
\end{lemma}
\begin{proof}
The ring $R$ is Bergen due to the forthcoming 
Lemmas~\ref{lm:characterize-fullyconvex} and \ref{lm:fully-convex-to-bergen}.
Clearly, 
$\rx_1 \in (\rx_1)$ and $\rx_2 \in (\rx_2)$.
Furthermore, there are no $r_1,r_2 \in R$ such that $r_1\rx_1=\rx_2$
and $r_2\rx_2=\rx_1$ so $\rx_1 \not\in (\rx_2)$ and
vice versa.
We conclude that $(\rx_1)$ and $(\rx_2)$ are incomparable ideals.
\end{proof}

We can (once again) verify that  $\ZZ_2[\rx_1,\rx_2]/(\rx_1^2,\rx_1\rx_2,\rx_2^2)$ is Bergen
by choosing weights $w_1=w_2=1$ and threshold $T=2$.
Additional
non-chain Bergen rings are identified in Section~\ref{sec:geometry}
together with a thorough discussion of the rings considered in the previous lemma (which form
a subclass of Bergen rings referred to as {\em fully convex} for geometric reasons).
While examples like these show that the ideals in a Bergen ring do not need to be
linearly ordered by set inclusion, the annihilators must have this property.

\begin{proposition} \label{prop:bergenislineal}
Every Bergen ring $R$ is lineal.
\end{proposition}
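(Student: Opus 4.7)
My plan is to argue by contradiction via the magic square reformulation of lineality discussed right after Proposition~\ref{prop:one-elem-lineal}. Suppose $R$ is Bergen but not lineal. Then there exist elements $a, b, c, d \in R$ with $ab = cd = 0$ while $ad \neq 0$ and $cb \neq 0$. I aim to derive a contradiction using only the top-level matching coset partition $\equiv_0$ of $R = I_0$ supplied by the Bergen definition; neither induction on the chain length nor the descent condition $a \equiv_i b \Rightarrow a - b \in I_{i+1}$ will be needed.

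The first step is to consider the two elements $b$ and $b+d$ and show they lie in different $\equiv_0$-classes. Multiplying by $a$ gives $ab = 0$ while $a(b+d) = ab + ad = ad \neq 0$, so $ab$ falls in the class $\{0\}$ and $a(b+d)$ does not. The second bullet of the matching property (multiplication respects classes) thus rules out $b \equiv_0 b+d$.

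The second step is to reach a contradiction by multiplying the same pair by $c$. Because $cd = 0$, we have the ring identity $c(b+d) = cb + cd = cb$, so $cb$ and $c(b+d)$ are literally the same element of $R$, and therefore trivially $\equiv_0$-equivalent. The third bullet of the matching property, applied to $b \not\equiv_0 b+d$ with multiplier $c$, then leaves only the escape clause $cb = c(b+d) = 0$ available, which contradicts $cb \neq 0$.

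The place that will require care is the application of the third bullet: I must use both that $c(b+d)$ collapses to $cb$ as an actual ring element (not merely an equivalent one), and that this common value is nonzero, so that $cb \not\equiv_0 c(b+d)$ is impossible while $cb = c(b+d) = 0$ is false. Once this is set up, the contradiction is immediate, so no further descent through the Bergen chain is required.
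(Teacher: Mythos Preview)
Your proof is correct and takes essentially the same approach as the paper's: both argue by contradiction via the magic square characterization, use only the top-level relation $\equiv_0$, and exploit the trick of comparing an element with a sum (you use $b$ and $b+d$; the paper uses $a$ and $a+c$). Your version is slightly more streamlined—you invoke the second bullet once and the third bullet once and finish, whereas the paper proves $a \equiv_0 a+c$ and $a+c \equiv_0 c$ separately, chains them by transitivity to get $a \equiv_0 c$, and then derives $ab \equiv_0 cb$ for the contradiction—but the underlying idea is identical.
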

\begin{proof}
Assume $R$ admits a Bergen witness with ideal chain $R = I_{0} \supset \cdots \supset I_{\ell}$ and
equivalence relations $\equiv_i$, $0 \leq i \leq \ell-1$.
The ring $R$ is not lineal if and only if there exist elements $a,b,c,d \in R$
such that $ab=cd=0$, $ad \neq 0$, and $bc \neq 0$ due to the magic square property.
Assume to the contrary that these elements exist and consider the equivalence relation $\equiv_0$.
If we suppose that $a \not\equiv_0 a+c$, then either (1) $ad=(a+c)d=0$ or
(2) $ad \not\equiv_0 (a+c)d$. Both are false since (1) $ad \neq 0$ and (2)
$ad=(a+c)d$, so $a \equiv_0 a+c$.
Analogously, the fact $0 \neq cb=(a+c)b$ implies that $a+c \equiv_0 c$.
It follows that $a \equiv_0 c$ and $ab \equiv_0 cb$.
We know that $ab=0$ so $ab$ is in the equivalence class $\{0\}$ but $cb \neq 0$
so it is not in the equivalence class $\{0\}$.
We conclude that
$\equiv_0$ does not exist and $R$ is lineal.
\end{proof}

We note that every Bergen ring is Helly by Proposition~\ref{prop:lineal-is-helly}.
We will discuss structural properties of Bergen rings in more detail
when we have introduced the geometric approach (Section~\ref{sec:geometry}), and we will provide examples
of more intricate Bergen rings (such as $R_{\rm KNT}$ and $R_{347}$
that we encounter in Section~\ref{sec:mtl-not-bergen}).

\subsection{Solving $\lin{2}{R}$ over Cosets}
\label{sec:solving-equations-with-cosets}

We present a polynomial-time algorithm for $\lin{2}{R}$ over
cosets, which is needed in our algorithm.
Let $\Gamma$ be an arbitrary relational $\tau$-structure $(A;Q_1,Q_2,\dots)$. 
The {\em constraint satisfaction problem} over $\Gamma$ (CSP$(\Gamma)$) is defined as follows.

\pbDef{CSP($\Gamma$)}
{${\cal I} = (V,\cC)$ where $V$ is a set of variables and $\cC$ a set of constraints of the form $Q(x_1, \ldots, x_k)$ where $x_1, \ldots, x_k \in V$ and $Q \in \Gamma$}
{Does there exist a function $V \rightarrow A$ which satisfies all constraints, i.e. $(f(x_1), \ldots, f(x_k)) \in Q$ for every $S(x_1, \ldots, x_k) \in C$?}

One may equivalently view CSP$(\Gamma)$ as the problem of deciding whether a given $\tau$-structure $\Delta$ has a homomorphism to the $\tau$-structure $\Gamma$ or not.
The structure $\Gamma$ is, for obvious reasons, often referred to as the {\em constraint language}.
Let $R$ denote a finite ring and let $E_R$ be the relational structure with domain $R$ and unary relations  $\{x \in R^2 \; | \; ax=c\}$ together with binary relations $\{(x,y) \in R^2 \; | \; ax+by=c\}$ for arbitrary $a,b,c \in R$.
We see that $\lin{2}{R}$ is the same computational problem as
CSP$(E_R)$.

We utilize algebraic invariants of relations and
constraint languages that are known as {\em polymorphisms}~\cite{Barto:etal:polymorphisms}.
An operation 
$f:D^{m}\to D$ {\em preserves} a relation 
$Q\subseteq D^{k}$ if, for any choice of $m$ tuples 
$(t_{{11}},\dotsc ,t_{{1k}}),\dotsc ,(t_{{m1}},\dotsc ,t_{{mk}})$ from $Q$, it holds that the tuple obtained from these
$m$ tuples by applying $f$ coordinate-wise, i.e. 
\[(f(t_{{11}},\dotsc ,t_{{m1}}),\dotsc ,f(t_{{1k}},\dotsc ,t_{{mk}})),\] 
is in $Q$. If $f$ preserves $Q$, then $f$ is a {\em polymorphism} of $Q$.
A ternary operation $f:D^3 \rightarrow D$
is {\em affine} if $f(d_1,d_2,d_3)=d_1-d_2+d_3$ 
where $(D, +)$ is an Abelian
group.

\begin{theorem} \label{thm:affine-p}
  (Corollary 5.4.2. in~\cite{Jeavons:etal:jacm97}) 
  Assume that $\Gamma$ is a relational structure with finite signature and finite domain. If the relations in $\Gamma$ are preserved by an affine
  operation $f$, then $\csp{\Gamma}$ is solvable in polynomial time.
\end{theorem}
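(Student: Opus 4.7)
The plan is to establish a structural characterization of relations preserved by the affine operation $f(x,y,z) = x - y + z$: every such non-empty relation is a coset of a subgroup of the corresponding direct power of $(D,+)$. Once this is in hand, the algorithm is essentially Gaussian elimination over a finite Abelian group: the solution set of the whole CSP instance is expressed as the intersection of finitely many cosets in $D^n$, which is itself (empty or) a coset and can be computed by linear-algebraic operations in polynomial time.

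First, I would prove the coset lemma. Let $Q \subseteq D^k$ be non-empty and preserved by $f$. Pick any $t_0 \in Q$ and set $H = Q - t_0 = \{t - t_0 \mid t \in Q\}$. For $h_1 = t_1 - t_0$ and $h_2 = t_2 - t_0$ in $H$, the tuple $f(t_1, t_0, t_2) = t_1 - t_0 + t_2$ is in $Q$ by the polymorphism property (applied coordinate-wise), so $h_1 + h_2 \in H$. Similarly, $f(t_0, t_1, t_0) = 2t_0 - t_1 \in Q$ gives $-h_1 \in H$, and $0 = t_0 - t_0 \in H$ trivially. Hence $H$ is a subgroup of $(D^k,+)$ and $Q = t_0 + H$.

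Second, I would describe the algorithm. Since $\Gamma$ has finite signature, precompute for each $Q \in \Gamma$ a base tuple $t_0^Q$ and a generating set for the subgroup $H^Q = Q - t_0^Q$; these descriptions have constant size. Given an instance with variables $V = \{x_1,\ldots,x_n\}$, maintain the current solution set as a coset $a + L$ of a subgroup $L \leq D^n$, represented by the base point $a$ and a generating set for $L$. Initialize with $a = 0$ and $L = D^n$. For each constraint $Q(x_{i_1},\ldots,x_{i_k})$, lift its coset to $D^n$ by extending freely on the remaining coordinates, obtaining a coset $b + M$. Intersect $a + L$ with $b + M$: solve $a + \ell = b + m$ for $(\ell,m) \in L \times M$, which is a single linear equation over the Abelian group $D^n$; either it is unsatisfiable (reject) or it yields a new coset $c + (L \cap M)$. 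After processing all constraints, accept if the final coset is non-empty and output any element of it.

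Third, the routine but essential bookkeeping is to show that all these group-theoretic operations run in polynomial time. Using the structure theorem for finite Abelian groups, represent $D$ as a direct sum of cyclic groups, and any subgroup $L \leq D^n$ by a generating set; the invariant that $L$ needs at most $O(n \log |D|)$ generators follows from the bounded length of subgroup chains in $D^n$. The operations of computing $L + M$, $L \cap M$, membership tests, and solving linear equations can all be done by Smith/Hermite normal form computations, which are classical. The conceptual heart of the proof is the coset lemma; the main technical obstacle is ensuring that the intermediate representations do not grow super-polynomially, which the Smith-normal-form bookkeeping handles.
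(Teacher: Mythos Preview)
The paper does not prove this theorem; it is quoted as an external result (Corollary~5.4.2 of Jeavons, Cohen and Gyssens) and used as a black box in Proposition~\ref{prop:in-ideal}. So there is no ``paper's own proof'' to compare against.

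Your sketch is the standard argument and is essentially correct. The coset lemma is exactly right: closure under $x-y+z$ is the Mal'tsev condition characterizing cosets of subgroups, and your three-line verification is complete. The algorithmic part is also the classical approach, and the point that $\Gamma$ is fixed (finite signature, finite domain) means the per-relation preprocessing is indeed $O(1)$. One small clarification: when you write ``solve $a+\ell=b+m$ for $(\ell,m)\in L\times M$'', what you are really doing is testing $b-a\in L+M$ and, if so, computing $L\cap M$ and a base point; both reduce to Smith normal form over $\ZZ$ once $D$ is written as $\bigoplus_i \ZZ_{d_i}$, exactly as you say. The generator-count bound via chain length in $D^n$ is the right way to keep representations polynomial. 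Nothing is missing conceptually; the only work left is the (routine) write-up of the SNF bookkeeping.
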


\begin{proposition} \label{prop:in-ideal}
  Let $R=(R,+,\cdot)$ be a finite ring and let $C_1,\dots,C_m$ denote cosets of $R$.
  There exists a polynomial-time algorithm for $\csp{E_R \cup \{C_1,\dots,C_m\}}$ and this algorithm outputs a solution if one exists.
\end{proposition}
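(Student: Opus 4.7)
The plan is to apply Theorem~\ref{thm:affine-p} by exhibiting an affine operation on the Abelian group $(R,+)$ that preserves every relation in $E_R \cup \{C_1,\dots,C_m\}$, and then to upgrade the resulting decision procedure into a solution-producing algorithm by standard self-reduction.

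First, I would take $f(x,y,z) = x - y + z$ with respect to $(R,+)$. To verify that $f$ preserves the relations in $E_R$, consider a binary relation $Q_{a,b,c} = \{(x,y) \in R^2 : ax + by = c\}$ and any three tuples $(x_i,y_i) \in Q_{a,b,c}$ for $i = 1,2,3$. Distributivity and commutativity of addition give
\[
a(x_1 - x_2 + x_3) + b(y_1 - y_2 + y_3) = (ax_1 + by_1) - (ax_2 + by_2) + (ax_3 + by_3) = c - c + c = c,
\]
so $f$ applied coordinate-wise lands back in $Q_{a,b,c}$. The same computation with one coordinate dropped handles the unary relations $\{x \in R : ax = c\}$.

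Next, I would check preservation for each coset $C_i$. By the definition in Section~\ref{sec:ideals-lattices}, $C_i = a_i + I_i$ for some $a_i \in R$ and some ideal $I_i \subseteq R$; in particular, $(I_i,+)$ is a subgroup of $(R,+)$. For any $x_1,x_2,x_3 \in C_i$, write $x_j = a_i + r_j$ with $r_j \in I_i$. Then $f(x_1,x_2,x_3) = a_i + (r_1 - r_2 + r_3)$, and since $I_i$ is closed under addition and negation we have $r_1 - r_2 + r_3 \in I_i$, hence $f(x_1,x_2,x_3) \in C_i$. Invoking Theorem~\ref{thm:affine-p} on the relational structure with universe $R$ and the finite signature $E_R \cup \{C_1,\dots,C_m\}$ then delivers a polynomial-time decision algorithm.

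Finally, to output a satisfying assignment when one exists, I would use self-reduction. Fix an ordering $v_1,\dots,v_n$ of the variables; for each $v_j$ in turn, try every value $d \in R$ (only $|R| = O(1)$ choices, since $R$ is a fixed finite ring), adjoin the constraint $v_j \in \{d\}$ to the current instance and run the decision algorithm, keeping the first $d$ for which the augmented instance remains satisfiable. A valid $d$ must exist by the induction hypothesis, so the procedure produces a solution in $O(n \cdot |R|)$ calls. I do not foresee a genuine obstacle; the one bookkeeping point worth explicitly noting is that singletons $\{d\}$ are themselves cosets (of the zero ideal shifted by $d$), so the augmented instances still fall within the signature covered by Theorem~\ref{thm:affine-p}, keeping the self-reduction inside the tractable class.
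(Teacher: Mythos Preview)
Your proposal is correct and follows essentially the same route as the paper: use the affine operation $f(x,y,z)=x-y+z$, check that it preserves both the linear-equation relations in $E_R$ and all cosets, apply Theorem~\ref{thm:affine-p}, and then use self-reduction to output a solution. The only cosmetic difference is that the paper justifies the self-reduction step by observing that the assignment constraint $x=d$ is already a relation in $E_R$ (it is the unary relation $\{x:1\cdot x=d\}$), whereas you observe that $\{d\}$ is a coset of the zero ideal; both work.
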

\begin{proof}
  Consider the affine operation $f(a,b,c) = a - b + c$.
  Clearly, it preserves sets of solutions to any linear equation over $R$ so it preserves every relation in $E_R$.
Furthermore, it preserves any ideal $I$ (viewed as a unary relation) of $R$ since $(I,+)$ is a subgroup
of $(R,+)$.
Finally, it preserves any coset $C$ of $R$. Choose an ideal $I$ in $R$
and an element $r \in R$ and consider elements $i_1+r,i_2+r,i_3+r$ in the
coset $I+r$. We see that 
\[f(i_1+r,i_2+r,i_3+r)=i_1+r-i_2-r+i_3+r=i_1-i_2+i_3+r \in I+r.\]
Theorem~\ref{thm:affine-p} implies that $\csp{E_R \cup \{C_1,\dots,C_m\}}$ is
polynomial-time solvable.
Furthermore, every constraint $x=d$ with $d \in R$ can be expressed in $E_R$ so if an instance ${\cal I}$ of $\csp{E_R \cup \{C_1,\dots,C_m\}}$
is satisfiable, then we can compute a solution in polynomial time by using self-reducibility.
\end{proof}

\subsection{Iterative Compression and Homogenization via Branching}
\label{ssec:comp-hom-branch}

Here, we provide the preprocessing step of our algorithm that allows
us to reduce the general problem to a problem on simple instances,
i.e. we show Theorem~\ref{the:simple-instances}. To achieve this we
will use iterative compression together with branching. An important
part of the preprocessing is that all binary equations of a simple
instance are homogeneous and this is why we also refer to the process
as homogenization.
Let $R$ be a finite commutative ring and
consider the problem $\GAP{\gamma}$ $\minlinideal{2}{R}{I}$.
The standard technique of iterative compression
allows us to assume that the input of this problem,
together with an instance $(S, k)$,
comes with a set $X \subseteq S$
of at most $\gamma k + 1$ soft equations
such that $S - X$ is satisfiable.
In the following we show how 
we can use such a set to reduce the problem to a small number of simple instances.

\begin{lemma}
  \label{lem:branch-simple}
  Let $R$ be a finite commutative ring with an ideal $I$.
  Let $(S,k)$ be an instance of $\GAP{\gamma}$ $\minlinideal{2}{R}{I}$
  and let $X \subseteq S$ be a set of
  equations such that $S - X$ is satisfiable.
  There is an algorithm that takes
  $S$ and $X$ as input, and in time
  $|I|^{2\gamma|X|}(|S|)^{\bigoh(1)}$ produces
  a set $\cT$ of at most 
  $|I|^{2\gamma|X|}$ simple instances of $\GAP{\gamma}$ $\minlinideal{2}{R}{I}$
  , each with at most
  $|S| + 2|X|$ equations, such that
  $(S,k)$ is a yes-instance if and only if so is one of the instances
  in $\cT$. Moreover, if one of the instances $t \in \cT$ is a
  yes-instance, then a solution for $(S,k)$ can be computed from a
  solution of $t$ in polynomial-time.
\end{lemma}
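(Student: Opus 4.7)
The plan is to combine iterative compression with a homogenizing change of variables and a branching step over the assignment to $V(X)$. First, since $S - X$ is satisfiable, I would invoke Proposition~\ref{prop:in-ideal} to compute, in polynomial time, a satisfying assignment $\varphi : V(S) \to I$ for $S - X$.

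Next comes the branching. Any solution $Z \subseteq S$ of size at most $\gamma k$ admits an assignment $\alpha : V(S) \to I$ satisfying $S - Z$; my algorithm would guess the restriction $\alpha|_{V(X)}$. There are at most $|V(X)| \leq 2|X|$ variables and $|I|$ values, giving at most $|I|^{2|X|}$ branches, which is absorbed by the stated bound $|I|^{2\gamma|X|}$. In each branch, parameterized by $\beta : V(X) \to I$, I would discard the equations of $X$ from the instance and instead install the crisp unary equations $v = \beta(v)$ for every $v \in V(X)$. These are already simple, and at most $2|X|$ are added while $|X|$ are removed.

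After branching, I would apply the affine substitution $u \mapsto u' = u - \varphi(u)$. Since $\varphi(u) \in I$ and $u \in I$, the substitution is a bijection of $I$ onto itself and preserves cost. Its crucial effect is that every equation of $S - X$, being satisfied by $\varphi$, becomes homogeneous: an equation $au + bv = c$ is transformed into $au' + bv' = 0$, and similarly for unary equations. The newly added unary equations from the branching step become $u' = \beta(u) - \varphi(u)$, still crisp unary, hence simple. To convert a homogeneous binary equation to the simple form $u' = rv'$, if either coefficient is a unit we divide through and rewrite directly; otherwise I would introduce an auxiliary variable $w_e$ together with the crisp simple binary equation $w_e = au'$ and replace the original equation by $w_e = -bv'$, which inherits the softness of the original and is of the simple form $w_e = (-b) \cdot v'$. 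A homogeneous unary $au' = 0$ can be treated analogously by an auxiliary variable.

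Correctness of the whole reduction then follows from completeness of the branching (some $\beta$ matches the restriction of an optimal $\alpha$ to $V(X)$) and the bijectivity of $u \mapsto u - \varphi(u)$, which gives a cost-preserving correspondence between solutions of the original and produced instances; back-translation of a solution is immediate since every added auxiliary variable has a forced value. The main obstacle I anticipate is the careful bookkeeping for the simple-form conversion: one must ensure that each soft equation of $S$ contributes exactly one soft simple equation (with any auxiliaries being crisp) so that the cost is preserved exactly, and that the equation count of each produced instance stays within the prescribed $|S| + 2|X|$ bound despite the introduction of auxiliary variables.
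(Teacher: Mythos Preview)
Your approach is essentially the paper's: compute a satisfying assignment $\varphi$ for $S-X$, shift variables by $\varphi$ to homogenize $S-X$, split each homogeneous binary $au'+bv'=0$ into a crisp $w_e=au'$ together with an equation $w_e=-bv'$ inheriting the original softness, and branch over all $\beta:V(X)\to I$, adding crisp unary equations that pin the variables in $V(X)$.

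There is, however, a real gap in your treatment of the parameter. You discard the equations of $X$ but (implicitly) keep the parameter at $k$; this breaks the ``only if'' direction. For a branch $\beta$ that violates some equations of $X$, your produced instance can have cost $\le k$ while every assignment $\phi$ compatible with $\beta$ has $\cost_S(\phi)=\cost_{S-X}(\phi)+\cost_X(\beta)>k$. A tiny example over $\ZZ_2$: $S=\{x=0,\ y=0,\ x+y=1\}$, $X=\{x+y=1\}$, $k=0$. Then $(S,0)$ is not a yes-instance, yet in the branch $\beta(x)=\beta(y)=0$ your instance has cost $0$. The paper avoids this by setting the branch parameter to $k_\beta=k-\cost_X(\beta)$; then a solution of cost $\le\gamma k_\beta$ for the branch gives cost $\le\cost_X(\beta)+\gamma(k-\cost_X(\beta))\le\gamma k$ for $S$, which is exactly what the gap version needs. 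Your claimed ``cost-preserving correspondence'' is off by precisely $\cost_X(\beta)$, and the parameter must absorb it.

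A smaller point on your anticipated obstacle: both your construction and the paper's in fact produce up to roughly $2|S|$ equations (two simple equations per equation of $S-X$, plus at most $2|X|$ crisp unaries), so the stated bound $|S|+2|X|$ is loose in the paper as well; this is harmless, since only a polynomial bound is needed downstream.
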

\begin{proof}
  Let $(S, k)$ be an instance of $\GAP{\gamma}$ $\minlinideal{2}{R}{I}$.
  Without loss of generality, assume that all equations in $S$ are 
  binary: this can be achieved by replacing every unary equation of the form $e=(ax = b)$
  for variable $x$ with the two binary equations $ax - z_e = b$ and $ax
  + z_e = b$, where $z_e$ is a new variable. Note that this operation
  does not change the cost of an assignment since if $ax=b$ is not
  satisfied by an assignment $z_e$ can be used to satisfy one of the equations
  $ax - z_e = b$ and $ax + z_e = b$.
  
  Now we homogenize the equations in $S - X$.
  Using Proposition~\ref{prop:in-ideal},
  compute an assignment $\eta : V(S) \to I$ that satisfies $S - X$.
  Create a new set of equations $S_0$ with variables
  $v_0$  for all $v \in V(S)$.
  For every equation $e = (ax + by = c)$ in $S - X$,
  create a new variable $z_e$ in $S_0$ and add equations
  $e_1=(z_e = ax_0)$ and $e_2=(z_e = -by_0)$ to $S_0$.

  Next, we branch on all possible assignment to $V(X)$.
  For every mapping $\beta : V(X) \to I$,
  define a simple instance $S_0 \cup S_\beta$ of $\lin{2}{R}$,
  where $S_\beta$ contains the unary equation
  $x_0 = \beta(x) - \eta(x)$ for every $x \in V(X)$.
  Then, $\cT$ is the set of all simple instances $(S_0 \cup S_\beta,
  k_\beta)$ of $\GAP{\gamma}$ $\minlinideal{2}{R}{I}$ for every $\beta : V(X) \to I$,
  where $k_\beta=k-\cost_X(\beta)$.
  
  For the runtime of the algorithm first note that computing $\eta$
  using Proposition~\ref{prop:in-ideal} takes
  polynomial time. Moreover, for every of the at most
  $|I|^{|V(X)|} \leq |I|^{2\gamma k + 2}$
  functions $\beta : V(X) \to I$, we can produce
  the instance $(S_0 \cup S_\beta, k_\beta)$ in polynomial-time.

  We continue by proving correctness.
  First assume $(S, k)$ is a yes-instance and let $\phi : V(S) \to I$
  be an assignment with $\cost(\phi) \leq k$.
  We claim that $(S_0 \cup S_{\beta}, k_{\beta})$ is a yes-instance of $\GAP{\gamma}$ $\minlinideal{2}{R}{I}$, where
  $\beta$ is the restriction of $\phi$ to $V(X)$.
  To show this, let  $\phi_0 : V(S_0) \to I$ be the assignment defined
  as follows:
  \begin{itemize}
    \item set $\phi_0(x_0) = \phi(x) - \eta(x)$ for every $x \in V(S)$, and
    \item set $\phi_0(z_e) = a\phi_0(x_0)$ for every equation $e = (ax + by = c)$ in $S - X$.
  \end{itemize}
  Since $\phi$ and $\beta$ agree on $V(X)$,
  $\phi_0$ satisfies all unary equations in $S_\beta$.
  Now consider an equation $e = (ax + by = c)$ in $S - X$
  and the two corresponding equations $z_e = ax_0$ and $z_e = -by_0$ in $S_0$.
  By definition, $\phi_0$ satisfies $z_e = ax_0$.
  Furthermore, we claim that it satisfies $z_e = -by_0$ if $\phi$ satisfies $e$.
  Indeed, since $\eta$ satisfies $e \in S-X$ we have that
  $a\eta(x) + b\eta(x) = c$. Therefore, if $a\phi(x) + b\phi(y) = c$, 
  then $a(\phi(x) - \eta(x)) + b(\phi(y) - \eta(y)) = 0$.
  It follows that $a\phi_0(x_0) = -b\phi(y_0) = \phi_0(z_e)$,
  and $\phi_0$ satisfies $z_e = - by_0$.
  Hence, $$\cost_{S_0}(\phi_0) \leq \cost_{S - X}(\phi) =
  \cost_{S}(\phi) - \cost_{X}(\phi) \leq k - \cost_{X}(\phi) = k -
  \cost_{X}(\beta) = k_\beta.$$
  
  For the opposite direction, we suppose that $(S, k)$ is a
  no-instance. Assume, with the aim of reaching a contradiction, that
  there exists $\beta : V(X) \to I$
  such that $S_0 \cup S_\beta$ admits an assignment 
  $\phi_0 : V(S_0) \to I$ that satisfies $S_\beta$ 
  and has $\cost_{S_0}(\phi_0) \leq \gamma k_\beta$.
  We claim that $S$ admits an assignment $\phi : V(S) \to I$ of cost at most
  $\gamma k$, contradicting our assumption that $(S,k)$ is a no-instance.
  Define $\phi : V(S) \to I$ by letting
  $\phi(x) = \phi_0(x_0) + \eta(x)$ for every $x \in V(S)$.
  To calculate the cost of $\phi$ in $S$,
  first observe that $\phi(x) = \beta(x) - \eta(x) + \eta(x) = \beta(x)$
  for every $x \in V(X)$.
  Hence, $\cost_X(\phi) = \cost_X(\beta)$.
  Now consider an equation $e = (ax + by = c)$ in $S - X$.
  We claim that if $\phi_0$ satisfies $z_e = ax_0$ and $z_e = -by_0$,
  then $\phi$ satisfies $e$:
  indeed, if $\phi_0(z_e) = a\phi_0(x_0) = -b\phi_0(y_0)$,
  we have $a\phi_0(x_0) + b\phi_0(y_0) = 0$ and
  $a(\phi(x) - \eta(x)) + b(\phi(y) - \eta(y)) = 0$;
  combined with $a\eta(x) + b\eta(y) = c$, this yields
  $a\phi(x) + b\phi(y) = c$.
  Hence, $\cost_{S - X}(\phi) \leq \cost_{S_0}(\phi_0)$.
  Putting these results together, we obtain that
  $$\cost_{S}(\phi) = \cost_X(\phi) + \cost_{S - X}(\phi) \leq 
  \cost_X(\beta) + \cost_{S_0}(\phi_0) \leq
  \cost_X(\beta) + \gamma k_\beta \leq $$
  $$\cost_X(\beta) + \gamma (k - \cost_X(\beta)) =
  \gamma k - \cost_X(\beta)(\gamma - 1) \leq \gamma k$$
  since $\gamma \geq 1$.
  Note that the above analysis also shows that if 
  $(S_0 \cup S_\beta, k_\beta)$ has a solution, then the set $Z=Z_0^X\cup Z_0'$ with
  $Z_0^X=\SB e \in X \SM e \textup{ is not satisfied by } \phi_0\SE$
  and $Z_0'=\SB e \in S-X \SM e_1 \textup{ or }e_2 \textup{ is not
    satisfied by }\phi_0 \SE$ is a solution for $(S,k)$. This shows
  that we can compute a solution for $(S,k)$ from a solution of one of
  the instances in $\cT$ in polynomial-time and this completes the proof of the lemma.
\end{proof}

We are now ready to show the following more precise version of
Theorem~\ref{the:simple-instances} that essentially follows from
Lemma~\ref{lem:branch-simple} using iterative compression.
\begin{theorem}
  \label{the:simple-instances-pre}
  Let $R$ be a finite commutative ring, 
  $I$ be an ideal in $R$, and
  $\gamma \geq 1$ be a constant. 
  If $\GAP{\gamma}$ $\minlinideal{2}{R}{I}$
  restricted to simple instances can be solved in time $f(S,k)$,
  then $\GAP{\gamma}$ $\minlinideal{2}{R}{I}$ can be solved in time
  $f(S,k)|I|^{2\gamma k}(|S|)^{\bigoh(1)}$.
\end{theorem}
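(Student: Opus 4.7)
The plan is to combine iterative compression with Lemma~\ref{lem:branch-simple}, processing the equations of $S$ one at a time.

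Fix an arbitrary ordering $e_1,\ldots,e_m$ of the equations of $S$ and set $S_i=\{e_1,\ldots,e_i\}$. I will process the equations in order and maintain the following invariant after step $i$: whenever $(S_i,k)$ is a yes-instance of $\GAP{\gamma}\,\minlinideal{2}{R}{I}$, I hold a set $Z_i\subseteq S_i$ with $|Z_i|\leq \gamma k$ such that $S_i-Z_i$ admits a satisfying assignment into $I$. The base case $Z_0=\emptyset$ is trivial.

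For the inductive step I set $X=Z_{i-1}\cup\{e_i\}$, noting that $|X|\leq \gamma k+1$ and $S_i-X=S_{i-1}-Z_{i-1}$ is satisfiable by the invariant. Invoking Lemma~\ref{lem:branch-simple} on $(S_i,k)$ with this $X$ produces a collection $\cT$ of at most $|I|^{2\gamma|X|}$ simple instances of $\GAP{\gamma}\,\minlinideal{2}{R}{I}$, on each of which I run the hypothesised simple-instance algorithm. If some $t\in\cT$ is reported as a yes-instance together with a witness, I use the polynomial-time reconstruction from the same lemma to extract a set $Z_i\subseteq S_i$ with $|Z_i|\leq\gamma k$ and $S_i-Z_i$ satisfiable, and continue; if every $t\in\cT$ is reported as a no-instance, I output ``no'' for the whole instance $(S,k)$ and halt.

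Correctness will follow from the two directions of Lemma~\ref{lem:branch-simple}. If $(S,k)$ is a yes-instance then $\mincost(S_i,I)\leq k$ for every $i$ by monotonicity of $\mincost$ under adding equations, so the forward direction of the lemma forces at least one $t\in\cT$ to be a yes-instance at every iteration; the simple-instance algorithm therefore succeeds on some $t$, the loop completes, and $Z_m$ is returned. Conversely, whenever the algorithm outputs ``no'' at some step $i$, the backward direction of the lemma guarantees that $(S_i,k)$ is a genuine no-instance, which in turn forces $(S,k)$ to be a no-instance since $S_i\subseteq S$. For the running time, there are at most $|S|$ iterations, and each iteration calls the simple-instance algorithm on at most $|I|^{2\gamma(\gamma k+1)}=|I|^{\bigoh(\gamma k)}$ simple instances of size $\bigoh(|S|)$; absorbing the additive constant in the exponent and the polynomial preprocessing of Lemma~\ref{lem:branch-simple} into $(|S|)^{\bigoh(1)}$ gives the stated total $f(S,k)\cdot |I|^{2\gamma k}(|S|)^{\bigoh(1)}$.

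The main subtlety I expect to have to pin down carefully is that the hypothesised simple-instance algorithm is only required to be reliable on genuine yes- and no-instances of the gap problem, with unspecified behaviour on instances whose optimum lies strictly between $k_\beta$ and $\gamma k_\beta$. The scheme above is arranged so that both verdicts of the overall algorithm are driven only by the two clean ends of the gap rather than by this middle region: a ``yes'' is always backed by an explicit witness produced via Lemma~\ref{lem:branch-simple}, and a ``no'' is only emitted when every simple instance is reported as a no, which by the backward direction of the lemma forces $(S_i,k)$ itself to be a gap-no and hence $(S,k)$ to be a gap-no.
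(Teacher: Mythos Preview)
Your proof is correct and follows the same iterative-compression scheme as the paper, invoking Lemma~\ref{lem:branch-simple} at each step to reduce to simple instances. The only cosmetic difference is that the paper first checks whether $|Z_{i-1}\cup\{e_i\}|\leq\gamma k$ and skips the branching call in that case, whereas you always branch; this does not affect the stated bound, and your explicit discussion of the gap-problem subtlety is a point the paper glosses over.
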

\begin{proof}
  We use iterative compression together with
  Lemma~\ref{lem:branch-simple} to obtain the result. Therefore, let
  $(S,k)$ be an instance of $\GAP{\gamma}$ $\minlinideal{2}{R}{I}$ and
  let $e_1,\dotsc, e_m$ be an arbitrary ordering of the equations in
  $S$. Moreover, for every $i$ with $0 \leq i
  \leq m$, let $S_i$ be $S$ restricted to the equations in
  $e_1,\dotsc,e_i$. Starting from $S_0=\emptyset$, we now iteratively
  compute a solution $X_i$ for every $S_i$ from a solution $X_{i-1}$
  of $S_{i-1}$, i.e. for every $i$ it holds that $X_i$ is a set of at
  most $k\gamma$ equations of $S_i$ such that $S_i-X_i$ is satisfiable
  as follows. Initially, $X_0=\emptyset$ is clearly a solution for
  $S_0=\emptyset$. Therefore, let $X_{i-1}$ be a solution for
  $S_{i-1}$. If $X_{i-1}\cup \{e_i\}$ is also a solution for $S_i$,
  then we simple set $X_i=X_{i-1}\cup \{e_i\}$. Otherwise,
  $|X_{i-1}\cup \{e_i\}|>k\gamma$ because $X_{i-1}$ is a solution for
  $S_{i-1}$ and we can employ
  Lemma~\ref{lem:branch-simple} for $S=S_i$ and $X=X_i$ to compute the
  set $\cT$ of simple instances of $\GAP{\gamma}$
  $\minlinideal{2}{R}{I}$ in time $|I|^{2\gamma|X|}(|S|)^{\bigoh(1)}$
  such that $(S_i,k)$ has a solution if and only if so has one of the
  instances in $\cT$. We then use the algorithm for simple instances
  of $\GAP{\gamma}$ $\minlinideal{2}{R}{I}$ on every instance in
  $\cT$ and if none of these instances has a solution, we correctly
  return no. Otherwise, we compute a solution $X$ for $S_i$ from a
  solution to one of the instances of $\cT$ in polynomial-time and set
  $X_i=X$. This completes the description of the algorithm, whose
  runtime can be easily seen to be at most $f(S,k)|I|^{2\gamma k}(|S|)^{\bigoh(1)}$.
\end{proof}

\subsection{Computing Class Assignments}
\label{ssec:class-assignments}

This section is devoted to a proof of
Theorem~\ref{the:get-class-assignment}. Throughout this section, let $R$
be a finite commutative ring and let $I$ be an
ideal in $R$ that admits a matching coset partition $\equiv$ with
classes $\Gamma$ and $|\Gamma|=d$. Informally, we will show that we
can compute a class assignment $\tau: V(S) \rightarrow
\Gamma$ such that if $\mincost(S,I) \leq k$, then there exists an
assignment $V(S) \to I$  of cost at most $dk$ that agrees with $\tau$
for a given simple instance $S$ of $\lin{2}{R}$ and integer $k$.
To achieve this we introduce the class assignment graph (Section~\ref{sec:class-graph}) and show that
certain cuts in this graph correspond to class assignments (Section~\ref{sec:class-cuts}), which
themselves correspond to solutions of $\minlinideal{2}{R}{I}$. We then
use shadow removal and branching to compute these cuts (Section~\ref{sec:goodbye-shadows}).

\subsubsection{The Class Assignment Graph}
\label{sec:class-graph}

\newcommand{\homoI}{simple}

In the following, let $S$ be a \homoI{} instance of
$\linideal{2}{R}{I}$ and let $k$ be integer.  Without loss of generality, we assume that every equation of $S$ is
consistent (otherwise we can remove the equation and decrease $k$ by
$1$)
and non-trivial (otherwise we can remove the equation).
Before we define the class assignment graph, we need some simple definitions and observations about the equivalence classes given by $\equiv$.

\begin{definition}
  Let $e=(ax=y)$ be an equation over $R$ for variables $x$ and $y$
  and $a \in R$, and let $R_e \subseteq R^2$ be the relation
  $R_e=\{(c,d) \colon ac=d\}$.
  We say that \emph{$e$
    maps class $C \in \EQ$ to class $D\in \EQ$} if for every $(c,d) \in
  R_e$, it holds that $c \in C$ if and only if $d \in D$.
\end{definition}

Note that the equivalence classes in $\Gamma$ have direct connections with the annihilators
of $R$, i.e. $\Ann(a)=\Ann(b)$ whenever $a \equiv b$.
Given a equivalence class $C \in \EQ$, we can thus let
$\Ann(C)$ denote the common annihilator $\Ann(c)$ of all
elements $c \in C$. The following observation follows immediately
since $\equiv$ is a matching partition.

\begin{observation}\label{obs:eq-mapping}
  Let $e=(ax=y)$ be an equation over $R$ for variables $x$ and $y$ and
  $a \in R$, and let $C \in \EQ$. Then, either:
  \begin{itemize}
  \item
    $a \in \Ann(C)$ and $e$ is satisfied by setting $x$ to any value
    in $C$ and $y$ to $0$,
  \item
    $a \notin \Ann(C)$ and there is a unique class $D\in \EQ^{\neq 0}$ such
    that $e$ maps $C$ to $D$.
  \end{itemize}
\end{observation}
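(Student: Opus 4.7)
The plan is to proceed by a dichotomy on whether the coefficient $a$ annihilates every element of $C$. Before doing that, I would first check that $\Ann(C)$ is genuinely well defined, as the statement assumes: if $c_1 \equiv c_2$ and $rc_1 = 0$, then clause (2) of the matching property gives $rc_2 \equiv rc_1$, hence $rc_2$ lies in the class $\{0\}$ and therefore equals zero. So $\Ann(c_1) = \Ann(c_2)$ for any two elements of $C$, justifying the notation $\Ann(C)$.

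For the first bullet, assume $a \in \Ann(C)$. Then $ac = 0$ for every $c \in C$, and the assignment $x \mapsto c$, $y \mapsto 0$ trivially satisfies $ax = y$ for any $c \in C$.

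For the second bullet, assume $a \notin \Ann(C)$, and choose a witness $c_0 \in C$ with $ac_0 \neq 0$. I would then set $D := \EQ(ac_0)$; since $ac_0 \neq 0$ and $\{0\}$ is its own class, $D \in \EQ^{\neq 0}$. The substantive step is verifying that $e$ maps $C$ to $D$, which is a two-way argument driven entirely by the matching property. The forward direction (if $c \in C$ then $ac \in D$) uses clause (2): $c \equiv c_0$ implies $ac \equiv ac_0$. The reverse direction (if $ac \in D$ then $c \in C$) is the contrapositive of clause (3): if $c \not\equiv c_0$ then either $ac \not\equiv ac_0$, contradicting $ac \in D$, or $ac = ac_0 = 0$, contradicting the choice of $c_0$. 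Uniqueness of $D$ is immediate since $\EQ$ is a partition and any candidate class must contain $ac_0$.

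The main work is just an orderly application of clauses (2) and (3) of the matching property, so no real obstacle is expected. The only mildly subtle point is that phrasing the first bullet in terms of $\Ann(C)$ relies on this annihilator being well defined on classes, which is itself a small consequence of clause (2) applied with target class $\{0\}$, as noted at the start.
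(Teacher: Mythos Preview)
Your proposal is correct and follows exactly the approach the paper intends: the paper simply asserts that the observation ``follows immediately since $\equiv$ is a matching partition,'' and your argument is a careful unpacking of precisely that claim using clauses (2) and (3) of the matching property. Your preliminary verification that $\Ann(C)$ is well defined on classes is a nice touch that the paper leaves implicit.
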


\newcommand{\EQM}[2]{\pi_{#1,#2}}
\newcommand{\mundef}{\textsf{Nil}}

The above observation allows us to define a matching function
between equivalence classes of $\equiv$ for any equation as follows.
Let $\EQM{e}{i} :
\EQ \rightarrow \EQ^{\neq 0}$, for some equation $e=(ax=y)$ with $a
\in R\setminus\{0\}$, be the mapping defined as follows.  For every $C \in \EQ$, we set:
\begin{itemize}
\item $\EQM{e}{i}(C)=0$ if $ax=0$ for every $x \in C$ and 
\item $\EQM{e}{i}(C)=D$ otherwise, where $D$
  is the unique equivalence class such that $e$ maps $C$ to $D$ defined by Observation~\ref{obs:eq-mapping}.
\end{itemize}
Moreover, for a consistent equation $e=(x=a)$ with $a \in R$,
we let $\EQM{e}{i}$ be the unique equivalence class consistent with $e$.

\begin{figure}
 \centering
  \begin{tikzpicture}[node distance=0.5cm]
    \tikzstyle{no}=[draw,circle, inner sep=1pt, fill]
    \tikzstyle{edno}=[inner sep=3pt]
    \tikzstyle{ed}=[draw,color=black, line width=1pt]

    \draw[xscale=2,yscale=1] 
    node[no, label=left:$s$] (s) {}
    (s) +(1cm,-1cm) node[no, label=below:$b_O$] (b1) {}
    (b1) +(1cm,0cm) node[no, label=below:$d_E$] (d2) {}
    (d2) +(1cm,0cm) node[no, label=below:$r_E$] (r2) {}
    (r2) +(1cm,0cm) node[no, label=below:$u_E$] (u2) {}
    (u2) +(1cm,0cm) node[no, label=below:$d_O$] (d1) {}
    (s) +(1cm,1cm) node[no, label=above:$a_O$] (a1) {}
    (a1) +(1cm,0cm) node[no, label=above:$c_E$] (c2) {}
    (c2) +(1cm,0cm) node[no, label=above:$u_O$] (u1) {}
    (u1) +(1cm,0cm) node[no, label=above:$r_O$] (r1) {}
    (r1) +(1cm,0cm) node[no, label=above:$c_O$] (c1) {}

    (c1) +(1cm,-1cm) node[no, label=right:$t$] (t) {}
    ;

    \draw
    (s) -- node[edno, midway, rotate=-26, anchor=north] (sb1) {$b=1$} (b1)
    (b1) -- node[edno, midway, anchor=north] (db) {$2b=d$} (d2)
    (d2) -- node[edno, midway, anchor=north] (rd) {$d=r$} (r2)
    (r2) -- node[edno, midway, anchor=north] (ur) {$r=u$} (u2)
    ;

    \draw
    (s) -- node[edno, midway, rotate=26, anchor=south] (sa1) {$a=1$} (a1)
    (a1) -- node[edno, midway, anchor=south] (ac) {$2a=c$} (c2)
    (c2) -- node[edno, midway, anchor=south] (cu) {$c=2u$} (u1)
    (u1) -- node[edno, midway, anchor=south] (ur) {$u=r$} (r1)
    ;

    \draw
    (t) -- node[edno, midway, rotate=-26, anchor=north] (c1t1)
    {$c=2a$} (c1)
    (t) -- node[edno, midway, rotate=-26, anchor=south] (c1t2) {$c=2u$} (c1)
    ;
    \draw
    (t) -- node[edno, midway, rotate=26, anchor=north] (d1t1)
    {$d=2b$} (d1)
    ;

  \end{tikzpicture}
 \caption{Let $S$ be the instance of $\lin{2}{\ZZ_4}$ with variables
   $a$ , $b$, $c$, $d$, $u$, $r$ and equations $a=1$, $b=1$, $2a=c$,
   $c=2u$, $u=r$, $2b=d$, and $d=r$. The figure illustrate the class
   assignment graph $G=G_I(S)$ for the ideal $I=(1)$. Note that $I$
   has only two non-zero equivalence classes, namely, $E=\{2\}$ and
   $O=\{1,3\}$. Every edge of $G$ is annotated by the equation that
   implies it. A minimum conformal $st$-cut is given by the two edges
   that correspond to the equation $u=r$ and corresponds to the class
   assignment $a=O$, $b=O$, $c=E$, $d=E$, $u=O$, and
   $r=E$. Note that $G$ has only one minimal conformal $st$-cut
   closest to $s$, namely $\{a_Oc_E, b_Od_E\}$. This $st$-cut
   corresponds to the class assignment $a=O$, $b=O$, and
   $c=d=u=r=0$. Therefore, the optimum solution for $S$ only removes
   the equation $u=r$, however, any solution that corresponds to a
   minimum conformal $st$-cut closest to $s$ has to remove the
   equations $2a=c$ and $2b=d$.}
 \label{fig:classassignment}
\end{figure}
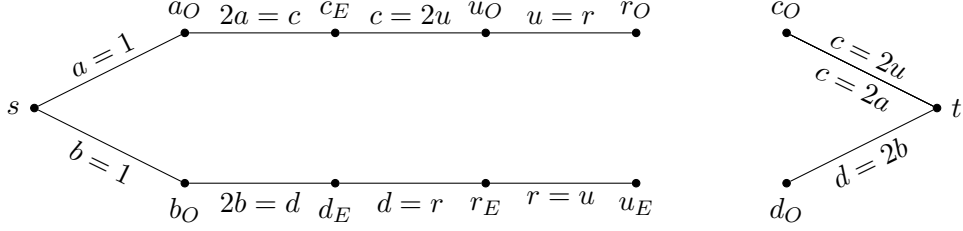

We are now ready to define the \emph{class assignment} graph
$G=G_I(S)$; see Figure~\ref{fig:classassignment} for an illustration
of such a graph. The graph $G$ has two distinguished vertices $s$ and $t$
together with vertices $x_{C}$ for every $x \in V(S)$ and every non-zero
class $C \in \EQ^{\neq 0}$. Moreover, $G$ contains the following edges for each equation.
\begin{itemize}
\item For a (crisp) equation $e=(ax=y)$ we do the following. For every $C
  \in \EQ$, we add the (crisp) edge
  $x_Cy_{\EQM{e}{i}(C)}$ if $\EQM{e}{i}(C) \neq 0$.
  For
  every $D \in \EQ^{\neq 0}$ such that $\EQM{e}{i}^{-1}(D)$ is
  undefined, we add the (crisp) edge $y_Dt$.
\item For a (crisp) equation of the form $e=(x = 0)$, we add the (crisp)
  edge $x_{C}t$ for every class $C \in \EQ^{\neq 0}$.
\item For a (crisp) equation of the form $e=(x = b)$, where $b\neq 0$ ,
  we add the (crisp) edge $sx_{\EQM{e}{i}}$.
\end{itemize}

Intuitively, every node of $G$ corresponds to a Boolean variable and
every edge $e$ of $G$ corresponds to an ``if and only if'' between the two
Boolean variables connected by $e$. Moreover, every
assignment $\varphi$ of the variables of $S$ naturally corresponds to a
Boolean assignment, denoted by $\BA{\varphi}$
of the vertices in $G$ by setting $s=1$, $t=0$, and:
\begin{itemize}
\item if $\varphi(x)$ belongs to the non-zero class $C$, then we set
  $x_{C}=1$ and $x_{C'}=0$ for every non-zero class $C'$ not
  equal to $C$,
\item if $\varphi(x)=0$, we set $x_{C}=0$ for every non-zero class $C$.
\end{itemize}
We say that an
edge $e$ of $G$ is satisfied by $\varphi$ if $\BA{\varphi}$ satisfies the
``if and only if'' Boolean constraint represented by that edge.
\begin{observation}\label{obs:ass-ba-ass}
  Let $S$ be a \homoI{} instance of $\lin{2}{R}$, let $\varphi$
  be an assignment of $S$ and let $e$ be an equation in
  $S$. If $\varphi$ satisfies $e$, then $\BA{\varphi}$ satisfies all
  edges corresponding to $e$ in $G(S)$.
\end{observation}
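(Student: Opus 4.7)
Since $S$ is a simple instance, the equation $e$ has one of three forms: (i) a binary homogeneous equation $ax=y$, (ii) a crisp unary equation $x=0$, or (iii) a crisp unary equation $x=b$ with $b\neq 0$. The proof will proceed by case analysis on which form applies, checking in each case that every edge of $G(S)$ generated by $e$ realizes the ``iff'' constraint imposed on its endpoints by $\BA{\varphi}$. The three cases are independent and mostly mechanical, with the only substantive content being the binary case, so I will describe that carefully and indicate how the unary cases reduce to one-line verifications.

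\textbf{Cases (ii) and (iii).} If $e=(x=0)$ is satisfied then $\varphi(x)=0$, so $x_C=0$ for every $C\in\EQ^{\neq 0}$; paired with $t=0$ this makes every edge $x_Ct$ an ``iff'' between two zero-valued vertices. If $e=(x=b)$ with $b\neq 0$ is satisfied then $\varphi(x)=b$ lies in the unique class $\EQM{e}{i}$, so $x_{\EQM{e}{i}}=1$, which together with $s=1$ satisfies the only edge added for $e$, namely $sx_{\EQM{e}{i}}$.

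\textbf{Case (i): $e=(ax=y)$.} Here I split on whether $\varphi(x)=0$ or not. If $\varphi(x)=0$ then $\varphi(y)=0$ as well, and every vertex $x_C$ and $y_D$ is assigned $0$, so each edge $x_Cy_{\EQM{e}{i}(C)}$ (when $\EQM{e}{i}(C)\neq 0$) and each edge $y_Dt$ is trivially satisfied. Otherwise $\varphi(x)\in C$ for a unique $C\in\EQ^{\neq 0}$, making $x_C=1$ and $x_{C'}=0$ for all $C'\neq C$. I will then use Observation~\ref{obs:eq-mapping} together with the matching property of $\equiv$ to show three things about $\varphi(y)=a\varphi(x)$: first, if $\EQM{e}{i}(C)=0$ then $\varphi(y)=0$ and all $y_D=0$, so every edge generated by $e$ is vacuously correct (the edge $x_Cy_?$ does not exist, all other $x_{C'}$ are zero, and any edge $y_Dt$ connects two zeros); second, if $\EQM{e}{i}(C)=D\neq 0$ then $\varphi(y)\in D$ so $y_D=1$, and the edge $x_Cy_D$ connects two ones; third, for every $C'\neq C$ the edge $x_{C'}y_{\EQM{e}{i}(C')}$ (when present) has $x_{C'}=0$, and by the matching property $\EQM{e}{i}(C')\neq \EQM{e}{i}(C)=D$, so $\varphi(y)\notin\EQM{e}{i}(C')$ and $y_{\EQM{e}{i}(C')}=0$. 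Finally, for any edge $y_Dt$ where $\EQM{e}{i}^{-1}(D)$ is undefined, the fact that $\varphi(y)=a\varphi(x)$ and $\EQM{e}{i}(C)\neq D$ for every $C$ forces $y_D=0=t$.

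\textbf{Main obstacle.} There is no real technical difficulty; the only point that requires care is to appeal to the third bullet of the matching definition, which guarantees that distinct nonzero classes cannot be collapsed to the same nonzero image under multiplication by $a$. This is what makes $\EQM{e}{i}$ well-defined on its nonzero part and, by contrapositive, guarantees that the ``no-preimage'' edges $y_Dt$ really capture values unreachable by $ax$. With that in hand the whole argument is a routine verification of the ``iff'' for each edge type.
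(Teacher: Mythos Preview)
Your proof is correct. The paper states this as an observation without proof, so your case analysis is exactly the routine verification the authors omit; the only subtlety is the injectivity of $\EQM{e}{i}$ on nonzero classes, which you correctly derive from the matching property (or equivalently from the definition of ``$e$ maps $C$ to $D$'').
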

\begin{corollary}\label{cor:satassG}
  Let $S$ be a \homoI{} instance of $\lin{2}{R}$ having a
  satisfying assignment $\varphi$. Then, $\BA{\varphi}$ satisfies all edges
  of $G(S)$.
\end{corollary}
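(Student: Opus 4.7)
The plan is to derive this corollary as a direct application of Observation~\ref{obs:ass-ba-ass} equation-by-equation. By inspecting the construction of $G(S)$, every edge of $G(S)$ is introduced as one of the edges associated with some equation $e \in S$: the three bullets in the construction exhaustively describe which edges arise from binary equations $e=(ax=y)$, from homogeneous unary equations $e=(x=0)$, and from nonhomogeneous unary equations $e=(x=b)$ with $b \neq 0$. So there is a natural function from $E(G(S))$ to $S$ assigning each edge to its ``originating'' equation.

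Since $\varphi$ is a satisfying assignment of $S$, it satisfies every equation $e \in S$. Applying Observation~\ref{obs:ass-ba-ass} to each such $e$, the Boolean assignment $\BA{\varphi}$ satisfies all edges corresponding to $e$. Taking the union over all $e \in S$, $\BA{\varphi}$ satisfies every edge of $G(S)$, which is exactly the claim.

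There is really no obstacle here; the only thing worth verifying en route is that the ``special'' vertices $s$ and $t$ are treated consistently with $\BA{\varphi}$, i.e.\ $s=1$ and $t=0$ make edges of the form $sx_{\EQM{e}{i}}$ (for $e=(x=b)$ with $b \neq 0$) and $x_{C}t$ (for $e=(x=0)$, or for the ``missing preimage'' edges $y_Dt$ coming from binary equations) consistent with the ``if and only if'' semantics. In each case this is immediate from the definition of $\BA{\varphi}$ on the non-special vertices and the unary semantics of the equation, and Observation~\ref{obs:ass-ba-ass} already bundles these cases. Hence the corollary follows by a single sweep over $S$.
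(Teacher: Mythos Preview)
Your proposal is correct and matches the paper's approach: the corollary is stated immediately after Observation~\ref{obs:ass-ba-ass} with no separate proof, precisely because it follows by applying the observation to every equation $e\in S$ and noting that each edge of $G(S)$ arises from some such $e$. Your additional remarks about the $s$ and $t$ vertices are fine but, as you note, already subsumed by the observation.
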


\subsubsection{Cuts in the Class Assignment Graph}
\label{sec:class-cuts}

In this section, we introduce {\em conformal cuts} and show how they
relate to class
assignments corresponding to solutions of
a $\minlinideal{2}{R}{I}$ instance. In the following, let $S$ be a \homoI{} instance of
$\linideal{2}{R}{I}$ and let $G=G_I(S)$.
We say that an $st$-cut $X$ in $G$ is \emph{conformal} if
for every variable $x \in V(S)$ at most one vertex $x_{C}$ for some $C
\in \EQ$ is
connected to $s$ in $G-X$. Please refer to
Figure~\ref{fig:classassignment} for an illustration of conformal cuts
in the class assignment graph.
If $X$ is a conformal $st$-cut in $G$, then we say that a variable $x$
is \emph{decided} with respect to $X$ if (exactly) one vertex $x_{C}$ is reachable from $s$ in
$G-X$; and otherwise we say that $x$ is \emph{undecided} with respect to $X$. Moreover, we denote by $\clasn{X}$ the assignment of variables
of $S$ to classes in $\EQ$ implied by $X$, i.e. $\clasn{X}(x)=0$ if
$x$ is undecided and otherwise $\clasn{X}(x)=C$, where $C$
is the unique non-zero class in $\EQ$ such that $x_C$ is
reachable from $s$ in $G-X$.
We say that an assignment $\varphi$ of $S$ \emph{agrees with} $X$
if $\varphi(x)$ is in the class $\clasn{X}(x)$.
Note that if some assignment agrees with $X$, then $X$ is
conformal. The following auxiliary lemma characterizes, which edges of
$G$ are satisfied by an assignment $\varphi$ of $S$ after removing a
set $Y$ of edges from $G$.

\begin{lemma}\label{lem:ze-ba-ass}
  Let $Y$ be a set of edges of $G$ and let $\varphi$ be an assignment
  of $S$. Then, $\BA{\varphi}$ satisfies all edges 
  reachable from $s$ in $G-Y$ if and
  only if $\BA{\varphi}$ sets all Boolean variables reachable from $s$ in
  $G-Y$ to $1$. Similarly, $\BA{\varphi}$ satisfies all edges
  reachable from $t$ in $G-Y$ if and
  only if $\BA{\varphi}$ sets all Boolean variables reachable from $t$ in
  $G-Y$ to $0$.
\end{lemma}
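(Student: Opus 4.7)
The plan is to observe that by the construction of $G_I(S)$, every edge of $G$ encodes an ``if and only if'' (biconditional) constraint between its two Boolean endpoints: the edge is satisfied by $\BA{\varphi}$ precisely when its two endpoints receive the same Boolean value under $\BA{\varphi}$. This is built into the three cases in the definition of the class-assignment graph: an edge $x_{C}y_{\EQM{e}{i}(C)}$ from $e = (ax = y)$ fires exactly when the assignment selects the matched pair, an edge $y_Dt$ fires when neither side takes the value forcing $y$ into $D$, and the $sx_{\EQM{e}{i}}$ edges enforce unary equations. Combined with $\BA{\varphi}(s) = 1$ and $\BA{\varphi}(t) = 0$ (directly from the definition of $\BA{\varphi}$), the lemma is essentially a propagation statement along paths in $G - Y$.

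First I would handle the $s$-statement. For the forward direction, I would take any vertex $v$ reachable from $s$ in $G-Y$, pick a shortest path $s = v_0, v_1, \dots, v_\ell = v$ in $G-Y$, and induct on $\ell$. The base case $\ell = 0$ is $\BA{\varphi}(s) = 1$. For the step, each edge $\{v_i, v_{i+1}\}$ lies in $G-Y$ and has both endpoints reachable from $s$, so it is among the edges the hypothesis guarantees are satisfied; by the biconditional semantics above this forces $\BA{\varphi}(v_{i+1}) = \BA{\varphi}(v_i)$, and hence inductively equal to $1$. The backward direction is immediate: any edge of $G-Y$ reachable from $s$ has both endpoints reachable from $s$, hence both assigned $1$ by $\BA{\varphi}$, and so the biconditional is satisfied.

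The $t$-statement is proved by a symmetric argument, replacing ``$1$'' by ``$0$'' and using $\BA{\varphi}(t) = 0$. Formally one can observe that $G$ is undirected and the roles of $s$ and $t$ are interchangeable for the purposes of this purely graph-theoretic propagation argument.

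There is no real obstacle here beyond carefully unpacking the biconditional interpretation of edges in $G_I(S)$. The only subtlety is making sure that the three edge types from the construction, together with the way $\BA{\varphi}$ assigns Boolean values to vertices $x_C$ (at most one $1$ per variable, namely on the class containing $\varphi(x)$, with all-zero Booleans if $\varphi(x)=0$), all genuinely yield an ``iff'' constraint between the two endpoints of each edge; this is precisely the content of Observation~\ref{obs:eq-mapping} applied in the appropriate direction. Once that is in place, both implications reduce to a one-line induction or direct check.
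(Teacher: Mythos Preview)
Your proposal is correct and matches the paper's approach exactly: the paper's proof is a two-line remark that $\BA{\varphi}(s)=1$, $\BA{\varphi}(t)=0$, and every edge encodes an ``if and only if'' between its endpoints, which is precisely the mechanism you spell out. Your path-induction just makes explicit the propagation that the paper leaves implicit.
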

\begin{proof}
  This follows because $\BA{\varphi}(s)=1$ and $\BA{\varphi}(t)=0$ for any $\varphi$
  and the
  fact that every edge of $G$ corresponds to an ``if and only if''
  between the variables corresponding to its two endpoints.
\end{proof}

For a set $Z$ of equations of $S$, we denote by $\ed(Z)$ the set of
all edges of $G$ corresponding to an equation of $Z$.
Similarly, for a set $Y$
of edges of $G$, we denote by $\con(Y)$ 
all equations of $S$ having a corresponding edge in $Y$. Moreover, if
$Y$ is an $st$-cut in $G$, we denote by $\sep(Y)$ the unique
minimal $st$-cut contained in $Y$ that is \emph{closest} to $s$ in $G$.
Finally, for an optimal
solution $Z$ of $S$, we let $\comp{Z}$
be the set of equations
$Z\setminus \con(\sep(\ed(Z)))$, i.e. all equations in $Z$ that do
not have an edge in $\sep(\ed(Z))$.

We are now ready to show the main result of this subsection, which
informally shows that every solution $Z$ for $(S,k)$ gives rise to a
conformal $st$-cut of size at most $|\EQ^{\neq 0}| \cdot |Z|$ that
agrees with some satisfying assignment of $S-Z$ and thus establishes
the required connection between solutions of $(S,k)$ and conformal
$st$-cuts in the class assignment graph.
\begin{lemma} \label{lem:deleted-edges}
  Let $Z$ be a set of equations such that $S-Z$ is satisfiable and let $Y=\ed(Z)$.
  Then, $|Y|\leq |\EQ^{\neq 0}| \cdot |Z|$ and $Y$ is an $st$-cut. Moreover,
  $Y'=\sep(Y)$ satisfies:
  \begin{enumerate}
    \item $Y'$ is conformal.
    \item There is a satisfying assignment for $S-Z$ that agrees with $Y'$.
  \end{enumerate}
\end{lemma}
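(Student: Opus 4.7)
The plan is to leverage a satisfying assignment $\varphi$ of $S-Z$ (provided by hypothesis) together with the characterization of $\sep(Y)$ via the set $R$ of vertices reachable from $s$ in $G-Y$.

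For the size bound I would count bundle by bundle. A crisp equation $x=0$ contributes exactly $|\EQ^{\neq 0}|$ edges $x_Ct$, and a unary $x=b$ with $b\neq 0$ contributes a single edge $sx_{\EQM{e}{i}}$. For a binary $e=(ax=y)$, the matching property of $\equiv$ forces $\EQM{e}{i}$ to be injective on classes with non-zero image, so every $D\in\EQ^{\neq 0}$ contributes exactly one edge to the bundle of $e$: the edge $x_Cy_D$ for the unique $C$ with $\EQM{e}{i}(C)=D$, or $y_Dt$ when no such $C$ exists. Hence every bundle has at most $|\EQ^{\neq 0}|$ edges and $|Y|\leq |\EQ^{\neq 0}|\cdot|Z|$. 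For the $st$-cut property I would argue by contradiction: an $s$-$t$ path in $G-Y$ would have every edge coming from an equation in $S-Z$, so by Observation~\ref{obs:ass-ba-ass} every such edge is satisfied by $\BA{\varphi}$; Lemma~\ref{lem:ze-ba-ass} would then force $\BA{\varphi}(t)=1$, contradicting $\BA{\varphi}(t)=0$.

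For conformality I would use that $\sep(Y)$, being a minimal $st$-cut contained in $Y$ and closest to $s$, has the property that the $s$-reachable set in $G-Y'$ coincides with $R$. Since every edge of $G-Y$ corresponds to an equation in $S-Z$, Observation~\ref{obs:ass-ba-ass} again gives that $\BA{\varphi}$ satisfies every edge of $G-Y$, and Lemma~\ref{lem:ze-ba-ass} yields $\BA{\varphi}(v)=1$ for every $v\in R$. By the construction of $\BA{\varphi}$, at most one vertex $x_C$ per variable $x$ is assigned value $1$, so $R$ contains at most one such vertex per variable, i.e.\ $Y'$ is conformal.

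For the existence of a satisfying assignment that agrees with $Y'$, I would set $\varphi'(x)=\varphi(x)$ whenever $x$ is decided by $Y'$ and $\varphi'(x)=0$ otherwise. The decided case is consistent with $\clasn{Y'}$ because $x_C\in R$ forces $\BA{\varphi}(x_C)=1$ and hence $\varphi(x)\in C$. To verify that $\varphi'$ satisfies each $e\in S-Z$, I would perform a case analysis using that no edge of $e$ lies in $Y$, hence no edge of $e$ crosses $R$ (since $\delta(R)\subseteq Y$). Unary equations follow directly from the edge-construction rules. For binary $e=(ax=y)$, the cases where both of $x,y$ are decided or both undecided are routine: in the first, $\varphi'$ coincides with $\varphi$ on $V(e)$; in the second, $a\cdot 0=0$. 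The interesting case is when only $x$ is decided by some class $C$: here $\EQM{e}{i}(C)=0$ must hold, for otherwise the edge $x_Cy_{\EQM{e}{i}(C)}$ would pull $y_{\EQM{e}{i}(C)}$ into $R$; then Observation~\ref{obs:eq-mapping} gives $a\varphi(x)=0=\varphi'(y)$. The main obstacle is ruling out the symmetric configuration where only $y$ is decided by a class $D$: I would show that either $\EQM{e}{i}^{-1}(D)$ is defined and the corresponding edge $x_Cy_D$ in $G-Y$ forces $x_C\in R$ (contradicting $x$ undecided), or $\EQM{e}{i}^{-1}(D)$ is undefined and the edge $y_Dt$ in $G-Y$ forces $t\in R$, each a contradiction.
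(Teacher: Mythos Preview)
Your proposal is correct and follows essentially the same approach as the paper: use a satisfying assignment $\varphi$ of $S-Z$ together with Observation~\ref{obs:ass-ba-ass} and Lemma~\ref{lem:ze-ba-ass} to show $\BA{\varphi}$ is $1$ on all of $R$, deduce the cut and conformality properties, then define $\varphi'$ by zeroing out undecided variables and verify each equation of $S-Z$ by the same case analysis. The only minor difference is in the sub-case where $y$ alone is decided: the paper first argues that $\EQM{e}{i}^{-1}(D)$ must be defined because $\varphi$ satisfies $e$ and $\varphi(y)\in D$, whereas you split into the defined/undefined sub-cases and derive a contradiction in each (the undefined branch using the edge $y_Dt$ to reach $t\in R$); both are valid and equally short.
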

\begin{proof}
  Since every equation in $Z$ corresponds to at most
  $|\EQ^{\neq 0}|$ edges in $G$ by construction, we obtain that
  $|Y|\leq |\EQ^{\neq 0}| \cdot |Z|$.
  Let $\varphi$ be a satisfying assignment of $S-Z$.
  Corollary~\ref{cor:satassG} implies that $\BA{\varphi}$ satisfies
  all edges of $G-Y$. Therefore, it follows from
  Lemma~\ref{lem:ze-ba-ass} that $\BA{\varphi}$ sets all vertices
  reachable from $s$ in $G-Y$ to $1$ and all vertices reachable from $t$ in $G-Y$ to $0$.
  Thus, $Y$ is an $st$-cut, because otherwise $t$ would
  have to be set to $1$ by $\BA{\varphi}$ since it would be reachable from
  $s$ in $G-Y$. Therefore, $Y'=\sep(Y)$ exists. Because $Y'$ is
  closest to $s$, it holds that a vertex is reachable from $s$ in
  $G-Y$ if and only if it is reachable from $s$ in $G-Y'$.
  Therefore, if at least two vertices $x_{C}$ and $x_{C'}$ for some
  distinct non-zero classes $C$ and $C'$ are
  reachable from $s$ in $G-Y'$ for some variable $x$, then all of them must be set to $1$ by
  $\BA{\varphi}$, which is not possible due to the definition of
  $\BA{\varphi}$. We conclude that $Y'$ is conformal.

  Let $D$ be the set
  of all variables of $S$ such that no vertex $x_{C}$ is
  reachable from $s$ in $G-Y'$. Let $\varphi'$ be the assignment for $S$ such that $\varphi'(x)=0$ if
  $x \in D$ and $\varphi'(x)=\varphi(x)$ otherwise. Clearly, $\varphi'$ agrees
  with $Y'$, because $\varphi$ agrees with all variables not in $D$ and
  all other variables are correctly set to $0$ by $\varphi'$. It
  therefore only remains to show that $\varphi'$
  still satisfies $S-Z$.

  Consider a unary equation $e$ of $S-Z$ on variable $x$.
  If $x
  \notin D$, then $\varphi'(x)=\varphi(x)$ and $\varphi'$ satisfies
  $e$. So suppose that $x \in D$. If $e$ is of the form $ax=0$, then
  $\varphi'$ satisfies $e$. Otherwise, $e$ is of the form $ax=b$
  for some $b\neq 0$ and $G-Y'$ contains the edge
  $sx_{\EQM{e}{i}}$. Therefore, $x_{\EQM{e}{i}}$ is reachable from $s$ in
  $G-Y'$ contradicting our assumption that $x \in D$.

  Consider instead a binary equation $e$ of $S-Z$ on
  two variables $x$ and $y$. If $x,y \in D$, then $e$ is clearly
  satisfied by $\varphi'$. Similarly, if $x,y \notin D$, then $e$ is
  also satisfied by $\varphi'$, because $\varphi'(x)=\varphi(x)$ and
  $\varphi'(y)=\varphi(y)$ and $\varphi$ satisfies $e$. So suppose that
  $|\{x,y\}\cap D|=1$ and that $e$ is of the form $ax =y$ for some $a\in R$. Recall that
  $$\ed(e)=\SB x_{C}y_{\EQM{e}{i}(C)} \SM C \in \EQ \land
  \EQM{e}{i}(C)\neq 0\SE \cup \SB y_Dt \SM D \in \EQ \land
  \EQM{e}{i}^{-1}(D) \text{ is undefined} \SE$$
  and note that $\ed(e) \subseteq E(G)\setminus
  \ed(Z)$. Suppose first that $x \notin D$ and
  $y \in D$. Because $x \notin D$, there is a vertex
  $x_{C}$ reachable from $s$ in $G-Y'$. If $\EQM{e}{i}(C) \neq 0$,
  then because $x_{C}y_{\EQM{e}{i}(C)} \notin
  Y'$, the vertex $y_{\EQM{e}{i}(C)}$ is reachable from $s$ in $G-Y'$ contradicting our assumption
  that $y \in D$. Otherwise, $\EQM{e}{i}(C)=0$, which together with our assumption that
  $\varphi'(y)=0$ shows that $e$ is satisfied by $\varphi'$.
  
  Suppose now that $x \in D$ and
  $y \notin D$. Because $y \notin D$, there is a vertex
  $y_{D}$ reachable from $s$ in $G-Y'$. Because $e$ is satisfied by $\varphi$, it holds that
  $\EQM{e}{i}^{-1}(D)$ is defined. But then, because $x_{\EQM{e}{i}^{-1}(D)}y_{D} \notin
  Y'$, the vertex $x_{\EQM{e}{i}^{-1}(D)}$ is reachable from $s$ in $G-Y'$ and this contradicts our assumption
  that $x \in D$. 
\end{proof}

\subsubsection{Shadow Removal}
\label{sec:goodbye-shadows}

We continue by showing how the shadow removal technique
(introduced in~\cite{marx2014fixed} and improved
in~\cite{chitnis2015directed}) can be used for computing conformal cuts that correspond to solutions of
a $\minlinideal{2}{R}{I}$ instance.
We follow~\cite{chitnis2015directed} and begin by importing some definitions,
which we translate from directed graphs to undirected graphs to fit
our setting; to get back to directed graphs one simply has to think of
an undirected graph as a directed graph obtained after replacing each
undirected edge with two directed arcs in both directions.
Let $G$ be an undirected graph.
Let $\cF$ be a set of connected subgraphs of $G$.
A set $T \subseteq V(G)$ is an \emph{$\cF$-transversal} 
if $T$ intersects every subgraph in $\cF$.
Conversely, if $T$ is an $\cF$-transversal,
we say that $\cF$ is \emph{$T$-connected}.

\begin{theorem}[\cite{chitnis2015directed}]
  \label{thm:shadow-cover}
  Let $G$ be an undirected graph, $T \subseteq V(G)$ and $k \in \naturals$.
  There is a randomized algorithm that takes $(G, T, k)$
  as input
  and returns in $O^*(4^k)$ time a set $W \subseteq V(G) \setminus T$
  such that the following holds with probability $2^{-O(k^2)}$.
  For every $T$-connected family of connected subgraphs $\cF$ in $G$,
  if there is an $\cF$-transversal of size at most $k$ in $V(G) \setminus T$,
  then there is an $\cF$-transversal $Y \subseteq V(G) \setminus (W \cup T)$ 
  of size at most $k$ such that every vertex $v \notin W \cup Y$
  is connected to $T$ in $G - Y$.
\end{theorem}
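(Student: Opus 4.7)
The plan is to prove this via the shadow removal framework of Marx and Razgon, as refined by Chitnis, Hajiaghayi and Marx. For any hypothetical transversal $Y\subseteq V(G)\setminus T$, call the \emph{shadow} of $Y$ the set $S_Y$ of vertices outside $T$ that lose connectivity to $T$ in $G-Y$. The stated conclusion is precisely that we seek a $W$ which, for \emph{some} minimum-size transversal $Y$, satisfies $S_Y\subseteq W$ and $W\cap Y=\emptyset$ simultaneously; such a $W$ automatically witnesses the desired $Y$.

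The starting point is the classical important-separator toolbox: for any disjoint $A,B\subseteq V(G)$ the number of important $(A,B)$-separators of size at most $k$ is bounded by $4^k$, and they can all be enumerated in time $\bigohs(4^k)$. The key structural observation is that the edge boundary of the set of $T$-reachable vertices in $G-Y$ is itself an important $(T,\cdot)$-separator of size at most $|Y|\leq k$. Consequently, the shadow of an optimum $Y$ admits a compact ``peeling'' into at most $k$ layers, each of which is one of at most $4^k$ important separators.

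The algorithm would then be a randomised procedure that constructs $W$ layer by layer. At each layer it enumerates the important $(T,\cdot)$-separators of size at most $k$ and tosses a biased coin to commit to one of them as the next layer of the guess; the vertices it thereby cuts off from $T$ are added to $W$, the decided vertices are contracted into $T$, and the procedure recurses on the residual graph. A run is successful if every coin-flip agrees with the peeling of the genuine optimum $Y$: there are $O(k)$ layers, each correct guess occurs with probability at least $2^{-O(k)}$ (since $\log_2 4^k = O(k)$), so the composed success probability is $2^{-O(k^2)}$ as claimed. The running time $\bigohs(4^k)$ is dominated by enumerating the important separators at each layer, with everything else being polynomial bookkeeping.

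The main obstacle I anticipate is the clean inductive analysis of the random procedure. One must show that, conditional on the previous guesses being consistent with some minimum transversal, the residual instance still admits an optimum whose shadow has a short important-separator peeling of the same total size; this requires care because an innocuous-looking deviation can in principle destroy the important-separator structure needed for the next step. Moreover, the analysis has to be uniform in the (unknown) family $\cF$: the algorithm sees only $(G,T,k)$, yet the conclusion is required to hold simultaneously for every $T$-connected $\cF$ whose minimum transversal has size at most $k$. This uniformity is precisely what turns the lemma into the black box that we invoke in Section~\ref{ssec:class-assignments}.
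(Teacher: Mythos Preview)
First, note that the paper does not prove this theorem at all: it is imported as a black box from Chitnis, Cygan, Hajiaghayi and Marx~\cite{chitnis2015directed}, so there is no ``paper's proof'' to compare your sketch against. What follows is an assessment of the sketch on its own terms.

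Your proposal has the right ingredients --- important separators, random sampling, the goal that $W$ cover the shadow of some optimum $Y$ while avoiding $Y$ --- but the structural claim and the algorithm do not match the actual argument, and I do not see how to make them work. The assertion that the shadow ``peels into at most $k$ layers, each one of at most $4^k$ important separators'' is not correct: the shadow can be arbitrarily large and has no such global $k$-layer decomposition. What \emph{is} true is a per-vertex statement: for every $v$ in the shadow there is an important $(v,T)$-separator $S_v$ of size at most $k$ whose $v$-side lies entirely inside the shadow (hence is disjoint from $Y$). This is why the phrase ``important $(T,\cdot)$-separator'' is underspecified --- the target is a single vertex $v$, and it varies. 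Consequently your layer-by-layer procedure runs into trouble: if at each step you enumerate important $(v,T)$-separators over all $v$, there are up to $n\cdot 4^k$ of them and a uniform guess succeeds with probability depending on $n$, not $2^{-O(k)}$. The actual randomised procedure in~\cite{marx2014fixed,chitnis2015directed} is not an iterative peel-and-contract; it is a single-shot random sample in which, roughly, the $v$-side of each important $(v,T)$-separator is placed into $W$ independently with a carefully chosen probability, and the $2^{-O(k^2)}$ bound arises from balancing the event ``every shadow vertex is covered'' against ``no vertex of $Y$ is covered''. Your final paragraph correctly anticipates that the uniformity over all $T$-connected $\cF$ is the delicate point, but the mechanism you propose for achieving it needs to be replaced by the random-sampling lemma from the cited sources.
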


The following lemma is crucial for the application of shadow removal
(Theorem~\ref{thm:shadow-cover}). Informally, it shows that if $Z$ is
a solution, i.e. a set of equations such that $S-Z$ is satisfiable,
then we can obtain a (not too large) new solution $Z'=(\comp{Z} \cup \con(Y'))$ by replacing
the corresponding conformal minimal $sA$-cut $Y=\sep(\ed(Z))$, where $A$
is the set of vertices in $G$ not reachable from $s$ in $G-Y$, by any
minimal $sA$-cut $Y'$.
\begin{lemma}
  \label{lem:anti-sol}
  Let $S$ be a \homoI{} instance of $\linideal{2}{R}{I}$ and let $G=G_I(S)$.
  Moreover, let $Z$ be a set of equations such that $S-Z$ is satisfiable, 
  $Y=\sep(\ed(Z))$, $A$ be the set of all
  vertices in $G$ that are not reachable from $s$ in $G-Y$, and let
  $Y'$ be an $sA$-cut in $G$.
  Then, there is an assignment $\varphi : V(S) \rightarrow I$ of $S$ that satisfies
  $S-Z'$ and agrees with $Y'$, where $Z'=(\comp{Z} \cup \con(Y'))$.
\end{lemma}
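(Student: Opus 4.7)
The plan is to define $\varphi$ from the satisfying assignment $\psi$ of $S - Z$ that agrees with $Y$ (provided by Lemma~\ref{lem:deleted-edges}): set $\varphi(v) = \psi(v)$ for every variable $v$ that is decided with respect to $Y'$, and $\varphi(v) = 0$ otherwise. To justify this, let $U$ and $U'$ denote the sets of vertices reachable from $s$ in $G-Y$ and $G-Y'$, respectively. Since $Y'$ is an $sA$-cut with $A = V(G) \setminus U$, we have $U' \subseteq U$. This inclusion immediately transfers conformality from $Y$ to $Y'$, and ensures $\psi(v) \in \clasn{Y'}(v)$ whenever $v$ is decided with respect to $Y'$; hence $\varphi$ is well-defined and agrees with $Y'$.

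It remains to verify that $\varphi$ satisfies every $e \in S - Z'$, which unpacks to $e \notin \con(Y')$ combined with either $e \in S - Z$ (Case A) or $e \in Z \cap \con(Y)$ (Case B). Two structural observations drive the analysis. First, since $Y$ is a minimum $st$-cut in $\ed(Z)$ closest to $s$, every edge of $Y$ has exactly one endpoint in $U$ and one in $A$. Second, by the matching property of $\equiv$, the restriction of $\EQM{e}{i}$ to classes with non-zero image is injective: $\EQM{e}{i}(C) = \EQM{e}{i}(C') \neq 0$ implies $C = C'$.

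For a unary equation $e = (x = 0)$, the hypothesis $e \notin \con(Y')$ leaves every edge $x_C t$ present in $G - Y'$, so each $x_C$ is connected to $t \in A$ and hence unreachable from $s$ in $G - Y'$; thus $x$ is undecided and $\varphi(x) = 0$ satisfies $e$ in both cases. For a unary $e = (x = b)$ with $b \neq 0$, the uncut edge $s x_{\EQM{e}{i}}$ makes $x$ decided with class $\EQM{e}{i}$, so $\varphi(x) = \psi(x) = b$ in Case A; Case B is impossible because $e \in \con(Y)$ would place $x_{\EQM{e}{i}} \in A$, while $e \notin \con(Y')$ would leave the edge $s x_{\EQM{e}{i}}$ uncut in $G - Y'$, violating the $sA$-cut property. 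For a binary equation $e = (u = rv)$, tracing edges of $e$ in $G - Y'$ and invoking conformality of $Y'$ together with the $sA$-cut property shows that exactly one of the following synchronisation configurations occurs: (i) both $v$ and $u$ are undecided; (ii) $v$ is decided with class $C$ such that $\EQM{e}{i}(C) = 0$, and $u$ is undecided; or (iii) $v$ is decided with class $C$ such that $\EQM{e}{i}(C) \neq 0$, and $u$ is decided with class $\EQM{e}{i}(C)$. Configurations (i) and (ii) give $\varphi(u) = 0 = r\varphi(v)$ immediately, and (iii) in Case A holds because $\psi(u) = r\psi(v)$ and $\varphi$ copies $\psi$ on both decided variables.

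The main obstacle, and the only genuinely new situation, is configuration (iii) in Case B, where $\psi$ need not satisfy $e$, so the naive choice $\varphi(u) = \psi(u)$ may fail the equation. The plan is to show this configuration is in fact impossible, thereby contradicting $e \in \con(Y)$. Since $v_C, u_{\EQM{e}{i}(C)} \in U' \subseteq U$, the edge $v_C u_{\EQM{e}{i}(C)}$ has both endpoints in $U$ and is excluded from $Y$ by the first structural observation. For any $C' \neq C$ with $\EQM{e}{i}(C') \neq 0$, the injectivity from the second observation yields $\EQM{e}{i}(C') \neq \EQM{e}{i}(C)$, so both $v_{C'}$ and $u_{\EQM{e}{i}(C')}$ lie in $A$ (by conformality of $Y$), placing the corresponding edge entirely inside $A$ and again outside $Y$. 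Finally, any edge $u_D t$ in $\ed(e)$ would require $\EQM{e}{i}^{-1}(D)$ to be undefined, contradicting $D = \EQM{e}{i}(C)$. Hence no edge of $e$ can lie in $Y$, contradicting $e \in \con(Y)$, which completes the proof.
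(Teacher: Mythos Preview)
Your proof is correct and follows essentially the same approach as the paper: both define the new assignment by zeroing out variables that become undecided under $Y'$ (your formulation via decided/undecided coincides with the paper's set $D$ since $\psi$ already assigns $0$ to $Y$-undecided variables), and both handle the crucial Case~B by showing that not both variables of a binary $e \in Z \cap \con(Y)$ with $e \notin \con(Y')$ can be decided with respect to $Y'$. Your impossibility argument for configuration~(iii) enumerates all edges of $\ed(e)$ and shows none crosses the $(U,A)$-partition, whereas the paper picks one edge $h \in \ed(e) \cap Y$ and derives a conformality violation; these are equivalent, and your use of the injectivity of $\EQM{e}{i}$ on nonzero classes makes the edge-by-edge check transparent (the final sentence about $u_D t$ edges is slightly compressed---the implicit step is that such an edge can lie in $Y$ only if $u_D \in U$, forcing $D = \EQM{e}{i}(C)$ by conformality, which then contradicts $\EQM{e}{i}^{-1}(D)$ being undefined).
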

\begin{proof}
  Lemma~\ref{lem:deleted-edges} implies that $Y$ is conformal and there is a satisfying assignment
  $\varphi$ for $S-Z$ that agrees with $Y$.
  Because $Y'$ is also an $sA$-cut in
  $G$, if no vertex $x_{C}$ is reachable from $s$ in $G-Y$ for
  some variable $x$ of $S$, then
  the same applies in $G-Y'$. Let $D$ be the set of all variables
  $x$ of $S$ such that some vertex $x_{C}$ is reachable from $s$ in $G-Y$
  but that is not the case in $G-Y'$. Let $\varphi'$
  be the assignment obtained from $\varphi$ by setting all variables
  in $D$ to $0$. Then, $\varphi'$ agrees with $Y'$.
  We
  claim that $\varphi'$ also satisfies $S-Z'$, where $Z'=(\comp{Z} \cup
  \con(Y'))$.

  Consider a unary equation $e$ of $S-Z'$ on variable $x$.
  If $x
  \notin D$, then $\varphi'(x)=\varphi(x)$ and therefore $\varphi'$ satisfies
  $e$ (because $e$ is crisp and therefore $e \notin Z$). So suppose that $x \in D$. If $e$ is of the form $xa=0$, then
  $\varphi'$ satisfies $e$. Otherwise, $e$ is of the form $ax=b$
  for some $b\neq 0$ and $G-Y'$ contains the edge
  $sx_{\EQM{e}{i}}$. Therefore, $x_{\EQM{e}{i}}$ is reachable from $s$ in
  $G-Y'$ contradicting our assumption that $x \in D$.

  Now, consider a binary
  equation $e=(ax=y)$ of $S-Z'$ on variables $x$ and
  $y$ and first consider the case that $e \in Z$. Clearly, if neither
  a vertex $x_C$ nor a vertex $y_C$ is reachable from $s$ in $G-Y'$,
  then $\varphi'(x)=\varphi'(y)=0$ and therefore $e$ is satisfied by
  $\varphi'$. We show next
  that either no vertex $x_c$ or no vertex $y_C$ is reachable from $s$
  in $S-Y'$. Suppose for a contradiction that $x_{C_x}$ and
  $y_{C_y}$ are reachable from $s$ in $S-Y'$.
  Let $h$ be an arbitrary edge in $\ed(e)\cap Y$; note that
  such an edge $h$ exists because $e \in Z\setminus Z'$.
  Because $Y$ is a minimal st-cut, it follows that exactly one
  endpoint of $h$ is reachable from $s$ in $G-Y$ and therefore
  either $x_C$ or $y_C$ (endpoint of $h$) for some $C \in
  \EQ$ must be reachable from $s$ in $G-Y$. We assume
  without loss of generality that $x_C$ is reachable from $s$ in $G-Y$. Because $Y'$ is an
  $sA$-cut and $Y'$ does not contain $h$, it holds
  that $x_C$ is not reachable from $s$ in $G-Y'$. But then $C\neq C_x$
  and both $x_C$
  and $x_{C_x}$ are reachable from $s$ in $G-Y$, which contradicts our
  assumption that $Y$ is conformal.
  It remains to consider the case that there is a vertex $x_C$ that is reachable
  from $s$ in $G-Y'$ but no vertex $y_C$ is reachable from $s$ in
  $G-Y'$; the case that there is a vertex $y_C$ reachable from $s$ in
  $G-Y'$ but no vertex $x_C$ reachable from $s$ in $G-Y'$ is analogous.
  Since $Y'\cap \ed(e)=\emptyset$, we obtain that $\EQM{e}{i}(C)=0$
  since otherwise either $t$ or some $y_{C'}$ would be reachable from
  $s$ in $G-Y'$. Because $\varphi'(y)=0$, it follows that $e$ is
  satisfied by $\varphi'$. This completes the proof for the case that $e
  \in Z$.

  So suppose instead that $e \notin Z$. In this case $\varphi$ satisfies $e$ and
  therefore also $\varphi'$ satisfies $e$ unless exactly one of $x$ and
  $y$ is not in $D$. We distinguish the following cases:
  \begin{itemize}
  \item $x \notin D$ and $y \in D$. If there is no vertex $x_C$ that is
    reachable from $s$ in $G-Y'$, then the same holds in $G-Y$ so
    $\varphi'(x)=\varphi(x)=\varphi'(y)=0$, which shows that $\varphi'$
    satisfies $e$. Otherwise, let $x_C$ be reachable from $s$ in
    $G-Y'$. Then, $\EQM{e}{i}(C)=0$ since otherwise either $t$ or
    $y_{\EQM{e}{i}(C)}$ is also reachable from $s$ in $G-Y'$ (because
    $Y'\cap \ed(e)=\emptyset$), which in
    the former case contradicts our assumption that $Y'$ is an
    $st$-cut and which in the latter case contradicts our assumption
    that $y \in D$. Therefore, $\varphi'$ satisfies $e$ (because
    $\varphi'(y)=0$).
  \item $x \in D$ and $y \notin D$. We first show that there is no
    vertex $y_C$ that is reachable from $s$ in $G-Y'$. Suppose there
    is such a vertex $y_C$. Then, $\EQM{e}{i}^{-1}(C)$ is undefined
    since otherwise $x_{\EQM{e}{i}^{-1}(C)}$ is reachable from $s$ in
    $G-Y'$ (because $Y'\cap \ed(e)=\emptyset$), which contradicts our
    assumption that $x \in D$. But then $y_Ct \in E(G-Y')$ and
    $t$ is reachable from $s$ in $G-Y'$, which contradicts
    our assumption that $Y'$ is an $st$-cut. Hence, there is no vertex $y_C$ that is
    reachable from $s$ in $G-Y'$, which implies that the same holds in $G-Y$ so $\varphi'(y)=\varphi(y)=\varphi'(x)=0$, which shows that $\varphi'$
    satisfies $e$. 
  \end{itemize}

\end{proof}

Let $G=G_I(S)$ and for a set $W \subseteq
V(G)$, let $\delta(W)$ be the set of edges incident to a vertex in $W$
and a vertex in $V(G)\setminus W$.
The forthcoming Theorem~\ref{thm:our-shadow-cover} 
provides a version of shadow
removal adopted to our problem. 
Informally, it provides us with a set
$W \subseteq V(G)$ such that we only
have to look for conformal $st$-cuts that are subsets of $\delta(W)$ to
obtain our class assignment; in fact it even shows that for every component $C$ of $G[W]$ either all edges in $\delta(C)$ are part of the cut or no edge of $\delta(C)$ is part of the cut. We will use this fact in
Theorem~\ref{thm:list-conformal-cuts} to find a conformal $st$-cut by branching on
which components of $G[W]$ are reachable from $s$.

More formally, if $Z$ is a set
of at most $k$ equations such that $S-Z$ is satisfiable and $A$ is the
set of vertices not reachable from $s$ in $G$ minus the conformal
$st$-cut $\sep(\ed(Z))$ (see Lemma~\ref{lem:deleted-edges}), then the
theorem provides us with a set $W \subseteq V(G)$ such that there is a
conformal $sA$-cut $Y'$ within $\delta(W)$ of size at most
$|Z| \cdot |\Gamma^{\neq 0}|$ such that there is an assignment $\varphi
: V(S) \rightarrow I$ for the variables in $S$ that satisfies
$S-(\comp{Z} \cup \con(Y'))$ and
agrees with $Y'$. The main idea behind the proof is the application of Theorem~\ref{thm:shadow-cover} to the set $\FF$ of all walks from $s$ to $A$ in $G$ to obtain the set $W$ and to employ Lemma~\ref{lem:anti-sol} to obtain the new solution that corresponds to the minimum $sA$-cut $Y' \subseteq \delta(W)$.

\begin{theorem}
  \label{thm:our-shadow-cover}
  Let $S$ be a \homoI{} instance of $\linideal{2}{R}{I}$ and let $G=G_I(S)$.
  Moreover, let $Z$ be a set of equations of size $k$ such that $S-Z$ is satisfiable, 
  $Y=\sep(\ed(Z))$, and let $A$ be the set of all
  vertices in $G$ that are not reachable from $s$ in $G-Y$.
  There is a randomized algorithm that in $\bigohs(4^{|\EQ^{\neq 0}|k})$ time takes $(G,k)$
  as input and returns  a set $W \subseteq V(G) \setminus \{s\}$
  such that the following holds with probability $2^{-O((|\EQ^{\neq 0}|k)^2)}$.
  There is a (minimal) $sA$-cut $Y'$ of size at most $|\EQ^{\neq 0}|k$
  satisfying:
  \begin{enumerate}
  \item every vertex $v \notin W$ is connected to $s$ in $G - Y'$,
  \item %
    $Y' \subseteq \delta(W)$, where $\delta(W)$ is the set of edges
    incident to a vertex in $W$ and a vertex in $V(G)\setminus W$,
  \item there is an assignment $\varphi : V(S) \rightarrow I$ for the variables in $S$ that satisfies
    $S-(\comp{Z} \cup \con(Y'))$ and
    agrees with $Y'$.
  \end{enumerate}
  Moreover, for every component $C$ of $G[W]$ the following holds:
  \begin{itemize}
  \item[4.] either $Y'\cap\delta(C)=\emptyset$ or $\delta(C)\subseteq Y'$,
  \item[5.] if $t \in C$, then $\delta(C) \subseteq Y'$,
  \item[6.] if $x_{\alpha},x_{\alpha'} \in C$ for some variable $x$
    and $\alpha\neq \alpha'$, then $\delta(C)
    \subseteq Y'$,
  \item[7.] if $C$ contains some $x_\alpha$ for some decided
    variable $x$ w.r.t. $Y'$, then $\delta(C) \subseteq Y'$,
  \end{itemize}
\end{theorem}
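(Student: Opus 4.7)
The plan is to invoke Theorem~\ref{thm:shadow-cover} with target family $\cF$ equal to the set of all $s$-$A$ walks in $G$ and then derive the cut $Y'$ from the produced shadow. By Lemma~\ref{lem:deleted-edges}, the $sA$-cut $\sep(\ed(Z))$ has at most $|\EQ^{\neq 0}| \cdot |Z| = |\EQ^{\neq 0}|k$ edges, so an $\cF$-transversal of the required size exists. Because Theorem~\ref{thm:shadow-cover} is phrased for vertex cuts, I will subdivide every edge of $G$ by a fresh middle vertex and apply the theorem on the subdivided graph with $T = \{s\}$: this turns edge deletions into vertex deletions while preserving the $s$-$A$ walk family. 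The returned vertex set projects to a set $W \subseteq V(G) \setminus \{s\}$ with the guarantee that, with probability $2^{-O((|\EQ^{\neq 0}|k)^2)}$, there exists an edge transversal $Y^*$ of $\cF$ of size at most $|\EQ^{\neq 0}|k$ whose entire shadow (the set of vertices unreachable from $s$ after removing $Y^*$) is contained in $W$. This step is responsible for the $\bigohs(4^{|\EQ^{\neq 0}|k})$ running time.

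To obtain a cut $Y'$ that simultaneously lies in $\delta(W)$ and respects the component structure of $G[W]$, I work with the auxiliary graph $G^\star$ obtained from $G$ by contracting each connected component $C$ of $G[W]$ to a super-vertex $w_C$, and let $A^\star$ be the set of super-vertices whose component meets $A\cup\{t\}$ or contains two distinct vertices $x_\alpha, x_{\alpha'}$ for some variable $x$. I then define $Y'$ as the lift to $E(G)$ of a minimum $sA^\star$-cut in $G^\star$ closest to $s$. Edges of $G^\star$ correspond exactly to edges of $\delta(W)$ in $G$ (there are no edges between distinct components of $G[W]$, and internal edges get absorbed by the contraction), so property~2 is automatic. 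Property~1 follows because the non-super vertices in $G^\star$ are exactly $V(G) \setminus W$ and the closest-to-$s$ choice places them on the reachable side. The upper bound $|Y'| \le |\EQ^{\neq 0}|k$ is obtained by observing that any $sA^\star$-walk in $G^\star$ lifts to an $sA$-walk in $G$ passing through $\delta(W)$, so the $\delta(W)$-fragment of $Y^*$ already hits every such walk and hence bounds the minimum cut size in $G^\star$. Property~3 is then immediate from Lemma~\ref{lem:anti-sol} applied to $Z$ and the $sA$-cut $Y'$.

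Properties 4--7 are enforced by the contraction together with the marking of $A^\star$. Property~4 holds because each component $C$ of $G[W]$ is a single super-vertex of $G^\star$, so $Y'$ either contains all of $\delta(C)$ (when $w_C$ lies on the $A^\star$-side of the lifted cut) or none of it. Properties 5 and 6 are guaranteed because any component containing $t$, any $A$-vertex, or two distinct vertices of the same variable has been explicitly placed in $A^\star$ and is therefore separated from $s$. Property~7 follows from properties 4 and 6 together with conformality: if $C$ contains $x_\alpha$ for a variable $x$ that is decided w.r.t.\ $Y'$, then either $C$ already lies on the $A^\star$-side and $\delta(C)\subseteq Y'$, or $C$ would be reachable from $s$ together with its vertex $x_\alpha$, which combined with the reachable witness of the decidedness of $x$ contradicts the conformity of the closest-to-$s$ cut.

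The main obstacle will be the correctness of the size bound $|Y'| \le |\EQ^{\neq 0}|k$: one has to argue that the transversal $Y^*$ produced by shadow removal, although it might cut edges inside components of $G[W]$, still restricts to an $sA^\star$-cut in $G^\star$ of no greater cardinality. I expect this to reduce to choosing $Y^*$ as the $sA$-cut closest to $s$ among those certified by the shadow-removal algorithm, together with a rerouting argument showing that cutting an edge internal to a component of $G[W]$ can be replaced without penalty by cutting boundary edges in $\delta(W)$, so that the projection step does not increase the cut size.
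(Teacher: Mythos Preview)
Your high-level plan---invoke shadow removal with $\cF$ the family of $s$--$A$ walks, and derive property~3 from Lemma~\ref{lem:anti-sol}---matches the paper.  The execution, however, has two genuine gaps.

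First, subdividing each edge once is not enough to convert the vertex transversal promised by Theorem~\ref{thm:shadow-cover} into an edge cut of the same size.  That theorem only guarantees a transversal $R'\subseteq V(G')\setminus(W'\cup T)$; with a single-vertex subdivision $R'$ may perfectly well contain original vertices of $G$, and replacing such a vertex by incident middle vertices can blow up the size.  The paper avoids this by blowing up every original vertex $a$ into a clique $K(a)$ of $|\EQ^{\neq 0}|k+1$ copies (and making $z_e$ adjacent to all of $K(a)\cup K(b)$ for $e=ab$); then any transversal of size at most $|\EQ^{\neq 0}|k$ misses some copy $a^j$ in every clique, so clique vertices are redundant in a minimal transversal and one concludes $R'\subseteq\{z_e:e\in E(G)\}$.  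This is the step your ``turns edge deletions into vertex deletions'' hides.

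Second, the $G^\star$/$A^\star$ detour is both unnecessary and, as written, incorrect.  The ``closest to $s$'' choice \emph{minimizes} the $s$-reachable side (see the paper's definition of closest cut), so it does not give property~1; and your size argument ``any $sA^\star$-walk lifts to an $sA$-walk'' fails for components placed in $A^\star$ only because they contain $x_\alpha,x_{\alpha'}$---such a component need not meet $A$ a~priori (you would have to invoke conformality of $Y=\sep(\ed(Z))$ to force this, which you do not do).  The paper simply takes $Y'$ to be the edge cut corresponding to the (minimal) shadow-removal transversal itself: property~1 is then the shadow-removal guarantee, property~2 follows from minimality together with property~1, property~3 from Lemma~\ref{lem:anti-sol}, property~4 by pruning redundant edges of $\delta(C)$ when $C$ is partly reachable, and properties~5--7 follow from property~4 plus the conformality that property~3 yields.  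No contraction or auxiliary target set is needed.
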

\begin{proof}
  In order to apply Theorem~\ref{thm:shadow-cover}, we first transform $G$ into
  an undirected graph $G'$ as follows. 
  For every vertex $a \in V(G)$, we create a clique on $|\EQ^{\neq 0}| \cdot k+1$
  vertices $K(a) = \{a^1, \dots, a^{|\EQ| \cdot k+1}\}$ in $G'$.
  For every edge $ab \in E(G)$, we introduce an auxiliary vertex $z_{ab}$ and connect it to $a^i$ and $b^i$
  for all $1 \leq i \leq |\EQ^{\neq 0}| \cdot k+1$. Informally, the
  construction of $G'$ is required to ensure
  that every $sA$-cut in $G$ corresponds to an equally sized
  transversal in $G'$.

  We now apply Theorem~\ref{thm:shadow-cover} to the tuple
  $(G',T,|\EQ^{\neq 0}| \cdot k)$, where $T=K(s)$. Let $\cF$ be the set of all walks in $G'$ from
  $K(s)$ to $\bigcup_{a \in A}K(a)$; clearly $\cF$ is $T$-connected.

  We first show that that if $X \subseteq E(G)$ is an $sA$-cut in $G$, then the
  set $X'=\SB z_e \SM e \in X\SE$ is a $\cF$-transversal in $G'$. This
  is because any path from $K(s)$ to $\bigcup_{a \in A}K(a)$ in
  $G'-X'$ corresponds to a path from $s$ to $A$ in $G$. Similarly, if
  $X'\subseteq V(G')$ is an $\cF$-transversal in $G'$ of size at most
  $|\EQ^{\neq 0}| \cdot k$, then the set $X=\SB e \SM z_e \in X'\SE$ is an $sA$-cut in
  $G$. To see this, first observe that $|X'|\leq |\EQ^{\neq 0}|k$ implies
  that $X'$ cannot contain all $|\EQ^{\neq 0}| \cdot k+1$ vertices in $K(a)$ for any
  vertex $a \in V(G)$ and 
  therefore any path from $s$ to $A$ in $G-X$ corresponds to a path
  from $K(s)$ to $\bigcup_{a \in A}K(a)$ in $G'-X'$.

  Let $W' \subseteq V(G')\setminus T$ be the set of vertices of $G'$
  satisfying the properties stated in Theorem~\ref{thm:shadow-cover},
  i.e. if there is a $\cF$-transversal of size at most $|\EQ^{\neq 0}| \cdot k$ in $V(G') \setminus K(s)$,
  then there is an $\cF$-transversal $R$ in $V(G') \setminus (W' \cup K(s))$ 
  of size at most $|\EQ^{\neq 0}| \cdot k$ such that every vertex $v \notin W \cup R$
  is connected to $K(s)$ in $G' - R$. We claim that the set $W=\SB a \in
  V(G) \SM K(a)\subseteq W'\SE$ satisfies the properties given in the
  statement of the lemma.
  First note that without loss of generality we may assume that $z_e \in W'$ for every
  edge $e=uv \in E(G)$ such that $u,v \in W$; this is because $z_e$
  can never be reachable from $K(s)$ if $K(u)$ and $K(v)$ are both contained
  in $W'$.
  
  Because $W' \subseteq V(G')\setminus K(s)$, we obtain that $W
  \subseteq V(G)\setminus \{s\}$. Moreover, because $Y$ is an
  $sA$-cut of size at most $|\EQ^{\neq 0}|k$ in $G$, we obtain that the set $R=\SB z_e
  \SM e \in Y\SE$ is a $\cF$-transversal of size at most $|\EQ^{\neq 0}|k$ in
  $V(G')\setminus K(s)$. Therefore, by Theorem~\ref{thm:shadow-cover}
  there is a $\cF$-transversal $R' \subseteq V(G') \setminus (W' \cup K(s))$ 
  of size at most $|\EQ^{\neq 0}|k$ such that every vertex $v \notin W' \cup R'$
  is connected to $K(s)$ in $G' - R'$. Note that without loss of generality we may
  assume that $R'$ is a minimal $\cF$-traversal; otherwise we can
  remove unnecessary vertices from $R'$ to make it minimal without
  changing the properties of $R'$. Therefore, 
  $R' \subseteq \SB z_e \SM e \in E(G)\SE$ also holds.
  Let $Y'$ be the set $\SB e \in E(G) \SM z_e \in
  R'\SE$. Because $R'$ is a $\cF$-transversal of size at most $|\EQ^{\neq 0}| \cdot k$ in
  $G'$,
  it holds that $Y'$ is an $sA$-cut of size at most $|\EQ^{\neq 0}| \cdot k$ in
  $G$. Moreover, because $R' \subseteq V(G') \setminus (W' \cup K(s))$
  and, as we observed above, $W'$ contains $z_e$ for every edge
  $e=uv$ with $u,v\in W$,
  we obtain that
  $Y' \subseteq E(G)\setminus E(G[W])$. Since every vertex $v \notin
  W'\cup R'$ is connected to $K(s)$ in
  $G'-R'$, we see that every vertex $u \notin W$ is connected to $s$
  in $G-Y'$ showing 1. Moreover, because $Y'$ is a minimal $sA$-cut we
  also obtain $Y'\subseteq \delta(W)$, which shows 2.
  Since Item~3. follows directly from Lemma~\ref{lem:anti-sol}, it only
  remains to show 4.--7.

  Let $C$ be a component of $G[W]$. Because $Y'\cap E(C)=\emptyset$
  and every vertex in $N(C)$, where $N(C)$ denotes the set of all
  neighbours of $C$ outside of $C$,  is reachable from $s$ in
  $G-Y'$, it follows that if $\delta(C) \setminus Y'\neq
  \emptyset$, then all vertices in $C$ are reachable from $s$ in
  $G-Y'$. Therefore, because $Y'$ is an inclusion-wise minimal
  $sA$-cut in $G$, we can assume that either
  $Y'\cap\delta(C)=\emptyset$ or $\delta(C)\subseteq Y'$;
  otherwise removing the edges in $\delta(C)$ from $Y'$ gives again
  an $sA$-cut showing that $Y'$ is not inclusion-wise minimal. This shows 4.
  Therefore, if $t \in C$ then $\delta(C)\subseteq Y'$
  since otherwise $Y'$ is would not be an $sA$-cut in
  $G$, and this shows 5. Item 6 can now be shown similarly: if a component $C$ contains two distinct vertices $x_C$
  and $x_{C'}$ for some variable $x$ of $S$, then $\delta(C)\subseteq Y'$
  because otherwise $Y'$ is not conformal. Moreover, the same
  applies if $C$ contains some $x_C$ for some decided
  variable $x$ w.r.t. $Y'$, which shows 7.
\end{proof}

\newcommand{\trip}{\mathcal{T}}
\begin{algorithm}[htb]
  \caption{Method for branching.} \label{alg:branch}
  \small
  \begin{algorithmic}[1]
    \INPUT the set $\CCC$ of remaining components, the
    remaining budget $b$
    \OUTPUT the set $\cY$ of conformal cuts
    \Function{\textsc{branch}}{$\CCC$, $b$}
    \State $U\gets$ set of $W$-undecided variables\label{alg:branch-lu}
    \State \Return \Call{branchUndecided}{$\CCC$, $b$, $U$}
    \EndFunction
  \end{algorithmic}
\end{algorithm}

\begin{algorithm}[htb]
  \caption{Method for branching on undecided variables.} \label{alg:branchundecided}
  \small
  \begin{algorithmic}[1]
    \INPUT the set $\CCC$ of remaining components, the
    remaining budget $b$, set of remaining undecided variables $U$
    \Function{\textsc{branchUndecided}}{$\CCC$, $b$, $U$}
    \State $\cY\gets \emptyset$
    \If{$U \neq \emptyset$}
    \State $u \gets $ $U$.first()
    \State $\CCC_u \gets \kappa(\CCC,u)$\label{alg:bu:lubb}
    \If{$\CCC_u \neq \emptyset$}
    \If{$\delta(\CCC_u)\leq b$}\label{algbu:lbb}
    \State $\cY\gets \cY \; \cup \; $\Call{branchUndecided}{$\CCC\setminus \CCC_u$,
      $b-\delta(\CCC_u)$, $U\setminus \{u\}$}
    \EndIf
    \For{$C \in \CCC_u$}
    \State $R \gets (\CCC_u\setminus \{C\})$
    \If{$\delta(R) \leq b$}\label{alg:bu:tb}
    \State $\cY\gets \cY \; \cup \; $\Call{branchUndecided}{$\CCC\setminus R$, $b-\delta(R)$, $U\setminus\{u\}$}
    \EndIf
    \EndFor
    \EndIf\label{algbu:lbe}
    \Else
    \State $O \gets$ \Call{getUnSatisfied}{$\CCC$}\label{alg:un:lgetunsat}
    \If{$|O|\leq b+k$}
    \State $\cY\gets \cY \; \cup \; $\Call{branchUnSatisfied}{$\CCC$, $b$, $O$, $0$}\label{alg:bu:lunsat}
    \EndIf
    \EndIf
    \State \Return $\cY$\label{alg:bu:lf}
    \EndFunction
  \end{algorithmic}
\end{algorithm}

\begin{algorithm}[htb]
  \caption{Method for branching on unsatisfied components.} \label{alg:branchunsat}
  \small
  \begin{algorithmic}[1]
    \INPUT the set $\CCC$ of remaining components, the
    remaining budget $b$, the set $O$ of remaining unsatisfied
    components, number $p$ of kept unsatisfied components
    \Function{\textsc{branchUnSatisfied}}{$\CCC$, $b$, $O$, $p$}
    \State $\cY\gets \emptyset$
    \If{$O \neq \emptyset$}
    \State $o \gets $ $O$.first()
    \If{$p<k$}
    \State $\cY \gets \cY \; \cup \;$\Call{branchUnSatisfied}{$\CCC$, $b$, $O\setminus \{o\}$,
      $p+1$}
    \EndIf
    \If{$\delta(o)\leq b$}\label{alg:unsat:lt}
    \State $\cY\gets \cY \; \cup \;$\Call{branchUnSatisfied}{$\CCC\setminus \{o\}$,
      $b-\delta(o)$, $O\setminus \{o\}$, $p$} 
    \EndIf
    \State \Return $\cY$
    \Else
    \State \Return $\{\delta(\CCC_0\setminus \CCC)\}$
    \EndIf
    \EndFunction
  \end{algorithmic}
\end{algorithm}

The following theorem now uses the set $W \subseteq V(G)$ computed in
Theorem~\ref{thm:our-shadow-cover} to compute a set of at most
$(|\EQ^{\neq 0}| \cdot k+k)^{|\EQ^{\neq 0}| \cdot k+k}$ conformal cuts
$\cY$ each of size at most $|\Gamma| \cdot k$ such that if $\mincost(S,I) \leq k$, then
there exists an assignment of cost at most $|\EQ^{\neq 0}| \cdot k+k$ to $S$
that agrees with one of the conformal cuts in $\cY$.
Note that Theorem~\ref{the:get-class-assignment} is now an immediate consequence
of Theorem~\ref{thm:list-conformal-cuts}, i.e. instead of returning
the set $\cY$ of conformal cuts, we choose one conformal cut $Y \in
\cY$ uniformly at random and output the class assignment corresponding
to $Y$.
The main idea
behind computing $\cY$ is to use the fact that we only need to
consider conformal cuts that are within $\delta(W)$ and this
allow us to branch on which components of $G[W]$ are
reachable from $s$ (see also property 4. in Theorem~\ref{thm:our-shadow-cover}).

\begin{theorem}
  \label{thm:list-conformal-cuts}
  Let $S$ be a simple instance of
  $\lin{2}{R}$, $I$ an ideal of $R$, $G=G_I(S)$, and $k$ an integer.
  There is a randomized algorithm that 
  takes $(G,k)$ 
  as input and returns 
  in $\bigohs((|\EQ^{\neq 0}| \cdot k+k)^{|\EQ^{\neq 0}| \cdot k})$ time
  a set $\cY$ of at most $(|\EQ^{\neq 0}| \cdot k+k)^{|\EQ^{\neq 0}| \cdot k+k}$ conformal cuts, 
  each of size at most $|\EQ^{\neq 0}| \cdot k$,
  such that with probability at least
  $2^{-O((|\EQ^{\neq 0}| \cdot k)^2)}$
  there is a cut $Y \in \cY$ with the following property:
  if $\mincost(S,I) \leq k$, then
  there exists an assignment of cost at most $|\EQ^{\neq 0}| \cdot k+k$ to $S$
  that agrees with $Y$.
\end{theorem}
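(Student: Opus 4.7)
The plan is to combine Theorem~\ref{thm:our-shadow-cover} with an exhaustive branching procedure over the components of $G[W]$. First, I would run the shadow removal algorithm of Theorem~\ref{thm:our-shadow-cover} on input $(G,k)$ to obtain the set $W \subseteq V(G) \setminus \{s\}$. Assume a solution $Z$ of size at most $k$ exists (so $\mincost(S,I) \leq k$); then with probability $2^{-O((|\EQ^{\neq 0}|k)^2)}$ the shadow removal guarantees the existence of a minimal $sA$-cut $Y'$ of size at most $|\EQ^{\neq 0}|k$ with $Y' \subseteq \delta(W)$, and an assignment $\varphi$ of cost at most $|Z| \le k$ that satisfies $S - (\comp{Z} \cup \con(Y'))$ and agrees with $Y'$. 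Crucially, property~4 implies that $Y'$ has a block structure: for every component $C$ of $G[W]$ either $\delta(C) \subseteq Y'$ or $\delta(C) \cap Y' = \emptyset$. So enumerating $Y'$ reduces to deciding which components of $G[W]$ lie on the $s$-side versus the $A$-side.

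Next, I would invoke \textsc{branch}$(\CCC_0, |\EQ^{\neq 0}|k)$ where $\CCC_0$ is the set of components of $G[W]$ together with the singleton components from $V(G)\setminus W$. The invariant maintained is that the currently chosen component subset $\CCC \subseteq \CCC_0$ represents the candidate $s$-side of a cut $Y = \delta(\CCC_0 \setminus \CCC)$ of size at most the current budget $b$. In \textsc{branchUndecided}, conformality forces a branch: for a variable $u$ whose copies $u_\alpha$ currently lie in multiple components of $\CCC$ (call this set $\CCC_u$), either all such components must move to the $A$-side (cost $\delta(\CCC_u)$), or all but one must move (one recursive call per retained component $C \in \CCC_u$). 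In either case the recursion commits to a decision that reduces the number of undecided variables; since $\varphi$ is consistent with some valid decision at each branch, at least one leaf corresponds to a conformal cut $Y$ refining $Y'$.

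Once no undecided variables remain, the candidate cut is conformal, so \Call{getUnSatisfied}{$\CCC$} can compute the set $O$ of components whose implied class assignment makes some crisp equation of $S$ inconsistent (these are exactly the components that must either be cut off, or correspond to equations $\varphi$ violates). Since $\varphi$ violates at most $k$ equations, an optimal choice keeps at most $k$ components of $O$ uncut; hence $|O| \le b+k$ suffices to continue, and \textsc{branchUnSatisfied} enumerates subsets of $O$ to cut, respecting the remaining budget $b$ and the cap $p \le k$ on retained unsatisfied components. The cut returned at the base case is $\delta(\CCC_0 \setminus \CCC)$, which by construction has size at most $|\EQ^{\neq 0}|k$ and is conformal. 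By Lemma~\ref{lem:anti-sol}, applied to $Y$, the corresponding class assignment is achieved by an assignment $\varphi'$ satisfying $S - (\comp{Z} \cup \con(Y))$, and thus of cost at most $|\comp{Z}| + |\con(Y)| \le k + |\EQ^{\neq 0}|k$.

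For the size and runtime bounds, note that every branch strictly decreases either the budget $b$ (in \textsc{branchUndecided} at line~\ref{alg:bu:tb} and \textsc{branchUnSatisfied} at line~\ref{alg:unsat:lt}) or the parameter $p$ and $|O|$, with $b$ starting at $|\EQ^{\neq 0}|k$ and $p$ capped at $k$, so the total recursion depth is at most $|\EQ^{\neq 0}|k + k$. The branching factor at any node is bounded by $|\CCC_u| \le |\EQ^{\neq 0}|k + 1$ plus the ``cut all'' option, which yields a total of $(|\EQ^{\neq 0}|k+k)^{|\EQ^{\neq 0}|k+k}$ leaves and the claimed runtime after accounting for the $\bigohs(4^{|\EQ^{\neq 0}|k})$ shadow-removal cost. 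The main obstacle in the formal proof is verifying that the branching is exhaustive enough to reach a leaf corresponding to the ``target'' cut $Y'$ guaranteed by shadow removal, in particular matching up (i) the ``keep one component'' branches in \textsc{branchUndecided} with the unique component of $\CCC_u$ containing $u_{\clasn{Y'}(u)}$, and (ii) the ``keep'' decisions of \textsc{branchUnSatisfied} with the at most $k$ components corresponding to equations in $\comp{Z}$; this is where the structural guarantees of Theorem~\ref{thm:our-shadow-cover} (especially properties~4--7) and Lemma~\ref{lem:anti-sol} have to be combined carefully.
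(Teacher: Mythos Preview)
Your plan is essentially the paper's approach: shadow removal to obtain $W$, then bounded-depth branching over which components of $G[W]$ lie on the $s$-side, first enforcing conformality via undecided variables, then handling unsatisfied components with the at-most-$k$ cap coming from $\comp{Z}$. The block structure (property~4) and the budget argument are identified correctly.

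There is one genuine gap in the final step. You write ``By Lemma~\ref{lem:anti-sol}, applied to $Y$, the corresponding class assignment is achieved by an assignment $\varphi'$ satisfying $S - (\comp{Z} \cup \con(Y))$''. This does not go through: Lemma~\ref{lem:anti-sol} requires the cut to be an $sA$-cut, where $A$ is fixed by the original solution $Z$. The leaf reached by your branching has remaining components $\CCC_L$ with $\CCC_{Y'} \subseteq \CCC_L$, and this containment is in general strict---the branching only removes components touched by some undecided variable or flagged as unsatisfied, so self-satisfiable components disjoint from all decided variables can survive in $\CCC_L \setminus \CCC_{Y'}$. Consequently the output cut $Y'' = \delta(\CCC_0 \setminus \CCC_L) \subsetneq Y'$ need not separate $s$ from $A$, and Lemma~\ref{lem:anti-sol} does not apply to $Y''$.

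The paper closes this gap differently: it starts from the assignment $\varphi$ guaranteed for $Y'$ by property~3 of Theorem~\ref{thm:our-shadow-cover}, and then, for each extra component $C \in \CCC_L \setminus \CCC_{Y'}$, uses that $C$ is self-satisfiable (this is exactly what the \textsc{branchUnSatisfied} phase ensured) to overwrite $\varphi$ on the variables of $C$ with a local solution of $\compI{S}{C}$. The resulting assignment agrees with $Y''$ and still violates only equations in $\comp{Z} \cup \con(Y'')$, giving the cost bound $k + |\EQ^{\neq 0}|k$. Two smaller points: the paper also does a preprocessing pass removing the forced components $\CCC^-$ (those containing $t$, two copies $x_\alpha, x_{\alpha'}$, or a copy of a $W$-decided variable) before branching; and ``unsatisfied'' means the CSP instance $\compI{S}{C}$ over cosets is infeasible, not merely that a crisp equation fails.
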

\begin{proof}
  The algorithm starts by computing the set $W\subseteq V(G)\setminus
  \{s\}$ from $(G,k)$ using Theorem~\ref{thm:our-shadow-cover} such
  that with probability $2^{-O((|\EQ^{\neq 0}|k)^2)}$ there is a (minimal)
  $sA$-cut $Y'$ of size at most $|\EQ^{\neq 0}| \cdot k$ satisfying 1.--7..
  Given
  $W$ and the fact that $Y'$ has to satisfy 1.--7.,
  the algorithm now branches over all possibilities for $Y'$
  by
  guessing which components of $G[W]$ will be reachable from $s$ in
  $G-Y'$. In particular, using 2., and 4., it follows that
  $Y'=\delta(\CCC)$, where $\CCC$ is the set of all components of $G[W]$
  that are not reachable from $s$ in $G-Y'$. Moreover, because of 3.,
  $Y'$ is a conformal $sA$-cut in $G$. Therefore, the
  algorithm can correctly reject the choice of $W$ (and return $\cY=\emptyset$), if either $t \notin
  W$ or there is a variable $x$ such that $x_\alpha\notin W$ and
  $x_{\alpha'} \notin W$ for two distinct classes $\alpha, \alpha' \in \EQ^{\neq 0}$.
  So in the following we can assume that this is not the case, i.e.
  $t \in W$ and for every variable $x$ of $S$ at most one vertex
  $x_\alpha$ is not in $W$. We
  say that $x$ is \emph{$W$-decided} if there is such a class and
  otherwise we say that $x$ is \emph{$W$-undecided}. Let $\CCC_0$ be the set
  of all components of $G[W]$ and let $\CCC^-$ be the subset of $\CCC_0$
  consisting of all components $C$ such that either:
  \begin{itemize}
  \item $t \in C$,
  \item $C$ contains $x_\alpha$ and $x_{\alpha'}$ for some variable
    $x$ of $S$ and two distinct classes $\alpha$ and $\alpha'$, or
  \item $C$ contains $x_\alpha$ for some $W$-decided variable $x$ of $S$.
  \end{itemize}
  Because $Y'$ has to be a conformal $sA$-cut, no
  component in $\CCC^-$ can be reachable from $s$ in $G-Y'$ so $\delta(\CCC^-)\subseteq Y'$, where $\delta(\CCC^-)=\bigcup_{C \in
    \CCC^-}\delta(C)$. Consequently, if $|\delta(\CCC^-)|>|\EQ^{\neq 0}|k$, then
  the algorithm can correctly reject the choice of $W$ by returning
  $\cY=\emptyset$. Otherwise,
  every component $C$ in $\CCC_1=\CCC_0\setminus \CCC^-$ satisfies:
  \begin{itemize}
  \item $t \notin C$,
  \item $C$ does not contain $x_\alpha$ for any $W$-decided variable $x$
    of $S$,
  \item $C$ contains at most one vertex $x_\alpha$ for every $W$-undecided
    variable $x$ of $S$.
  \end{itemize}
  Next, the algorithm branches on which components in
  $\CCC_1$ are separated from $s$ in a solution. This
  part of the algorithm, which is given in Algorithm~\ref{alg:branch},
  returns the set $\cY$ of conformal cuts and
  is initially called with \textsc{branch}($\CCC_1$,
  $|\EQ^{\neq 0}|k-|\delta(\CCC^-)|$). The algorithm additionally employs the
  following notation. For a set $\CCC$ of components of $G[W]$ and
  a variable $x$ of $S$, we let $\kappa(\CCC,x)$ denote the subset of $\CCC$
  containing all components that contain a
  vertex $x_\alpha$ for some $\alpha \in \EQ^{\neq 0}$.
  The next aim of Algorithm~\ref{alg:branch} is to ensure that $Y'$ is
  conformal. For this it remains to ensure that for every
  $W$-undecided variable it holds that at most one component in
  $\kappa(\CCC,u)$ is reachable from $s$ in $G-Y'$.
  To achieve this the algorithm starts by computing the set $U$ of $W$-undecided
  variables in Line~\ref{alg:branch-lu} of Algorithm~\ref{alg:branch}.
  The algorithm then considers every $W$-undecided variable $u \in U$ with
  $\kappa(\CCC,u)\neq \emptyset$ and branches into
  $|\kappa(\CCC,u)|+1$ cases, i.e. the case that no component in
  $\kappa(\CCC,u)$ is reachable from $s$ and therefore
  $\delta(\kappa(\CCC,u))\subseteq Y'$ and for every component $C \in
  \kappa(\CCC,u)$ the case that only $C$ is reachable from $s$ and
  therefore $\delta(\kappa(\CCC,u)\setminus \{C\})\subseteq Y')$.
  This
  is achieved in Lines~\ref{alg:bu:lubb}--~\ref{algbu:lbe} in
  Algorithm~\ref{alg:branchundecided}. Note that any of these cases is
  only considered if the current budget
  $b$ does allow for the corresponding edges to be added to the
  current cut and otherwise the
  algorithm correctly returns $\cY=\emptyset$ in Line~\ref{alg:bu:lf}
  of Algorithm~\ref{alg:branchundecided}.

  Once it is ensured that $Y'$ is conformal and the class assignment is fixed for every variable, the algorithm
  still has to ensure item 3., i.e. that the instance
  $S-(\comp{Z}\cup\con(Y'))$ is satisfiable.
  Here and in the
  following we say that a component $C$ in $\CCC_1$ is \emph{self-satisfiable} if the
  equations of $S$ between variables corresponding to vertices in $C$
  can be satisfied using an assignment that assigns every variable $x$
  with vertex $x_\alpha$ in $C$ to a value in $\alpha$; note that
  because $C \in \CCC_1$ every variable $x$ with a vertex $x_\alpha$
  in $C$ has exactly one such vertex $x_\alpha$ in $C$ and therefore
  $\alpha$ is uniquely defined.
  To ensure that the instance $S-(\comp{Z}\cup\con(Y'))$ is satisfiable
  it is necessary to ensure that all but at most $k$ components that are
  currently reachable from $s$ are self-satisfiable; this is
  because $|\comp{Z}|\leq k$ and $\comp{Z}$ has to contain at least
  one constraint from every component that is not self-satisfiable.
  This is achieved in Algorithm~\ref{alg:branchunsat},
  which is initially called in Line~\ref{alg:bu:lunsat} of
  Algorithm~\ref{alg:branchundecided} as
  \Call{branchUnSatisfied}{$\CCC$, $b$, $O$, $0$}, where $\CCC$ is the
  current set of remaining components, $b$ is the remaining budget,
  and $O$ is the set of all components in $\CCC\subseteq
  \CCC_1$ that are not self-satisfiable. Note that the set $O$ of not self-satisfiable components $\CCC\subseteq
  \CCC_1$ is computed by a call to \Call{getUnSatisfied}{$\CCC$} in
  Line~\ref{alg:un:lgetunsat} of Algorithm~\ref{alg:branchundecided}
  and the function \Call{getUnSatisfied}{$\CCC$} can be implemented to
  run in polynomial-time as follows. For every component $C$
  in $\CCC$ \Call{getUnSatisfied}{$\CCC$}, check whether $C$ is
  self-satisfiable as follows. Let $V_C$ be the set of all
  variables $x$ of $S$ for which $C$ contains a vertex $x_\alpha$ for
  some $\alpha \in \EQ^{\neq 0}$.
  To decide whether $C$ is self-satisfiable, we will
  construct the $\csp{E_R \cup \EQ^{\neq 0}}$ instance
  $\compI{S}{C}$ over the variables in $V_C$ as
  follows.
  $\compI{S}{C}$ contains every equation of $S$ over the
  variables in $V_C$ as a constraint in $E_R$. Additionally,
  $\compI{S}{C}$ contains the unary constraint with scope $x$ and relation
  $\alpha$ for every $x \in V_C$ with $x_\alpha$ in $C$. Note that
  since every equivalence class $\alpha \in \EQ^{\neq 0}$ is a coset of $R$,
  it follows from Proposition~\ref{prop:in-ideal} that $\compI{S}{C}$ can be
  solved in polynomial-time. This completes the description of the
  algorithm. Because of the tests in Lines~\ref{algbu:lbb}
  and~\ref{alg:bu:tb} of Algorithm~\ref{alg:branchundecided} as well as in Line~\ref{alg:unsat:lt}
  of Algorithm~\ref{alg:branchunsat}, every edge set in $\cY$ returned by
  the algorithm has size at most the initial budget of $|\EQ^{\neq 0}|k$.
    
  We now analyse the running time of the algorithm leaving aside
  polynomial factors. Computing the set $W$ using
  Theorem~\ref{thm:our-shadow-cover} takes time
  $\bigohs(4^{|\EQ^{\neq 0}|k})$. Moreover, for every fixed $W$ the running
  time of the algorithm is dominated by the running time of
  Algorithm~\ref{alg:branch}, which can be obtained as follows.
  First branching on a $W$-undecided
  variable $u$ in Lines~\ref{algbu:lbb}--\ref{algbu:lbe} of Algorithm~\ref{alg:branchundecided} leads to
  at most $|\EQ^{\neq 0}|+1$ branches, where in every branch the initial budget
  of $|\EQ^{\neq 0}|k$ is decreased by at least one, which gives a running
  time of $\bigohs((|\EQ^{\neq 0}|+1)^{|\EQ^{\neq 0}|k})$. Moreover, dealing with
  the unsatisfied components in
  Algorithm~\ref{alg:branchunsat} requires to choose at most $k$
  unsatisfied components out of at most $|\EQ^{\neq 0}|k+k$ unsatisfied
  components, which gives a running time of
  $\bigohs((|\EQ^{\neq 0}|k+k)^{k})$. Therefore, the total running time of the
  entire algorithm is at most $$\bigohs(\max \{(|\EQ^{\neq 0}|+1)^{|\EQ^{\neq 0}| \cdot k},(|\EQ^{\neq 0}| \cdot k+k)^{k}\})=\bigoh(|\EQ^{\neq 0}| \cdot k+k)^{|\EQ^{\neq 0}| \cdot k}).$$
  Note that the above analysis also shows that $$|\cY|\leq \max
  \{(|\EQ^{\neq 0}|+1)^{|\EQ^{\neq 0}| \cdot k},(|\EQ^{\neq 0}| \cdot k+k)^{k}\}\leq (|\EQ^{\neq 0}| \cdot k+k)^{|\EQ^{\neq 0}| \cdot k+k}.$$

  It remains to show that if $(S,k)$ is a yes-instance, then with
  probability at least $2^{-O((|\EQ^{\neq 0}| \cdot k)^2)}$ there is a cut
  $Y \in \cY$ and an assignment of cost at most $|\EQ^{\neq 0}| \cdot k+k$ for $S$ that
  agrees with $Y$.
  Let $Z$ be a solution for $(S,k)$ and let $\varphi : V(S) \rightarrow I$ be a satisfying
  assignment for $S-Z$. Moreover, let $G=G_I(S)$, $Y=\sep(\ed(Z))$, $A$ be
  the set of vertices reachable from $s$ in $G-Y$, and let $W\subseteq V(G)\setminus \{s\}$ be the set
  obtained from $(G,k)$ using Theorem~\ref{thm:our-shadow-cover}. Then, with probability at least
  $2^{-\bigoh((|\EQ^{\neq 0}| \cdot k)^2}$, there is a minimal $sA$-cut $Y'$ of size at
  most $|\EQ^{\neq 0}| \cdot k$ satisfying 1.-7.. We will show that the set $\cY$
  returned by the algorithm contains a subset of $Y'$ satisfying 1.-7., which concludes the theorem.
  Let $\CCC_{Y'}$ be the set of components
  of $G[W]$ that are reachable from $s$ in $G-Y'$. Because $Y'$
  satisfies 1., 2. and 4., it follows that $Y'=\delta(\CCC_0\setminus
  \CCC_{Y'})$. Moreover, because $Y'$ satisfies 5., 6., and 7., it holds
  that $\CCC_{Y'}\subseteq \CCC_1$. We claim that the branching algorithm
  has a leaf with remaining components $\CCC_L$ such that $\CCC_{Y'}
  \subseteq \CCC_L$ and for every component $C$ in $\CCC_L\setminus
  \CCC_{Y'}$, it holds that $\compI{S}{C}$ has a solution. 
  This can be seen as follows. First the branching algorithm starts with
  the set $\CCC_1$ of remaining components and we already know that
  $\CCC_{Y'} \subseteq \CCC_1$. The algorithm then branches
  exhaustively on any $W$-undecided variable $x$ to decide which (if any)
  component in $\kappa(\CCC_1,x)$ will be reachable from $s$. Because
  $Y'$ satisfies 3. and is therefore conformal and $|Y'|\leq |\EQ^{\neq 0}|k$, at the end of that branching
  process there must be a node in the branching tree with remaining
  components $\CCC_N$ such that $\CCC_{Y'}\subseteq \CCC_N$ and every
  $W$-undecided variable has been branched on. The algorithm now proceeds
  with computing the set $O$ of \emph{unsatisfied components} in $\CCC_N$,
  i.e. a component $C$ is unsatisfied if $\compI{S}{C}$ has no
  solution.
  Note that $\CCC_{Y'}$ can contain at most $k$
  unsatisfied components because $Y'$ satisfies 3. and $\comp{Z}$ has to contain at least
  one equation from $\compI{S}{C}$ for every unsatisfied component $C
  \in \CCC_{Y'}$. Since the algorithm branches exhaustively on every
  possible choice that results in at most $k$ remaining unsatisfied components, there
  must a leaf of the branching algorithm with remaining components
  $\CCC_L$ such that $\CCC_L$ contains exactly the same unsatisfied
  components as $\CCC_{Y'}$.
  
  Therefore, $\CCC_{Y'}
  \subseteq \CCC_L$ and for every component $C$ in $\CCC_L\setminus
  \CCC_{Y'}$, it holds that $\compI{S}{C}$ has solution. Because
  $Y'$ satisfies 3., there is an assignment $\varphi$ of $S$ that
  satisfies $S-(\comp{Z}\cup \con(Y'))$ and agrees with $Y'$.
  Consider
  a component $C \in \CCC_L\setminus\CCC_{Y'}$. Because $\compI{S}{C}$
  has a solution, we can extend $\varphi$ into a satisfying assignment of
  $S-(\comp{Z}\cup \con(Y'\setminus \delta(C)))$ that agrees with
  $Y'\setminus \delta(C)$. Hence, after doing this for every
  component $C \in \CCC_L\setminus\CCC_{Y'}$, we obtain an assignment
  $\varphi'$ that satisfies $S-(\comp{Z}\cup \con(Y''))$ and that agrees
  with $Y''$, where $Y''=\delta(\CCC_0\setminus \CCC_L)$. Since
  $Y''\in \cY$ and $\varphi'$ satisfies all constraints of $S$ apart from the at most
  $|\EQ^{\neq 0}| \cdot k+k$ constraints in $(\comp{Z}\cup \con(Y'')$, this
  completes the proof of the theorem.
\end{proof}

\section{Lower Bounds}
\label{sec:lowerbounds}

Let
$G$ denote an arbitrary Abelian group.
An expression $x_1+\dots+x_r=c$ is an {\em equation over} $G$ if
$c \in D$ and $x_1,\dots,x_r$ are either variables or inverted variables with domain $D$.
We say that it is an {\em $r$-variable equation} if it contains at most $r$
distinct variables. We sometimes consider the natural group-based variants of
the $\lin{3}{R}$ and $\minlin{r}{R}$ problems in the sequel.
We prove the following results in this section.

\begin{enumerate}
\item
$\minlin{3}{G}$ is \W{1}-hard to FPT-approximate within any constant when $G$ is a
non-trivial Abelian group. This directly implies that $\minlin{3}{R}$ is \W{1}-hard to FPT-approximate within any constant when $R$ is a
non-trivial ring.

\item
$\minlin{2}{R}$ is \W{1}-hard to FPT-approximate within any constant if the finite and commutative ring
$R$ is non-Helly.

\item
$\minlin{2}{R}$ is \ETH-hard to FPT-approximate within $2-\eps$, $\eps > 0$, if
the finite and commutative ring $R$ is not lineal. 

\end{enumerate}

We briefly discuss some of the results below.
The hardness result for $\minlin{3}{G}$ is related to several results
in the literature. Theorem 6.1 in \cite{Dabrowski:etal:soda2023} shows that
$\minlin{3}{G}$ is \W{1}-hard to solve exactly, so our result is a direct strengthening
of this result. Furthermore,
it is known that $\minlin{3}{G}$ is not constant-factor approximable in polynomial time
unless P = \NP.
Håstad~\cite{haastad2001some} proves that for every $\eps > 0$, it is \NP-hard to distinguish instances of
$\lin{3}{G}$
(where $G$ is a non-trivial Abelian group), that are $(1/|G| + \eps)$-satisfiable from those that
are $(1-\eps)$-satisfiable.
Theorem 3.3(1) in \cite{dalmau2013robust} combined with Håstad's result shows that $\minlin{3}{G}$ is not robustly approximable with any loss, and finally
Dalmau, Krokhin, and Manokaran~\cite{Dalmau:etal:soda2015} point out that robust approximation of CSP$(\Gamma)$ with linear loss is equivalent to constant-factor polynomial-time approximation of $\mincsp{\Gamma}$. A direct proof based on contradicting Håstad's result goes like this:
Assume that $\minlin{3}{G}$ is $c$-approximable in polynomial time for some constant $c$. 
Pick $\eps > 0$ such that $(1-c \cdot \eps) \geq (1/|G| + \eps)$. Take an instance $S$ of $\lin{3}{G}$, set $k = \eps \cdot |S|$, run the factor-$c$ approximation algorithm on $S$, and deduce
the correct answer from the output of the algorithm. Our hardness result shows that
$\minlin{3}{G}$ is not constant-factor approximable even if we are allowed to use FPT
time instead of polynomial time (under the assumption \FPT $\neq$ \W{1}).

The hardness for \textsc{Min-2-Lin$(R)$} for non-Helly rings $R$
follows by reduction from $\minthreelin$. Here, the surprise
is that \textsc{Min-2-Lin}, which corresponds to a \textsc{MinCSP}
over a restricted language $\Gamma$ of binary constraints, is
capable of expressing constraints corresponding to non-binary linear
equations.
Let us expand on that a bit and provide some background. It is known
that the existence of constant-factor (classical or FPT) approximations
for \textsc{MinCSP$(\Gamma)$} depends on the polymorphisms of $\Gamma$~\cite{bonnet2016fixed,Barto:etal:polymorphisms}.
One of the most basic examples is a \emph{majority polymorphism}, which
implies that constraints definable over $\Gamma$ can be expressed as conjunctions
of binary constraints. Examples of problems \textsc{MinCSP$(\Gamma)$} where
$\Gamma$ has a majority polymorphism include \textsc{Almost 2-SAT},
\textsc{Unique Label Cover} over a fixed domain, and
\textsc{Min-2-Lin$(\FF)$} for a finite field $\FF$; in all these
examples, the parameterized problem is even FPT~\cite{bonnet2016fixed,Dabrowski:etal:soda2023,chitnis2016designing}.
Conversely, the most common sources of hardness for constant-factor FPT-inapproximability
for \textsc{MinCSP$(\Gamma)$} -- namely, \textsc{Hitting Set}
and \textsc{Maximum Likelihood Decoding} (MLD) -- correspond to languages
with much weaker polymorphisms; in fact, they are the canonical
negative examples for the larger class of languages of \emph{bounded linear width}~\cite[Section~4.4]{Barto:etal:polymorphisms}.
Given that \textsc{2-Lin$(R)$}-constraints appear much less powerful
than \textsc{3-Lin$(R)$}-constraints, and given the example languages above, 
it is an eye-opener that there are finite rings $R$ such that
the constraints of \textsc{2-Lin$(R)$} fail to have majority (or even
bounded linear width) and \textsc{Min-2-Lin$(R)$} is MLD-hard.
We note in passing that in every case where a
dichotomy for FPT-approximations is known (which, admittedly,
is just two cases), the defining line for constant-factor
FPT-approximations is precisely the inability to express
\textsc{Hitting Set} or MLD~\cite{bonnet2016fixed,osipov2023parameterized}.

For the ETH-based lower bound for non-lineal rings,
we first prove that the {\sc Paired Min Cut} problem is not FPT-approximable within $2-\eps$, $\eps > 0$,
under the \ETH. 
\textsc{Paired Min Cut} is commonly used as a 
source problem for proving 
\W{1}-hardness (see e.g.~\cite{Dabrowski:etal:ipec2023,KimKPW21flow,marx2009constant,Osipov:Wahlstrom:esa2023}).
Our bound is optimal since it is easy to approximate {\sc Paired Min Cut}
within 2 even in polynomial time.
The result for lineal rings follows by
combining the {\sc Paired Min Cut} result with a versatile gadget that has been used, for instance, when proving Theorem 27 in~\cite{dabrowski2023parameterized}
and Theorem 6.2 in~\cite{Dabrowski:etal:soda2023}.
Our bound is best possible since there are finite, commutative, and non-lineal
rings $R$ such that \minlin{2}{R} is FPT-approximable within 2. 
Let ${\mathbb F},{\mathbb F}'$ denote two non-trivial finite fields. 
The ring ${\mathbb F} \oplus {\mathbb F}'$ is not local and thus not
lineal by Proposition~\ref{prop:lineal-is-local}.
It is known that
\minlin{2}{{\mathbb F}} and
\minlin{2}{{\mathbb F}'}
are in \FPT~\cite{Dabrowski:etal:soda2023}. These two problems are consequently FPT-approximable within 1 and
Proposition~\ref{prop:sumapproximation}(2) implies that \minlin{2}{{\mathbb F} \oplus {\mathbb F}'}
is FPT-approximable within 2. Clearly, we also have an optimal
computational bound under \ETH: if $R$ is the direct sum of two finite fields, then
\minlin{2}{R} is FPT-approximable within 2 but it cannot be FPT-approximated
within $2-\eps$, $\eps > 0$, if the \ETH holds.

\subsection{Inapproximability Result for \minthreelin}

Let $G$ be a finite Abelian group and
consider the following problem.

\pbDefGap{$\GAP{\gamma}$-$\minlin{3}{G}$}
{A system $S$ of equations of the form $x_1 + x_2 + x_3 = a$,
where $x_1,x_2,x_3$ are variables, $a \in G$, and an integer $k$.}
{$k.$}
{There exists $Z \subseteq S$, $|Z| \leq k$
such that $S - Z$ is consistent, or}
{For all $Z \subseteq S$ with $|Z| \leq \gamma \cdot k$, $S - Z$ is inconsistent.}

We show that this problem is \W{1}-hard
for all $\gamma \geq 1$ and non-trivial Abelian groups $G$.
This readily implies that $\GAP{\gamma}$-$\minlin{3}{R}$ is \W{1}-hard
  for every $\gamma \geq 1$ and every 
  non-trivial finite ring $(R,+,\cdot)$ since $(R,+)$ is an Abelian group.
The starting point for our reduction is the following problem
and the corresponding hardness result.
For a vector $x \in \FF^{n}$ in a field $\FF$,
the \emph{Hamming norm} $||x||_0$ is 
the number of nonzero components in $x$.

\pbDefGap{Gap$_{\gamma}$ Maximum Likelihood Decoding Problem over $\FF_p$ ($\GAPMLD$)}
{A matrix $A \in \FF^{n \times m}_p$, a vector $y \in \FF^n_p$, and
and an integer $k \in {\mathbb N}$.}
{$k$.}
{There exists $x \in \{0, 1\}^m$
with $||x||_0 \leq k$ such that $Ax = y$, or}
{For all $x \in \FF_p^m$ with $||x||_0 \leq \gamma k$, 
$Ax \neq y$.}

\begin{theorem}[Theorem~5.1~in~\cite{bhattacharyya2021parameterized}]
  \label{thm:gapmld-is-hard}
  $\GAPMLD$ is \W{1}-hard for every 
  $\gamma \geq 1$ and every prime $p$. 
\end{theorem}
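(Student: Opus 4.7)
The aim is to prove that $\GAPMLD$ is $\W{1}$-hard for every $\gamma\geq 1$ and every prime $p$. The strategy is to give a gap-preserving FPT reduction from a problem that is already parameterized-inapproximable. The natural source is $k$-\textsc{Clique} in its gap form: Lin~\cite{Lin:jacm2018} showed $\W{1}$-hardness of approximating $k$-\textsc{Clique} to any constant factor, and Chalermsook et al.~\cite{Chalermsook:etal:sicomp2020} strengthened this further. Since we want inapproximability for \emph{every} constant $\gamma$, we may freely assume the source gap is as large as required, and we may equivalently start from \textsc{Multicolored Clique} in its gap form.

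Given a graph $G=(V,E)$ partitioned into $k$ color classes $V_1,\ldots,V_k$, I would construct the matrix $A\in\FF_p^{n\times m}$ and vector $y\in\FF_p^n$ in three blocks of equations: (i) per-color selector equations forcing $\sum_{v\in V_i} x_{v}=1$ for every $i$; (ii) non-edge exclusion equations that, together with auxiliary variables, linearly encode ``$x_u$ and $x_v$ are not simultaneously nonzero'' for every non-edge $\{u,v\}$ with $u\in V_i$, $v\in V_j$; and (iii) a coding gadget block whose rows form the parity-check matrix of an explicit linear code over $\FF_p$ with large minimum distance, for instance a Reed--Solomon or Justesen-like construction, applied to every selector variable so that ``low Hamming weight and satisfying'' forces ``close to Boolean.'' The target $\GAPMLD$ parameter is set to $k'=ck$ for a small constant $c$ depending only on the gadget.

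For completeness, a $k$-clique lifts to a canonical $\{0,1\}$-valued $x$ with $\|x\|_{0}\leq k'$ satisfying $Ax=y$ by construction of blocks (i)--(iii). For soundness, consider any $x\in\FF_p^m$ with $\|x\|_{0}\leq\gamma k'$ and $Ax=y$. By the minimum-distance property of the coding gadget, the support of $x$ restricted to selector coordinates must be Boolean-like; the per-color block then forces exactly one selected vertex in each of some $k/\gamma''$ color classes, and the non-edge block forces these selected vertices to form a clique, contradicting the soundness of the source gap problem.

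The main obstacle is precisely the soundness over $\FF_p$ rather than over $\{0,1\}$: a non-Boolean low-weight $x$ could, in principle, exploit cancellations in $\FF_p$ to satisfy the equations without corresponding to any clique in $G$. Defeating this requires the coding gadget to ``enforce Booleanness'' for any low-weight solution while still fitting inside an FPT reduction, which is tight: the number of rows may depend polynomially on $n$, but the target parameter $k'$ must be a function of $k$ alone. Balancing the gadget's distance (which controls the gap factor propagated from $\gamma$) against the blow-up in the number of rows is the delicate technical step; once the right coding gadget is in place, the combinatorial encoding of the clique constraints into sparse equations over $\FF_p$ is essentially routine.
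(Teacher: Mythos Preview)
The paper does not prove this theorem; it is quoted verbatim as an external result (Theorem~5.1 of Bhattacharyya et al.~\cite{bhattacharyya2021parameterized}) and used as a black box. So there is no ``paper's own proof'' to compare against---the relevant comparison is with the original source.

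That said, your sketch has genuine gaps that would prevent it from going through as written. The most serious is block~(ii): the constraint ``$x_u$ and $x_v$ are not simultaneously nonzero'' is \emph{not} a linear equation over $\FF_p$, and you cannot make it one merely by introducing auxiliary variables while keeping the system of the form $Ax=y$. Any nonzero scalar multiple of a solution to a homogeneous linear system is again a solution, and an affine system has an affine solution set---neither is compatible with an OR-type exclusion constraint. You assert this encoding exists but do not supply it, and in fact it cannot exist in the form you need.

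Block~(iii) is similarly underspecified. A large-minimum-distance code constrains \emph{pairs} of codewords to be far apart; it does not force individual coordinates of a low-weight preimage to be Boolean. You would need something more like a locally testable or tensor-type gadget that propagates a single nonzero coordinate into many, but then you must argue why this does not blow up the parameter. You explicitly flag this as ``the main obstacle'' and then declare the rest ``essentially routine''---but the obstacle is exactly the content of the theorem. The actual proof in~\cite{bhattacharyya2021parameterized} does not go via a direct clique encoding of this shape; it passes through intermediate gap problems (in particular a gap version of \textsc{Linear Dependent Set} / sparse-vector-in-a-coset) and uses rather different machinery to create the gap over $\FF_p$.
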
 

Consider an instance $(A, y, k)$ of $\GAPMLD$.
Denote the entry in row $i$ and column $j$ of $A$ by $a_{ij}$.
The condition $Ax = y$ holds if and only if
\begin{equation} \label{eq:row-equation}
  \sum_{j=1}^{n} a_{ij} x_{j} = y_{i}.
\end{equation}
for every row $i$.
This equation is linear, so we have a simple reduction to
$\GAP{\gamma}$-$\minlin{*}{\GG}$ over $\ZZ_p$ 
without restriction on the number of variables per equation:
make an instance with $x_1, \dots, x_n$ as variables,
add row equations as crisp equations,
and add soft equations $x_j = 0$ for all $j \in [n]$.
Then every assignment of cost at most $\gamma k$
can only choose vectors $x$ with at most $\gamma k$ nonzero entries.
To show that $\GAP{\gamma}$-$\minlin{3}{\GG}$ is hard, 
we need to implement row equations using group equations,
and transfer the hardness from cyclic groups of prime order
to all finite groups. 

We start by proving hardness of $\GAP{\gamma}$-$\minlin{3}{\ZZ_p}$.
For further convenience, we define a restricted class of instances 
and prove hardness for this case.

\begin{definition}
  An instance of $\minlin{3}{\GG}$
  is \emph{almost homogeneous} if every equation in it is of the form
  \begin{itemize}
    \item $x + y + z = 0$ and crisp, or
    \item $x = a$ for some $a \in G$ and crisp, or
    \item $x = 0$.
  \end{itemize}
\end{definition}

\begin{lemma} \label{lem:min3lin-special}
  $\GAP{\gamma}$-$\minlin{3}{\ZZ_p}$
  is \W{1}-hard for all $\gamma \geq 1$ and primes $p$
  even when restricted to almost homogeneous instances.
\end{lemma}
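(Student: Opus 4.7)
The plan is to reduce $\GAPMLD$ over $\FF_p$ (which is \W{1}-hard for every constant $\gamma \geq 1$ and prime $p$ by Theorem~\ref{thm:gapmld-is-hard}) to $\GAP{\gamma}$-$\minlin{3}{\ZZ_p}$ restricted to almost homogeneous instances. Given $(A,y,k)$ with $A\in\ZZ_p^{n\times m}$ and $y\in\ZZ_p^n$, I would build an almost homogeneous instance $(S,k)$ so that the cost of any feasible assignment is \emph{exactly} the Hamming weight of an implied vector $\tilde x\in\ZZ_p^m$ satisfying $A\tilde x = y$. Since $p$ is a fixed constant, the construction runs in polynomial time and the parameter $k$ is preserved, which is all that an FPT-reduction needs.

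Concretely, I would introduce variables $x_1,\ldots,x_m$ (representing the unknown vector), a zero variable $z_0$ with the crisp equation $z_0=0$, and, for each variable $u$ that will appear negated in some 3-variable sum, an inverse variable $\bar u$ tied to $u$ by the crisp equation $u+\bar u+z_0=0$. For each row $i$ and each $j$ with $a_{ij}\neq 0$, the scalar multiple $m_j^{(a_{ij})} = a_{ij}\cdot x_j$ is encoded by chaining additions: set $m_j^{(1)}=x_j$ (up to an inverse gadget) and enforce $m_j^{(a)} = m_j^{(a-1)}+x_j$ via the crisp equation $m_j^{(a-1)}+x_j+\bar m_j^{(a)}=0$, for $a=2,\ldots,a_{ij}$. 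The row sum $\sum_{j}m_j^{(a_{ij})}=y_i$ is then accumulated through a chain of crisp 3-variable homogeneous equations producing a final variable $s_i$, which is pinned down by the crisp unary equation $s_i=y_i$. Since $p$ is a constant, each row contributes only $O(m)$ new variables and crisp equations. \emph{The only soft equations of $S$ are $x_j=0$ for $j=1,\ldots,m$.} This places $S$ into almost homogeneous form.

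The correctness argument is direct. Because every gadget equation is crisp, any solution $Z$ with $|Z|\leq \gamma k$ such that $S-Z$ is satisfiable consists entirely of soft equations $x_j=0$, so $|Z|$ equals the number of indices $j$ with $\varphi(x_j)\neq 0$ under the witnessing assignment $\varphi$. For completeness, any binary vector $x$ with $\|x\|_0\leq k$ satisfying $Ax=y$ extends (by walking through the multiplication chains and the accumulator chains) to an assignment of $S$ that satisfies every crisp equation and violates at most $\|x\|_0\leq k$ soft ones. For soundness, if $\varphi$ satisfies all crisp equations, a straightforward induction along the multiplication and accumulator chains shows that the vector $\tilde x_j := \varphi(x_j)\in\ZZ_p$ obeys $A\tilde x=y$, so cost $\leq \gamma k$ yields a weight-$(\leq \gamma k)$ preimage of $y$, contradicting the NO-condition of $\GAPMLD$.

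I expect the only non-routine step to be making sure that the multiplication and accumulation gadgets introduce \emph{no} soft equations and are inconsistency-free whenever the underlying vector $\tilde x$ satisfies $A\tilde x=y$; otherwise the cost of $S$ could exceed the Hamming weight of $\tilde x$ and break soundness. This is handled by the induction mentioned above together with the observation that each chain equation uniquely determines its fresh variable from its predecessors, so the gadget is always satisfiable given $\tilde x$. Combined with Theorem~\ref{thm:gapmld-is-hard}, this yields \W{1}-hardness of $\GAP{\gamma}$-$\minlin{3}{\ZZ_p}$ on almost homogeneous instances for every $\gamma\geq 1$ and every prime $p$.
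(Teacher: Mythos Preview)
Your proposal is correct and follows essentially the same approach as the paper: reduce from $\GAPMLD$, make the variables $x_j$ the only carriers of soft equations $x_j=0$, and implement each row constraint $\sum_j a_{ij}x_j=y_i$ via crisp three-variable homogeneous chains plus crisp unary pins. The paper's version differs only cosmetically: it expands $a_{ij}x_j$ by literally repeating $x_j$ in a single long sequence $s_1,\ldots,s_L$ (rather than your separate multiplication chains $m_j^{(a)}$), and it absorbs the right-hand side by a constant variable $z_{-y_i}$ in the final chain equation rather than a crisp unary $s_i=y_i$; both fit the almost homogeneous format and the correctness arguments are identical.
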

\begin{proof}
Let $(A, y, k)$ be an instance of $\GAPMLD$.
As usual, we let $a_{ij}$ denote the entry in row $i$ and column $j$ of $A$.
Construct an almost homogeneous instance of $(S, k)$ of 
{$\GAP{\gamma}$-$\minlin{3}{\ZZ_p}$} as follows.
Start by introducing primary variables $x_1, \dots, x_n$ to $V(S)$,
and adding soft constraints $x_i = 0$ for all $i \in [n]$.
For every $a \in \range{0}{p-1}$,
introduce auxiliary variables $z_a$ and
add crisp equations $z_a = a$ to $S$.
Consider an arbitrary row equation $\sum_{j=1}^{n} a_{ij} x_{j} = y_{i}$.
Since we are working over the field $\ZZ_p$,
this equation can be written as
\begin{equation} \label{eq:row-equation-additive}
  \underbrace{x_{1} + \dots + x_{1}}_{a_{i1} \textrm{ times}} 
  + \dots +
  \underbrace{x_{n} + \dots + x_{n}}_{a_{in} \textrm{ times}}
   - y_{i} = 0 \mod p.
\end{equation}
Let $L = \sum_{j=1}^{n} a_{ij}$ and
define $s_1, \dots, s_L$,
where $s_1 = \dots = s_{a_{i1}} = x_1$,
$s_{a_{i1}+1} = \dots = s_{a_{i2}} = x_2$,
and so on.
Note that $\sum_{\ell=1}^{L} s_\ell = \sum_{j=1}^{n} a_{ij} x_j$.
To express~\eqref{eq:row-equation-additive} 
as a system of homogeneous $3$-variable equations,
we introduce auxiliary variables
$u_1, \bar{u}_1 \dots, u_L, \bar{u}_L$, and
add crisp equations
$u_\ell + \bar{u}_{\ell} + z_{0} = 0$ for all $\ell$.
Note that these constraints force $u_\ell$ and $\bar{u}_{\ell}$ 
to be additive inverses modulo $p$.
Add crisp equations
\begin{equation} \label{eq:chop}
  \begin{aligned}
    z_0 + s_1 + u_1 &= 0 \mod p, \\
    \bar{u}_{\ell-1} + s_{\ell} + u_{\ell} &= 0 \mod p \quad \text{ for } \ell \in \range{2}{L-1}, \\
    \bar{u}_{L-1} + s_{L} + z_{-y_i} &= 0 \mod p.
  \end{aligned}
\end{equation}
Observe that the sum of the equations above, after cancellation of auxiliary variables,
implies $\sum_{\ell=1}^{L} s_\ell + z_{-y_i} = \sum_{i=1}^{n} a_{ij} x_i - y_i = 0$.
Applying the same reduction to all rows of $A$,
we construct $(S, k)$ in polynomial time.

For correctness, first assume that $(A, y, k)$ is a yes-instance,
i.e. there exist $x_1, \dots, x_n \in \ZZ_p$
that satisfy row equations~\eqref{eq:row-equation} for all $i \in [n]$,
and at most $k$ values in $(x_1, \dots, x_n)$ are nonzero.
Observe that it can be extended to auxiliary variables $u_i$, $\bar{u}_i$
in a unique way.
The obtained assignment satisfies all crisp and all but $k$ soft equations $x_i = 0$.

For the other direction, suppose $(S, k)$
admits an assignment $\alpha$ that violates at most $\gamma k$ constraints.
Define vector $\alpha(x) = (\alpha(x_1), \dots, \alpha(x_n))$.
Since all soft constraints in $S$ are of the form $x_i = 0$, 
we have $||\alpha(x)||_0 \leq \gamma k$.
By construction of $S$, we have $A \alpha(x) = y$.
Hence, if $(A, y, k)$ is a no-instance, then
$(S, k)$ is a no-instance.
\end{proof}

Before proving inapproximability for group equations,
we recall some basic facts.
The \emph{direct sum} of two groups
$\GG_1 = (G_1, +_1)$ and $\GG_2 = (G_2, +_2)$ 
is a group $\GG_1 \oplus \GG_2$ with pairs 
$\{ (g_1, g_2) \mid g_1 \in G_1, g_2 \in G_2 \}$ as elements,
and with group operation applied componentwise,
i.e. $(g_1, g_2) + (g'_1, g'_2) = (g_1 +_1 g'_1, g_2 +_2 g'_2)$.
The fundamental theorem of finite Abelian groups (Theorem~\ref{thm:sumoflocalrings}) asserts the following.

\begin{theorem} \label{thm:fundamental-abelian-group}
Every finite Abelian group $G$ decomposes into
a direct sum of cyclic groups of prime power order,
i.e. $G \cong \bigoplus_{i=1}^{t} \ZZ / p_i^{\ell_i}\ZZ$
for primes $p_1,\dots,p_t$ and positive integers $\ell_1,\dots,\ell_t$.
\end{theorem}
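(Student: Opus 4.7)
The plan is to prove this classical structure theorem in two main stages: first decompose $G$ into its $p$-primary parts, and then decompose each $p$-primary part into cyclic pieces. Let $n = |G|$ and factor $n = q_1 \cdots q_t$ where each $q_i = p_i^{m_i}$ is a prime power (with distinct $p_i$). For each $i$, define the $p_i$-primary component $G_{p_i} = \{g \in G : p_i^k g = 0 \text{ for some } k \geq 0\}$. The first step is to show $G \cong \bigoplus_{i=1}^t G_{p_i}$: pick $g \in G$ of order $d$, write $d = d_1 \cdots d_t$ with $d_i$ a power of $p_i$, use Bezout to find integers $a_i$ with $\sum_i a_i (d/d_i) = 1$, and observe that $a_i (d/d_i) g \in G_{p_i}$ and their sum is $g$, giving $G = \sum_i G_{p_i}$. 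Directness follows since the orders of elements in distinct $G_{p_i}$ are coprime, so intersections are trivial.

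The second stage reduces to proving: every finite abelian $p$-group $H$ is a direct sum of cyclic groups of $p$-power order. I would proceed by strong induction on $|H|$. Pick $x \in H$ of maximal order $p^a$, set $C = \langle x \rangle$, and invoke the key splitting lemma: $C$ is a direct summand of $H$, i.e. there exists a subgroup $K \leq H$ with $H = C \oplus K$. Once this is established, $|K| < |H|$ so the induction hypothesis decomposes $K$ into cyclic $p$-groups, and the result for $H$ follows.

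The main obstacle is the splitting lemma, which I would prove as follows. Consider a subgroup $K \leq H$ that is maximal subject to $K \cap C = \{0\}$; such $K$ exists by finiteness. We must show $K + C = H$. Suppose not, and pick $y \in H \setminus (K + C)$ of minimal order; let $p^b$ be its order. Then $py \in K + C$, so write $py = k + jx$ for some $k \in K$ and integer $j$. By maximality of the order $p^a$, we have $p^{a-1}(k + jx) = p^{a-1} \cdot p y = 0$, which forces $p^{a-1} j x = -p^{a-1} k \in C \cap K = \{0\}$, whence $p \mid j$, say $j = pj'$. Setting $y' = y - j'x$, we have $y' \notin K$ (else $y \in K + C$) and $py' = k \in K$. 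Now $K' = \langle K, y'\rangle$ properly contains $K$, and a short argument using minimality of the order of $y$ shows $K' \cap C = \{0\}$, contradicting maximality of $K$. This finishes the splitting lemma and hence the theorem. I anticipate this maximality-plus-minimality double-extremal argument being the only subtle point; everything else is routine bookkeeping with orders and coprimality.
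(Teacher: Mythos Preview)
Your proof is a correct, standard textbook argument for the fundamental theorem of finite abelian groups (primary decomposition followed by the maximal-order splitting lemma). There is nothing to compare here: the paper does not prove this theorem at all---it merely states it as the classical structure theorem and uses it as a black box in the reduction of Theorem~\ref{thm:gap-min3lin(group)-is-hard}. So your write-up goes beyond what the paper does, but in this context that extra effort is unnecessary; a citation would suffice.
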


\begin{theorem} \label{thm:gap-min3lin(group)-is-hard}
  $\GAP{\gamma}$-$\minlin{3}{G}$ is \W{1}-hard
  for every $\gamma \geq 1$ and every 
  non-trivial finite Abelian group $G$.
\end{theorem}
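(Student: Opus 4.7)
The plan is to reduce from $\GAP{\gamma}$-$\minlin{3}{\ZZ_p}$ restricted to almost homogeneous instances, which is \W{1}-hard by Lemma~\ref{lem:min3lin-special}, to $\GAP{\gamma}$-$\minlin{3}{G}$ for a prime $p$ dividing $|G|$. Such a $p$ exists since $G$ is non-trivial, and by Theorem~\ref{thm:fundamental-abelian-group} we may write $G \cong \bigoplus_{i=1}^t \ZZ/p_i^{\ell_i}\ZZ$ and choose $p = p_{i_0}$ for some $i_0$. Let $g \in G$ be the element with $p^{\ell_{i_0}-1}$ in coordinate $i_0$ and $0$ in all others; then $g$ has order exactly $p$. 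The $p$-torsion subgroup $G[p] = \{h \in G : ph = 0\}$ decomposes as $\bigoplus_{i : p_i = p} \ZZ_p$, under which $g$ becomes the standard basis vector at coordinate $i_0$. In particular, the coordinate projection $\pi \colon G[p] \to \ZZ_p$ onto coordinate $i_0$ is a group homomorphism with $\pi(g) = 1$.

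Given an almost homogeneous $\ZZ_p$-instance $(S, k)$, I would build a $G$-instance $(S', k)$ by lifting variables and adding torsion-pinning gadgets. Each variable $x$ of $S$ becomes a variable $x'$ of $S'$; each crisp $x + y + z = 0$ becomes the crisp $x' + y' + z' = 0$; each crisp $z_a = a$ becomes $z'_a = a g$; each soft $x = 0$ becomes the soft $x' = 0$. Unary equations are encoded in the 3-term format using inverted variables, e.g.\ $z'_a + z'_a + (-z'_a) = a g$. To force every lifted variable $v'$ into $G[p]$, attach a \emph{$p$-torsion gadget}: fresh auxiliary variables $u_1, \ldots, u_{p-1}$ with crisp equations $v' + v' + (-u_1) = 0$ and $u_{\ell-1} + v' + (-u_\ell) = 0$ for $\ell = 2, \ldots, p-1$ (forcing $u_\ell = (\ell+1) v'$), together with $u_{p-1} + z_0 + z_0 = 0$, where $z_0$ is a single designated variable pinned to $0$ by $z_0 + z_0 + (-z_0) = 0$. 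Any assignment satisfying these crisp equations then has $p\sigma'(v') = 0$, i.e.\ $\sigma'(v') \in G[p]$.

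Correctness follows by a lift-and-project argument. In the YES direction, a $\ZZ_p$-solution $\sigma$ of $S$ of cost at most $k$ lifts to $\sigma'(v) = \sigma(v) g$: translated equations hold because $g$ has order $p$, each torsion gadget is satisfied by $\sigma'(u_\ell) = (\ell+1)\sigma(v) g$ (which vanishes for $\ell = p-1$), and the cost is preserved since $\sigma(v) = 0 \iff \sigma'(v) = 0$. In the NO direction (contrapositive), a $G$-solution $\sigma'$ of $S'$ of cost at most $\gamma k$ satisfies $\sigma'(v) \in G[p]$ for every lifted variable by the torsion gadget, so $\sigma := \pi \circ \sigma'$ is a well-defined $\ZZ_p$-assignment that satisfies every equation of $S$ (since $\pi$ is a homomorphism with $\pi(g) = 1$), and whose cost is at most that of $\sigma'$ because $\sigma(v) = 0$ whenever $\sigma'(v) = 0$. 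The reduction is polynomial-time (with $p$ determined by $G$) and leaves $k$ unchanged, giving an FPT-reduction.

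The main obstacle is structural: when $G$ has no cyclic factor of order exactly $p$, for instance $G = \ZZ_{p^2}$, there is no group homomorphism $G \to \ZZ_p$ sending the order-$p$ element $g$ to a nonzero value, since in $\ZZ_{p^\ell}$ with $\ell \geq 2$ every homomorphism to $\ZZ_p$ annihilates $p^{\ell-1} = g$. A direct ``project every value'' reduction would therefore break the NO direction. The crisp torsion chains sidestep this by first embedding all lifted variables into the $\FF_p$-vector space $G[p]$, where coordinate projections are well-behaved and $g$ can be chosen as a basis vector, enabling the projection argument to close the reduction cleanly.
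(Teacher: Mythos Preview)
Your proof is correct and takes a route that is closely related to, but genuinely distinct from, the paper's. Both proofs reduce from almost homogeneous instances over $\ZZ_p$ (Lemma~\ref{lem:min3lin-special}) and both use a crisp additive chain per variable to control the image inside $G$. The difference lies in which end of the multiplication-by-$p$ map is used. The paper introduces two copies $v',v''$ of each variable and a chain enforcing $v'' = p^{\ell-1}v'$; the translated equations are then placed on the $v''$ variables, and in the backward direction one projects $v'$ onto the first coordinate $\ZZ_{p^\ell}$ and reduces modulo $p$. You instead keep a single lifted copy $v'$, use a chain enforcing $pv'=0$ so that $v'$ lands in the $p$-torsion subgroup $G[p]$, and then project from the $\FF_p$-vector space $G[p]$ onto the coordinate where $g$ sits. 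In $\ZZ_{p^\ell}$ the image of multiplication by $p^{\ell-1}$ and the kernel of multiplication by $p$ coincide, so the two constructions are in a sense dual and land in the same subgroup; your version has the minor advantages of shorter chains (length $p$ rather than $p^{\ell-1}$) and a one-step projection that is visibly a group homomorphism, while the paper's version avoids needing the explicit $\FF_p$-vector space structure of $G[p]$.

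One cosmetic point: the $\GAP{\gamma}$-$\minlin{3}{G}$ problem as stated in the paper takes equations $x_1+x_2+x_3=a$ with $x_i$ plain (uninverted) variables, so strictly speaking your use of $-u_\ell$ and $-z'_a$ should be implemented via the same auxiliary-inverse trick used in~\eqref{eq:chop}, i.e.\ fresh variables $\bar u_\ell$ with crisp $u_\ell+\bar u_\ell+z_0=0$. This is a formatting detail and does not affect the argument.
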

\begin{proof}
Let $G \cong \bigoplus_{i=1}^{t} \ZZ_{p_i^{\ell_i}}$ and set
$p = p_1$ and $\ell = \ell_1$.
Consider an almost homogeneous instance $(S, k)$ of $\minlin{3}{\ZZ_p}$.
Create an instance $(S', k)$ of $\minlin{3}{G}$ as follows.
Let $V(S')$ contain variable $v', v''$ for every $v \in V(S)$.
Add crisp equations equivalent to
\begin{equation} \label{eq:multiplicity}
  \underbrace{v' + \dots + v'}_{p^{\ell-1}} = v''    
\end{equation}
using the same reduction from many summands to three as
in Equation~\eqref{eq:chop} in the proof of Lemma~\ref{lem:min3lin-special}.
For every equation of the form $x + y + z = 0$ in $S$,
add $x'' + y'' + z'' = 0$ to $S'$.
For every equation of the form $x = a$ in $S$,
add $x'' = (p^{\ell-1} a, 0, \dots, 0)$ in $S'$.
This completes the construction.

For one direction,
consider an arbitrary assignment $\alpha : V(S) \to \ZZ_p$
and define $\alpha' : V(S') \to \GG$
by setting $\alpha(v')$ equal to the element
$(\alpha(v), 0, \dots, 0)$
and $\alpha(v'')$ to the element 
$(p^{\ell-1} \alpha(v), 0, \dots, 0)$.
Note that $\alpha'$ satisfies equations~\ref{eq:multiplicity}
for all $v', v'' \in V(S')$.
Furthermore, if
$\alpha(x) + \alpha(y) + \alpha(z) = 0 \bmod p$, then
$p^{\ell-1} (\alpha(x) + \alpha(y) + \alpha(z)) = 0 \bmod p^\ell$
and
$\alpha'(x'') + \alpha'(y'') + \alpha'(z'') = 
(0, \dots, 0) = 0$.
Finally, if $\alpha(x) = a$, then
$\alpha'(x'') = (p^{\ell-1} a, 0, \dots, 0)$.
We conclude that $\alpha'$ violates
at most as many equations as $\alpha$.

For the other direction,
consider an arbitrary assignment $\beta' : V(S') \to \GG$
and define $\beta : V(S) \to \ZZ_p$
by setting $\beta(v)$ to the projection
of $\beta'(v')$ onto the first component.
By the crisp equations~\eqref{eq:multiplicity}, 
the first component of $\beta(v'')$ equals $p^{\ell-1} \beta(v)$.
Thus, if $\beta(x'') = (p^{\ell-1} a, 0, \dots, 0)$, then $\beta(x) = a$.
Moreover, if $\beta(x'') + \beta(y'') + \beta(z'') = 0$,
then the equation holds in the first component,
hence $p^{\ell-1} (\beta(x) + \beta(y) + \beta(z)) = 0 \bmod p^\ell$
and $\beta(x) + \beta(y) + \beta(z) \bmod p$.
Thus, $\beta'$ violates at most as many equations as $\beta$.
\end{proof}

We obtain the following corollary
due to the additive group of every ring being Abelian.

\begin{corollary} \label{cor:gap-min3lin(ring)-is-hard}
  $\GAP{\gamma}$-$\minlin{3}{R}$ is \W{1}-hard
  for every $\gamma \geq 1$ and every 
  non-trivial finite ring $R$.
\end{corollary}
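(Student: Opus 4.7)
The plan is to obtain the corollary as an immediate consequence of Theorem~\ref{thm:gap-min3lin(group)-is-hard} by exhibiting a trivial parameterized reduction from $\GAP{\gamma}$-$\minlin{3}{G}$ to $\GAP{\gamma}$-$\minlin{3}{R}$, where $G$ is taken to be the additive group $(R,+)$.

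First, I would observe that if $R=(R,+,\cdot)$ is a non-trivial finite ring, then $G:=(R,+)$ is a non-trivial finite Abelian group, so Theorem~\ref{thm:gap-min3lin(group)-is-hard} applies to $G$ for every $\gamma\geq 1$. Next, I would note that every equation of the form $x_1+x_2+x_3=a$ over $G$ (with each $x_i$ either a variable or an inverted variable, and $a\in G=R$) is literally the ring equation $c_1x_1'+c_2x_2'+c_3x_3'=a$ over $R$, where $x_i'$ is the underlying variable of $x_i$ and $c_i\in\{1,-1\}\subseteq R$ records whether $x_i$ is inverted in the group equation. Hence the identity mapping on the set of equations, soft/crisp status, and the parameter $k$ constitutes a parameterized reduction from $\GAP{\gamma}$-$\minlin{3}{G}$ to $\GAP{\gamma}$-$\minlin{3}{R}$.

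For correctness, assignments $\varphi\colon V\to R$ are in bijective correspondence with assignments $\varphi\colon V\to G$ (the underlying sets coincide), and an equation is satisfied in the ring interpretation if and only if it is satisfied in the group interpretation, since both sides evaluate using only the operation $+$ of $R$ and the coefficients lie in $\{1,-1\}$. Thus the minimum number of equations to delete is identical in the two instances, so the reduction maps YES to YES and NO to NO in the gap sense. There is no obstacle here; the only thing to double-check is that the coefficient $-1$ is available as an element of $R$, which is immediate since any ring contains the additive inverse of its multiplicative identity. Combining the reduction with Theorem~\ref{thm:gap-min3lin(group)-is-hard} yields the claimed \W{1}-hardness for every $\gamma\geq 1$ and every non-trivial finite ring $R$.
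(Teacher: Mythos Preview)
Your proposal is correct and matches the paper's own argument: the corollary follows immediately from Theorem~\ref{thm:gap-min3lin(group)-is-hard} applied to the additive group $(R,+)$, which is a non-trivial finite Abelian group. Your additional care in spelling out that group equations $x_1+x_2+x_3=a$ (with possible inversions) are literally ring equations with coefficients in $\{1,-1\}$ is more detail than the paper provides, but the approach is identical.
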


\subsection{Inapproximability Result for Non-Helly Rings}
\label{sec:inapprox-non-helly}

Let $R$ be a finite commutative ring.
Recall from Section~\ref{sec:annihilators} that 
a triple $C_1, C_2, C_3$ of one-element annihilator cosets of $R$
forms a tangle if $C_i \cap C_j \neq \emptyset$
for all $i,j \in \{1,2,3\}$,
and $C_1 \cap C_2 \cap C_3 = \emptyset$.
We say that $R$ is Helly if it does not admit any tangle.
We will show that if $R$ is not Helly,
then $\minlin{2}{R}$ cannot be FPT-approximated
within {\em any} constant unless \FPT = \W{1}. 
The idea is to use the \W{1}-hardness result for 
$\GAP{\gamma}$-$\minlin{3}{R}$
(Corollary~\ref{cor:gap-min3lin(ring)-is-hard}) and express
equations of length 3 using equations of length 2.
We assume without of generality (due to Proposition~\ref{prop:sumapproximation}(1)) that $R$ is local.
The reduction is from $\minlin{3}{\FF}$ where $\FF$
is the {\em residue field} of $R$, i.e. $\FF=R/M$ where $M$ is
the maximal ideal in the local ring $R$.
To illustrate the idea, we consider the non-Helly ring
$R = \ZZ_2[\rx,\ry]/(\rx^2,\ry^2)$.
In this case, $\FF=\ZZ_2$.
For example, consider an equation
$a + b + c = 0$ over $\ZZ_2$.
To express it using binary equations over $R$,
we introduce a fresh variable $v$
and consider three equations
\begin{align*}
  \rx v &= \rx\ry b, \\
  \ry v &= \rx\ry a, \\
  (\rx+\ry) v &= - \rx\ry c.
\end{align*}
Summing up the first two equations,
we obtain $(\rx + \ry) v = \rx\ry (a + b)$.
Together with the third one,
this implies $\rx\ry (a + b + c) = 0$.
On the other hand,
any assignment that satisfies
$\rx\ry (a + b + c) = 0$
can be extended as $v = \rx a + \ry b$
to satisfies all three binary equations.
We formalize this idea in the following theorem.

\begin{theorem} \label{thm:non-helly-hard}
  Let $R$ be a finite and commutative ring that is not Helly.
  Then $\minlin{2}{R}$ cannot be FPT-approximated
  within any constant unless $\FPT=\W{1}$.
\end{theorem}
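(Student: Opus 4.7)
The plan is to reduce to the local case: by Theorem~\ref{thm:sumoflocalrings} and Proposition~\ref{prop:helly-direct-sum} (direct sums of Helly rings are Helly), a non-Helly $R$ must have a local summand $R_j$ that is itself non-Helly, and Proposition~\ref{prop:sumapproximation}(1) transfers constant-factor FPT-inapproximability from $\minlin{2}{R_j}$ to $\minlin{2}{R}$. I therefore assume that $R$ is local with maximal ideal $M$ and residue field $\FF = R/M$, and apply Lemma~\ref{lem:homogenised-tangle} to obtain a tangle $\Ann(e_1), \Ann(e_2), \Ann(e_3)+f$. A short check shows that no $e_i$ can be a unit (otherwise the triple intersection would contain every witness of the pairwise intersections), so $e_1,e_2,e_3 \in M$. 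My overall plan is then an FPT-reduction from $\GAP{\gamma}\mhyphen\minlin{3}{\FF}$, which is \W{1}-hard for every $\gamma \geq 1$ by Corollary~\ref{cor:gap-min3lin(ring)-is-hard} and hard on almost-homogeneous instances in the style of Lemma~\ref{lem:min3lin-special}, to $\GAP{\gamma'}\mhyphen\minlin{2}{R}$ for a constant $\gamma'$ depending only on $R$.

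\textbf{Gadget and correctness.} Given an almost-homogeneous $\FF$-instance $(S,k)$, I would introduce an $R$-variable $v'$ for each $v \in V(S)$, translate unary $\FF$-equations via any fixed set-theoretic lift $\FF \hookrightarrow R$, and replace every ternary equation $A+B+C=0$ by the three binary $R$-equations
\[
 e_1 w = g \cdot B',\qquad e_2 w = g \cdot A',\qquad e_3 w = -g \cdot C' + h,
\]
modelled on the worked example of the overview (where $e_1=\rx$, $e_2=\ry$, $e_3=\rx+\ry$, $f=\rx$, and $g=\rx\ry$). Here $w$ is a fresh variable, $g \in R$ is a nonzero socle element (so $gM = 0$, which exists because $M$ is nilpotent in a finite local ring) chosen so that $g e_i = 0$ for all $i$ and $\Ann(g) = M$, and $h \in R$ is a constant determined by $f$. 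For the forward direction, I would lift any $\alpha \colon V(S) \to \FF$ of cost at most $k$ arbitrarily to $\tilde\alpha \colon V(S) \to R$ and set $w$ by an analogue of the choice $w = e_1 \tilde\beta + e_2 \tilde\alpha$ in the example; the three equations then hold because their algebraic combination reduces to $g \cdot (\tilde\alpha + \tilde\beta + \tilde\gamma) = 0$, which follows from the lifted sum lying in $M$ together with $gM = 0$. For the backward direction, given $\phi \colon V(S') \to R$ of cost at most $\gamma' k$, summing a satisfied gadget forces $g \cdot (\phi(A') + \phi(B') + \phi(C')) = 0$, hence $\phi(A') + \phi(B') + \phi(C') \in \Ann(g) = M$, so projecting $\phi$ through $\pi \colon R \to R/M$ yields an $\FF$-assignment satisfying the ternary equation. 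Because one broken $R$-equation destroys at most one gadget, and hence at most one $\FF$-equation, the cost grows by at most the (constant) gadget size, and the Observation at the end of Section~\ref{sec:approx} converts \W{1}-hardness of $\GAP{\gamma}\mhyphen\minlin{3}{\FF}$ for every constant $\gamma$ into FPT-inapproximability of $\minlin{2}{R}$ within any constant.

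\textbf{Main obstacle.} The technical heart of the proof is realising the gadget uniformly over every non-Helly local ring, since the example leans on both the clean identity $e_1 + e_2 = e_3$ (used to cancel $w$ when the three equations are summed in characteristic~2) and a socle element $g$ that simultaneously annihilates $e_1, e_2, e_3$ and has $\Ann(g) = M$. For rings of odd characteristic the sum identity has to be replaced by a unit-weighted identity $\mu_1 e_1 + \mu_2 e_2 + \mu_3 e_3 = 0$, which propagates into the coefficients of the gadget equations. I expect that Lemma~\ref{lem:homogenised-tangle} can be strengthened, using Corollary~\ref{cor:localringstandard} together with a Nakayama-style lifting argument, to produce such a simple tangle in every non-Helly local ring; the element $g$ can then be extracted from the bottom socle layer $M^{n-1}$, where $n$ is the nilpotency index of $M$, with $\Ann(g) = M$ ensured by a careful choice within $M^{n-1}$. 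Once these algebraic ingredients are in hand, the remaining FPT bookkeeping is routine and the proof concludes.
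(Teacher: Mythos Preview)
Your high-level plan matches the paper: reduce to a local summand, reduce from $\GAP{\gamma}$-$\minlin{3}{\FF}$ over the residue field $\FF=R/M$, and implement each ternary equation by a constant-size binary gadget whose satisfaction forces the sum of the three lifted values to be a non-unit. The gap is in the gadget itself. Your equations $e_i w = g\cdot(\ldots)$ need two algebraic facts that are \emph{not} consequences of the tangle: a unit-weighted relation $\mu_1 e_1+\mu_2 e_2+\mu_3 e_3=0$ (so that summing eliminates $w$), and identities of the type $e_1^2=0$, $e_1 e_2=g$ (so that your forward choice $w=e_1\tilde\beta+e_2\tilde\alpha$ actually satisfies the gadget). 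Both happen to hold in the worked example $\ZZ_2[\rx,\ry]/(\rx^2,\ry^2)$, but Lemma~\ref{lem:homogenised-tangle} gives you none of them for a general non-Helly local ring, and the ``expected strengthening'' you invoke is neither proved in the paper nor obviously true.

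The paper sidesteps this completely by using the pairwise-intersection witnesses as coefficients rather than a socle element. Pick $s\in\Ann(a)\cap(\Ann(c)+d)$ and $t\in\Ann(b)\cap(\Ann(c)+d)$, so $as=bt=0$ and $s-t\in\Ann(c)$. The gadget for $x+y+z=0$ is the three crisp equations $a v_R=at\,y_R$, $b v_R=bs\,x_R$, $c v_R=-ct\,z_R$. Forward, set $v_R=s x_R+t y_R$; the identities $as=0$, $bt=0$, $c(s-t)=0$ dispatch all three. Backward, one does \emph{not} sum the equations: instead one shows that $w_R:=v_R-s x_R-t y_R$ lies in $\Ann(a)\cap\Ann(b)$ and in $\Ann(c)-t(x_R+y_R+z_R)$, so if $x_R+y_R+z_R$ were a unit $u$ then $-u^{-1}w_R$ would lie in all three cosets of the tangle, a contradiction. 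This uses only what the tangle guarantees, needs no linear relation among $a,b,c$, and preserves the parameter exactly.
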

\begin{proof}
By Proposition~\ref{prop:sumapproximation},
we can assume without loss of generality that
$R$ is a local ring.
Let $M$ be the maximal ideal of $R$ and
$\FF = R / M$ be the residue field.
Define the natural projection $\pi : R \to \FF$
such that for arbitrary $r \in R$, $\pi(r) = u \iff r\in M + u$.

We apply Lemma~\ref{lem:homogenised-tangle} and let $a, b, c, d \in R$ denote the elements
such that
$A = \Ann(a)$, $B = \Ann(b)$ and $C = \Ann(c) + d$
form a coset tangle.
Further, let $s \in A \cap C$ and $t \in B \cap C$.
Note that $C = \Ann(c) + t$ and $a,b,c,d,s,t$ are non-units.

Consider an almost homogeneous instance $(S, k)$ of $\minlin{3}{\FF}$.
We create an instance $(S_R, k)$ of $\minlin{2}{R}$
with the same parameter value as follows.
Start by adding variable $x_R$ to $V(S_R)$ for every $x \in V(S)$.
For every (crisp) equation of the form $x = i$ in $S$,
add a (crisp) equation 
\begin{equation}
  \label{eq:const}
  x_R = i    
\end{equation} 
to $S_R$.
Note that equations of the form $x = 0$ become $x_R = 0$.
For every equation of the form 
$x + y + z = 0$ in $S$,
create a new variable $v_R$ in $V(S_R)$ and 
add the following crisp equations to $S_R$:
\begin{align}
  \label{eq:a} a \cdot v_R &= a t  \cdot y_R, \\
  \label{eq:b} b \cdot v_R &= b s  \cdot x_R, \\
  \label{eq:c} c \cdot v_R &= -c t \cdot z_R.
\end{align}
This completes the construction.

Towards correctness, first consider an assignment
$\alpha : V(S) \to \FF$ and define
$\alpha' : V(S_R) \to R$ as $\alpha'(x_R) = \alpha(x)$ 
for all $x \in V(R)$.
Clearly, if $\alpha$ satisfies an equation
of the form $x = i$, then $\alpha'$ satisfies
the corresponding constraint $x_R = i$ in $S'$.
Moreover, if $\alpha$ satisfies an equation
of the form $x + y + z = 0$ in $S$, 
we claim that $\alpha'$ extended with
$\alpha'(v_R) = s \alpha'(x_R) + t \alpha'(y_R)$
satisfies equations~(\ref{eq:a}-\ref{eq:b}) in $S'$.
Indeed, since $s \in \Ann(a)$ and $t \in \Ann(b)$, we have
\[
  a\alpha'(v_R) = a(s \alpha'(x_R) + t \alpha'(y_R)) = at \alpha'(y_R)
\]
and
\[
  b\alpha'(v_R) = a(s \alpha'(x_R) + t \alpha'(y_R)) = bs \alpha'(x_R).
\]
Furthermore, since $s - t \in \Ann(c)$ and
$\alpha'(x_R) + \alpha'(y_R) = -\alpha'(z_R)$
we have
\begin{align*}
  c\alpha'(v_R) &= c(s \alpha'(x_R) + t \alpha'(y_R)) = \\
  &= cs \alpha'(x_R) + ct \alpha'(y_R) + ct \alpha'(x_R) - ct \alpha'(x_R) = \\
  &= c(s-t) \alpha'(x_R) + ct(\alpha'(x_R) + \alpha'(y_R)) = \\
  &= ct(\alpha'(x_R) + \alpha'(y_R)) = -ct\alpha'(z_R).
\end{align*}

For the opposite direction, let 
$\beta : V(S_R) \to R$ be an assignment to $S_R$,
and consider the assignment 
$\beta' : V(S) \to \FF$ defined as
$\beta'(x) = \pi(\beta(x_R))$ for all $x \in V(S)$.
Suppose $\beta$ satisfies a unary equation $x_R = i$.
Then $\beta'$ satisfies the corresponding equation 
$x = i$ because $\pi$ preserves $\FF$.
Now, assume $\beta$ satisfies equations~(\ref{eq:a}-\ref{eq:c}),
and consider a ring element $w_R = v_R - sx_R - ty_R$.
We claim that $\beta$ satisfies
\begin{align}
  \label{eq:aa} w_R &\in \Ann(a), \\
  \label{eq:bb} w_R &\in \Ann(b), \\
  \label{eq:cc} w_R &\in \Ann(c) - t(x_R + y_R + z_R).
\end{align}
For equation~\eqref{eq:aa}, observe $w_R = (v_R - ty_R) - sx_R$,
$s \in \Ann(a)$ and $\beta$ satisfies $a(v_R - ty_R) = 0$,
hence $\beta(w_R) \in \Ann(a)$.
For equation~\eqref{eq:bb}, observe that 
$w_R = (v_R - sx_R) - ty_R$, $t \in \Ann(b)$
and $\beta$ satisfies $b(v_R - sx_R) = 0$ so $\beta(w_R) \in \Ann(b)$.
Towards equation~\ref{eq:cc}, observe that
\begin{align*}
  c(w_R + tx_R + ty_R) 
  &= c(v_R - sx_R - ty_R + tx_R + ty_R) = \\
  &= cv_R - c(s-t)x_R = cv_R.
\end{align*}
Since $\beta$ satisfies equation~\eqref{eq:c}, 
it also satisfies $c(w_R +tx_R + ty_R) = -ctz_R$,
which is equivalent to
$cw_R = ct(x_R + y_R + z_R)$, which in turn,
is equivalent to equation~\eqref{eq:cc}.
We claim that this implies that $\beta'$ satisfies 
$x + y + z = 0$,
or, equivalently, that 
$\beta(x_R) + \beta(y_R) + \beta(z_R)$ is a non-unit.
Assume towards contradiction that
$\beta(x_R) + \beta(y_R) + \beta(z_R) = u$ for some unit $u \in R$
and consider the element $-u^{-1} w_R$.
Clearly, 
$$-u^{-1} \beta(w_R) \in \Ann(a) \cap \Ann(b)$$
and 
$$-u^{-1} \beta(w_R) \in \Ann(c) + 
u^{-1} t (\beta(x_R) + \beta(y_R) + \beta(z_R)) = \Ann(c) + t.$$
This contradicts the fact that $A$, $B$, $C$ form a tangle.
\end{proof}

\subsection{Inapproximability Result for Non-Lineal Rings}

Our final hardness result show that \minlin{2}{R} is not FPT-approximable (under \ETH) within $2-\eps$ for any $\eps > 0$ if $R$ is a finite and commutative ring that is
not lineal. The proof is based on an auxiliary result that
proves inapproximability of the {\sc Split Paired Min Cut} problem.

\subsubsection{Inapproximability of \textsc{Paired Min Cut}}
\label{sssec:paired-min-cut-lb}

Consider the following problem.

\pbDefP{Paired Min Cut}
{A graph $G$, vertices $s, t \in V(G)$, 
a set of disjoint edge pairs $\cP \subseteq \binom{E(G)}{2}$,
and an integer $k$}
{$k$.}
{Is there an $st$-mincut $Z \subseteq E(G)$ 
which is the union of $k$ pairs in $\cP$?} 

The \W{1}-hardness of this problem has been shown in~\cite{marx2009constant}. Moreover, 
\textsc{Paired Min Cut} admits a simple $2$-approximation in polynomial time:
Given an instance $(G, s, t,\cP, k)$,
assign unit cost to every edge in $\bigcup \cP$ and infinite cost to every edge outside $\bigcup \cP$.
Compute the minimum cost $st$-maxflow in $G$; if it exceeds $2k$, then reject and,
otherwise, any minimum cost $st$-mincut intersects at most $2k$ pairs in $\cP$.
Surprisingly, we show that a better approximation is 
unlikely even if we allow the algorithm to run in FPT time in $k$, i.e.
no FPT algorithm can approximate \textsc{Paired Min Cut}
within a factor better than $2$, assuming \ETH. In fact, this holds even
for severely restricted instances.
We say that an instance $(G,s,t,\cP,k)$ of \textsc{Paired Min Cut} is
\emph{split} if $V(G)\setminus \{s,t\}$ has a partition into $U_1$ and
$U_2$ such that:
\begin{enumerate}
\item $G$ has no edge between a vertex in $U_1$ and a vertex in $U_2$,
  and
\item for every pair in $\cP$ one edge lies in $G_1=G
  \setminus U_2$ and the other lies in $G_2=G\setminus U_1$.
\end{enumerate}
We define the following computational problem.

\pbDefGap{$\GAP{\gamma}$-Split Paired Min Cut}
{An instance $(G,s,t,G_1,G_2,\cP,k)$ of \textsc{Split Paired Min Cut}.}
{$k$.}
{there is an $st$-mincut $Z \subseteq E(G)$ 
which is the union of $k$ pairs in $\cP$,}
{every $st$-cut $Z \subseteq E(G)$ intersects 
more than $\gamma k$ pairs in $\cP$.}

In the following we will exclusively work with 
the {\sc $\GAP{\gamma}$-Split Paired Min Cut}
problem.
We will show that, unless \ETH is false,
there is no FPT algorithm solving 
\textsc{$\GAP{2-\eps}$-Split Paired Min Cut}
for any $\eps > 0$.
The starting point of our reductions is a gap version of 2CSP.
This is the variant of CSP with binary constraints and no language restrictions 
(i.e. any relation may appear in the constraints).
In this setting it makes sense to include the domain $D$ as part of the input.

\pbDefGap{$\GAP{\eps}$-2CSP}
{An instance $\cI = (V,D,\cC)$ of 2CSP.}
{$|V|$.}
{there exists an assignment $V \to D$ that satisfies all constraints in $\cC$,}
{every assignment $V \to D$ satisfies at most an $\eps$-fraction of constraints in $\cC$.}

The \emph{Parameterized Inapproximability Hypothesis (PIH)}~\cite{lokshtanov2020parameterized} 
asserts that $\GAP{\eps}$-2CSP is \W{1}-hard for some $0 < \eps < 1$.
Combined with Raz' parallel repetition theorem~\cite{Raz:sicomp98},
PIH implies that $\GAP{\eps}$-2CSP is hard for every $0 < \eps < 1$.
A recent breakthrough result showed that PIH 
holds under \ETH{}~\cite{Guruswami:etal:stoc2024}.
It will be convenient for us to additionally assume
that the instances of 2CSP we reduce from are sparse,
i.e. the number of constraints is in $O(|V|)$.
This is not explicitly stated in~\cite{Guruswami:etal:stoc2024},
but can be obtained by applying e.g. 
\cite[Lemma~4.1]{dinur2007pcp} to sparsify any 2CSP instance
in an approximation-preserving way,
or taking the statement directly from~\cite{guruswami2024almost}.

\begin{theorem}[See e.g. Corollary~1.3~in~\cite{guruswami2024almost}\protect\footnotemark]
\footnotetext{Corollary~1.3 in~\cite{guruswami2024almost} is phrased as hardness for the search problem. Theorem~\ref{thm:gap-2csp} follows immediately by combining Theorem~4.1~of~\cite{guruswami2024almost} with parallel repetition in projection games~\protect\cite{Rao:sicomp2011}.}
  \label{thm:gap-2csp}
  Assuming \ETH,
  $\GAP{\eps}$-\textsc{2CSP} is not in \FPT for any $0 < \eps < 1$
  even when restricted to instances $(V,D,\cC)$ with $O(|V|)$ constraints.
\end{theorem}

The \emph{primal graph} of a 2CSP instance $\cI = (V,D,\cC)$ is
a simple undirected graph $G_{\cI} = (V, E)$ with vertices 
being identified with the variables of $\cI$ and 
an edge $\{x,y\} \in E$ if and only if $\cC$ contains a constraint $R(x,y)$.
As an intermediate step, we show that
the hardness result of Theorem~\ref{thm:gap-2csp} holds for
2CSP restricted to instances with a bipartite primal graph.

\begin{lemma} \label{lem:gap-2csp-bipartite}
  Assuming \ETH,
  $\GAP{\eps}$-\textsc{2CSP} is not in FPT for any $0 < \eps < 1$
  even when restricted to instances $(V,D,\cC)$ with $O(|V|)$ constraints and a bipartite primal graph.
\end{lemma}
\begin{proof}
  Let $\cI = (V, D, \cC)$ be an instance of 2CSP
  with $m = |\cC| = O(|V|)$.
  For a bipartition $(A, B)$ of $V$,
  we say that it \emph{retains} a constraint
  $R(x,y) \in \cC$ if $x \in A$ and $y \in B$.
  We claim that there is bipartition $(A,B)$ that retains
  at least half of the constraints.
  Indeed, a random bipartition retains a constraint
  with probability $\frac{1}{2}$, and by linearity of expectation,
  it retains $\frac{m}{2}$ constraints in total.
  Thus, there must exist a bipartition that 
  retains at least $\frac{m}{2}$ constraints,
  and it can be found in polynomial time
  using the standard method of conditional expectations
  (or in $2^k$ time by considering every bipartition).
  Let us fix such a bipartition, and let $\cI'$
  be the subinstance of $\cI$ consisting of
  the constraints retained by the bipartition.
  
  Clearly, the primal graph of $\cI'$ is bipartite.
  Moreover, if $\cI$ is satisfiable, then $\cI'$
  is satisfiable as well because it is a subinstance.
  On the other hand, assume $\cI'$ admits an assignment
  that satisfies an $\eps$-fraction of the constraints in $\cI'$
  for some $0 < \eps < 1$.
  Then the same assignment satisfies at least $\frac{\eps}{2}$-fraction of the constraints in $\cI$.
  Thus, if $\GAP{\eps}$-\textsc{2CSP} is in FPT
  restricted to bipartite instances, then
  $\GAP{\eps/2}$-\textsc{2CSP} is in FPT on general
  instances, contradicting the \ETH by Theorem~\ref{thm:gap-2csp}.
\end{proof}

We are now ready to prove the main result.

\begin{theorem} \label{thm:gap-paired-cut}
  Assuming \ETH, $\GAP{(2-\eps)}$-\textsc{Split Paired Min Cut}
  is not in \FPT for any $0 < \eps < 1$. 
\end{theorem}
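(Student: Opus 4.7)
The plan is to establish hardness via an FPT-reduction from $\GAP{\eps'}$-MDBS, which by Lemma~\ref{lem:gap-densent-kxk} is not in \FPT under \ETH for any $\eps' > 0$. For a given target gap factor $2-\eps$ with $\eps \in (0,1)$, we will choose $\eps'$ sufficiently small as a function of $\eps$ and transform an MDBS instance $(G,k)$ with $G = (A \uplus B, E)$, $A = \biguplus_{i=1}^{k} A_i$, $B = \biguplus_{j=1}^{k} B_j$, into a Split Paired Min Cut instance with parameter $K$ polynomial in $k$ (a natural choice is $K = k^2$).

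The construction adapts the standard \W{1}-hardness reduction of Marx~\cite{marx2009constant} for (unsplit) Paired Min Cut, introducing on side $U_1$ a vertex-selection gadget for each color class $A_i$, and symmetrically on side $U_2$ for each $B_j$. The gadgets are built (using parallel bundles, chained paths, and edge duplication) so that a cut through the $i$-th $A$-gadget forces a commitment to a vertex $u_i \in A_i$, and analogously for $B$. Pairs are placed in bijection with edges of $G$: an MDBS edge $uv$ with $u \in A_i$, $v \in B_j$ yields a pair whose two edges sit respectively in the $A_i$-gadget and the $B_j$-gadget, in positions indexed by the endpoints $u$ and $v$. Sufficient duplication is used to guarantee that every edge belongs to at most one pair.

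For the forward direction, an MDBS solution $S$ with $|E(G[S])| = k^2$ would yield a natural cut with $k^2$ pairs: select exactly the pairs corresponding to the $k^2$ edges of $G[S]$. By construction these pairs collectively cut every selection gadget (at the positions indexed by $S$), producing an $st$-cut, so the Split PMC instance is a YES-instance with parameter $K=k^2$.

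The technical crux, and main obstacle, is the backward direction. Assume toward contradiction that some set $\cP'$ of at most $(2-\eps)k^2$ pairs forms an $st$-cut in the constructed instance. I plan to extract a multicolored candidate $S' \subseteq V(G)$ with $|S' \cap A_i| = |S' \cap B_j| = 1$ for all $i, j$ via a randomized argument: on each side, sample independently one vertex per color from among those "touched" by an edge of $\cP'$; then bound the expected value of $|E(G[S'])|$ from below in terms of $|\cP'|$ using a Cauchy--Schwarz / convexity estimate on the distribution of pairs across color pairs $(i,j)$. The factor of $2$ in the claimed gap will reflect the fact that every pair has exactly one edge on $U_1$ and one on $U_2$, so that covering the cut demands of both sides with pairs can cost nearly twice what covering one side alone would cost. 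Combining the resulting density lower bound with the NO-case guarantee $|E(G[S'])| \leq \eps' k^2$ will yield an upper bound on $|\cP'|$ that, for $\eps'$ chosen small enough in terms of $\eps$, contradicts $|\cP'| > (2-\eps)k^2$. Achieving the sharp constant $2$ requires the duplication in the gadgets to be balanced so that no small set of pairs can simultaneously serve the cut demands of many color paths unless it corresponds to a dense multicolored substructure of $G$.
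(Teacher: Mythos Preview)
Your overall plan---reduce from $\GAP{\eps'}$-MDBS with parameter $K=k^2$---is the right one, and the forward direction is fine. But the backward direction, which you correctly identify as the crux, is not actually proved: you only sketch a plan involving random sampling and a Cauchy--Schwarz/convexity estimate, without specifying the construction precisely enough to carry it out or stating any concrete inequality. A randomized extraction that samples one touched vertex per color class will typically lose a multiplicative factor proportional to the number of touched vertices, and a generic convexity bound will lose further constants; it is not at all clear that such an argument recovers the sharp threshold $2-\eps$ rather than some weaker constant.

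The paper's argument is both simpler and tighter than what you propose. The construction is arranged so that (i) the edges of $G'$ are partitioned into pairs in bijection with $E(G)$; (ii) the minimum $st$-cut has size exactly $2k^2$, since each of the $2k$ color-class ``tracks'' needs $k$ edges to be severed; and (iii) any minimum $st$-cut $Z$ already determines a \emph{unique} vertex per color class, giving $S_Z$ without any sampling. With $Z_G=\lambda(Z)\subseteq E(G)$ and $m_i$ the number of edges of $Z_G$ with exactly $i$ endpoints in $S_Z$, one has the two equalities $m_1+m_2=|Z_G|\le(2-\eps)k^2$ (the number of pairs hit) and $m_1+2m_2=|Z|=2k^2$; subtracting gives $|E(G[S_Z])|\ge m_2\ge \eps k^2$ directly. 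The factor $2$ is not the result of a balancing argument but falls out of the identity $m_1+2m_2=2k^2$. The missing idea in your proposal is precisely this: design the gadgets so that a minimum cut forces a single vertex choice per color class, after which a two-line double count replaces all the probabilistic machinery.
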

\begin{proof}
  Let $\cI = (V, D, \cC)$ with $|\cC|=m$ be an instance of 2CSP with a bipartite primal graph and $m = O(|V|)$.
  We will construct an instance 
  $\cI' = (G, s, t, G_1, G_2, \cP, k')$ of 
  \textsc{Split Paired Min Cut}
  with $k' = m$ such that
  \begin{itemize}
    \item if $\cI$ is satisfiable,
    then $\cI'$ is a yes-instance, and
    \item if $G$ admits an $st$-cut
    that intersects at most $(2 - \eps)m$ pairs in $\cP$
    for any $0 < \eps < 1$,
    then $\cI$ admits an assignment that satisfies
    at least $\eps m$ constraints.
  \end{itemize}
  The hardness then follows by Lemma~\ref{lem:gap-2csp-bipartite}.
  
  Now we construct $\cI'$.
  Let $(A, B)$ be a bipartition of $V$, where
  $A = \{a_1, \dots, a_{|A|}\}$ and $B = \{b_1, \dots, b_{|B|}\}$.
  Further, let $D = \{1, \dots, n\}$.
  Construct a graph $G$ with vertices $s$, $t$ and
  vertices $v^{A}_{i,d}$, $v^{B}_{j,d}$,
  for every $i \in [|A|]$, $j \in [|B|]$ and
  $0 \leq d \leq n$.
  Identify $v^{X}_{i,0}$ with $s$ and $v^{X}_{i,n}$ with $t$
  for all $i$ and $X \in \{A,B\}$.
  Note that there are $n+1$ vertices corresponding 
  to each variable in $V$.
  Now we create the edges of $G$ (and add more auxiliary vertices).
  For every $d \in [n]$,
  we will connect $v^A_{i,d-1}$ to $v^A_{i,d}$
  with a set of internally disjoint paths of length $n$,
  where each path corresponds to 
  a constraint $R(a_i,b)$ for some $b \in B$,
  and the edges of the path correspond to
  the values that $b$ may take.
  We make a similar construction for $B$.
  Formally, let us enumerate 
  the constraints in $\cC$ as $C_1, \dots, C_m$.
  For every $\ell \in [m]$,
  let $C_{\ell} = R_{\ell}(a_{i}, b_{j})$ with $i \in [|A|]$ and $j \in [|B|]$, and
  \begin{itemize}
    \item create a path $P^{A,\ell}_{{i},d}$ connecting 
      $v^A_{{i},d-1}$ and $v^A_{{i},d}$ for every $d \in [n]$ 
      with $n-1$ freshly introduced auxiliary vertices and edges 
      $e^{A,\ell}_{({i},d),({j},d')}$
      for all $d' \in [n]$, 
    \item create a path $P^{B,\ell}_{j,d}$ connecting 
      $v^B_{j,d-1}$ and $v^B_{j,d}$ for every $d \in [n]$ 
      with $n-1$ freshly introduced auxiliary vertices and edges 
      $e^{B,\ell}_{(j,d),(i,d')}$
      for all $d' \in [n]$, and
    \item add the pair of edges 
      $\{e^{A,\ell}_{(i,d),(j,d')}, e^{B,\ell}_{(j,d'),(i,d)}\}$ to $\cP'$
      for all $(d,d') \in R_\ell$.
  \end{itemize}
  After that, contract every edge that does not occur in any of $\cP$.
  Now set $k' = m$.
  Finally, define $G_1$ as the union of 
  the paths $P_{i,d}^{A,\ell}$ for all $i, d, \ell$
  and $G_2$ as the union of the paths $P_{j,d}^{B,\ell}$
  for all $j, d, \ell$.
  Clearly, our instance is split.

  Towards correctness, we first show that if $\cI$ is satisfiable,
  then $\cI'$ is a yes-instance.
  Let $\psi : V \to D$ be an assignment satisfying all constraints in $\cC$.
  Define a set of edges $Z$ in $G$ as follows.
  For every constraint $C_\ell = R_\ell(a_i, b_j)$ in $\cC$,
  let $d = \psi(a_i)$ and $d' = \psi(b_j)$.
  Add both edges in the pair
  $\{e^{A,\ell}_{(i,d),(j,d')}$, $e^{B,\ell}_{(j,d'),(i,d)}\}$ to $Z$.
  Since $\psi$ satisfies $C_\ell$,
  we have $(d,d') \in R_\ell$, so $Z$ is a union of $m$ pairs in $\cP$.
  It remains to show that $Z$ is an $st$-cut.
  We claim that for every $i \in [|A|]$,
  $Z$ intersects every path connecting $v^A_{i,d-1}$ and $v^A_{i,d}$,
  where $d = \psi(a_i)$.
  Indeed, every path $P^{A,\ell}_{i,d}$ corresponds to a constraint $\cC_\ell = R_\ell(a_i,b_j)$
  for some $j \in [|B|]$ and contains edges $e^{A,\ell}_{(i,d),(j,d')}$ for all $d' \in [n]$.
  By construction, $Z$ contains one of these edges.
  By a similar argument, $Z$ intersects every path connecting 
  $v^B_{j,d'-1}$ and $v^B_{j,d'}$ for all $j \in [|B|]$ such that $d' = \psi(b_j)$.
  Thus, no path from $s$ to $t$ remains in $G - Z$,
  and $Z$ is an $st$-cut.

  Now suppose there is an $st$-cut $Z'$ in $G'$ 
  that intersects at most $(2 - \eps)m$ pairs in $\cP$.
  We may assume without loss of generality that $Z'$ is minimal.
  We will now define an assignment $\psi' : V \to D$.
  Since $Z'$ is an $st$-cut, for every $i \in [|A|]$,
  there exists $d \in [n]$ such that
  $Z'$ intersects every path $P^{A,\ell}_{i,d}$.
  By minimality, this value $d$ is unique, and 
  it is sufficient to cut every path $P^{A,\ell}_{i,d}$ exactly once.
  Let us set $\psi'(a_i) = d$. 
  Note that the number of paths $P^{A,\ell}_{i,d}$
  is equal to the the number 
  of constraints in $\cC$ that involve $a_i$.
  We define $\psi'(b_j)$ analogously:
  for every $j \in [|B|]$, there is a unique $d' \in [n]$
  such that $Z'$ intersects every path $P^{B,\ell}_{j,d'}$.
  The total number of paths cut by $Z'$ equals to $2m$
  because every constraint is counted twice, hence $|Z'| = 2m$.
  We claim that $\psi'$ satisfies at least $\eps m$ constraints in $\cC$.
  By construction, every edge in $Z'$ is in some pair of $\cP$, 
  so $Z'$ contains both edges from at least $2m - |Z'| = \eps m$ pairs.
  Consider a pair of edges $\{e^{A,\ell}_{(i,d),(j,d')}, e^{B,\ell}_{(j,d'),(i,d)}\}$ in $\cP$.
  Recall that, by definition, $C_\ell = R_\ell(a_i,b_j)$ and $(d,d') \in R_\ell$,
  hence, $\psi'(a_i) = d$ and $\psi'(b_j) = d'$, and $\psi'$ satisfies $C_\ell$.
  In total, $\psi'$ satisfies at least $\eps m$ constraints.
  This completes the proof.
\end{proof}

\subsubsection{Non-lineal Rings}

We finally show how the non-approximability result for {\sc Split Paired Min Cut} (Theorem~\ref{thm:gap-paired-cut}) can be used for showing a similar non-approximability bound for
$\minlin{2}{R}$  when $R$ is a non-lineal ring. The proof is based on a simple but powerful tool for proving hardness results; it
underlies, for instance, the proofs of Theorem 27 in~\cite{dabrowski2023parameterized}
and Theorem 6.2 in~\cite{Dabrowski:etal:soda2023}.

\begin{theorem} \label{thm:incomparable-annihilators}
  Assuming \ETH, $\GAP{2-\eps}$-$\minlin{2}{R}$ is not in \FPT{} for any finite and commutative ring $R$ that
  is not lineal.
\end{theorem}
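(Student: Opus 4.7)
The plan is to reduce from $\GAP{2-\eps}$-\textsc{Split Paired Min Cut}, which is \ETH-hard by Theorem~\ref{thm:gap-paired-cut}. Since $R$ is not lineal, the magic square characterisation used in the proof of Proposition~\ref{prop:one-elem-lineal} provides elements $a,b,c,d \in R$ such that $ab = cd = 0$ while $ad \neq 0$ and $bc \neq 0$; equivalently, $\Ann(a)$ and $\Ann(c)$ are incomparable one-element annihilators with $b \in \Ann(a) \setminus \Ann(c)$ and $d \in \Ann(c) \setminus \Ann(a)$. These four elements will drive the entire construction.

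Given a split instance $(G,s,t,G_1,G_2,\cP,k)$, I would construct an instance $(S, 2k)$ of $\minlin{2}{R}$ as follows. Introduce a variable $x_v$ for every vertex $v \in V(G_1)$ and $y_v$ for every $v \in V(G_2)$, sharing only the terminals $s$ and $t$ across the two sides. Pin the terminals with crisp equations $x_s = b$, $x_t = 0$, $y_s = d$ and $y_t = 0$. For every edge $uv \in E(G_1)$ add a single soft propagation equation $x_u = x_v$, and analogously $y_u = y_v$ for every $uv \in E(G_2)$; deleting such a soft equation is meant to model cutting the corresponding edge. Finally, for every pair $P = \{e_1, e_2\} \in \cP$ with $e_1 = u_1v_1 \in E(G_1)$ and $e_2 = u_2v_2 \in E(G_2)$, append a crisp coupling gadget built from $a,b,c,d$ (in the spirit of the constructions in~\cite{dabrowski2023parameterized,Dabrowski:etal:soda2023}) together with a small number of auxiliary variables. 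Its sole purpose is to enforce the following property: in any assignment satisfying every crisp equation of $S$, the two soft equations $x_{u_1} = x_{v_1}$ and $y_{u_2} = y_{v_2}$ are simultaneously satisfied or simultaneously violated.

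Correctness would then follow a standard two-way check. For the forward direction, a paired $st$-cut $Z \subseteq \cP$ of size at most $k$ yields the assignment $x_v = b$ if $v$ is reachable from $s$ in $G_1 - \bigcup Z$ and $x_v = 0$ otherwise, with $y_v$ defined symmetrically; this satisfies every crisp equation (the couplings are satisfied because cut edges come in full pairs) while violating exactly $2|Z| \leq 2k$ soft equations. For the backward direction, any assignment violating fewer than $2(2-\eps)k$ soft equations of $S$ must, by the coupling property, have its set of violated soft equations equal to the union of at most $(2-\eps)k$ pairs in $\cP$; this set must be an $st$-cut in $G$, because the terminal equations $x_s = b \neq 0 = x_t$ and $y_s = d \neq 0 = y_t$, together with the homogeneous edge equations, would otherwise propagate the value $b$ (resp.~$d$) from $s$ to $t$ along an uncut path in $G_1$ (resp.~$G_2$). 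The $(2-\eps)$-gap therefore transfers verbatim.

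The main obstacle is the design of the coupling gadget: a small collection of crisp (at most) binary equations over $R$, involving $a,b,c,d$ and one or two auxiliary variables per pair, whose joint satisfiability partitions the assignments to $(x_{u_1},x_{v_1},y_{u_2},y_{v_2})$ into exactly the two diagonal cases ``both edge equations hold'' and ``neither holds''. The algebraic intuition is that, under the magic square hypotheses, any ``single-edge'' violation should be forced by the gadget to equate $ad \cdot r$ or $bc \cdot r$ with $0$ for some unit $r$, contradicting $ad,bc \neq 0$; conversely, the annihilations $ab = cd = 0$ keep the two symmetric configurations feasible. Once such a gadget is in place---and one can essentially lift it from the \W{1}-hardness reductions in~\cite{dabrowski2023parameterized,Dabrowski:etal:soda2023}---the remainder of the argument reduces to the routine accounting sketched above.
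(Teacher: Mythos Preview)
Your high-level plan---reduce from $\GAP{2-\eps}$-\textsc{Split Paired Min Cut} and use the magic-square witnesses of non-lineality to build a pair gadget---is exactly the paper's approach. The gap is in the gadget itself.

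You ask for a \emph{purely crisp} system of binary equations (with auxiliaries) whose satisfying assignments, projected to $(x_{u_1},x_{v_1},y_{u_2},y_{v_2})$, realise the relation $(x_{u_1}=x_{v_1}) \Leftrightarrow (y_{u_2}=y_{v_2})$. But any such projection is a coset of an $R$-module, and this bi-implication is not coset-shaped. Concretely, for your forward assignment to work on an arbitrary split instance you need the projection to contain $(0,0,0,0)$, $(b,0,d,0)$ and $(0,b,d,0)$; the underlying module then contains their sum $(b,b,2d,0)$, which lies outside the bi-implication whenever $2d\neq 0$ (e.g.\ take $R=\ZZ_3[\rx,\ry]/(\rx^2,\ry^2)$ with $a=b=\rx$, $c=d=\ry$). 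So the crisp coupling you posit simply does not exist for general non-lineal $R$. The gadgets in \cite{dabrowski2023parameterized,Dabrowski:etal:soda2023} that you appeal to are \emph{not} of this form: they do not couple two independent soft edge equations via crisp constraints.

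What those gadgets (and the paper) do instead is \emph{replace} the two paired edges by a single soft equation. For a pair $p=(u_1v_1,u_2v_2)$ one introduces auxiliaries $x_p,y_p$, a single soft equation $x_p=y_p$, and crisp equations $u_1=ax_p$, $v_1=ay_p$, $u_2=cx_p$, $v_2=cy_p$ (in your naming). If $x_p=y_p$ holds, both ``edges'' are satisfied; deleting it frees both sides at once. Non-paired edges are made crisp (your version makes them soft, which would let a solution cut unpaired edges). The terminals are pinned to the \emph{nonzero products} $ad$ and $cb$ rather than to $b$ and $d$, so that the crisp constraints $u_1=ax_p$, $u_2=cx_p$ are consistent with the source values. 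With one soft equation per pair the parameter stays $k$ (not $2k$), and both directions of the gap argument then go through as in your sketch.
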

\begin{proof}
  We provide an fpt-reduction from $\GAP{(2-\eps)}$-\textsc{Split Paired Min Cut},
  which shows the result due to Theorem~\ref{thm:gap-paired-cut}.
  Proposition~\ref{prop:one-elem-lineal} allows us to assume that there exist elements
  $a,b \in R$ such that
  $\Ann(a)$ and $\Ann(b)$ are incomparable under set inclusion.
  Arbitrarily pick two elements $f \in \Ann(a) \setminus \Ann(b)$ and $g \in \Ann(b) \setminus \Ann(a)$,
  and note that $af=0$, $bf \neq 0$, $ag \neq 0$, and
  $bg=0$. The elements $a,b,f,g$ are now chosen
  so that they witness that $R$ does not have the magic
  square property.

  Let $(G, s, t, G_1,G_2,k, \cP)$ be an instance of \textsc{Split Paired Min Cut},
  and assume subsets $U_1, U_2 \subseteq V(G)$ form the split.
  Construct instance $(S, k)$ of \minlin{2}{R} as follows.
  Start by adding one variable to $S$ 
  for every vertex in $V(G) \setminus \{s,t\}$ with the same name.
  Create variables $s_1, s_2$ for $s \in V(G)$
  and $t_1, t_2$ for $t \in V(G)$.
  Add crisp equations $s_1=g$, $s_2=f$, $t_1=0$, $t_2=0$.
  For every edge $uv \in E(G)$ 
  not present in any pair of $\cP$,
  add a crisp equations $u = v$.
  We say that the variables introduced so far are {\em primary}.
  Finally, for every pair $p = (u_1 v_1, u_2 v_2) \in \cP$,
  we introduce two {\em auxiliary} variables $x_p$ and $y_p$,
  and create a gadget with a soft equation $x_p = y_p$
  and crisp equations
  \begin{align*}
    &au_1 = ax_p && ay_p = av_1, \\
    &bu_2 = bx_p && by_p = bv_2. 
  \end{align*}
  The construction is illustrated in Figure~\ref{fig:gadget}.

\medskip
  
  Towards proving correctness, let $Z$ be an $st$-cut in $G$ which is a union of $k$ pairs in $\cP$.
  We will show that $(S, k)$ is a yes-instance.
  Let $X \subseteq \cP$ be the set of pairs such that $Z = \bigcup X$, and
  define the set of equations $X' = \{ x_p = y_p \mid p \in X \}$.
  Note that $|X'| = |X| = k$ and $X'$ only contains soft equations.
  We argue that $S - X'$ is consistent under an assignment $\alpha$ defined as follows.
  For the primary variables, let
  \[
    \alpha(v) = 
    \begin{cases}
      g & \text{if } v \in U_1 \text{ and } s \text{ reaches } v \text{ in } G - Z, \\
      f & \text{if } v \in U_2 \text{ and } s \text{ reaches } v \text{ in } G - Z, \\
      0  & \text{otherwise}.
    \end{cases}
  \]
  Since $Z$ is an $st$-cut, we have $\alpha(t_1) = \alpha(t_2) = 0$.
  Moreover, $\alpha(u) = \alpha(v)$ for every crisp equation $u = v$ in $S$
  because $Z$ is a union of pairs in $\cP$.
  For the auxiliary variables, let
  $\alpha(x_p) = \alpha(u_1) + \alpha(u_2)$ and
  $\alpha(y_p) = \alpha(v_1) + \alpha(v_2)$.
  Note that $\alpha(u_1), \alpha(v_1) \in \{0, g\}$ and
  $\alpha(u_2), \alpha(v_2) \in \{0, f\}$, and
  it is straightforward to verify that
  $\alpha$ satisfies equations
  $au_1 = ax_p$, $av_1 = ay_p$,
  $bu_2 = bx_p$, and $bv_2 = by_p$.
  Now suppose $p = (u_1 v_1, u_2 v_2) \notin X$.
  Then edges $u_1 v_1$ and $u_2 v_2$ are present in $G - Z$,
  which implies that $\alpha(u_1) = \alpha(v_1)$ and $\alpha(u_2) = \alpha(v_2)$.
  By definition, we obtain $\alpha(x_p) = \alpha(y_p)$, so $\alpha$ satisfies $S - X'$.

\medskip

  For the other direction, suppose there exists a subset
  $Y$ of soft equations such $S - Y$ is satisfiable.
  We claim that $G$ admits an $st$-cut that intersects $|Y|$ pairs in $\cP$.
  By definition, the only soft equations in $S$ are $x_p = y_p$ for pairs $p \in \cP$.
  Define $Y' = \bigcup \{ p \mid (x_p = y_p) \in Y\}$ and 
  note that $Y'$ intersects exactly $|Y|$ pairs in $\cP$.
  We claim that $Y'$ is an $st$-cut in $G$.
  Suppose, with the aim of obtaining a contradiction, that $G_1 - Y'$ contains an $st$-path
  (the case for $G_2$ is similar).
  For every edge $u_1 v_1$ on the path, the equations in $S - Y$
  imply $au_1 = av_1$ (either via a crisp equation
  $u_1 = v_1$ or equations $au_1 = ax_p$, $x_p = y_p$, $ax_p = av_1$
  in the gadget).
  But then the equations in $S - Y$ imply
  $as_1 = at_1$, which is incompatible with the crisp equations
  $s_1 = g$ and $t_1 = 0$ since $ag \neq 0$.
\end{proof}

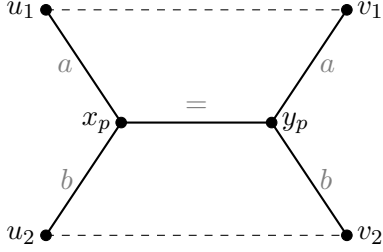
\begin{figure}
 \centering
  \begin{tikzpicture}
    \coordinate (u1) at (0,3);
    \coordinate (v1) at (4,3);
    \coordinate (u2) at (0,0);
    \coordinate (v2) at (4,0);
    \coordinate (xp) at (1,3/2);
    \coordinate (yp) at (3,3/2);

    \filldraw[black] (u1) circle (2pt) node[anchor=east]{$u_1$};
    \filldraw[black] (v1) circle (2pt) node[anchor=west]{$v_1$};
    \filldraw[black] (u2) circle (2pt) node[anchor=east]{$u_2$};
    \filldraw[black] (v2) circle (2pt) node[anchor=west]{$v_2$};
    \filldraw[black] (xp) circle (2pt) node[anchor=east]{$x_p$};
    \filldraw[black] (yp) circle (2pt) node[anchor=west]{$y_p$};

    \draw[dashed] (u1) -- (v1);
    \draw[dashed] (u2) -- (v2);
    \draw[thick] (u1) -- (xp) node[midway, left,  color=gray] {$a$};
    \draw[thick] (u2) -- (xp) node[midway, left,  color=gray] {$b$};
    \draw[thick] (xp) -- (yp) node[midway, above, color=gray] {$=$};
    \draw[thick] (yp) -- (v1) node[midway, right, color=gray] {$a$};
    \draw[thick] (yp) -- (v2) node[midway, right, color=gray] {$b$};
  \end{tikzpicture}
 \caption{System of equations obtained from a pair of edges $p = \{e_1, e_2\}$ 
 where $e_i = \{u_i, v_i\}$ in the proof of Theorem~\ref{thm:incomparable-annihilators}.
 Edges $e_1$ and $e_2$ are illustrated by dashed lines, while the equations
 are illustrated by solid lines with labels describing 
 equations between connected variables.}
 \label{fig:gadget}
\end{figure}

\section{Geometric Approach}
\label{sec:geometry}

To get a better view of Helly and lineal rings, we study them through a
geometric
lens via {\em discrete convex analysis}.
This branch of mathematics aims at a general framework for discrete optimization problems by
combining ideas from continuous optimization and combinatorial optimization~\cite{Murota:DCA}. 
From the continuous side, it may be viewed as a theory of 
convex functions $f : {\mathbb R}^d \rightarrow {\mathbb R}$
that have additional combinatorial properties. From the combinatorial side, it may be viewed as a theory of discrete functions $f : {\mathbb Z}^d \rightarrow {\mathbb R}$ or $f : {\mathbb Z}^d \rightarrow {\mathbb Z}$
that enjoy certain convexity-like properties.
To study commutative rings in a smooth way within this framework, we will focus on {\em monomial rings}, i.e. polynomial rings (with coefficients from some field) that are
factored by monomial ideals. 
Monomial ideals have a structure that makes them very useful for illustrative purposes. Despite being simple compared to polynomial ideals, they are still of fundamental importance in commutative algebra
and algebraic combinatorics~\cite{Cox:etal:IVA,Eisenbud:CA}. For instance,
monomial ideals play a crucial role in
the theory of Gröbner bases.

\subsection{Discrete Convex Analysis}
\label{sec:monomial-rings}

We begin by introducing monomial rings.
Let ${\mathbb F}$ be a field and $R={\mathbb F}[\rx_1,\dots,\rx_n]$
be the polynomial ring over ${\mathbb F}$ with
indeterminates $\rx_1,\dots,\rx_n$.
Recall that a monomial of $R$ is a product
$\rx_1^{\alpha_1} \rx_2^{\alpha_2} \dots \rx_n^{\alpha_n}$ with $\alpha_1,\dots,\alpha_n \in {\mathbb N}$.
A {\em monomial ideal} is an ideal $I \subseteq R$ that is
generated by monomials in $R$. 
For any monomial $\rx_1^{\alpha_1} \rx_2^{\alpha_2} \dots \rx_n^{\alpha_n}$, its {\em exponent vector}
is $(\alpha_1,\dots,\alpha_n)$. For any monomial ideal,
the set of all exponent vectors of all the monomials in $I$ is called the {\em exponent set} of $I$ and we denote it by $Exp(I)$. 
A {\em monomial ring} is a ring that is isomorphic to
${\mathbb F}[\rx_1,\dots,\rx_n]/I$ where ${\mathbb F}$ is a field
and $I$ is a monomial ideal.
For a monomial ring $R=\FF[\rx_1,\ldots,\rx_n]/I$ we denote $Z_E(R)=Exp(I)$ and we let $\overline{Z_E}(R)=\NN^n \setminus Z_E(R)$,
so that $\overline{Z_E}(R)$ is the set of all exponent vectors of non-zero monomials in $R$. 
For $\alpha \in \NN^n$ we let $\rx^\alpha$ denote $\rx_1^{\alpha_1} \dots \rx_n^{\alpha_n}$. 
We illustrate the concepts by using
Example 1.4.4 in \cite{HunekeS2006integral}.
Let $I=(\rx^4,\rx\ry^2,\ry^3)$
be a monomial ideal in ${\mathbb F}[\rx,\ry]$. Its
exponent set $Z_E(R)$ consists of all integer lattice points touching or within the shaded
grey area in Figure~\ref{fig:geometry-example}.

\medskip

\begin{figure}[h]
\centering
\begin{tikzpicture}
    \fill[gray,opacity=0.5] (0,0) -- (5,0) -- (5,4) -- (0,4) -- cycle;

    \draw[thick,->] (0,0) -- (5,0);
    \draw[thick,->] (0,0) -- (0,4);
    
    \node at (-0.2, -0.2) {0};
    \node at (4,    -0.2) {4};
    \node at (1,    -0.2) {1};
    \node at (2,    -0.2) {2};
    \node at (3,    -0.2) {3};
    \node at (-0.2,    1) {1};
    \node at (-0.2,    2) {2};
    \node at (-0.2,    3) {3};

    \draw[gray, fill=white] (0,0) rectangle (1,1);
    \draw[gray, fill=white] (1,0) rectangle (2,1);
    \draw[gray, fill=white] (2,0) rectangle (3,1);
    \draw[gray, fill=white] (3,0) rectangle (4,1);
    
    \draw[gray, fill=white] (0,1) rectangle (1,2);
    \draw[gray, fill=white] (1,1) rectangle (2,2);
    \draw[gray, fill=white] (2,1) rectangle (3,2);
    \draw[gray, fill=white] (3,1) rectangle (4,2);
    
    \draw[gray, fill=white] (0,2) rectangle (1,3);

    \draw[thick] (0,0) -- (0,3) -- (1,3) -- (1,2) -- (4,2) -- (4,0) -- cycle;

\end{tikzpicture}
\caption{Illustration of $R=\FF[\rx,\ry]/(\rx^4,\rx\ry^2,\ry^3)$}
\label{fig:geometry-example}
\end{figure}

\medskip

If $G$ is a monomial generating set of an ideal $I$, the exponent set of $I$ consists of
all points of ${\mathbb N}^d$
that are componentwise greater than or equal to the
exponent vector of one of the exponent vectors of an element of $G$. In other
words, a monomial $m$ is in a monomial ideal $I$ if and only if $m$ is a multiple
of one of the monomial generators of $I$.
We note that elements in monomial rings have unique descriptions as sums of monomials. 

\begin{proposition} \label{prop:mr-unique-representation}
    Let $R=\FF[\rx_1,\dots,\rx_n]/I$ be a monomial ring. Then every element $r \in R$ has a unique description $r=\sum_i a_im_i$ as a weighted sum of monomials, where for each $i$, $a_i \in \FF$ is a coefficient and $m_i$ a non-zero monomial.
\end{proposition}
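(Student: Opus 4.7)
The plan is to derive the statement from the corresponding fact about the polynomial ring $\FF[\rx_1,\dots,\rx_n]$ together with a key property of monomial ideals, namely that membership in a monomial ideal can be checked one monomial at a time. Concretely, I will lift each element of $R$ to a polynomial, use the standard monomial basis of $\FF[\rx_1,\dots,\rx_n]$ to write it uniquely as a sum of monomial terms, and then discard exactly those terms whose monomial lies in $I$.

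First, I would recall (or prove in one line by definition of the polynomial ring) that every $p \in \FF[\rx_1,\dots,\rx_n]$ has a unique expression $p = \sum_i a_i m_i$ with distinct monomials $m_i$ and scalars $a_i \in \FF \setminus \{0\}$. Given $r \in R$, pick any lift $p \in \FF[\rx_1,\dots,\rx_n]$ of $r$ and write $p$ in this form. Splitting the sum into indices with $m_i \in I$ and $m_i \notin I$, and noting that the first part maps to zero in $R$, produces a representation of $r$ of the required form. This handles existence.

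The heart of the proof is uniqueness, for which I would establish the following auxiliary claim: \emph{for a monomial ideal $I$ and a polynomial $p = \sum_i a_i m_i$ written in the standard basis (with distinct monomials $m_i$), one has $p \in I$ if and only if $m_i \in I$ for every $i$ with $a_i \neq 0$.} The backward direction is immediate. For the forward direction, fix a monomial generating set $g_1,\dots,g_k$ of $I$, so any $p \in I$ can be written $p = \sum_{j} q_j g_j$ with $q_j \in \FF[\rx_1,\dots,\rx_n]$. Expanding each $q_j$ on the monomial basis yields $p$ as an $\FF$-linear combination of monomials of the form $m \cdot g_j$, each of which is itself an element of $I$. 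Collecting like monomials and invoking uniqueness of the standard-basis representation of $p$, every monomial $m_i$ appearing in $p$ with nonzero coefficient must coincide with some such $m \cdot g_j$, hence lies in $I$. This is the step I expect to be the main (albeit still routine) obstacle, because it is where the monomial structure of $I$ is genuinely used; for a general ideal the analogous statement would fail.

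Granted the auxiliary claim, uniqueness follows quickly. Suppose $\sum_i a_i \overline{m_i} = \sum_i b_i \overline{m_i}$ in $R$ for distinct non-zero monomials $m_i$ of $R$ (i.e.\ $m_i \notin I$). Then the polynomial $\sum_i (a_i - b_i) m_i$ lies in $I$, so by the claim every $m_i$ with $a_i - b_i \neq 0$ must lie in $I$, contradicting the choice of the $m_i$. Hence $a_i = b_i$ for all $i$, completing the proof.
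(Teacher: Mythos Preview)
Your proof is correct and follows essentially the same approach as the paper. The paper's proof is terser: it only addresses uniqueness, and simply asserts the key fact (``since $I$ is monomial, every monomial $m_i$ with a non-zero coefficient $a_i-b_i$ must be in $I$'') without justification, whereas you carefully prove this auxiliary claim and also handle existence explicitly.
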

\begin{proof}
    Assume that an element $r \in R$ can be written in two distinct ways $r=\sum_i a_im_i= \sum_i b_im_i$ in $R$ and consider $0=\sum_i(a_i-b_i)m_i$. Then this sum is in the ideal $I$. But since $I$ is monomial, every monomial $m_i$ with a non-zero coefficient $a_i-b_i$ must be in $I$. Since by assumption the sum contains no monomials from $I$, we conclude $a_i=b_i$ for every index $i$. 
\end{proof}

We continue by introducing some basic concepts from discrete convex
analysis. An in-depth treatment of the topic can be found in the book by Murota~\cite{Murota:DCA};
our terminology largely follows Kashimura, Numata, and Takemura (KNT)~\cite{KashimuraNT13}.
To begin with, let $S$ be a nonempty set of integer points in ${\mathbb{R}}^d$. 
The {\em convex combination} of $a_1,\dots,a_m \in \reals^d$
is a point $\alpha_1 a_1 + \dots + \alpha_m a_m$ where
$\alpha_1,\dots,\alpha_m \geq 0$ and $\alpha_1+\dots+\alpha_m=1$.
The {\em convex hull} of a set $X \subseteq \reals^n$ ($cnv(X)$)
is the set of all convex combinations of points in $X$.
We say that $S \subseteq {\mathbb Z}^d$ is
$k$-{\em convex} if $S$ satisfies the following: for arbitrary
$a_1, \dots, a_{k+1} \in S$, it holds that $cnv(a_1,\dots,a_{k+1}) \cap {\mathbb Z}^d \subseteq S$.
The set $S$ is \emph{hole free} if $S=cnv(S) \cap {\mathbb{Z}}^d$.

Next, consider a separation of $S$ into two nonempty sets $A \subseteq S$ and $B=S \setminus A$. 
We say that $A$ and $B$ are \emph{separated by a hyperplane} if 
there exists an affine hyperplane $H$, where $A$ is on one side of $H$ and $B$ on the other.
This is a simplification of the more general Condition~H of KNT, but it suffices
for our purposes. 
In particular, in our applications one of the sets $A$ and $B$ will always be finite,
which allows us to assume that the separating hyperplane contains no points from $S$.
We say that $A$ and $B$ satisfy the $k$-{\em parallelogram} property
if 
$a_1,\dots,a_{k'} \in A$ and $b_1,\dots,b_{k'} \in B$ 
implies 
\[\sum_{i=1}^{k'} a_i \neq \sum_{j=1}^{k'} b_j\]
for all $1 \leq k' \leq k$. The 2-parallelogram property
is known as {\em Condition P}.
If we consider Condition P with $a_1 = a_2$, then we see that
there is no $a \in A$ such that there exists $b_1,b_2 \in B$
with $a=\frac{1}{2} b_1 + \frac{1}{2}b_2$ (and more generally, that $A$ and $B$ are 1-convex). 
Moreover, if $A$ and $B$ are separated by a hyperplane, then so are the convex hulls of $A$ and $B$, 
hence this implies that both $A$ and $B$ are hole-free, and in particular Condition~P applies to $A$ and $B$.
(See also~\cite[Lemma~2.12]{KashimuraNT13} for the more general Condition~H eluded to above.)

We can now provide the geometric interpretations of the properties of
lineal and Helly monomial rings. 
Let $R={\mathbb F}[\rx_1,\dots,\rx_n]/I$ be a monomial ring, let $Z=Z_E(R) \subseteq \NN^n$ be the
exponent set of $I$ and let $N=\overline{Z_E}(R)=\NN^n \setminus Z$ be the exponent set
of the non-zero monomials of $R$. 
Informally, we find that monomial lineal rings correspond roughly to the case
where both $N$ and $Z$ satisfy conditions of discrete convexity, and
monomial Helly rings correspond to a discrete convexity property
applying to $Z$ but not to $N$. However, the ``naive'' interpretation
of simply requiring $Z$ and/or $N$ to be hole-free is too restrictive. 

More formally, we have the following definitions.
We say that $R$ is {\em Z-hole-free}
if $Z$ is hole-free, i.e.\ $Exp(I) = cnv(Exp(I)) \cap {\mathbb Z}^n$,
and \emph{Z-1-convex} if $Z$ is 1-convex. 
Finally, we say that $R$ is \emph{fully convex} if $N$ and $Z$ are separated by a hyperplane.
We show relations and separations between these properties, illustrated in Figure~\ref{figure:link}.
In summary, we show the following.  
\begin{enumerate}
\item $R$ is lineal if and only if $(N,Z)$ satisfies Condition P.

\item The case where $R$ is fully convex is a natural special case of this.

\item While we do not have an exact characterization of monomial Helly
  rings, we sandwich it between two conditions. First, we show that if
  $R$ is Z-hole-free, then $R$ is Helly. Second, we show that if $R$ is Helly, then $Z$
  is 1-convex.  

\item Finally, we show some separation examples: We demonstrate
  monomial lineal rings $R$ (in fact, Bergen rings) where $N$ is not hole-free, and where $Z$ is
  not hole-free. This implies that not all monomial lineal rings are fully convex,
  and not all monomial Helly rings (or even monomial Bergen rings) are Z-hole-free.
\end{enumerate}

We promised in Section~\ref{sec:annihilators} to show that
$\ZZ_2[\rx,\ry] / (\rx^3,\rx\ry,\ry^3)$ is Helly.
This is very easy using the geometric approach together with the results outlined
above.
We illustrate
the exponent set $Z$ of $(\rx^3,\rx\ry,\ry^3)$ in Figure~\ref{fig:convex-hulls} (left): $Z$ consists of all integer points touching or
being inside the grey area. The dotted line illustrates the convex
hull of $Z$. We see that $Z=cnv(Z) \cap {\mathbb Z}^2$ so
$\ZZ_2[\rx,\ry] / (\rx^3,\rx\ry,\ry^3)$ is Z-hole-free and thus Helly. However, it is not lineal since $(1,1)+(1,1)=(0,2)+(2,0)$ contradicts Condition~P.
We take the opportunity to make another illustration.
Let us consider
the monomial ideal $I=(\rx^4,\rx\ry^2,\ry^3)$ of $\FF[\rx,\ry]$.
The convex hull of its exponent set $Z$ is depicted in
Figure~\ref{fig:convex-hulls} (right) and it shows that $Z \neq cnv(Z) \cap {\mathbb Z}^2$.
The ring ${\mathbb Z}_2[\rx,\ry]/I$ is thus not Z-hole-free but this does
not directly imply that the ring is non-Helly.
However, the points $(4,0)$ and $(2,2)$ are in $Z$ while $(3,1) \not\in Z$
and $(3,1)$ is in $cnv((4,0),(2,2))$. The exponent set of $I$ is thus not 1-convex
and the ring is not Helly.

The rest of the section is arranged as follows. 
To simplify the presentation, we use a support property of rings that we refer to as the \emph{no cancellations} property; see Section~\ref{sec:no-cancellation}.
Section~\ref{sec:lineal-geometry} studies monomial lineal rings, with Section~\ref{sec:fully-convex} digging deeper on fully convex rings. Section~\ref{sec:convex} contains the results on monomial Helly rings, and 
Section~\ref{sec:geometry-other} discusses connections to other notions from ring theory.

\begin{figure}
\centering
\begin{subfigure}{.5\textwidth}
  \centering
\begin{tikzpicture}
    \fill[gray,opacity=0.5] (0,0) -- (5,0) -- (5,4) -- (0,4) -- cycle;

    \draw[thick,->] (0,0) -- (5,0);
    \draw[thick,->] (0,0) -- (0,4);
    
    \node at (-0.2, -0.2) {0};
    \node at (4,    -0.2) {4};
    \node at (1,    -0.2) {1};
    \node at (2,    -0.2) {2};
    \node at (3,    -0.2) {3};
    \node at (-0.2,    1) {1};
    \node at (-0.2,    2) {2};
    \node at (-0.2,    3) {3};

    \draw[gray, fill=white] (0,0) rectangle (1,1);
    \draw[gray, fill=white] (1,0) rectangle (2,1);
    \draw[gray, fill=white] (2,0) rectangle (3,1);
    
   \draw[gray, fill=white] (0,1) rectangle (1,2);
    
    \draw[gray, fill=white] (0,2) rectangle (1,3);

    \draw[thick] (0,0) -- (0,3) -- (1,3) -- (1,1) -- (3,1) -- (3,0) -- cycle;

    \draw[dotted] (0,3) -- (1,1) -- (3,0);
    
\end{tikzpicture}
  \caption{$\FF[\rx,\ry] / (\rx^3,\rx\ry,\ry^3)$}
  \label{fig:sub1-second}
\end{subfigure}%
\begin{subfigure}{.5\textwidth}
  \centering
 \begin{tikzpicture}
    \fill[gray,opacity=0.5] (0,0) -- (5,0) -- (5,4) -- (0,4) -- cycle;

    \draw[thick,->] (0,0) -- (5,0);
    \draw[thick,->] (0,0) -- (0,4);
    
    \node at (-0.2, -0.2) {0};
    \node at (4,    -0.2) {4};
    \node at (1,    -0.2) {1};
    \node at (2,    -0.2) {2};
    \node at (3,    -0.2) {3};
    \node at (-0.2,    1) {1};
    \node at (-0.2,    2) {2};
    \node at (-0.2,    3) {3};

    \draw[gray, fill=white] (0,0) rectangle (1,1);
    \draw[gray, fill=white] (1,0) rectangle (2,1);
    \draw[gray, fill=white] (2,0) rectangle (3,1);
    \draw[gray, fill=white] (3,0) rectangle (4,1);
    
    \draw[gray, fill=white] (0,1) rectangle (1,2);
    \draw[gray, fill=white] (1,1) rectangle (2,2);
    \draw[gray, fill=white] (2,1) rectangle (3,2);
    \draw[gray, fill=white] (3,1) rectangle (4,2);
    
    \draw[gray, fill=white] (0,2) rectangle (1,3);

    \draw[dotted] (0,3) -- (1,2) -- (1,3) -- cycle;
    \draw[dotted] (1,2) -- (4,2) -- (4,0) -- cycle;

    \draw[thick] (0,0) -- (0,3) -- (1,3) -- (1,2) -- (4,2) -- (4,0) -- cycle;

 \draw[thick,->] (6,4) -- (3,1);

\node[draw] at (6.3,4.3) {{\bf !}};
   
\end{tikzpicture}
  \caption{$\FF[\rx,\ry] / (\rx^4,\rx^2\ry,\ry^3)$}
  \label{fig:sub2-second}
\end{subfigure}
\caption{Two rings and the exponent sets of $Z_E(R)$. The convex hull of $Z_E(R)$ is marked by a dotted line.}
\label{fig:convex-hulls}
\end{figure}
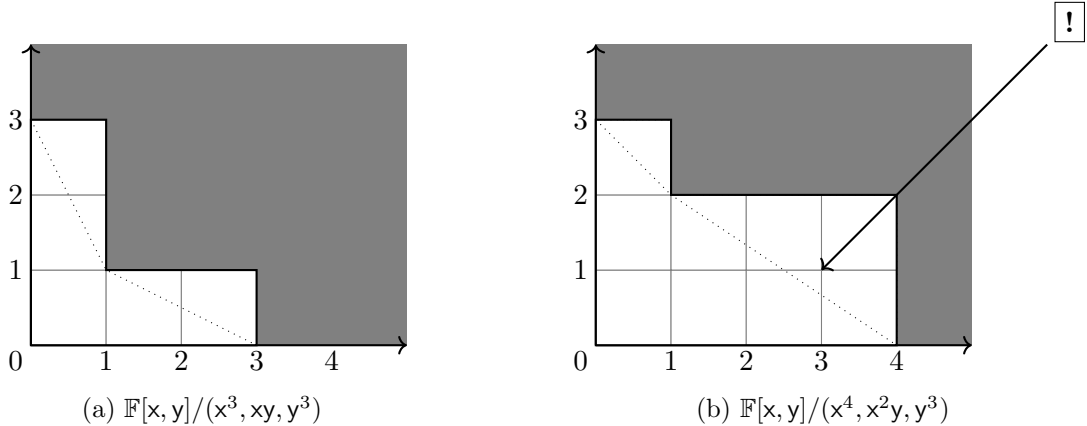

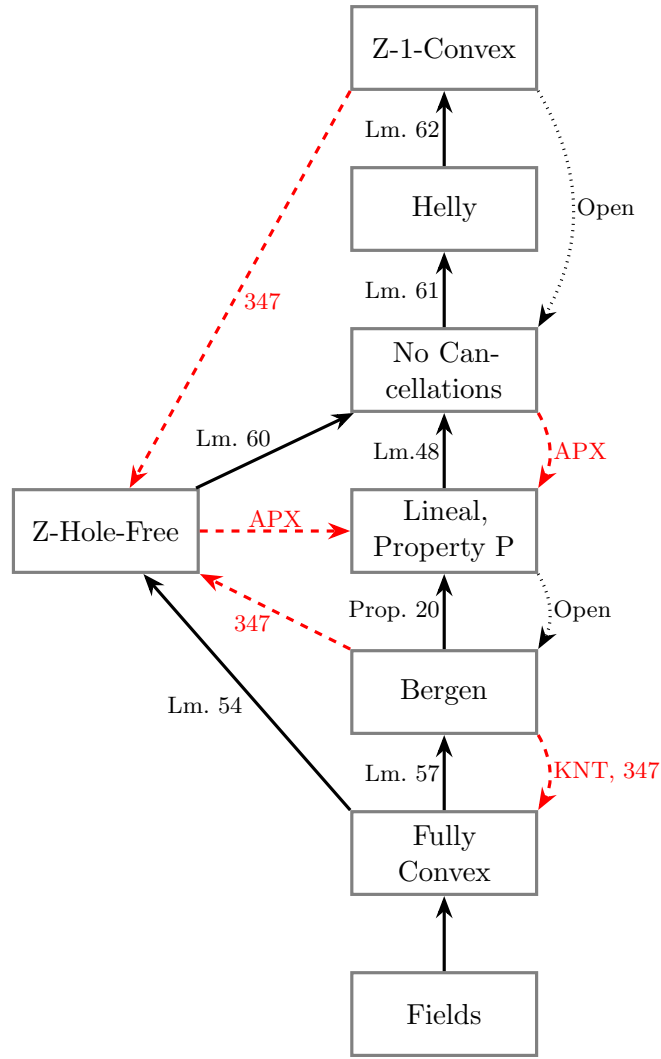
\begin{figure}
  \centering
    \begin{tikzpicture}[auto,
    > = Stealth, 
    node distance = 10mm and 20mm,
    box/.style = {draw=gray, very thick,
      minimum height=11mm, text width=22mm, 
      align=center},
    every edge/.style = {draw, ->, very thick},
    every edge quotes/.style = {font=\footnotesize, align=center, inner sep=1pt},
    counter/.style = {bend=left, color=red}
    ]
    \node (n1) [box]               {Fully Convex};
    \node (nf) [box, below=of n1]  {Fields};
    \node (n2) [box, above=of n1]  {Bergen};
    \node (n3) [box, above=of n2]  {Lineal, Property P};
    \node (n4) [box, above=of n3]  {No Cancellations};
    \node (n5) [box, above=of n4]  {Helly};
    \node (n6) [box, above=of n5]  {Z-1-Convex};
    \node (nz) [box, above left=of n2] {Z-Hole-Free};
    \draw
    (nf) edge 
    (n1)
    (n1) edge ["Lm.~\ref{lm:fully-convex-to-bergen}"] (n2)
    (n2) edge ["Prop.~\ref{prop:bergenislineal}"] (n3)
    (n3) edge ["Lm.\ref{lm:lineal-implies-nc}"] (n4)
    (n4) edge ["Lm.~\ref{lm:no-cancel-helly}"] (n5)
    (n1.north west) edge ["Lm.~\ref{lm:characterize-fullyconvex}"] (nz)
    (nz) edge ["Lm.~\ref{lm:zhf-to-nc}"] (n4)
    (n5) edge ["Lm.~\ref{lm:helly-to-z1c}"] (n6);    
    \draw[color=red,dashed,bend left]
    (n2.south east) edge ["KNT{,} 347"] (n1.north east)
    (n4.south east) edge ["APX"] (n3.north east)
    ;
    \draw[color=red,dashed]    
    (n2.north west) edge ["347"] (nz.south east)
    (n6.south west) edge ["347"] (nz)
    (nz) edge ["APX"] (n3);
    \draw[dotted, bend left]
    (n3.south east) edge ["Open"] (n2.north east)
    (n6.south east) edge ["Open"] (n4.north east);
    
  \end{tikzpicture}
  \caption{Known implications and open questions for lineal and Helly
    monomial rings. Solid black edges represent implications, red
    dashed edges represent counterexamples. 
    ``KNT'', ``347'' and ``APX'' refers to
    implications where the $R_{\rm KNT}$, $R_{347}$, and
    $R=\FF[\rx,\ry]/(\rx^3,\rx\ry,\ry^3)$ is a counterexample, respectively.
  }
  \label{figure:link}
\end{figure}

\subsection{The No Cancellations Property}
\label{sec:no-cancellation}
To support the forthcoming proofs and discussions, we consider a property of monomial rings
that we refer to as \emph{no cancellations}. 

\begin{definition}
  A monomial ring $R$ has the \emph{no cancellations property} if
  $pq=0$ for $p, q \in R$ implies that $m_pm_q=0$ for every pair of
  monomials $m_p$ in $p$ and $m_q$ in $q$.
\end{definition}

We note that not every monomial ring has the no cancellations property.
Consider the ring $R=\ZZ_2[\rx,\ry]/(\rx^2,\ry^2)$.
This has a zero product $(\rx+\ry)(\rx-\ry)=\rx^2+\rx\ry-\rx\ry+\ry^2=0$ (where naturally $+$ and $-$ are the same operation 
over characteristic 2, but the statement holds also if $\ZZ_2$ is replaced by an arbitrary field $\FF$).
Thus $R$ does not have the no cancellation property since $\rx\ry \neq 0$. 
However, this ring is non-Helly as noted above, implying that \minlin{2}{R} is \W{1}-hard to approximate.

We show that every lineal monomial ring has the no cancellations property.
In Section~\ref{sec:convex} we show that the no cancellations property implies that the ring is Helly. 

\begin{lemma} \label{lm:lineal-implies-nc}
  Every lineal monomial ring has the no cancellations property. 
\end{lemma}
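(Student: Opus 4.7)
The plan is to prove the contrapositive: assuming $R = \FF[\rx_1,\ldots,\rx_n]/I$ is a monomial ring lacking the no cancellations property, I exhibit a magic square and thereby show $R$ is not lineal (using Proposition~\ref{prop:one-elem-lineal} together with the magic-square reformulation of lineality from Section~\ref{sec:annihilators}). First I select a minimal witness: pick $p, q \in R$ with $pq = 0$ such that for some monomials $m_p \in \supp(p)$ and $m_q \in \supp(q)$ one has $m_p m_q \neq 0$ in $R$, and minimize $|\supp(p)| + |\supp(q)|$. A direct appeal to Proposition~\ref{prop:mr-unique-representation} forces both $|\supp(p)| \geq 2$ and $|\supp(q)| \geq 2$: otherwise, if, say, $p = a\,m_p$ were a single monomial, then $pq = \sum_j a b_j\, m_p \cdot m_q^{(j)}$ would be a sum of distinct non-zero monomials in $R$, so $pq = 0$ would force every $m_p \cdot m_q^{(j)} = 0$, contradicting $m_p m_q \neq 0$.

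Next I exploit $pq = 0$ at the level of coefficients. Since $I$ is a monomial ideal, $pq = 0$ in $R$ is equivalent to the statement that, for every non-zero monomial exponent $\mu \in \overline{Z_E}(R)$, the total coefficient $\sum_{(i,j)\,:\,\alpha_i + \beta_j = \mu} a_i b_j$ in the polynomial expansion of $pq$ vanishes. Applying this at $\mu$ equal to the exponent of the non-zero monomial $m_p m_q$, I get the contribution $a_p b_q \neq 0$ from the pair $(m_p, m_q)$, so there must exist another pair $(m_p', m_q')$, with $m_p' \in \supp(p)$, $m_q' \in \supp(q)$, $(m_p', m_q') \neq (m_p, m_q)$ and $m_p' m_q' = m_p m_q$ in $R$; necessarily both $m_p' \neq m_p$ and $m_q' \neq m_q$.

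I then case-split on the two ``cross products'' $m_p m_q'$ and $m_p' m_q$. If both are zero in $R$, the quadruple $(a,b,c,d) := (m_p, m_q', m_p', m_q)$ satisfies $ab = cd = 0$ alongside $ad = m_p m_q \neq 0$ and $bc = m_p' m_q' = m_p m_q \neq 0$, immediately giving the magic square forbidden by lineality. Otherwise one cross product is non-zero; without loss of generality $m_p m_q' \neq 0$. In this case I re-apply the coefficient-vanishing analysis at the non-zero monomial $m_p m_q' \in \overline{Z_E}(R)$, producing a further collision pair, and iterate.

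The main obstacle is exactly this second case: the naive four-monomial configuration no longer directly yields a magic square, so the argument must track the expanding family of collision pairs and combine two ingredients: the total order on one-element annihilators given by lineality (via Proposition~\ref{prop:one-elem-lineal}), which controls which cross products can or must be non-zero once the ``easy'' case fails, and the minimality of $(p, q)$, which lets me rule out configurations that would yield a strictly smaller counterexample obtained by restricting $p$ or $q$ to a proper subset of its support. Carrying out the induction, every branch terminates either in a magic-square quadruple among the monomials encountered or in a contradiction with minimality, finishing the proof.
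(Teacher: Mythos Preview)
Your case split has a genuine gap. In case~2 (one cross product non-zero), you say ``iterate'' and then assert that ``every branch terminates either in a magic-square quadruple among the monomials encountered or in a contradiction with minimality,'' but you never specify the invariant that decreases, nor how minimality or the annihilator order actually closes a branch. Worse, you invoke ``the total order on one-element annihilators given by lineality'' while proving the contrapositive (not-no-cancellations $\Rightarrow$ not-lineal), so lineality is not available to you; if you intend a nested proof by contradiction, you still need to show concretely how the chain condition on annihilators forces a cross product to vanish, and you do not.

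The paper sidesteps all of this with two moves you are missing. First, minimality is used more aggressively: from a minimal counterexample $pq=0$, one gets $m_p q \neq 0$ for \emph{every} monomial $m_p$ of $p$ and $p m_q \neq 0$ for every monomial $m_q$ of $q$ (if $m_p q = 0$, drop $m_p$ from $p$; the non-zero monomial products survive, contradicting minimality). Second, instead of starting from a non-zero product $m_p m_q$, the paper looks for a \emph{zero} one: the lexicographically smallest monomials $m_p$ of $p$ and $m_q$ of $q$ produce the monomial $m_p m_q$ uniquely in the expansion of $pq$, so $pq=0$ forces $m_p m_q = 0$. Now $(p,q,m_p,m_q)$ is the magic square directly: $pq = 0$, $m_p m_q = 0$, $p m_q \neq 0$, $m_p q \neq 0$. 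The key difference is that the paper allows the full elements $p$ and $q$ in the square rather than insisting on four monomials; once you permit that, no iteration is needed.
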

\begin{proof}
  Let $R$ be a lineal, monomial ring and assume that $R$ does not have the no cancellations property. 
  First we note that for any monomial $m$ and any $r \in R$
  we have $mr=0$ if and only if $mm'=0$ for every monomial $m'$ occurring in $r$. 
  Indeed, write $r=\sum_i a_im_i$ as a sum of distinct non-zero monomials $m_i$, where $a_i>0$ is an integer coefficient for each $i$.
  Then $mr=\sum_i a_i (mm_i)$ is a representation of 0, which by Proposition~\ref{prop:mr-unique-representation} implies 
  that $mm_i=0$ for every monomial $m_i$ in $r$. 
  Now let $pq=0$ be a counterexample to the no cancellations property of $R$,
  chosen to be minimal in the sense of the cardinality of the monomial supports of $p$ and $q$.
  This implies that for every monomial $m_p$ in $p$, we have $m_pq \neq 0$. 
  Indeed, assume $m_pq=0$ and let $a$ be the coefficient of $m_p$ in $p$. 
  Let $p'=p-am_p$. Then $p'q=0$ is a smaller counterexample than $pq=0$
  (note that $p'q$ still retains all non-zero monomial products from $pq$).
  Thus we assume that $m_pq, pm_q \neq 0$ for all monomials $m_p$ in $p$ and $m_q$ in $q$. 
  We furthermore claim that $pq$ contains at least one zero monomial product.
  Indeed, let $m_p$ and $m_q$ be the earliest monomials in some monomial order (e.g. lexicographic order);
  then the monomial $m_pm_q$ is the unique earliest monomial of $pq$ in the same order, which 
  implies that it is produced only once. Under our assumption that $R$ is a monomial ring,
  over some base field $\FF$, the coefficient of $m_pm_q$ is non-zero in $pq$; hence
  $m_pm_q=0$. 
  Now we have a counterexample to lineality of $R$ in $pq=0$, $pm_q \neq 0$, $m_pq \neq 0$, $m_pm_q=0$,
  contrary to assumptions. 
\end{proof}

We also observe an equivalent characterization that illustrates how
annihilators enter the picture.

\begin{lemma} \label{lm:nc-gives-monomial-ideals}
  $R$ has the no cancellations property if and only if every
  annihilator of $R$ is a monomial ideal.
\end{lemma}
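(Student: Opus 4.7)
The plan is to prove both directions by repeatedly exploiting the unique-representation property from Proposition~\ref{prop:mr-unique-representation}, together with the standard fact that an ideal $J$ in a monomial ring is monomial if and only if it is closed under taking monomial summands, i.e.\ whenever $r = \sum_i a_i m_i \in J$ (with distinct non-zero monomials $m_i$ and non-zero coefficients $a_i \in \FF$), each $m_i$ also lies in $J$.

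For the forward direction, I would assume $R$ has the no cancellations property and fix an arbitrary $a \in R$; the goal is to show $\Ann(a)$ is closed under monomial summands. Take any $x \in \Ann(a)$ and write $x = \sum_i b_i n_i$ as in Proposition~\ref{prop:mr-unique-representation}. Since $ax = 0$, the no cancellations property says that for every monomial $m$ appearing in $a$ and every monomial $n_i$ appearing in $x$, we have $m \cdot n_i = 0$. Because $a$ itself is a sum of these monomials $m$, it follows that $a \cdot n_i = 0$ for each $i$, i.e.\ $n_i \in \Ann(a)$. Hence $\Ann(a)$ is a monomial ideal.

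For the reverse direction, I would assume every annihilator of $R$ is a monomial ideal and verify the no cancellations property directly. Suppose $pq = 0$, and write $p = \sum_i a_i m_i^{(p)}$ and $q = \sum_j b_j m_j^{(q)}$ in their unique monomial forms. From $pq = 0$ we get $p \in \Ann(q)$. Since $\Ann(q)$ is a monomial ideal, applying the characterization above to $p$ gives $m_i^{(p)} \in \Ann(q)$ for every $i$, i.e.\ $m_i^{(p)} q = 0$. Now for each fixed $i$, this says $q \in \Ann(m_i^{(p)})$, which is again a monomial ideal by hypothesis; applying the same characterization to $q$ yields $m_i^{(p)} m_j^{(q)} = 0$ for every $j$. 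This is exactly the no cancellations property.

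The one subtlety to be careful about is the standing characterization that a monomial-closed ideal equals a monomial ideal; this relies on uniqueness of the monomial representation (Proposition~\ref{prop:mr-unique-representation}), so that when we decompose an element of $J$ as $\sum a_i m_i$ the individual summands are intrinsic to the element and not an artefact of the representation. No serious obstacle is anticipated beyond keeping this bookkeeping straight, and the argument is completely symmetric in the two factors $p$ and $q$, so a single application pattern suffices.
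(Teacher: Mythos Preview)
Your proof is correct and takes essentially the same approach as the paper: both directions hinge on the characterization that an ideal in a monomial ring is monomial precisely when it is closed under taking monomial summands (which you correctly ground in Proposition~\ref{prop:mr-unique-representation}). The paper runs one direction by contrapositive and you do both directly, and in the reverse direction you apply monomiality twice (to $\Ann(q)$ and then to $\Ann(m_i^{(p)})$) where the paper's contrapositive needs it only once, but these are cosmetic differences.

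One small omission: the lemma speaks of \emph{every} annihilator, which in this paper includes $\Ann(A)$ for arbitrary subsets $A \subseteq R$, whereas you only treat one-element annihilators $\Ann(a)$. The paper closes this gap in one line by observing that $\Ann(A)=\bigcap_{a\in A}\Ann(a)$ and that intersections of monomial ideals are monomial; add the same remark and your argument is complete.
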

\begin{proof}
On the one hand, let $pq=0$ be a non-trivial cancellation in $R$.
Then $p \in \Ann(q)$ but $p$ is not contained in the ideal generated
by only the monomials of $\Ann(q)$ (since there by assumption is a
monomial $m_p$ in $p$ such that $m_pq \neq 0$).

On the other hand, assume that $R$ has the no cancellations property.
It suffices to consider annihilators $\Ann(r)$ for $r \in R$ since
intersections of monomial ideals are monomial. 
Let $I \subseteq \Ann(r)$ be the ideal generated by the monomials in $\Ann(r)$,
and assume that the inclusion is strict. By definition, $I$ contains precisely
all monomials $m$ such that $mr=0$. Thus there is an element
$p \in \Ann(r) \setminus I$ which contains at least one monomial $m_p$
such that $m_pr \neq 0$, yet $pr=0$. This contradicts the no cancellations
property of $R$ so $I=\Ann(r)$ and every annihilator in $R$ is a monomial ideal.
\end{proof}

\subsection{Lineal Rings}
\label{sec:lineal-geometry}

As announced, a monomial ring $R$ over $n$ variables is lineal if and only if the separation $(N,Z)$ 
of $\NN^n$ into exponents of non-zero and zero monomials meets Condition~P. 
Let us prove this fact. 

\begin{lemma}
  Let $R=\mathbb{F}[\rx_1,\dots,\rx_n]/I$ be a monomial ring.
  Let $Z=Exp(I)$ be the exponent set of $I$ and $N=\NN^n \setminus Z$.
  The following are equivalent.
  \begin{enumerate}
  \item $R$ is non-lineal
  \item There are points $a, b, c, d \in \NN^n$ such that $a+b, c+d \in Z$
    and $a+d, b+c \in N$
  \item There is a pair of lines $p_1$-$p_2$ and $q_1$-$q_2$
    with endpoints $p_1, p_2 \in Z$ and $q_1, q_2 \in N$ whose midpoints intersect.
  \item Condition P is false for $N$ and $Z$, i.e.
    there are $p_1, p_2 \in Z$ and $q_1, q_2 \in N$ such that $p_1+p_2=q_1+q_2$
  \end{enumerate}
\end{lemma}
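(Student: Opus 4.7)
The plan is to establish the cycle $(2) \Rightarrow (1) \Rightarrow (4) \Rightarrow (2)$ together with $(3) \Leftrightarrow (4)$. The implication $(3) \Leftrightarrow (4)$ is immediate, since the midpoints of the segments $p_1 p_2$ and $q_1 q_2$ coincide if and only if $p_1 + p_2 = q_1 + q_2$. For $(2) \Rightarrow (4)$, I would set $p_1 = a+b$, $p_2 = c+d \in Z$ and $q_1 = a+d$, $q_2 = b+c \in N$, noting $p_1 + p_2 = q_1 + q_2 = a+b+c+d$. For the converse $(4) \Rightarrow (2)$, I plan to solve the system $a+b=p_1$, $c+d=p_2$, $a+d=q_1$, $b+c=q_2$ coordinate-by-coordinate in $\NN^n$: in each coordinate, setting $a = \max(0, q_1 - p_2)$ and propagating $b = p_1 - a$, $d = q_1 - a$, $c = p_2 - d$ yields non-negative values, using the identity $p_1 + p_2 = q_1 + q_2$ and non-negativity of the $p_i, q_j$.

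For $(2) \Rightarrow (1)$, I would take the monomials $\rx^a, \rx^b, \rx^c, \rx^d \in R$: the sum conditions give $\rx^a \rx^b = \rx^c \rx^d = 0$ and $\rx^a \rx^d, \rx^b \rx^c \neq 0$ in $R$, so the magic square property witnesses non-lineality.

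The main direction is $(1) \Rightarrow (4)$, which I plan to prove by contrapositive: if Condition P holds then $R$ is lineal. The argument splits into two steps. The first is to show that Condition P implies all monomial annihilators $\Ann(\rx^\alpha)$ in $R$ are totally ordered by inclusion: if $\Ann(\rx^\alpha)$ and $\Ann(\rx^\gamma)$ were incomparable, I could pick monomials $\rx^\beta \in \Ann(\rx^\alpha) \setminus \Ann(\rx^\gamma)$ and $\rx^\delta \in \Ann(\rx^\gamma) \setminus \Ann(\rx^\alpha)$, yielding exponents with $\alpha + \beta, \gamma + \delta \in Z$ and $\alpha + \delta, \beta + \gamma \in N$, which violates Condition P since $(\alpha+\beta) + (\gamma+\delta) = (\alpha+\delta) + (\beta+\gamma)$.

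The second and technically deeper step, which is the hard part, is to show that if all monomial annihilators form a chain, then every annihilator $\Ann(r)$ for $r \in R$ coincides with $\Ann(m)$ for some monomial $m$ in the support of $r$; the annihilator lattice of $R$ then collapses to this monomial chain, making $R$ lineal. My plan here is to let $m$ be the monomial in the support of $r$ with smallest annihilator (well-defined by the chain assumption) and to fix a lex-style term order $\prec$, so that for any $s \in \Ann(r)$ the $\prec$-leading monomial $n$ of $s$ is pinned by examining the smallest monomial appearing in $sr$: since $nm$ would be the unique smallest contribution, forcing $nm = 0$, i.e., $n \in \Ann(m)$. The crucial payoff of the chain assumption is that $\Ann(m) \subseteq \Ann(m_i)$ for every monomial $m_i$ of $r$, so $n$ annihilates every monomial of $r$ and hence $nr = 0$; stripping $n$ from $s$ and inducting on the number of monomial terms then yields $s \in \Ann(m)$. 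The reverse inclusion $\Ann(m) \subseteq \Ann(r)$ is analogous, using the chain condition in the same way. The main obstacle will be in making this term-order argument propagate cleanly through the induction, particularly ensuring that the minimality of $m$ and the compatibility of the order with the annihilator chain are maintained when cancellations occur in $sr$.
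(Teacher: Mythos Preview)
Your plan for the equivalences $(2)\Leftrightarrow(3)\Leftrightarrow(4)$, for $(4)\Rightarrow(2)$, and for $(2)\Rightarrow(1)$ is correct and matches the paper; these are indeed routine. Your step~1 for the contrapositive of $(1)\Rightarrow(4)$ is also fine: since each $\Ann(\rx^\alpha)$ is a monomial ideal, incomparability is witnessed by monomials, and this is exactly a violation of Condition~P.

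The genuine gap is in step~2, specifically the claim that ``$nm$ would be the unique smallest contribution'' to $sr$. In any monomial order the unique smallest monomial of $sr$ is $n\cdot m_{\min}$, where $m_{\min}$ is the $\prec$-smallest monomial of $r$, not your $m$ (the monomial with smallest annihilator). These need not coincide, and in general no monomial order makes them coincide: for $r=\rx^2+\rx\ry+\ry^2$, no admissible order has $\rx\ry\prec\rx^2$ and $\rx\ry\prec\ry^2$. So you only obtain $n\in\Ann(m_{\min})$, and since $\Ann(m)\subseteq\Ann(m_{\min})$ (not the reverse), this does not give $n\in\Ann(m)$. The induction therefore never starts: you cannot conclude $nr=0$, so you cannot strip $n$ and recurse. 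The difficulty you flag (``propagating through the induction'') is downstream of this more basic obstruction.

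The paper takes a different route for $(1)\Rightarrow(2)$. Rather than proving ``monomial-annihilator chain $\Rightarrow$ every $\Ann(r)$ is a monomial annihilator'', it isolates and proves the \emph{no cancellations} property (Lemma~\ref{lm:lineal-implies-nc}): in a lineal monomial ring, $pq=0$ forces $m_pm_q=0$ for all monomials $m_p$ in $p$ and $m_q$ in $q$. The proof there uses a minimal-counterexample reduction together with a leading-term argument in a way that avoids the mismatch above. With no-cancellations in hand, $(1)\Rightarrow(2)$ is immediate: from any non-magic square $a,b,c,d\in R$ with $ad,bc\neq 0$, pick monomials $a',d'$ in $a,d$ with $a'd'\neq 0$ and $b',c'$ in $b,c$ with $b'c'\neq 0$; no-cancellations applied to $ab=0$ and $cd=0$ gives $a'b'=c'd'=0$, yielding the exponent witnesses in~(2). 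What you are really attempting in step~2 is a reproof of no-cancellations (your claim $\Ann(r)=\Ann(m)$ is equivalent to it via Lemma~\ref{lm:nc-gives-monomial-ideals}); the paper's extraction of this as a standalone lemma, proved by its own leading-term argument, is the cleaner organization.
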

\begin{proof}
  The second clearly implies the first since $X^a$, \dots, $X^d$ form
  a non-magic square. On the other hand, let $a, b, c, d \in R$ form a non-magic
  square so that $ab=0$, $cd=0$, $ad \neq 0$ and $bc \neq 0$. 
  We may assume that $R$ has the no cancellations property, as
  otherwise it is non-lineal. Thus, let $a', b', c', d'$ be exponent vectors
  of monomials in $a, b, c, d$ (respectively) such that $X^{a'+d'} \neq 0$
  and $X^{b'+c'} \neq 0$. Then $a'+d', b'+c' \notin Z$ and by the no
  cancellations property we have $a'+b', c' + d' \in Z$. 
  The remaining two conditions are reformulations of the second item
  and can be verified by straight-forward algebra.
\end{proof}

As noted in Section~\ref{sec:monomial-rings}, these conditions are discrete approximations of convexity.
Indeed, one special case of two crossing lines is a zero point
$p_1=p_2$ that lies on a line between non-zero points $q_1$-$q_2$, or
vice versa; more strongly, lineality implies that both $N$ and
$Z$ are 1-convex.  However, this characterization is not tight,
and in fact there are non-lineal rings where both $N$ and $Z$ are hole-free;
see Appendix~\ref{sec:counter:double-hole-free}.
We next consider the special case of fully convex rings, 
but to set the stage we first consider an alternative characterization of lineality.

\subsubsection{Threshold Labellings}
\label{sec:threshold}

We begin by providing an alternative characterization of lineality for arbitrary
finite commutative rings. Let $R$ be a finite commutative ring.
A \emph{threshold labelling} for $R$ is
a real-valued labelling $L \colon R \to [0,T]$ for a threshold $T$
such that for every $u,v \in R$, $uv=0$ if and only if
$L(u)+L(v)\geq T$. 
As a special case for monomial rings, a \emph{monomial threshold labelling} is defined as follows.
Let $M$ be the set of non-zero monomials in $R$ 
and let $L_m \colon M \to [0,T)$ for a threshold $T$
be a ``threshold labelling over $M$'', i.e. for any $a, b \in M$
we have $L_m(a)+L_m(b) \geq T$ if and only if $ab=0$.
Extend $L_m$ to $R$ via
\[
  L(u) = \min_{m \in u} L_m(u) \text{ with } L(0)=T
\]
where $m$ ranges over all monomials occurring in $u$; this is well-defined via Proposition~\ref{prop:mr-unique-representation}.
Then $L$ is a monomial threshold labelling for $R$.
We will show that $R$ is lineal if and only if it has a threshold labelling,
and that this can be assumed to be monomial for monomial rings. 

We prove that every lineal ring has a threshold labelling by using a particular graph.
A {\em zero-divisor graph} is an 
undirected graph representing the zero divisors of a commutative ring. For instance, it 
may have the elements of the ring as its vertices, and pairs of elements whose product is zero as its edges~\cite{Beck:jal1988}. We introduce a variation of this idea.
The {\em bipartite zero-divisor graph} $G$ with $V(G)=A \uplus B$ is constructed
as follows. Let $A$ and $B$ contain the elements in $R$. Join an element $r \in A$ and an element $s \in B$ if and 
only if $rs=0$. The magic square property implies that a ring is non-lineal if and only if the bipartite zero-divisor graph
contains a $2K_2$. We can connect bipartite zero-divisor graphs with
threshold labellings by utilizing {\em difference graphs}.
  A difference graph is an undirected graph $G$ with vertex labelling
  $\alpha \colon V(G) \to \reals$ and a threshold $T \in \reals$ such that
  $|\alpha(v)| < T$ for every vertex $v$, and $uv$ is an edge if and only if $|\alpha(u)-\alpha(v)| \geq T$.
  It is known that difference graphs are precisely the $2K_2$-free
  bipartite graphs~\cite[Lemma~2.2]{HammerPS90}. Hence, there is such a labelling for 
  the bipartite zero-divisor graph of $R$ if $R$ is lineal.

\begin{lemma}
  A finite commutative ring $R$ is lineal if and only if it has a
  threshold labelling. If $R$ is furthermore a monomial ring, then the threshold labelling can be assumed to be monomial.
\end{lemma}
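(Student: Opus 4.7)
The plan is to prove both directions separately and then specialize to the monomial case.

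For the easy direction (threshold labelling $\Rightarrow$ lineal), I will contradict the magic square characterization of lineality. Suppose $L : R \to [0,T]$ is a threshold labelling and $a,b,c,d \in R$ form a magic square with $ab = cd = 0$ and $ad, bc \neq 0$. Then the threshold conditions give $L(a) + L(b) \geq T$ and $L(c) + L(d) \geq T$, whose sum is $L(a)+L(b)+L(c)+L(d) \geq 2T$; on the other hand $L(a)+L(d) < T$ and $L(b)+L(c) < T$ force the same sum to be $< 2T$, a contradiction.

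For the converse (lineal $\Rightarrow$ threshold labelling), the plan is to exploit the already-noted fact that the bipartite zero-divisor graph $G$ of $R$ (with $V(G) = A \uplus B$ two copies of $R$ and an edge $r_A s_B$ iff $rs = 0$) is $2K_2$-free precisely when $R$ is lineal. The cited characterization of $2K_2$-free bipartite graphs as difference graphs then supplies a labelling $\alpha : V(G) \to \RR$ and threshold $T$ with $uv \in E(G)$ iff $|\alpha(u)-\alpha(v)| \geq T$. After translating labels so that $\alpha$ is non-negative on $A$ and non-positive on $B$, setting $L_A(r) = \alpha(r_A)$ and $L_B(r) = -\alpha(r_B)$ turns the condition into $rs = 0 \Leftrightarrow L_A(r) + L_B(s) \geq T$.

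The main obstacle is that this construction hands back two labellings rather than one symmetric labelling on $R$. I will overcome it by appealing to commutativity of $R$: since $rs = sr$, the equivalences $L_A(r)+L_B(s) \geq T$ and $L_A(s)+L_B(r) \geq T$ are both equivalent to $rs = 0$, so they always agree in truth value. Setting $L(r) := \tfrac{1}{2} (L_A(r) + L_B(r))$, the sum $L(u)+L(v)$ equals the average of $(L_A(u)+L_B(v))$ and $(L_A(v)+L_B(u))$; both parenthesized terms are simultaneously $\geq T$ or simultaneously $< T$, so $L(u)+L(v) \geq T \Leftrightarrow uv = 0$. Finally, I will redefine $L(0) := T$ to ensure the closed range $[0,T]$ is respected at the boundary.

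For the monomial refinement, I will first invoke Lemma~\ref{lm:lineal-implies-nc} so that $R$ has the no-cancellations property, then repeat the same bipartite construction on two copies of the set $M$ of non-zero monomials: a $2K_2$ in this restricted graph would be a magic square of monomials and hence of $R$, contradicting lineality. The averaging trick thus produces a symmetric $L_m : M \to [0,T)$. Extending via $L(u) = \min_{m \in \supp(u)} L_m(m)$ for $u \neq 0$ and $L(0) = T$, correctness reduces to the identity $uv = 0 \Leftrightarrow L(u)+L(v) \geq T$ for non-zero $u,v$; by no cancellations, $uv = 0$ is equivalent to every monomial product $m_u m_v$ being zero, hence to $L_m(m_u)+L_m(m_v) \geq T$ for all monomials $m_u \in \supp(u)$ and $m_v \in \supp(v)$, which separates as a minimum equal to $L(u)+L(v) \geq T$.
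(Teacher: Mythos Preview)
Your proof is correct and follows essentially the same route as the paper: the magic-square contradiction for the backward direction, and the difference-graph characterization of $2K_2$-free bipartite graphs followed by symmetrizing the two one-sided labels (you average, the paper sums and doubles the threshold---these are equivalent). The only minor deviation is in the monomial refinement: the paper first obtains a threshold labelling on all of $R$ and then restricts it to monomials before re-extending by $\min$, whereas you rebuild the bipartite construction directly on the set of non-zero monomials; both arguments rely on the no-cancellations property in the same way and are of comparable length. Your explicit redefinition $L(0):=T$ is harmless in the general case (the symmetrized label already satisfies the threshold condition at $0$) and is genuinely needed in your monomial variant since $0\notin M$.
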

\begin{proof}
  {\bf Forward direction.}
  Assume that $R$ is lineal and
  let $G$ be the bipartite zero-divisor graph of $R$ with $V(G)=A \uplus B$.
  We know that $G$ is $2K_2$-free so there is a function $L:A \cup B \rightarrow \reals$
  and a $T \in \reals$ such that $L(v) < T$ for every $v \in A \cup B$ and $|L(u)-L(v)| \geq T$ for every edge $uv \in E$.
  We construct a threshold labelling $L',T'$ with the aid of $L,T$.
  Assume without loss of generality that $L$ is non-negative on $A$ and non-positive on $B$
  in the bipartition of the zero-divisor graph.
Define $L_A : A \rightarrow \reals^+$ such that
$L_A(u)=L(u)$ and $L_B : B \rightarrow \reals^+$ such that $L_B(u)=-L(u)$.
Now, for any $u, v \in R$ we
  have $uv=0$ if and only if $L_A(u)+L_B(v) \geq T$ if and only if
  $L_A(v)+L_B(u) \geq T$. 
  By defining $L'(u)=L_A(u)+L_B(u)$ and setting $T'=2T$, we see that
  $[0,T']$ is the range of $L'$ and for arbitrary $u,v \in R$, $uv=0$
  if and only if $L'(u)+L'(v) \geq T'$.

\smallskip

\noindent
  {\bf Backward direction.} Let $L$ and $T$ be a threshold labelling for $R$ and arbitrary
  choose elements $a,b,c,d \in R$ such
  that
  $ab=0$ and $cd=0$. 
  Then $L(a)+L(b)+L(c)+L(d) \geq 2T$ and at least one of
  $L(a)+L(d) \geq T$ or $L(b)+L(c) \geq T$ holds,
  implying, respectively, $ad=0$ or $bc=0$. Hence, a threshold labelling implies
  the magic square property of $R$ and $R$ is lineal.

  \smallskip
  \noindent 
    {\bf Monomial rings.} Finally, we show that if $R$ is monomial then it suffices
    to consider monomial threshold labellings. Assume that $R$ is monomial and lineal.
    By Lemma~\ref{lm:lineal-implies-nc}, $R$ has the no cancellations property, 
    i.e.\ for any $r, s \in R$ we have $rs=0$ if and only if $m_rm_s=0$
    for every pair of monomials $m_r$ occurring in $r$ and $m_s$ occurring in $s$.
    Let $L$ be a threshold labelling of $R$. Then for any $r, s \in R$ we have
    \[
        rs = 0 \text{ if and only if } \min_{m_r \in r} L(m_r) + \min_{m_s \in s} L(m_s) \geq T,
    \]
    taking the empty minimization $\min_{m \in 0} L(m)$ to return $T$. 
    Then clearly we can use $L$, restricted to only monomials of $R$, 
    as the basis for a derived monomial threshold labelling $L'(r)=\min_{m \in r} L(m)$.
\end{proof}

In particular, a monomial threshold function $L$ for a monomial ring $R=\mathbb{F}[\rx_1,\dots,\rx_n]/I$,
can be viewed as a function $f \colon \integers_{\geq 0}^n \to \QQ$
where $f(a_1,\ldots,a_n)=L(\rx_1^{a_1} \dots \rx_n^{a_n})$.
In general, this function does not appear to have any strong guaranteed further properties,
but we will see next that it allows for a very clear characterization of fully convex rings. 

\subsubsection{Fully Convex Rings}
\label{sec:fully-convex}

Let $R$ be a finite, lineal, commutative ring. 
Since $R$ is local by Proposition~\ref{prop:lineal-is-local}, 
Corollary~\ref{cor:localringstandard} allows us to assume that
$R$ is defined as a quotient ring over some polynomial ring,
and if $R$ is monomial we furthermore assume a representation
$R=\FF[\rx_1,\ldots,\rx_n]/I$ where $\FF$ is a field and $I$ is a monomial ideal.
Recall that $R$ is \emph{fully convex} if $Z=Z_E(R)$ and $N=\overline{Z_E}(R)$ are separated by a hyperplane.
We provide an equivalent characterization of fully convex rings
via a special type of monomial threshold labelling. 
A \emph{linear} threshold labelling of
$R$ is a threshold labelling $L$ with some threshold $T$ such that
for any $u, v \in R$ with $u, v \neq 0$ we have $L(uv)=L(u)+L(v)$.
It is \emph{monomial} if $L$ is furthermore a monomial threshold labelling.
Note that a monomial linear threshold labelling $L$ is defined by the value $L(\rx_i)$ for every $i \in [n]$,
since it will have $L(\rx_1^{\alpha_1} \dots \rx_n^{\alpha_n}) = \sum_{i=1}^n \alpha_i L(\rx_i)$.

\begin{example} \label{ex:MTL-example}
  Consider the ring $\integers_{2}[\rx,\ry]/(\rx^2, \rx\ry, \ry^2)$. 
  Let $U=\{1,1+\rx,1+\ry,1+\rx+\ry\}$ denote the set of units and $N_0=\{\rx,\ry,\rx+\ry\}$
  the set of non-zero non-units. 
  A
  threshold labelling $L,T$ with $T=2$ can be obtained by assigning the label $0$ to
  the elements in $U$, assigning
  the label $1$ to the elements in $N_0$, and assigning the label $2$ to $0$.
  This corresponds to letting $L(\rx)=L(\ry)=1$ so this is a linear monomial
  threshold labelling.
\end{example}

We show that a monomial ring $R$ is fully convex if and only if it admits a linear monomial threshold labelling. 
To more strongly illustrate the geometric interpretation of this, we also give an equivalent geometric description. 
We need the following result, which is a common formulation of
Minkowski's hyperplane separation theorem, see Corollary~11.4.2 in \cite{Rockafellar:CA}.

\begin{theorem} \label{thm:separating-hyperplane}
Let $C$ and $D$ be two closed convex sets in ${\mathbb R}^n$
with at least one of them bounded,
and assume $C \cap D = \emptyset$. Then, there exists $0 \neq a \in R^n$
and $b \in {\mathbb R}^n$
such that
$a^Tx > b$ for all $x \in D$ and
$a^Tx < b$ for all $x \in C$.
\end{theorem}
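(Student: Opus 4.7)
The plan is to reduce the problem to the simpler one of separating the origin from a single closed convex set, and then apply the projection (closest-point) theorem. Concretely, I would set $K = C - D = \{c - d : c \in C, d \in D\}$; this is convex, and the hypothesis $C \cap D = \emptyset$ means precisely that $0 \notin K$, so a hyperplane through the origin separating $0$ from $K$ will pull back to the desired hyperplane separating $C$ from $D$.

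The main obstacle—and the only place where the boundedness hypothesis actually gets used—is establishing that $K$ is closed. In general the Minkowski sum or difference of two closed convex sets need not be closed (a standard counterexample uses a branch of a hyperbola and its asymptote), and boundedness of one of the sets is exactly what rescues the argument. Assuming without loss of generality that $D$ is bounded, I would argue as follows: if $c_n - d_n \to z$ then, by compactness of $D$, some subsequence $d_{n_k}$ converges to a limit $d \in D$, hence $c_{n_k} \to z + d$, and closedness of $C$ gives $z + d \in C$; thus $z = (z+d) - d \in K$.

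With $K$ closed, convex, and avoiding the origin, the standard projection theorem yields a unique nearest point $z^{\ast} \in K$ to $0$. I would take $a = -z^{\ast} \neq 0$. A short convexity argument—considering $\|(1-\lambda)z^{\ast} + \lambda z\|^{2}$ for $z \in K$ and $\lambda \in [0,1]$, and using minimality of $\|z^\ast\|$ at $\lambda = 0$ to extract a first-order inequality—yields $\langle a, z \rangle \leq -\|a\|^{2}$ for every $z \in K$. Translating via $z = c - d$ gives
\[
  \sup_{c \in C} \langle a, c \rangle \;\leq\; \inf_{d \in D} \langle a, d \rangle - \|a\|^{2},
\]
so any $b$ strictly between these two values (for instance $b = \sup_{c \in C} \langle a, c\rangle + \tfrac{1}{2}\|a\|^{2}$) witnesses the required strict separation $\langle a, c\rangle < b < \langle a, d\rangle$ for all $c \in C$ and $d \in D$. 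Everything after the closedness of $K$ is routine, so I expect the closedness step to be the only genuinely interesting part of the argument.
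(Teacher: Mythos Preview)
Your argument is correct and is the standard route to strict separation via the Minkowski difference and the projection theorem. Note, however, that the paper does not supply its own proof of this theorem at all: it is stated as a known result and attributed to Rockafellar (Corollary~11.4.2 in \emph{Convex Analysis}), so there is nothing in the paper to compare your approach against.
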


\begin{lemma} \label{lm:characterize-fullyconvex}
    Let $R=\FF[\rx_1,\ldots,\rx_n]/I$ be a finite commutative monomial ring, let $Z=Z_E(R)$ and $N=\overline{Z_E}(R)$. The following are equivalent. 
    \begin{enumerate}
        \item $R$ is fully convex, i.e.\ $(N,Z)$ has a separating hyperplane
        \item The convex closures of $N$ and $Z$ are disjoint
        \item $R$ has a linear monomial threshold labelling
    \end{enumerate}
\end{lemma}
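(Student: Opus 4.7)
The plan is to prove $1 \Leftrightarrow 2$ via Minkowski's separating hyperplane theorem (Theorem~\ref{thm:separating-hyperplane}) and then establish $1 \Leftrightarrow 3$ by showing that the normal vector of a separating hyperplane yields a linear monomial threshold labelling and vice versa.

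First I would observe that $N$ is finite (since $R$ is finite), so $\conv(N)$ is a compact polytope while $\conv(Z) \subseteq \reals^n$ is convex. If $\conv(N) \cap \conv(Z) = \emptyset$, then Theorem~\ref{thm:separating-hyperplane} yields $a \in \reals^n$ and $b \in \reals$ with $a^T x < b$ on $\conv(N)$ and $a^T x > b$ on $\conv(Z)$; in particular the hyperplane $\{a^T x = b\}$ is disjoint from $N \cup Z$, so $R$ is fully convex. Conversely, a separating hyperplane forces $\conv(N)$ and $\conv(Z)$ into disjoint open half-spaces, establishing $1 \Leftrightarrow 2$.

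For $1 \Rightarrow 3$, start from a separating hyperplane $H = \{a^T x = b\}$ with $N$ strictly below it. Since $\mathbf{0} \in N$, we have $b > 0$, and since $R$ is finite each $\rx_i$ is nilpotent, giving some $m_i e_i \in Z$ and hence $a_i > 0$ (indeterminates with $\rx_i = 0$ in $R$ may be discarded). Define $L$ on monomials by $L(\rx^\alpha) = a^T \alpha$ and extend to all of $R$ by $L(u) = \min_{m \in u} L(m)$, with threshold $T$ chosen strictly between $\max_{\alpha \in N} a^T \alpha$ and $b$. Linearity on monomials is then immediate. The delicate step is verifying that $L$ is a valid threshold labelling on $R$, and this is where the main obstacle lies: I would route through the chain fully convex $\Rightarrow$ Condition~P $\Rightarrow$ lineal $\Rightarrow$ no cancellations. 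Full convexity implies Condition~P because $p_1, p_2 \in N$ give $a^T(p_1+p_2) < 2b$ while $q_1, q_2 \in Z$ give $a^T(q_1+q_2) > 2b$; Condition~P is equivalent to lineality by the preceding subsection, and Lemma~\ref{lm:lineal-implies-nc} then supplies the no-cancellations property. With no cancellations in hand, $uv = 0$ holds iff $m_u m_v = 0$ for every pair of monomials $m_u \in u$, $m_v \in v$, which via $L(m_u m_v) = L(m_u) + L(m_v)$ and the min-extension translates to $L(u) + L(v) \geq T$.

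For $3 \Rightarrow 1$, set $a_i := L(\rx_i) \geq 0$; linearity gives $L(\rx^\alpha) = a^T \alpha$, and the threshold property translates into $a^T \alpha < T$ on $N$ and $a^T \alpha \geq T$ on $Z$. Since $N$ is finite, $T' := \max_{\alpha \in N} a^T \alpha < T$, and any threshold $T'' \in (T', T)$ gives a hyperplane $\{x : a^T x = T''\}$ strictly separating $N$ from $Z$ and disjoint from $N \cup Z$, witnessing full convexity.
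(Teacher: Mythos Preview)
Your proposal is correct and follows essentially the same route as the paper: Minkowski's theorem for $1 \Leftrightarrow 2$, and reading off the normal vector of the hyperplane for $1 \Leftrightarrow 3$. Your detour through fully convex $\Rightarrow$ Condition~P $\Rightarrow$ lineal $\Rightarrow$ no cancellations to justify that the min-extended $L$ really is a threshold labelling on all of $R$ is more careful than the paper's own proof, which simply asserts this step; the paper's argument is terse enough that your added justification is a genuine improvement rather than a deviation.
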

\begin{proof}
    First, assume that $(N,Z)$ has a separating hyperplane $H$. Recall that we assume that $H$ 
    is disjoint from all integer points. Then the convex closure of $N$ is contained in one side of the half-plane given by $H$, and the convex closure of $Z$ in the other, and the containment is strict in that both are disjoint from $H$. Thus their convex closures are disjoint. Furthermore, 
    let $H$ be defined by an equation $\sum_i a_i \rx_i = b$ for some coefficients $a_i \in \RR$.
    We note that $a_i > 0$ for each $i$, since otherwise $N$ would contain unboundedly large powers $\rx_i^d$ contradicting that $R$ is finite. Since by assumption $H$ intersects no integer points, 
    we can use $H$ for a linear monomial threshold labelling $L_m(\rx_1^{d_1} \dots \rx_n^{d_n})=\sum_i a_id_i$ with threshold $T=b$. In the other direction, by the same argument every linear monomial threshold labelling defines a separating half-plane: In particular, let $f(a)=L(\rx^a)$ where $L$ is a linear monomial threshold labelling, and let $T$ be the threshold.
    Then there is a positive value $\eps > 0$ such that for every $p \in N$ we have $f(p) \leq T-\eps$ and 
    for every $p \in Z$ we have $f(p) \geq T$. Then the half-plane defined by $f(p)=T-\eps/2$ is disjoint from integer points in $\naturals^n$. Finally, if $N$ and $Z$ have disjoint convex closures, then there is a separating hyperplane by Theorem~\ref{thm:separating-hyperplane} since $N$ is finite and $cnv(N)$ is consequently bounded.
    This hyperplane is furthermore disjoint from all points in $\naturals^n$ since $N \cup Z = \naturals^n$ and Theorem~\ref{thm:separating-hyperplane} guarantees that the separation is strict.
\end{proof}

Once again, consult Appendix~\ref{sec:counter:double-hole-free} for an illustration that the condition that the condition on convex closures cannot be strengthened to saying that $N$ and $Z$ are hole-free (and naturally disjoint).

\subsubsection{Fully Convex Implies Bergen}
\label{sec:fullyconvex-bergen}

We now show that fully convex rings are Bergen. 
Let us first show how a threshold labelling for a ring $R$ induces an equivalence relation on the elements of $R$. 
For a labelling $L$ we denote by $L(R)$ the set of labels used by
$L$, i.e. $L(R)=\SB L(v) \SM v \in R\SE$.

\begin{definition} \label{def:ntl-eqcl}
  Let $R$ a finite, lineal, commutative ring with threshold
  labelling $L$ with threshold $T$. For a label $i \in L(R)$ with $i<T$, define $i^+>i$
  as the next smallest label in $L(R)$, i.e.
  $i^+$ is the smallest value from $L(R)$ with $i^+>i$. We define an equivalence relation
  by letting $u \equiv_L v$ for $u, v \in R$ if and only if
  $L(u)=L(v)$ and $u = v \bmod {L_{L(v)^+}}$, where $L_i=\{v \in
  R \mid L(v) \geq i\}$. 
\end{definition}

\begin{example} \label{ex:equi-classes}
  Recall from Example~\ref{ex:MTL-example} that the ring $\integers_{2}[\rx,\ry]/(\rx^2, \rx\ry, \ry^2)$, 
has a monomial threshold labelling such that $L(\rx)=L(\ry)=1$ and $T=2$.
  This labelling gives us
  the equivalence relation $\equiv_L$ with equivalence classes
$\{0\}$, $\{\rx\}$, $\{\ry\}$, $\{\rx + \ry\}$, and $U$.
\end{example}

\begin{lemma} \label{lm:fully-convex-to-bergen}
  Let $R$ be a finite and commutative ring that is fully convex. Then, $R$ is a Bergen ring.
  \end{lemma}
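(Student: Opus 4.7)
The plan is to extract a Bergen witness directly from the linear monomial threshold labelling of $R$. Since $R$ is fully convex, Lemma~\ref{lm:characterize-fullyconvex} provides a linear monomial threshold labelling $L \colon R \to [0,T]$ for $R$; let $0 = \lambda_0 < \lambda_1 < \cdots < \lambda_\ell = T$ enumerate the distinct values taken by $L$ on $R$. I would take the chain $R = I_0 \supsetneq I_1 \supsetneq \cdots \supsetneq I_\ell = \{0\}$ with $I_j := \{v \in R : L(v) \geq \lambda_j\}$, and use the restriction of $\equiv_L$ from Definition~\ref{def:ntl-eqcl} to $I_j$ as the equivalence relation $\equiv_j$, for each $0 \leq j < \ell$.

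The first block of verifications is mostly routine. Each $I_j$ is an ideal because the monomials of $u+v$ sit among those of $u$ and $v$, yielding additive closure, and linearity gives $L(ru) = L(r) + L(u) \geq L(u)$ whenever $ru \neq 0$, yielding closure under multiplication. The chain is strictly decreasing since $\lambda_j$ belongs to $L(R)$ by construction, and $I_\ell = \{0\}$ because only $0$ has label $T$. For each $u \in I_j$ with $L(u) = \lambda_k$, I would show that the $\equiv_j$-class of $u$ equals the coset $u + I_{k+1}$: every $w \in I_{k+1}$ has all its monomials of label strictly greater than $\lambda_k$, so no cancellation can affect the minimum-label monomials of $u$ in $u+w$, forcing $L(u+w) = \lambda_k$. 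This simultaneously shows that $\equiv_j$ is a coset partition and yields absorption into $I_{j+1}$, since $k \geq j$ implies $I_{k+1} \subseteq I_{j+1}$.

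The heart of the argument is the matching property. Condition~(1) is immediate: $\{0\}$ is the class with label $T$. For Condition~(2), given $a \equiv_j b$ with $L(a) = L(b) = \lambda_k$ and $a-b \in I_{k+1}$, linearity forces $ra$ and $rb$ either both to vanish or both to share label $L(r) + \lambda_k$, while $r(a-b)$ either vanishes or has label at least $L(r) + \lambda_{k+1}$; in either case $ra \equiv_j rb$. For Condition~(3) the case $L(a) \neq L(b)$ is immediate from linearity; the delicate case is $L(a) = L(b) = \lambda_k$ with $L(a-b) = \lambda_p \leq \lambda_k$, where I would argue that $r(a-b)$ cannot vanish whenever $ra, rb$ do not (otherwise $L(r) \geq T - \lambda_p \geq T - \lambda_k$ would also force $ra = 0$), and then $L(ra - rb) = L(r) + \lambda_p \leq L(ra)$ precludes $ra - rb \in I_{L(ra)^+}$. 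The hardest point is precisely this last subcase, namely ruling out the pathological situation where $a \not\equiv_j b$ but $ra = rb \neq 0$, which would break the matching property outright. The essential tool is the linearity $L(uv) = L(u) + L(v)$ for nonzero products, which converts the combinatorial non-equivalence of $a$ and $b$ into a numerical label inequality that keeps $ra$ and $rb$ in different classes of $\equiv_j$.
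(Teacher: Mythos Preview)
Your proposal is correct and follows essentially the same route as the paper: both extract the linear monomial threshold labelling $L$ from Lemma~\ref{lm:characterize-fullyconvex}, take the ideal chain to be the level sets $I_j = \{v : L(v) \geq \lambda_j\}$, use the restriction of $\equiv_L$ to each $I_j$, and verify the Bergen conditions via the linearity identity $L(rv) = L(r) + L(v)$ for nonzero products. The paper compresses your Conditions~(2) and~(3) into a single biconditional claim ($u \equiv_L v \iff au \equiv_L av$ whenever $au, av \neq 0$), but the underlying computation is identical; one small remark is that in your delicate subcase you in fact always have $\lambda_p = \lambda_k$ (since $L(a-b) \geq \min(L(a),L(b)) = \lambda_k$), so your inequality $\lambda_p \leq \lambda_k$ is an equality---this does not affect the argument.
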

  \begin{proof}
  Let $R=\FF[\rx_1,\ldots,\rx_n]/I$. By Lemma~\ref{lm:characterize-fullyconvex}
  $R$ has a linear monomial threshold labelling. Let $L$ be such a labelling and let $T$ be its threshold.
  Since $L$ is a monomial labelling, we have
  \[
    L(r) = \min_{m \in r} L(m)
  \]
  for all $r \in R$, where $m$ ranges over all monomials occurring in $r$.
  In particular, for any $p, q \in R$ note that $L(p+q) \geq \min(L(p), L(q))$.
  We verify that $\equiv_L$ has the matching property.

  \begin{claim} \label{claim:equiv}
    Let $u, v, a \in R$ be such that $au \neq 0$ and $av \neq 0$.
    Then, $u \equiv_L v$ if and only if $au \equiv_L av$.
    \end{claim}
    \begin{claimproof}
  On the one hand, suppose $u \equiv_L v$. Then by assumption there exists an element $m \in R$ such that
    $L(m) > L(u)=L(v)$ and $u=v+m$. Now $au=a(v+m)=av+am$ and $L(av)=L(a)+L(v) < L(a)+L(m)$ and $L(am)=\min(T,L(a)+L(m))$.
    Since $av \neq 0$ we have $L(a)+L(v) < T$ and we conclude $au \equiv_L av$.
    On the other hand, suppose that $au \equiv_L av$. This implies $L(au)=L(av)$ and that there exists $m \in R$ such that
    $L(m) > L(au)$ and $au=av+m$. Then $a(u-v)=m$ so 
    \[L(m)=L(a)+L(u-v) > L(a)+L(u)=L(a)+L(v).\] Thus $L(u-v) > L(u)=L(v)$
    and we get $u \equiv_L v$.
\end{claimproof}

We are now ready to prove that $R$ is Bergen.
We begin by constructing a suitable chain of ideals.
Consider the set $L_i=\{v \in R \; | \; L(v) \geq i\}$ for some $0 \leq i \leq T$.
If $v,w \in L_i$, then $L(v) \geq i$ and $L(w) \geq i$.
Since $L$ is a monomial labelling, we have $v \in L_i$ if and only if
$L(m) \geq i$ for every monomial $m$ in $v$, i.e.\ the set $L_i$
is a monomial ideal generated by all monomials $m$ such that $L(m) \geq i$.
Let $R=I_0 \supset I_1 \supset \dots \supset I_{\ell}=\{0\}$
be a chain of minimal length such that every $L_i$, $0 \leq i \leq T$, appears in the chain.

We continue by equipping each ideal in the chain with a equivalence relation.
Let $\equiv_i$ denote the equivalence relation $\equiv_L$ restricted to $I_i$.
Claim~\ref{claim:equiv} shows that $\equiv_i$ has the
matching property.
Arbitrarily choose $u,v \in I_i$, $1 \leq i < \ell$, 
such that $u \equiv_i v$. We consider
the element
$u-v$. 
Recall that $u \equiv_L v$ if and only if
  $L(u)=L(v)$ and $u = v \bmod {L_{L(v)^+}}$.
  It follows that $u-v \in I_{i+1}$.
  Since finally all equivalence classes take the form of $u+L_{L(u)^+}$,
  the classes of $\equiv_i$ are cosets for each $i$, and $R$ is a Bergen ring.
\end{proof}

\subsubsection{Separating Examples}
\label{sec:mtl-not-bergen}

We will now present examples of finite commutative rings that are Bergen (and thus lineal by Proposition~\ref{prop:bergenislineal})
but not fully convex, and in one case not even Z-hole-free.

The first is based on work in discrete convex analysis by Kashimura, Numata, and Takemura~\cite[Example~4.7]{KashimuraNT13}.
Define $f(x,y,z)=12x+15y+20z$ and consider the ring $R_{\rm KNT}=\mathbb{Z}_2[\rx,\ry,\rz]/(I',\rx^2\ry\rz)$
where $I'$ is the ideal generated by all monomials $\rx^a \ry^b \rz^c$ such that $f(a,b,c) > 60$.

\begin{lemma}
    $R_{\rm KNT}$ is not fully convex. In particular $N=\overline{Z_E}(R)$ is not hole-free.
\end{lemma}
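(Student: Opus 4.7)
The plan is to exhibit an explicit integer hole in $N = \overline{Z_E}(R_{\rm KNT})$, namely the point $(2,1,1)$, and then invoke Lemma~\ref{lm:characterize-fullyconvex} to deduce non-full-convexity.

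First I would unpack what the exponent sets look like. By construction, $Z = Z_E(R_{\rm KNT})$ consists of all $(a,b,c)\in\NN^3$ with $f(a,b,c) = 12a+15b+20c > 60$, together with the single extra point $(2,1,1)$ coming from the generator $\rx^2\ry\rz$; every other lattice point is in $N$. Observe that the three points $(5,0,0)$, $(0,4,0)$, $(0,0,3)$ all satisfy $f = 60$, so they lie in $N$, as does the origin $(0,0,0)$. Meanwhile $f(2,1,1) = 24+15+20 = 59 \leq 60$, so $(2,1,1)$ would ordinarily belong to $N$ were it not for the extra generator $\rx^2\ry\rz$; instead $(2,1,1) \in Z$.

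The key calculation is to write $(2,1,1)$ as a convex combination of points in $N$. Using coefficients $(\tfrac{2}{5}, \tfrac{1}{4}, \tfrac{1}{3}, \tfrac{1}{60})$ on $(5,0,0)$, $(0,4,0)$, $(0,0,3)$, $(0,0,0)$, one checks that these weights are nonnegative, sum to $\tfrac{24+15+20+1}{60} = 1$, and produce exactly $(2,1,1)$. Since all four of these lattice points lie in $N$, we conclude $(2,1,1) \in \mathrm{cnv}(N) \cap \ZZ^3$, but $(2,1,1) \notin N$. Hence $N$ is not hole-free, which is the second claim in the statement.

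To finish I would use Lemma~\ref{lm:characterize-fullyconvex}: full convexity is equivalent to the convex closures of $N$ and $Z$ being disjoint. The witness $(2,1,1)$ lies in $\mathrm{cnv}(N)$ and also in $Z \subseteq \mathrm{cnv}(Z)$, so the two convex closures meet. Therefore $R_{\rm KNT}$ is not fully convex. The only genuinely substantive step is locating the offending convex combination; once the weights $(2/5,1/4,1/3)$ are guessed from $f(2,1,1)/60 = 59/60$, everything else is routine arithmetic.
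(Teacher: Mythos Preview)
Your proof is correct and follows essentially the same approach as the paper: both identify the point $(2,1,1)$ as lying in $Z$ yet in the convex hull of the $N$-points $(5,0,0)$, $(0,4,0)$, $(0,0,3)$ (and the origin). Your version is in fact more explicit than the paper's, which asserts $(2,1,1)\in\mathrm{cnv}(N)$ without writing out the convex combination and argues non-full-convexity directly via the failure of hyperplane separation rather than via Lemma~\ref{lm:characterize-fullyconvex}.
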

\begin{proof}
    The affine hyperplane spanned by $(5,0,0)$, $(0,4,0)$, $(0,0,3)$ has the equation $f(x,y,z)=60$,
    and the point $(2,1,1)$ has value $f(2,1,1)=59$. As the former are non-zero and the latter is zero,
    there is no hyperplane that separates $N$ from $Z_E(R)$. For the same reason, $N$ is not hole-free
    since $(2,1,1) \in cnv(N)$. 
\end{proof}

Verifying that $R_{\rm KNT}$ is Bergen is straightforward but tedious. 
We defer the construction to Appendix~\ref{sec:counter:KNT}.

An example in the other direction is as follows. 
Define the ring $R_{347}$ as the monomial ring $R_{347}=\FF_2[\rx,\ry,\rz]/I$ 
where $I$ contains every monomial in the upper half-plane spanned by 
$\rx^3$, $\ry^4$, $\rz^7$, inclusive, except that $\rx\ry\rz^3$ is
non-zero. Then $Z_E(I)$ is not hole-free, and $R_{347}$ is not fully convex.

\begin{lemma}
    $R_{347}$ is not $Z$-hole-free (and in particular, not fully convex).
\end{lemma}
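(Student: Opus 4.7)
The plan is to exhibit a lattice point that lies in $cnv(Z_E(R_{347}))$ but not in $Z_E(R_{347})$, which by definition shows $R_{347}$ is not $Z$-hole-free, and then to invoke Lemma~\ref{lm:characterize-fullyconvex} to conclude failure of full convexity. The natural candidate is the ``hole'' created by the definition of $R_{347}$, namely the exponent $(1,1,3)$ corresponding to $\rx\ry\rz^3$. First, I would compute the equation of the hyperplane through $(3,0,0)$, $(0,4,0)$, $(0,0,7)$: a direct calculation (clearing denominators by the common multiple $84$) gives $28x + 21y + 12z = 84$. Evaluating at $(1,1,3)$ yields $28 + 21 + 36 = 85 > 84$, confirming that $(1,1,3)$ sits strictly above the hyperplane, so by the defining rule for $I$ this point would normally be in $Z_E(R_{347})$; it is in $\overline{Z_E}(R_{347})$ precisely because of the exceptional clause that declares $\rx\ry\rz^3$ non-zero.

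Next, I would exhibit an explicit convex combination. The three vertices $(3,0,0)$, $(0,4,0)$, $(0,0,7)$ lie on the hyperplane and hence in $Z_E(R_{347})$, and the point $(1,1,4)$ lies strictly above it (value $85 + 12 = 97$) and is unaffected by the exception, so it also belongs to $Z_E(R_{347})$. Solving the barycentric system
\begin{align*}
  3\alpha + \delta &= 1, \\
  4\beta + \delta &= 1, \\
  7\gamma + 4\delta &= 3, \\
  \alpha + \beta + \gamma + \delta &= 1
\end{align*}
yields $\alpha = 4/13$, $\beta = 3/13$, $\gamma = 5/13$, $\delta = 1/13$, which are all non-negative and sum to $1$. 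A direct check shows
\[
  \tfrac{4}{13}(3,0,0) + \tfrac{3}{13}(0,4,0) + \tfrac{5}{13}(0,0,7) + \tfrac{1}{13}(1,1,4) = (1,1,3),
\]
so $(1,1,3) \in cnv(Z_E(R_{347}))$ while $(1,1,3) \notin Z_E(R_{347})$; hence $R_{347}$ is not $Z$-hole-free.

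Finally, for the parenthetical ``in particular, not fully convex'' claim, I would invoke Lemma~\ref{lm:characterize-fullyconvex}: full convexity is equivalent to the convex hulls of $N = \overline{Z_E}(R_{347})$ and $Z = Z_E(R_{347})$ being disjoint. Since $(1,1,3)$ belongs to $N$ (it is the exponent of the non-zero monomial $\rx\ry\rz^3$) and also belongs to $cnv(Z)$ by the combination above, the two convex hulls intersect, which rules out full convexity. There is no real obstacle here beyond finding the right convex combination; the only mildly delicate step is guessing that a single ``lift'' to $(1,1,4)$ together with the three plane vertices suffices, which is natural because $(1,1,3)$ sits just one unit above the triangle and just one unit below $(1,1,4)$.
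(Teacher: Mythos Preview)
Your proof is correct and follows essentially the same approach as the paper: identify $(1,1,3)$ as the hole, verify it lies on the upper side of the hyperplane $28x+21y+12z=84$ through $(3,0,0)$, $(0,4,0)$, $(0,0,7)$, and exhibit an explicit convex combination of points in $Z_E(R_{347})$ yielding $(1,1,3)$. The only cosmetic difference is your choice of the fourth point $(1,1,4)$, whereas the paper uses $(0,0,8)$ with coefficients $\tfrac{28}{84},\tfrac{21}{84},\tfrac{28}{84},\tfrac{7}{84}$; both work equally well, and your deduction of non-full-convexity via item~2 of Lemma~\ref{lm:characterize-fullyconvex} is the same as the paper's.
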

\begin{proof}
    The points $(3,0,0)$, $(0,4,0)$ and $(0,0,7)$ span a hyperplane $H$ with
    equation $f(x,y,z)=28x+21y+12z=84$, and the upper halfplane of $H$ is 
    contained in $cnv(Z_E(R))$, but $f(1,1,3)=85$ and $(1,1,3) \notin Z_E(R)$. 
    Thus $(1,1,3) \in cnv(Z_E(R)) \setminus Z_E(R)$, for example as
    \[
    \frac{28}{84}  \cdot (3,0,0) + \frac{21}{84} \cdot (0,4,0) + \frac{28}{84} \cdot (0,0,7) + \frac{7}{84} \cdot (0,0,8)=(1,1,3).
    \]
    Thus $R_{347}$ is not $Z$-hole-free, and not fully convex by Lemma~\ref{lm:characterize-fullyconvex}.
\end{proof}

However, again, $R_{347}$ is Bergen as shown in Appendix~\ref{sec:counter:347}.
Together, $R_{\rm KNT}$ and $R_{347}$ are complementary examples showing that 
for a monomial lineal ring $R$, we cannot assume that $Z=Z_E(R)$ or $N=\overline{Z_E}(R)$ is hole-free.

\subsection{Helly Rings}
\label{sec:convex}

We now study monomial Helly rings. Although we do not have a precise characterization of this property in terms of discrete convexity, we show via a chain of implications that every Z-hole-free ring is Helly, and every monomial Helly ring is Z-1-convex (see~Figure~\ref{figure:link}).

Let $R=\FF[\rx_1,\ldots,\rx_n]/I$ be a monomial ring. 
Recall that $R$ is {\em Z-hole-free} if $Z_E(R)$ is hole-free, i.e.\ $Exp(I) = cnv(Exp(I)) \cap {\mathbb Z}^n$.
We first show that all Z-hole-free rings have the no cancellations property. 
 
\begin{lemma} \label{lm:zhf-to-nc}
  Z-hole-free rings have the no cancellations property.  
\end{lemma}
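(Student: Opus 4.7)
The plan is to argue by taking an arbitrary product $pq=0$ and using the convex geometry of the Minkowski sum $A+B$, where $A,B$ are the supports of $p$ and $q$, to force every monomial product $x^\alpha \cdot x^\beta$ with $\alpha \in A$, $\beta \in B$ into $Z$. The Z-hole-freeness will let us go from vertices of a polytope being in $Z$ to all integer points of the polytope being in $Z$.

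First I would invoke Proposition~\ref{prop:mr-unique-representation} to write $p=\sum_{\alpha \in A} a_\alpha \rx^\alpha$ and $q=\sum_{\beta \in B} b_\beta \rx^\beta$ uniquely, where $A,B \subseteq N=\overline{Z_E}(R)$ are finite and every $a_\alpha,b_\beta \in \FF\setminus\{0\}$. The claim to prove is that $\alpha+\beta \in Z$ for every $\alpha\in A$, $\beta\in B$. Assume $p,q \neq 0$, otherwise the statement is vacuous. I would then focus on the polytope $P:=cnv(A+B)=cnv(A)+cnv(B)$.

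The key intermediate claim is that every vertex of $P$ lies in $Z$. Let $v$ be such a vertex, and let $\phi \in \RR^n$ be a linear functional whose unique minimum on $P$ is attained at $v$. Since $\phi$ is linear, its minima over $cnv(A)$ and $cnv(B)$ are attained at unique points $v_A \in A$ and $v_B \in B$ (they are unique because any tie would propagate to a tie in $P$), and $v=v_A+v_B$. Moreover, if $\alpha \in A$, $\beta \in B$ satisfy $\alpha+\beta=v$, then $\phi(\alpha)+\phi(\beta)=\phi(v_A)+\phi(v_B)$ together with $\phi(\alpha)\ge \phi(v_A)$ and $\phi(\beta)\ge \phi(v_B)$ force $\alpha=v_A$ and $\beta=v_B$. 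Hence in the polynomial expansion of $pq$ the coefficient of $\rx^v$ equals $a_{v_A}b_{v_B}\neq 0$. Because $pq=0$ in $R$, this monomial must lie in $I$, so $v \in Z$.

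Once every vertex of $P$ lies in $Z$, we have $P=cnv(\text{vertices of }P)\subseteq cnv(Z)$, and therefore $A+B \subseteq P\cap \NN^n \subseteq cnv(Z)\cap \NN^n = Z$ by the Z-hole-free assumption. This is exactly the no cancellations property. The main obstacle is the Minkowski-sum vertex statement together with the uniqueness of decomposition for vertices; this is standard polytope theory and I would either prove it inline via the supporting hyperplane argument sketched above or cite it. No delicate computation over $\FF$ is required, since we only need one surviving monomial whose coefficient is a product of two nonzero field elements.
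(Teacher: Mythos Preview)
Your proposal is correct and follows essentially the same approach as the paper: both arguments pass to the Minkowski sum of the supports, show that each extreme point of its convex hull is produced by a unique pair (hence carries a nonzero coefficient and must lie in $Z$), and then invoke Z-hole-freeness to push all of $A+B$ into $Z$. The only cosmetic difference is that you verify uniqueness of the vertex decomposition via a supporting linear functional, whereas the paper does it by a midpoint argument using $cnv(E_p+E_q)=cnv(E_p)+cnv(E_q)$.
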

\begin{proof}
  Let $R=\FF[\rx_1,\ldots,\rx_n]/I$ be a $Z$-hole-free monomial ring
  and consider a zero product $pq=0$. We will show that $mm'=0$ for every
  monomial $m$ in $p$ and $m'$ in $q$.
  Let $M_p$ be the set of monomials occurring in $p$ and $M_q$ the set of monomials
  occurring in $q$, and let $E_p=Exp(M_p)$ and $E_q=Exp(M_q)$. 
  Let $Q=E_p+E_q$ be their Minkowski sum, i.e. $Q$ contains all vectors $a$
  such that the full expansion of the product $p \cdot q$ produces a monomial $X^a$, 
  possibly with net coefficient zero, where we may or may not have $X^a=0$ in $R$.
  Let $P=cnv(Q)$. Then every extreme point of $P$ is contained in $Q$ and is produced in a unique way $u=v+w$
  in $E_p+E_q$.
  Indeed, let $w \in Q$ be produced as $w=u_1+v_1=u_2+v_2$ for
  $u_1, u_2 \in E_p$ and $v_1, v_2 \in E_q$ in two distinct ways.
  Let $u=(u_1+u_2)/2$ and $v=(v_1+v_2)/2$. Then $u \in cnv(E_p)$,
  $v \in cnv(E_q)$, $u+v=w$ and both $u$ and $v$ are contained in
  line segments of positive length in the respective polytope.
  Since $cnv(E_p+E_q)=cnv(E_p)+cnv(E_q)$,
  this contradicts that $w$ is an extreme point.

  But since $pq=0$ and a single contribution $(aX^u \cdot bX^v = abX^{u+v})$
  in $p \cdot q$ can never produce a coefficient of zero, for every such point $w=u+v$
  produced in only one way we must have $X^w=0$ in $R$. Now, every other point $w \in Q$ is produced as a 
  convex combination of the extreme points of $P$, so since $Z_E(R)$ is hole-free
  every point $w \in Q$ satisfies $X^w=0$. Thus for every monomial $m=X^u$ in $p$
  and $m'=X^v$ in $q$ we have $mm'=0$ and $pq=0$ is a trivial product,
  as claimed.
\end{proof}

Next, we show that the no cancellations property implies that the ring is Helly. 

\begin{lemma} \label{lm:no-cancel-helly}
  Let $R$ be a finite ring with the no cancellations property.
  Then $R$ is Helly.
\end{lemma}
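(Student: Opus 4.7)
I will argue by contrapositive: assuming $R$ is non-Helly, I will construct an element witnessing that the hypothesised tangle actually has a common point, contradicting the tangle property. So suppose $R$ is non-Helly. By Lemma~\ref{lem:homogenised-tangle} there exist $a,b,c,d \in R$ such that $C_1 = \Ann(a)$, $C_2 = \Ann(b)$ and $C_3 = \Ann(c)+d$ form a tangle. By Lemma~\ref{lm:nc-gives-monomial-ideals}, the no cancellations property implies that each of $\Ann(a), \Ann(b), \Ann(c)$ is a monomial ideal. The overall idea is to use the uniqueness of the monomial expansion (Proposition~\ref{prop:mr-unique-representation}) to translate membership in these ideals, and in the coset $\Ann(c)+d$, into coordinate-wise conditions on coefficients of monomials; then the common element can be assembled monomial by monomial.

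\medskip

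\noindent\textbf{Key step (coefficient dictionary).} For each $e \in R$, let $N_e$ be the set of non-zero monomials $m$ of $R$ with $em \neq 0$. Because $\Ann(e)$ is a monomial ideal and every $r \in R$ has a \emph{unique} expansion as a linear combination of distinct non-zero monomials, we have $r \in \Ann(e)$ if and only if the coefficient of every monomial in $N_e$ appearing in $r$ is $0$. Similarly, $r \in \Ann(c)+d$ if and only if $r-d \in \Ann(c)$, which, by the same principle applied to $\Ann(c)$, holds if and only if $r$ and $d$ have the \emph{same} coefficient on every monomial in $N_c$.

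\medskip

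\noindent\textbf{Extracting information from pairwise intersections.} Since $C_1 \cap C_3 \neq \emptyset$, pick $x$ with $ax = 0$ and $cx = cd$. The first condition gives that $x$ has coefficient $0$ on every monomial in $N_a$, and the second that $x$ and $d$ agree on every monomial in $N_c$. Together, for every $m \in N_a \cap N_c$, the coefficient of $m$ in $d$ equals that in $x$, which is $0$. Hence $d$ has coefficient $0$ on all monomials in $N_a \cap N_c$. The symmetric argument using $C_2 \cap C_3 \neq \emptyset$ shows that $d$ has coefficient $0$ on all monomials in $N_b \cap N_c$. (The intersection $C_1 \cap C_2$ is trivially non-empty, as $0 \in \Ann(a) \cap \Ann(b)$.)

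\medskip

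\noindent\textbf{Building the common element.} Define $w$ to be the ``restriction of $d$ to $N_c$'': concretely, write $d = \sum_i \alpha_i m_i$ in its unique monomial expansion and set $w = \sum_{m_i \in N_c} \alpha_i m_i$. I claim $w \in C_1 \cap C_2 \cap C_3$. For $w \in \Ann(a)$: any monomial appearing in $w$ with non-zero coefficient lies in $N_c$, and if it also lay in $N_a$, then by the previous paragraph its coefficient in $d$ (hence in $w$) would be $0$; so no monomial of $w$ lies in $N_a$. The same argument (with $b$ in place of $a$) gives $w \in \Ann(b)$. Finally, $w$ and $d$ agree on $N_c$ by construction, so $w \in \Ann(c)+d$. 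This contradicts $C_1 \cap C_2 \cap C_3 = \emptyset$ and finishes the proof.

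\medskip

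\noindent\textbf{Where care is needed.} The whole argument hinges on the coefficient-wise characterisation of membership in a monomial ideal (and of the coset $\Ann(c)+d$), which in turn relies on Proposition~\ref{prop:mr-unique-representation} and Lemma~\ref{lm:nc-gives-monomial-ideals}. I do not expect a genuine obstacle, but the main thing to check carefully is that ``$r-d$ has monomial $m$ with non-zero coefficient iff $r$ and $d$ disagree on $m$'', which is immediate from uniqueness of the expansion, so the construction of $w$ by simple coefficient truncation is well-defined and does what we want.
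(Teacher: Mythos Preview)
Your proof is correct and follows essentially the same approach as the paper: both arguments use the characterisation of annihilator cosets as ``monomial coefficient filters'' (via Lemma~\ref{lm:nc-gives-monomial-ideals} and Proposition~\ref{prop:mr-unique-representation}) and then assemble a common element monomial by monomial. The only minor difference is that you first normalise the tangle via Lemma~\ref{lem:homogenised-tangle} so that two of the cosets are plain annihilators, which lets you write the common element cleanly as the restriction of $d$ to $N_c$; the paper instead treats all three cosets symmetrically and constructs the common element directly from the three coefficient filters.
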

\begin{proof}
Consider a one-element annihilator coset $a+\Ann(b)$ for $a, b \in R$ and let $M_b$ be
the set of monomials of $R$ that annihilate $b$. We note that the no
cancellations property implies that $\Ann(b)=(M_b)$: Indeed,
clearly $(M_b) \subseteq \Ann(b)$, so assume that there is an element
$r \in R$ such that $rb=0$ but $r \notin (M_b)$. But then $r$ contains
at least one monomial $m \notin M_b$, i.e. $mr \neq 0$,
and there is a monomial $m'$ in $b$ such that $mm' \neq 0$,
which contradicts $rb=0$ by the no cancellations property. 

Thus for any $a, b \in R$, the one-element annihilator coset
$a+\Ann(b)$ is described as a ``monomial coefficient filter'': 
$a+\Ann(b)$ contains precisely those elements $r \in R$ 
such that $r$ and $a$ have identical coefficients for every monomial
$m$ in $R$ where $m \notin M_b$.

Now consider three one-annihilator cosets $a_i+\Ann(b_i)$,
$i=1, 2, 3$ and assume that every pair of them have a non-empty
intersection. Let $m$ be an arbitrary monomial in $R$,
and let $m_i=*$ if $m \in M_{b_i}$, and otherwise let
$m_i$ be the coefficient of $m$ in $a_i$.
If there are two values $i \neq j$, $i, j \in [3]$
such that $m_i \neq m_j$ and neither equals $*$,
then $a_i+\Ann(b_i)$ and $a_j+\Ann(b_j)$ have an empty intersection. 
If this does not occur, then let $r$ be the element
where for every monomial $m$, the coefficient of $m$
is zero if $m_1=m_2=m_3=*$ and otherwise the coefficient
is the unique non-$*$ value among $m_1$, $m_2$ and $m_3$. 
Then $r \in a_i+\Ann(b_i)$ for each $i=1, 2, 3$.
\end{proof}

Finally, all monomial Helly rings are Z-1-convex.

\begin{lemma} \label{lm:helly-to-z1c}
  Let $R$ be a monomial ring and $Z=Z_E(R)$ the exponent set of its zero set.
  If $R$ is Helly, then $Z$ is 1-convex.
\end{lemma}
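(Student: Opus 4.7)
The plan is to prove the contrapositive: assume $Z=Z_E(R)$ is not 1-convex, and construct a tangle in $R$. So fix $p,q\in Z$ and $r\in N$ with $r$ on the integer segment from $p$ to $q$. I would first reduce to the \emph{symmetric} (midpoint) case $p=r+w$, $q=r-w$ for some nonzero $w\in\ZZ^n$. This reduction is obtained by passing to the $Z$-points on the line through $p,q$ closest to $r$ on either side and, when the $Z$-gap containing $r$ has even length, moving $r$ to the integer midpoint of two consecutive $Z$-points on that line (which necessarily lies in $N$). Handling the case of an odd $Z$-gap is the main obstacle and will be discussed at the end.

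In the symmetric case, decompose $w=w^+-w^-$ into disjoint non-negative parts. The two key facts are: (i) since $Z$ is upper-closed (being the exponent set of an ideal), $r+w^+=p+w^-\in Z$, so $X^{w^+}\cdot X^r=0$, and symmetrically $X^{w^-}\cdot X^r=0$; and (ii) the points $r-w^+,r-w^-$ lie in $\NN^n$ (forced by $p,q\geq 0$) and the corresponding monomials are nonzero (since $N$ is lower-closed). I then define
\[
C_1=\Ann(X^{w^+}),\quad C_2=\Ann(X^{w^-}),\quad C_3=\Ann(X^{w^+}+X^{w^-})+X^{r-w^-}.
\]
Pairwise non-empty intersections are straightforward: $X^r\in C_1\cap C_2$ by (i); $X^{r-w^-}\in C_1\cap C_3$ since $X^{w^+}X^{r-w^-}=X^p=0$; and $X^{r-w^+}\in C_2\cap C_3$ because the element $X^{r-w^+}-X^{r-w^-}$ lies in $\Ann(X^{w^+}+X^{w^-})$ by a one-line computation using $X^p=X^q=0$ (both cross terms vanish and the two copies of $X^r$ cancel).

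For the triple intersection, suppose $s=X^{r-w^-}+e\in C_1\cap C_2\cap C_3$ with $e\in\Ann(X^{w^+}+X^{w^-})$. Membership in $C_1$ gives $X^{w^+}s=X^p+X^{w^+}e=X^{w^+}e=0$, and membership in $C_2$ gives $X^{w^-}s=X^r+X^{w^-}e=0$, i.e.\ $X^{w^-}e=-X^r$. But $e\in\Ann(X^{w^+}+X^{w^-})$ forces $X^{w^+}e=-X^{w^-}e=X^r\neq 0$, contradicting $X^{w^+}e=0$. Hence $C_1,C_2,C_3$ form a tangle and $R$ is not Helly. The hard part is the reduction to the symmetric configuration: a generic violation $(p,q,r)$ need not place $r$ at the integer midpoint, and on some lines every consecutive pair of $Z$-points is at odd distance, so no such midpoint exists on the original line. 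My plan is to handle this either by finding a new $Z$-point off the original line whose reflection about $r$ is also in $Z$ (exploiting the upper-closedness of $Z$ in $\NN^n$), or by adapting the above construction using the two $N$-neighbours of an odd $Z$-gap—the same structural identity $X^{w^+}e+X^{w^-}e=0$ together with $X^r\neq 0$ should continue to drive the empty triple intersection.
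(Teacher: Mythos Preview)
Your symmetric-case tangle is correct and the verification is clean. But the reduction to a midpoint configuration is a genuine gap, and neither of your proposed fixes is easy to complete. It is not clear that a non-1-convex upward-closed $Z$ must always admit a \emph{midpoint} violation $p,q\in Z$ with $(p+q)/2\in N$; when the minimal gap along the line has odd length (an even number of interior $N$-points) no midpoint exists on that line, and searching off the line for a reflected $Z$-point would require a separate structural argument that you have not supplied.

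The paper bypasses the reduction by treating an arbitrary minimal gap $p_1,q_1,\dots,q_t,p_2$ directly. With $p_0=\min(p_1,p_2)$ taken coordinatewise, $d=d^+-d^-$ the primitive step along the line, and $m_1=X^{d^{+}}$, $m_2=X^{d^{-}}$, the key identity is the telescoping factorization
\[
0 \;=\; X^{p_2}-X^{p_1} \;=\; X^{p_0}\bigl(m_1^{t+1}-m_2^{t+1}\bigr) \;=\; X^{p_0}(m_1-m_2)\bigl(m_1^{t}+m_1^{t-1}m_2+\cdots+m_2^{t}\bigr),
\]
which for $t=1$ is exactly your factorization $(X^{w^+}-X^{w^-})(X^{w^+}+X^{w^-})=0$ after shifting by $X^{p_0}$. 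The paper then uses the three one-element annihilators of $X^{p_0}\sum_{j} m_1^{j}m_2^{t-j}$, of $X^{p_0}m_1^{t}$, and of $X^{p_0}m_2^{t}$, which contain $m_1-m_2$, $m_1$, and $m_2$ respectively, to exhibit the obstruction for every $t\geq 1$. So your ``adapt the construction'' instinct was right; the missing ingredient is precisely to replace the binomial $X^{w^+}+X^{w^-}$ by the full geometric-series cofactor $m_1^{t}+m_1^{t-1}m_2+\cdots+m_2^{t}$, which handles all gap lengths uniformly and makes the midpoint reduction unnecessary.
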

\begin{proof}
  Assume that $Z$ is not 1-convex, so that there are two points $p_1, p_2 \in Z$
  such that the line from $p_1$ to $p_2$ contains an integer point $q \notin Z$  
  Let $p_0=\min(p_1,p_2)$, applied coordinatewise, and by minimality
  assume that all integer points on the line from $p_1$ to $p_2$
  are outside of $Z$. Enumerate the integer points on this line as
  $p_1, q_1, \ldots, q_r, p_2$ for some $r \geq 1$ and let $d \in \ZZ^n$
  be the distance between any two adjacent points on the line.
  Split $d$ as $d=d^+-d^-$ where $d^+, d^- \in \ZZ_{\geq 0}^n$.
  Thus $p_1=p_0+(r+1)d^-$ and $p_2=p_0+(r+1)d^+$. 
  Write $m_1=X^{d^+}$ and $m_2=X^{d^-}$. Now observe
  \[
    X^{p_0}(m_1^r+m_1^{r-1}m_2+\dots+m_2^r)(m_1-m_2)= X^{p_1}-X^{p_2}=0.
  \]
  We claim that $A_1=A(X^{p_0}(m_1^r+\dots+m_2^r))$, $A_2=A(X^{p_0}m_1^r)$
  and $A_3=A(X^{p_0}m_2^r)$ form a non-distributive triple of annihilators.
  Indeed, these contain $m_1-m_2$, $m_1$, and $m_2$, respectively,
  thus $m_1 \in (A_1+A_3) \cap (A_2+A_3)$ but
  $m_1 \notin A_3 \supseteq (A_1 + A_2) \cap A_3$. 
  Thus $R$ is non-Helly.
\end{proof}

\subsection{Related Notions}
\label{sec:geometry-other}

Let us now review connections between lineal and Helly (monomial) rings and other properties studied in the literature on ring theory. 

Recall that a monomial ring $R=\FF[\rx_1,\ldots,\rx_n]/I$ is Z-hole-free if and only if $Z_E(R)=Exp(I)$ is a hole-free set (i.e. it contains all integer points in its own convex hull). 
This has an equivalent characterization in term of {\em integrality closures}.
This is an important concept in ring theory, see e.g. the book of Huneke and Swanson~\cite{HunekeS2006integral}.
Let $R$ denote a commutative ring and $I$ an ideal of $R$.
The {\em integral closure}
of $I$ (denoted by $\overline{I}$) is the set of all elements $r \in R$ that are integral over $I$, i.e.
there exist $a_i \in I^i$ such that
\[r^n+a_1r^{n-1}+a_2r^{n-2}+ \dots + a_{n-1}r+a_n=0.\]
If $I=\overline{I}$, then we say that $I$ is {\em integrally closed}.
We have the following result.

\begin{proposition}[Proposition 1.4.6 in \cite{HunekeS2006integral}]
The exponent set of the integral closure of a monomial
ideal $I$ equals all the integer lattice points in the
convex hull of the exponent set of $I$.
\end{proposition}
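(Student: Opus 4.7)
The plan is to prove the two set inclusions separately, since the equation $Exp(\overline{I}) = cnv(Exp(I)) \cap \NN^n$ is symmetric in nature but each direction uses a very different idea.

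For the easier direction, I would show that every integer lattice point $\alpha \in cnv(Exp(I)) \cap \NN^n$ satisfies $\rx^\alpha \in \overline{I}$. By Carath\'eodory's theorem, $\alpha$ is a rational convex combination $\alpha = \sum_{j=1}^k \lambda_j \beta_j$ with $\beta_j \in Exp(I)$. Clearing a common denominator yields $N \in \NN$ and nonnegative integers $m_j$ with $\sum_j m_j = N$ and $N\alpha = \sum_j m_j \beta_j$. Because $I$ is a monomial ideal, $I^N$ is generated by $N$-fold products of generators of $I$, so $(\rx^\alpha)^N = \rx^{N\alpha} = \prod_j \rx^{m_j \beta_j} \in I^N$. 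Hence $\rx^\alpha$ satisfies the integrality equation $y^N - \rx^{N\alpha} = 0$ with leading coefficient $1$ and constant term in $I^N$, witnessing $\rx^\alpha \in \overline{I}$.

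For the harder direction, I would use a preliminary fact: the integral closure of a monomial ideal in a polynomial ring over a field is again a monomial ideal (this is Proposition~1.4.2 in Huneke--Swanson, and it reduces the problem to analysing monomials in $\overline{I}$). So let $\rx^\alpha \in \overline{I}$ satisfy some monic relation $(\rx^\alpha)^N + a_1 (\rx^\alpha)^{N-1} + \dots + a_N = 0$ with $a_i \in I^i$. Since $\rx^{N\alpha}$ must be cancelled on the left-hand side, there exists some index $i \in [N]$ such that $a_i \rx^{(N-i)\alpha}$ contains $\rx^{N\alpha}$ as a monomial with nonzero coefficient. Writing out a monomial representative of $a_i \in I^i$ yields that $i\alpha = \gamma + \beta_1 + \dots + \beta_i$ for some $\gamma \in \NN^n$ and $\beta_1, \dots, \beta_i \in Exp(I)$. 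Using that $Exp(I)$ is upward-closed (i.e.\ closed under adding any nonnegative vector, since $I$ is an ideal), I would absorb the slack $\gamma$ into $\beta_1$, replacing $\beta_1$ by $\beta_1 + \gamma \in Exp(I)$. Then $\alpha = \tfrac{1}{i}\bigl((\beta_1 + \gamma) + \beta_2 + \dots + \beta_i\bigr)$ displays $\alpha$ as a convex combination of points in $Exp(I)$, so $\alpha \in cnv(Exp(I)) \cap \NN^n$ as required.

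The main obstacle is the backward direction, specifically the step of extracting a clean decomposition of $i\alpha$ from the integrality relation: one must argue that a cancellation producing the specific monomial $\rx^{N\alpha}$ forces one term $a_i \rx^{(N-i)\alpha}$ to contain $\rx^{N\alpha}$ as a monomial, which relies on $\overline{I}$ being monomial (so that we may assume $\alpha$ is a single exponent rather than having to track cancellations across a polynomial). Once this reduction is in place, the upward-closure of $Exp(I)$ does the rest almost mechanically, and the forward direction is essentially a routine denominator-clearing argument. I would therefore spend the bulk of the write-up justifying the monomiality of $\overline{I}$ (or quoting it cleanly from Huneke--Swanson) and carefully unwinding $I^i$ in terms of its monomial generators.
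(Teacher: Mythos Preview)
The paper does not prove this proposition; it is quoted as Proposition~1.4.6 from Huneke--Swanson~\cite{HunekeS2006integral} and used as a black box. Your proposal is a correct outline of the standard proof (and is essentially the argument given in Huneke--Swanson): the forward direction clears denominators to produce a power landing in $I^N$, and the backward direction uses that $\overline{I}$ is monomial, extracts the monomial $\rx^{i\alpha}$ from some $a_i \in I^i$, and uses upward-closure of $Exp(I)$ to realise $\alpha$ as a convex combination.
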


We see that the monomial ring ${\mathbb F}[\rx_1,\dots,\rx_n]/I$
is Z-hole-free if and only if $I$ is integrally closed.
While we are not aware of whether there is a connection between Helly rings and integrally closed ideals in general, i.e. beyond monomial rings, we note an example of a ring that is lineal but not monomial, from Huneke and Swanson~\cite[Example~1.3.3]{HunekeS2006integral}.
Let $\mathbb{F}$ be any finite field and define $R=\mathbb{F}[\rx,\ry]/(\rx^2+\ry^3,\rx\ry^3,\ry^4)$.
Then $R$ is clearly not monomial, but it is lineal; see Appendix~\ref{sec:counter:huneke-swanson}.
Moreover, the ideal $I=(\rx^2+\ry^3,\rx\ry^3,\ry^4)$ is integrally closed.

Another connection goes to the structure of the lattice of ideals of $R$. 
Since the connection is subtle, let us define our terms carefully.
Recall that a \emph{lattice} (in the sense of an algebraic structure) 
is a partially ordered set where every pair of elements
$x$ and $y$ have a unique least upper bound $x \lor y$
and greatest lower bound $x \land y$. 
A lattice $L$ over a set $S$ can be defined either via its partial order,
or by the operations representing $\lor$ (join) and $\land$ (meet).
A lattice $L$ is \emph{distributive} if
\[
  x \land (y \lor z) = (x \land y) \lor (x \land z)
\]
holds for all elements $x, y, z \in L$.
Distributivity is a common and natural structural property of lattices,
e.g.\ it is equivalent to $L$ being representable as a set family,
with join operation $\cup$ and meet operation $\cap$.

There are two distinct lattices that are commonly associated with rings.
The first, most important is the lattice of ideals. 
Let $R$ be a commutative ring. Then the set of all ideals of $R$
forms a lattice under the subset partial order. Denote this lattice $L_R$.
The join and meet operations of $L_R$ are $I \lor J = I+J$
and $I \land J = I \cap J$, respectively.
Let $L_R'$ be the sublattice of $L_R$ generated by annihilators. 

The second is the lattice of annihilators. 
That is, the set of all annihilators of $R$, partially ordered by the subset relation,
also forms a lattice, although it is distinct from $L_R'$
in that the meet and join operations are distinct -- that is,
if $I$ and $J$ are annihilator ideals, then normally $I+J$
is not an annihilator ideal (see~\cite{BGT15-annihilatorsums,BGT18-annihilatorsums,DubeT2021lattice}).

We find a close connection between $L_R'$ being distributive and $R$ being Helly.
We note the following. (In fact, the second part can be generalized to an 
equivalent characterization of Helly rings -- that is, a commutative ring $R$, 
not necessarily monomial, is Helly if and only if the distributive property
holds for all triples of one-element annihilators in $L_R'$. We omit the details.)

\begin{lemma}
  Let $R$ be a monomial ring. If $R$ has the no cancellations property,
  then the lattice $L_R'$ is distributive. On the other hand,
  if $R$ is not Helly, then $L_R'$ is non-distributive. 
\end{lemma}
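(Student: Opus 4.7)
The plan is to prove the two implications separately. For the forward implication, the strategy is to embed $L_R'$ in the distributive sublattice of monomial ideals of $R$. By Lemma~\ref{lm:nc-gives-monomial-ideals}, the no cancellations property guarantees that every annihilator of $R$ is a monomial ideal. Since $Exp(I+J) = Exp(I) \cup Exp(J)$ and $Exp(I \cap J) = Exp(I) \cap Exp(J)$, the family of monomial ideals is closed under sums and intersections, so $L_R'$ is contained in this family. Under the bijection $I \leftrightarrow Exp(I)$, the monomial ideal lattice is order-isomorphic to the lattice of upper sets in $\NN^n$ under $\cup$ and $\cap$. As a sublattice of a powerset lattice this is distributive, and since distributivity passes to sublattices, $L_R'$ is distributive as well.

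For the second implication, the plan is to convert the tangle supplied by Lemma~\ref{lem:homogenised-tangle} into an explicit failure of the distributive identity $(I \cap J) + K = (I+K) \cap (J+K)$. Assume $R$ is not Helly and pick $a,b,c,d \in R$ so that $\Ann(a)$, $\Ann(b)$ and $\Ann(c) + d$ form a tangle. The elementary equivalence
\[
I \cap (K + e) \neq \emptyset \iff e \in I + K,
\]
valid for any ideals $I, K$ of $R$ and any element $e \in R$ (using that ideals are closed under additive inverses: if $x \in I$ and $x - e \in K$ then $e = x + (-(x-e)) \in I + K$, and conversely), translates the three tangle conditions into
\[
d \in \Ann(a) + \Ann(c), \quad d \in \Ann(b) + \Ann(c), \quad d \notin (\Ann(a) \cap \Ann(b)) + \Ann(c).
\]
Setting $(I,J,K) = (\Ann(a), \Ann(b), \Ann(c))$, all of $I$, $J$, $K$, $I \cap J$, $I+K$, $J+K$, and $(I \cap J)+K$ lie in $L_R'$, and $d$ belongs to $(I+K) \cap (J+K)$ but not to $(I \cap J)+K$, witnessing non-distributivity of $L_R'$.

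Neither direction presents a serious obstacle. The only step requiring care is the elementary equivalence between ``the coset $K+e$ meets $I$'' and ``$e$ lies in $I+K$'', which only uses that ideals contain additive inverses; the rest is a direct translation between the tangle vocabulary and the distributive identity. I also note that the argument for the second implication never uses the monomial structure of $R$, matching the parenthetical remark after the lemma statement that this half in fact characterises Helly rings among all finite commutative rings.
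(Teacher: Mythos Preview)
Your proof is correct and follows essentially the same approach as the paper's own proof: for the first part you both invoke Lemma~\ref{lm:nc-gives-monomial-ideals} and embed $L_R'$ into a set lattice via the monomials/exponent sets of monomial ideals; for the second part you both apply Lemma~\ref{lem:homogenised-tangle} and translate the tangle condition into a failure of the distributive law on the triple of annihilators, with your coset--sum equivalence being exactly the element chase the paper carries out explicitly.
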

\begin{proof}
  By Lemma~\ref{lm:nc-gives-monomial-ideals}, every annihilator of $R$ is monomial.
  Let $I$ and $J$ be monomial ideals in $R$, generated by
  monomials $M_I$ and $M_J$, respectively. Then it is easy to see
  that $I+J$ is the ideal generated by $M_I \cup M_J$,
  and $I \cap J$ is the ideal generated by $M_I \cap M_J$.
  Thus $L_R'$ is isomorphic to a lattice whose elements are sets of monomials
  and where the meet and join operations are $\cap$ and $\cup$.
  Then $L_R'$ is distributive.  

  On the other hand, assume that $R$ is not Helly.
By Lemma~\ref{lem:homogenised-tangle} $R$ has a tangle of cosets
$\Ann(a)$, $\Ann(b)$ and $c+\Ann(d)$ for $a, b, c, d \in R$.
We claim that $c \in \Ann(a)+\Ann(d)$,
$c \in \Ann(b)+\Ann(d)$ and
$c \notin (\Ann(a) \cap \Ann(b)) + \Ann(d)$.
Indeed, let $r \in \Ann(a) \cap (c+\Ann(d))$, which exists by assumption.
Then $r \in \Ann(a)$ and $r-c \in \Ann(d)$,
hence $c \in \Ann(a)+\Ann(d)$. By the same argument,
$c \in \Ann(b)+\Ann(d)$. 
Finally, assume that $c=p+q$ where $p \in \Ann(a) \cap \Ann(b)$
and $q \in \Ann(d)$. 
But then $p=c-q$ is contained in all three cosets of the tangle,
a contradiction.
Thus $c \in (\Ann(a) \cap \Ann(b)) + \Ann(d)$.
Thus $L_R'$ is not distributive on
the triple $\Ann(a)$, $\Ann(b)$, $\Ann(d)$.
\end{proof}

In particular, it may hold that a finite commutative ring $R$ 
is Helly if and only if the lattice $L_R'$ is distributive. 
This would be a natural direct generalization of the easily verified property
that $R$ is lineal if and only if $L_R'$ is a chain. 
We note that the lattice of annihilators is not necessarily distributive. 
Let $R=\FF[\rx,\ry,\rz]/I$ where $I$ contains $\rx\ry\rz$ and all monomials of degree at least 4. Then $R$ is Z-hole-free,
but the lattice of annihilators of $R$ is non-distributive; see Appendix~\ref{sec:counter:annihilator-lattice}.
Furthermore, $L_R$ as a whole is not distributive; only $L_R'$ is. 

As referenced above, see Birkenmeier, Ghirati, and Taherifar~\cite{BGT15-annihilatorsums,BGT18-annihilatorsums}
and Dube and Taherifar~\cite{DubeT2021lattice} for related investigations into
the lattice structure of  rings.
Furthermore, a commutative ring $R$ where the entire lattice $L_R$ is distributive
is known as \emph{arithmetical};
however, this is very restrictive, as a local arithmetical ring is precisely a chain ring~\cite{jensen1966arithmetical}.

Furthermore, the structure of $L_R'$ has intriguing connections to the algebraic approach to CSP study (see \cite{Barto:etal:polymorphisms} and the discussion in Section~\ref{sec:discussion}).
In particular, it is possible to show that if the language of equations $u=r$ and $u=rv$, $r \in R$
over a ring $R$ has a so-called \emph{majority} 
polymorphism $m \colon R^3 \to R$,
then annihilator cosets $p+\Ann(q)$ must be \emph{absorbing} for $m$, i.e.\ if 
at least two elements of $a, b, c$ are contained in such a coset
then so must $m(a,b,c)$ be. Furthermore, such absorbing sets are closed under intersection,
i.e.\ they form a semilattice. Since the intersection of (non-disjoint) annihilator cosets
is another annihilator coset, the lattice structure related to annihilators of $R$ 
appears closely connected to such algebraic conditions.

\section{Discussion}
\label{sec:discussion}

Our long-term goal is to understand the parameterized complexity
(with parameter $k$ being the number of unsatisfied equations)
of the $\minlin{r}{R}$ problem for finite commutative rings.
To this end, we have made a number of contributions.
We have proved that $\minlin{2}{R}$
is FPT-approximable within a constant  when $R$ is a Bergen ring.
We additionally proved that $\minlin{r}{R}$
is not FPT approximable within any constant (under the assumption that \FPT $\neq$ \W{1}) when $r \geq 3$ and this result
holds for all rings $R$.
We have demonstrated that there are broad classes of finite commutative rings
such that $\minlin{2}{R}$ is not FPT approximable within certain bounds:
(1) if $R$ is not Helly, then $\minlin{2}{R}$ is not FPT approximable within
any constant (unless \FPT $=$ \W{1}) and (2)
if $R$ is not lineal, then $\minlin{2}{R}$ is not FPT approximable within
$2-\eps$, $\eps > 0$ (unless the \ETH is false). We discuss a selection of possible research directions below.

We know that \minlin{2}{R} is FPT-approximable within a constant
whenever $R$ is a Bergen ring. 
Proposition~\ref{prop:sumapproximation} implies that there are finite and commutative rings $R$ 
such that \minlin{2}{R} is FPT-approximable within a constant
but $R$ is not Bergen. A simple example is the direct sum $R_1 \oplus R_2$ of two
Bergen rings $R_1$ and $R_2$: $R_1 \oplus R_2$ is not local so it is not lineal by Proposition~\ref{prop:lineal-is-local} and
consequently not Bergen by Proposition~\ref{prop:bergenislineal}.
We will now give a more interesting example.
The characterization of solution sets from Section~\ref{sec:annihilators} may suggest that lineality is a necessary condition for FPT
approximability within a constant but this example shows that this in not
true.
Our starting point is the ring
$R_{\rm APX}=\ZZ_2[\rx,\ry] / I$ 
with $I=(\rx^3,\rx\ry,\ry^3)$ that we used as an example in~Section~\ref{sec:annihilators}.
A routine verification shows that $R_{\rm APX}$ is local.
We know that $R_{\rm APX}$ is not lineal since $\Ann(\rx)$ and $\Ann(\ry)$ are incomparable sets:
$\rx \not\in \Ann(\rx)$, $\rx \in \Ann(\ry)$,
$\ry \in \Ann(\rx)$, and $\ry \not\in \Ann(\ry)$.
Hence, Proposition~\ref{prop:bergenislineal} implies that
$R_{\rm APX}$ is not Bergen. 
Theorem~\ref{thm:incomparable-annihilators} implies that
\minlin{2}{R_{\rm APX}} is not FPT-approximable within $2-\eps$
for any $\eps > 0$ so \minlin{2}{R_{\rm APX}} is not in \FPT.

We claim that $\minlin{2}{R_{\rm APX}}$ is constant factor FPT-approximable.
The algorithmic idea is inspired by {\em fiber products} (also known
as {\em pullbacks} or {\em Cartesian squares}).
One may view the approach as a way of decomposing rings
that is radically different from the Bergen-style decomposition
along a chain of ideals.
Let $A,B,C$ be three finite commutative rings
and $f : A \to C$, $g : B \to C$ be two ring homomorphisms.
Define the \emph{fiber product of $f$ and $g$} as  
\[
  \Pi(f,g) := \{ (a,b) \in A \times B : f(a) = g(b). \}
\]
Observe that if $C$ is the trivial ring $\{0\}$,
then $\Pi(f,g)$ is the direct sum of $A$ and $B$.
The ring
$R_{\rm APX} = \ZZ_2[\rx,\ry] / (\rx^3,\rx\ry,\ry^3)$
can be described as
a fiber product defined by
\begin{itemize}
  \item $A = \ZZ_2[\rx] / (\rx^3)$,
  \item $B = \ZZ_2[\ry] / (\ry^3)$,
  \item $C = \ZZ_2$, and 
  \item unit projections $f : A \to C$, $g : B \to C$ (i.e. $f(u)=1$ if and only if $u$ is a unit element).
\end{itemize}
Put another way, $f(r) = r \bmod (\rx)$ and 
$g(r) = r \bmod (\ry)$ and one can verify that
$R = \{ (a,b) \in A \times B : f(a) = f(b) \}$.
This leads to the following algorithmic idea:
view the values in $R$ as pairs $(a, b)$,
write two systems of equations (one over $A$
and another over $B$) and enforce coordination
constraints $f(a) = f(b)$.

More precisely, let $I$ be an instance of $\minlin{2}{R_{\rm APX}}$.
Note that $A$ and $B$ are Bergen (in fact, they are chain rings),
and use ideal chains $(1) \supset (\rx) \supset (\rx^2) \supset (0)$
and $(1) \supset (\ry) \supset (\ry^2) \supset (0)$, respectively.
We will only need to enforce the coordination constraints
on the first level since this is where 
the decision whether a variable will be assigned
unit or non-unit value is made.
Let us use partition with classes 
$\{0\}$, $1 + (\rx)$, $\rx + (\rx^2)$ and $\{\rx^2\}$ in $A$ and,
analogously,
$\{0\}$, $1 + (\ry)$, $\ry + (\ry^2)$ and $\{\ry^2\}$ in $B$.
Construct one graph $G^A = G(S^A)$, where $S^A$ are the equations of $S$ modulo $(\ry)$,
one graph $G^B = G(S^B)$, where $S^B$ are the equations of $S$ modulo $(\rx)$.
Identify the $s$- and $t$-vertices in $G^A$ and $G^B$.
For a variable $v \in V(S)$, we now have three vertices
$v^A_{1+(\rx)}$, $v^A_{\rx+(\rx^2)}$, $v^A_{\rx^2}$ in $G^A$ and
three vertices
$v^B_{1+(\ry)}$, $v^B_{\ry+(\ry^2)}$, $v^B_{\ry^2}$ in $G^B$.
Add undeletable coordination edges $v^A_{1+(\rx)} v^B_{1+(\ry)}$ and
note that they encode constraint ``$v^A$ is a unit if and only if $v^B$ is a unit''.
Run the remaining steps of the algorithm
(shadow removal, branching, reduction to the next ideal, etc.) unchanged.
Let $\alpha : V(S^A) \to A$ and $\beta : V(S^B) \to B$ be the assignments
produced by the algorithm.
Note that the coordination edges force $f(\alpha(v^A)) = g(\beta(v^B))$
for all $v \in V(S)$, so given $\alpha$ and $\beta$,
we can define an assignment $\phi : V(S) \to R_{\rm APX}$
as $\phi(v) = (\alpha(v^A), \beta(v^B))$.
Moreover, if both $\alpha$ and $\beta$ satisfy
an equation modulo $(\ry)$ and $(\rx)$, respectively,
then $\phi$ also satisfies the equation,
so $\cost_S(\phi) \leq \cost_{S^A}(\alpha) + \cost_{S^B}(\beta)$,
and we obtain a constant-factor approximation.

We recall from Section~\ref{sec:annihilators} that the ring
$R_{\rm APX}$ is Helly. For the moment, we have {\em no} examples of Helly rings that
are not FPT-approximable within some constant.
If ${\cal A}$ is the class of rings that are FPT approximable within some
constant, then we know (by Proposition~\ref{prop:sumapproximation}(2)) that ${\cal A}$ must be closed under
taking finite direct sums.
Helly rings are closed under finite direct sums by
Proposition~\ref{prop:helly-direct-sum}.
This motivates the following question: does ${\cal A}$ equal the set of Helly rings?
If this is indeed the case, then Theorem~\ref{thm:non-helly-hard} gives us a dichotomy under the assumption \FPT $\neq$ \W{1}: $\minlin{2}{R}$
is FPT-approximable within a constant if and only if $R$ is Helly.
The geometric approach suggests that it may be a good idea to initially restrict
the attention to monomial rings. One concrete question here is then: are all Z-hole-free rings (such as $R_{\rm APX}$) FPT-approximable within some constant?

We conclude this discussion section with a few words concerning methodology.
The universal-algebraic approach has proved to be a powerful tool in the study of the computational complexity of CSPs. 
We utilize this approach very little in this paper: the only substantial use
is in Section~\ref{sec:solving-equations-with-cosets} when proving tractability of extensions of the $\lin{2}{R}$ problem.
We note, however, that it has sometimes been used in the background: for instance,
the definition of tangles and Helly rings came out of considerations about necessary properties of majority polymorphisms --- a majority polymorphism implies constant-factor FPT approximability for Boolean {\sc MinCSP}~\cite{Bonnet:etal:esa2016}, and more generally excludes the ability to encode non-binary constraints,
such as used in Section~\ref{sec:inapprox-non-helly}.
The most obvious technical reason for us not using the algebraic approach
is that constraint languages based on linear equations over rings are not
very well-understood from a universal-algebraic point of view.
Thus, we found it easier to work directly with tools from commutative algebra and discrete convex analysis.
This is in stark contrast to Boolean languages where Post's lattice
provides a crystal-clear picture, and this explains why
the proof of the FPT dichotomy for Boolean MinCSP could use the algebraic approach
with relative ease.

One possible way of exploiting the algebraic approach is inspired by Dalmau, Krokhin, and Manokaran's~\cite{Dalmau:etal:soda2015} ideas for analysing constant-factor {\em polynomial-time} approximability of $\mincsp{\Gamma}$ over finite domains. They use the algebraic approach for
characterizing $\mincsp{\Gamma}$ for which the basic LP relaxation has ﬁnite integrality gap, and the condition they study is in terms of probability distributions based on certain polymorphisms of $\Gamma$. 
Dalmau, Krokhin, and Manokaran demonstrate that this condition sometimes can
be used for designing polynomial-time constant-factor approximation algorithms. 
The connections with constant-factor FPT approximability are unclear but they are
certainly worth studying. This is encouraged by the fact that the basic LP relaxation
(and other relaxations in the Sherali-Adams hierarchy)
has been important ingredients in many studies of CSP-based optimization problems~\cite{Dalmau:etal:jcss2018,Kolmogorov:etal:sicomp2017,Thapper:Zivny:jacm2016,Thapper:Zivny:sicomp2017}.
Dalmau, Krokhin, and Man also point out that
constant-factor approximation algorithms for MinCSP are closely related to
{\em robust approximation}. 
Robust approximation has been studied extensively in the CSP literature~\cite{Barto:Kozik:sicomp2016,dalmau2013robust,Guruswami:Zhou:tc2012,kun2012linear}.
Given an instance with at most an $\epsilon$-fraction of unsatisfiable constraints, a robust approximation algorithm seeks an assignment satisfying a $(1-g(\epsilon))$-fraction of the constraints, where $g$ satisfies $g(0) = 0$ and $\lim_{\epsilon \rightarrow 0} g(\epsilon) = 0$.  
In particular, CSP$(\Gamma)$ admits a robust algorithm with
so-called {\em linear loss}~\cite{dalmau2013robust} if and only if 
$\mincsp{\Gamma}$ has a constant-factor polynomial-time approximation algorithm.
Once again, the connections with constant-factor FPT approximability are unclear but highly interesting.

 \section*{Acknowledgements}

 We thank Karthik C.S. for pointing out an issue with an earlier version of the proof of Theorem~\ref{thm:gap-paired-cut}.
 The second and the fourth authors were partially supported by
 the Wallenberg AI, Autonomous Systems and Software Program (WASP) funded
 by the Knut and Alice Wallenberg Foundation. 
 In addition, the second author was partially supported by 
 the Swedish Research Council (VR)
 under grant 2021-04371, 
 and the fourth author was partially supported by 
 VR
 under grant 2024-00274.

\newpage

\appendix
\section{Separating Examples}

This appendix contains examples for illustrating
various phenomena. Some of these examples have been found
using computer-assisted search. The properties of these
examples can, in principle, be verified by hand but this
may be a very tedious task.

\subsection{Lineal vs. Hole-free}
\label{sec:counter:double-hole-free}

The following example shows that for a monomial ring $R$, 
$Z_E(R)$ and $\overline{Z_E}(R)$ being hole-free, separately, does not imply that $R$ is lineal. 

\begin{lemma}
  Let $R=[\rx,\ry,\rz]/I$ be the monomial ring containing all
  monomials $\rx^a\rx^b\rz^c$ such that $12a+8b+6c \leq 24$ and in
  addition contains the monomials $\rx\ry\rz$, $\rx\ry^2$ and $\rx^2\ry$,
  and where every other monomial is contained in $I$. Then $R$ is
  non-lineal but both $N=\overline{Z_E}(R)$ and $Z=Z_E(R)$ is hole-free.
\end{lemma}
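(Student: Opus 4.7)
My plan is to establish the three assertions separately, using throughout the linear functional $f(a,b,c) := 12a + 8b + 6c$ and handling the three ``outlier'' non-zero monomials of exponents $(1,1,1), (1,2,0), (2,1,0)$ (all with $f > 24$) as special cases.

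For non-lineality, I will exhibit a magic square. The product $\ry \cdot \rz^3$ has exponent $(0,1,3)$ with $f = 26 > 24$, not among the three exceptions, so $\ry \rz^3 = 0$; similarly $\rz \cdot \rx^2 = \rx^2\rz$ has exponent $(2,0,1)$ with $f = 30$, again not exceptional, so $\rx^2\rz = 0$. On the other hand, $\rz^4 = (0,0,4)$ has $f = 24$ and is non-zero, while $\rx^2\ry = (2,1,0)$ is one of the three exceptional non-zero monomials. Taking $a = \ry$, $b = \rz^3$, $c = \rz$, $d = \rx^2$ yields $ab = cd = 0$ but $ad = \rx^2\ry \neq 0$ and $bc = \rz^4 \neq 0$, contradicting the magic square property.

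For $Z$-hole-freeness, I use $f$ as a separating functional. Since $\gcd(12,8,6) = 2$, $f$ never takes the value $25$, so every minimal generator of $Z$ has $f \geq 26$; moreover the only such generator with $f = 26$ is $(0,1,3)$. Because $Z$ is the upward-closure of its minimal elements $Z_{\min}$, we have $\mathrm{cnv}(Z) = \mathrm{cnv}(Z_{\min}) + \RR_{\geq 0}^3$, so $p \in \mathrm{cnv}(Z)$ iff there exists $q \in \mathrm{cnv}(Z_{\min})$ with $q \leq p$ componentwise. For a standard $N$-point $p$ with $f(p) \leq 24$, any such $q$ would have $f(q) \leq 24 < 26$, impossible. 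For $p = (1,1,1)$ with $f(p) = 26$, monotonicity of $f$ on the positive orthant forces $f(q) = 26$, concentrating all weight on $(0,1,3)$, which does not satisfy $(0,1,3) \leq (1,1,1)$. For $p \in \{(1,2,0), (2,1,0)\}$, the requirement $q_z = 0$ restricts to minimal generators with $z = 0$, namely $(3,0,0), (0,4,0), (1,3,0), (2,2,0)$; a short calculation shows $q_x + q_y = 4 - \lambda_{(3,0,0)} \geq 3$, which saturates the available budget only when $\lambda_{(3,0,0)} = 1$, forcing $q = (3,0,0)$---but this violates $q_x \leq p_x$ in both cases.

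For $N$-hole-freeness, I will first identify the five extreme points of $\mathrm{cnv}(N)$, namely $\{(0,0,0), (2,0,0), (0,3,0), (0,0,4), (2,1,0)\}$; notably $(1,2,0) = \tfrac{1}{2}((0,3,0)+(2,1,0))$ lies on an edge and $(1,1,1) = \tfrac{1}{4}((0,3,0)+(2,1,0)+(2,0,0)+(0,0,4))$ is interior. The polytope has exactly five facets: the three coordinate planes, the slanted facet $8x + 8y + 6z = 24$ through $(2,1,0), (0,3,0), (0,0,4)$ (with $(1,2,0)$ on its boundary), and the slanted facet $2x + z = 4$ through $(2,0,0), (2,1,0), (0,0,4)$ (with $(1,0,2)$ on its boundary). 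A direct enumeration of integer points on each facet, together with the unique interior integer point $(1,1,1)$, recovers exactly the $19$ elements of $N = \{(a,b,c) : f(a,b,c) \leq 24\} \cup \{(1,1,1), (1,2,0), (2,1,0)\}$. The main obstacle is this $N$-hole-freeness step, since the three exceptional non-zero monomials distort the standard tetrahedron into a polytope whose facet structure requires careful identification; the other two parts follow directly from the hyperplane bound.
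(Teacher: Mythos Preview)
Your proof is correct. For non-lineality you exhibit the same Condition~P violation as the paper (the paper writes $\rx^2\ry \cdot \rz^4 = \rx^2\rz \cdot \ry\rz^3$, which unpacks to your magic square with $a=\ry$, $b=\rz^3$, $c=\rz$, $d=\rx^2$). For the two hole-freeness claims the paper simply defers to computer verification, whereas you give explicit by-hand arguments: for $Z$ you use the functional $f$ together with the observation that $\mathrm{cnv}(Z)=\mathrm{cnv}(Z_{\min})+\RR_{\geq 0}^3$ and that the unique minimal generator with $f=26$ is $(0,1,3)$; for $N$ you determine the five vertices and five facets of $\mathrm{cnv}(N)$ and enumerate its $19$ integer points. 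Both arguments check out (in particular, the facet system $x,y,z\geq 0$, $8x+8y+6z\leq 24$, $2x+z\leq 4$ has exactly the five stated vertices, and a direct count gives $11+6+2=19$ integer points, matching $|N|$). Your route is thus strictly more informative than the paper's, at the cost of the facet enumeration for $N$ being somewhat ad hoc.
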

\begin{proof}
  It can be verified (for example by computer support) that $N$ and
  $Z$ are both hole-free. On the other hand, we have a counterexample
  to Condition~P in $\rx^2\ry \cdot \rz^4 = \rx^2\rx \cdot \ry\rz^3$,
  where the two former terms are non-zero and the two latter are zero.
\end{proof}

\subsection{Proving the Bergen Ring Property}
\label{ssec:knt-347}

In order to prove that the rings $R_{KNT}$ and $R_{347}$ are Bergen,
we first lay some foundation. Since the sizes of these rings are too
large to work with in a pure computer proof (e.g.\ they are large
enough that an iteration over all elements becomes prohibitive,
and an iteration over all pairs impossible),
we present a way to prove the Bergen property with significantly
reduced computational demands. 

Let $R$ be a monomial ring and let $M$ be the set of monomials of $R$
with the element $0$ added.
A \emph{weak ordering} $\preceq$ is a relation that is reflexive,
transitive and strongly connected, i.e. for all $x$ and $y$
we have $x \preceq y$ or $y \preceq x$. For a weak ordering $\preceq$, 
we write $x \prec y$ if $y \not \preceq x$ and note that $x \prec y$
is equivalent to $x \preceq y$ and $y \not \preceq x$.
Define a \emph{weak monomial order} over $M$ to be a weak ordering
$\preceq$ of $M$ such that the following hold: 
\begin{enumerate}
\item $1$ is the earliest element and $0$ is the last, i.e.\ if $m \in
  M$ and $m \notin \{0,1\}$ then $1 \prec m \prec 0$
\item $\preceq$ is preserved by multiplication, i.e.
  for any $p, q, r \in M$, if $p \preceq q$
  then $pr \preceq qr$
\end{enumerate}
Note that the strict version $\prec$ may be only weakly preserved by
multiplication, i.e.\ for $p, q, r \in R$ with $p \prec q$
we can have $pr \preceq qr$ and $qr \preceq pr$, so that $pr \prec qr$
does not hold. A weak monomial order is essentially a weakened version
of the notion of an \emph{admissible monomial order} from the theory
of Gröbner bases. While not every monomial  ring with a weak monomial
order is Bergen, properties of the weak monomial order can be used to
greatly reduce the effort of verifying that a ring is Bergen. 

A weak monomial order induces an ordered partition over $M$. Denote
this partition as
\[
  M_0=\{1\} \prec \dots \prec M_\ell=\{0\}.
\]
Define a function $f \colon M \to \{0,\ldots,\ell\}$
such that for every $m \in M$ we have $m \in M_{f(m)}$.
Note that $f$ is a well-defined total function.
We refer to the set of monomials $M_i$, $i \in \{0,\ldots,\ell\}$ as a
\emph{level} of $\preceq$, and say that $i$ is a \emph{singleton level}
if $|M_i|=1$ and \emph{cycle level} otherwise. 
We observe that if we view $M$ as a multiplicative monoid, then the
equivalence relation $\equiv_f$ induced by the ordered partition is a
congruence of $M$. 

\begin{lemma} \label{supp:preserve-levels}
  Let $p, p', q, q' \in M$. If $f(p)=f(p')$ and $f(q)=f(q')$
  then $f(pq)=f(p'q')$. 
\end{lemma}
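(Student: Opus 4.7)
The plan is to unpack the definitions and apply the preservation property of $\preceq$ under multiplication twice. Note first that $f(p)=f(p')$ is equivalent to saying that $p$ and $p'$ belong to the same level $M_{f(p)}$, which in turn is equivalent to $p \preceq p'$ and $p' \preceq p$. Likewise $f(q)=f(q')$ gives $q \preceq q'$ and $q' \preceq q$. The goal is then to establish $pq \preceq p'q'$ and $p'q' \preceq pq$, from which it follows that $pq$ and $p'q'$ lie in the same level, i.e.\ $f(pq)=f(p'q')$.

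For the first inequality, I apply condition (2) of a weak monomial order to $p \preceq p'$ multiplied on the right by $q$, obtaining $pq \preceq p'q$. Then applying condition (2) again to $q \preceq q'$ multiplied on the left by $p'$ (using commutativity of the monomial monoid, or equivalently applying on the right since we are in a commutative ring) gives $p'q \preceq p'q'$. Transitivity of $\preceq$ yields $pq \preceq p'q'$. The symmetric argument, starting from $p' \preceq p$ and $q' \preceq q$, yields $p'q' \preceq pq$.

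There is no real obstacle here; the statement is essentially a routine verification that the equivalence relation induced by the ordered partition associated to $\preceq$ is a congruence for the multiplicative structure of $M$. The only subtle point worth flagging is that the preservation in condition (2) of a weak monomial order is with respect to the weak relation $\preceq$ rather than the strict $\prec$ (as remarked just before the lemma), but this weak preservation is exactly what is needed, since we are chaining two weak inequalities. Altogether the proof is a short three-line argument.
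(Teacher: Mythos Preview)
Your proposal is correct and follows essentially the same approach as the paper: both use that $f(p)=f(p')$ means $p \preceq p' \preceq p$, then apply preservation of $\preceq$ under multiplication together with transitivity to pass through the intermediate product $p'q$ (the paper phrases this as first showing $f(pq)=f(p'q)$ and then $f(p'q)=f(p'q')$, which amounts to the same chaining you perform).
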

\begin{proof}
  By assumption $p \preceq p' \preceq p$, thus $pq \preceq p'q \preceq pq$
  and $f(pq)=f(p'q)$. Repeating the argument we also have $f(p'q)=f(p'q')$. 
\end{proof}

By this lemma, there exists a function
$g \colon \{0,\ldots,\ell\} \times \{0,\ldots,\ell\} \to \{0,\ldots,\ell\}$
such that for any $p, q \in R$ we have $f(pq)=g(f(p),f(q))$.
Clearly, $g$ is symmetric in that $g(i,j)=g(j,i)$.
We note further basic properties of $g$. 

\begin{lemma} \label{supp:merge-cycle-level}
  The function $g$ is non-strictly increasing in each argument. 
  Furthermore, the following hold.
  \begin{enumerate}
  \item If $i<i'$ and $j$ are such that $g(i,j)=g(i',j) < \ell$,
    or symmetrically $g(j,i)=g(j,i') < \ell$, then $g(i,j)$ is a cycle level. 
  \item If $i$ is a cycle level and $j \in \{0,\ldots,\ell\}$
    are such that $g(i,j) < \ell$, then $g(i,j)$ is a cycle level.
  \end{enumerate}
\end{lemma}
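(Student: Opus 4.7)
The plan is to exploit a cancellation property that is specific to monomial rings: since multiplication of monomials corresponds to componentwise addition of exponent vectors, if $p$ and $p'$ are distinct monomials in $M \setminus \{0\}$ and $q \in M \setminus \{0\}$, then the products $pq$ and $p'q$ are either both zero in $R$ or two \emph{distinct} nonzero monomials of $R$. This is the only ring-theoretic ingredient I will need, and it follows directly from the fact that the nonzero monomials of $R$ are in bijection with $\overline{Z_E}(R) \subseteq \NN^n$.

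First I would dispatch the monotonicity claim. Given $i \leq i'$ and an arbitrary $j$, I would fix representatives $p \in M_i$, $p' \in M_{i'}$ and $q \in M_j$. By construction of the partition $M_0 \prec \dots \prec M_\ell$ we have $p \preceq p'$, and the multiplicative preservation of $\preceq$ built into a weak monomial order yields $pq \preceq p'q$. Translating back via $f$ gives $g(i,j) = f(pq) \leq f(p'q) = g(i',j)$, and the other coordinate follows from the already-noted symmetry $g(i,j) = g(j,i)$.

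Next I would prove item~1. Assume $i < i'$ and $k := g(i,j) = g(i',j) < \ell$. Choose $p \in M_i$, $p' \in M_{i'}$, $q \in M_j$; note $p \neq p'$ since $M_i$ and $M_{i'}$ are disjoint blocks of the partition. By hypothesis both $pq$ and $p'q$ lie in $M_k$, and since $k < \ell$ neither equals $0$ in $R$. The cancellation observation above then forces $pq$ and $p'q$ to be two distinct nonzero monomials sitting inside $M_k$, so $|M_k| \geq 2$ and $k$ is a cycle level. The symmetric case is handled identically (or by invoking $g(j,i)=g(j,i')$ and appealing to the already-established left case through symmetry). For item~2, if $i$ is a cycle level I would pick two distinct $p, p' \in M_i$ and any $q \in M_j$; assuming $k := g(i,j) < \ell$ again yields $pq, p'q \in M_k \setminus \{0\}$ with $pq \neq p'q$, so $M_k$ contains at least two elements and $k$ is a cycle level.

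The hard part will not be any single deduction but rather making explicit where the \emph{monomial} structure of $R$ is being used: in a general ring it is perfectly possible to have $p \neq p'$, both $pq$ and $p'q$ nonzero, and $pq = p'q$ (just take any zero divisor relation $rq = 0$ with $r = p - p' \neq 0$). The whole lemma therefore hinges on the single reduction to exponent vectors in $\NN^n$, and once that is isolated the three parts of the statement collapse to essentially one line each.
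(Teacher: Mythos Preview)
Your proof is correct and follows essentially the same approach as the paper's: pick representatives, use multiplicative preservation of $\preceq$ for monotonicity, and for items~1 and~2 observe that the two products are distinct nonzero monomials landing in the same level. You are in fact slightly more careful than the paper, which simply asserts that the two products are ``distinct non-zero monomials'' without spelling out the cancellation-via-exponent-vectors step; your explicit isolation of this as the one place the monomial hypothesis is used is a genuine improvement in clarity.
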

\begin{proof}  
  Let $p \in M_i$, $q \in M_{i'}$ and $r \in M_j$ for some $i, i', j$.
  If $i < i'$, then $p \preceq q$ hence $pr \preceq qr$
  so $f(pr)=g(i,j) \leq f(qr)=g(i',j)$. The other case is symmetric.
  Furthermore, if $g(i,j)=g(i',j) < \ell$ then by assumption
  $pr$ and $qr$ are distinct non-zero monomials in level $f(pq)=f(qr)$,
  which is then a cycle level. The other direction is symmetric.
  Finally, let $i$ be a cycle level, let $p, q \in M_i$
  be distinct monomials and let $m \in M_j$. 
  Then $f(pm)=f(qm)=g(i,j)$ but $pm$ and $qm$ are distinct monomials.
  Thus $g(i,j)$ is a cycle level.  
\end{proof}

The function $f$ extends to $R$ the usual way as $f(r)=\min_{m \in r} f(m)$ where $m$
ranges over all monomials occurring in $r$. We refer to $f$
as the \emph{level map} of $\preceq$. 
Similarly, $\preceq$ extends to $R$ by letting
$p \preceq_f q$ if and only if $f(p) \leq f(q)$. 
We show that these retain the properties of $f$ and $\preceq$. 

\begin{lemma} \label{supp:f-lifted}
  Let $R$ be a monomial ring with a weak monomial order $\preceq$ and
  let $f \colon R \to \{0,\ldots,\ell\}$ be the level map of $\preceq$.
  Then for any $p, q \in R$ we have $g(f(p),f(q))=f(pq)$.
  Furthermore $\preceq_f$ is a weak order over $R$ that extends $\preceq$ and
  is preserved by multiplication.
\end{lemma}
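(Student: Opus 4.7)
The plan is to break the lemma into four assertions—the level identity $g(f(p), f(q)) = f(pq)$ for $p, q \in R$, and the three structural claims about $\preceq_f$—and to arrange the argument so that the last three fall out almost for free once the first is in hand. To attack the level identity I would work with the unique monomial representation given by Proposition~\ref{prop:mr-unique-representation} and lean heavily on Lemmas~\ref{supp:preserve-levels} and~\ref{supp:merge-cycle-level} to control how levels multiply and how they can collide.

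For the inequality $f(pq) \geq g(f(p), f(q))$ the plan is a one-line calculation: writing $p = \sum_i a_i m_i$ and $q = \sum_j b_j n_j$ as sums of nonzero monomials, every monomial $m_i n_j$ appearing in $pq$ has level $g(f(m_i), f(n_j)) \geq g(f(p), f(q))$ by Lemma~\ref{supp:preserve-levels} and monotonicity of $g$ from Lemma~\ref{supp:merge-cycle-level}, so the minimum-level term of $pq$ cannot dip below $g(f(p), f(q))$.

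The opposite inequality is the main obstacle, and I would attack it by decomposing $p = p_0 + p_+$ and $q = q_0 + q_+$ into their minimum-level parts and higher-level remainders, so that the potential level-$k$ contributions to $pq$ (where $k := g(f(p), f(q))$) come from $p_0 q_0$ together with any cross-terms that happen to land at level $k$. The easy case is when $k$ is a singleton level: the contrapositive of Lemma~\ref{supp:merge-cycle-level}(2) then forces $f(p)$ and $f(q)$ to be singleton levels as well, Lemma~\ref{supp:merge-cycle-level}(1) rules out any level-$k$ contribution from the cross-terms, and $p_0 q_0$ reduces to a single scalar multiple of a nonzero monomial. The genuine difficulty is when $k$ is a cycle level, since cross-terms may also land at level $k$ and different pairs $(i,j)$ can produce the same monomial $m_i n_j$; here I would use that in a monomial ring distinct monomials are linearly independent (Proposition~\ref{prop:mr-unique-representation}) and that $m_i n_j = m_{i'} n_{j'}$ forces the corresponding exponent sums to coincide, so that the cancellation pattern at any fixed level-$k$ monomial is sufficiently constrained to ensure at least one level-$k$ monomial in $pq$ survives. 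I expect this cycle-level bookkeeping to be the technical core of the proof.

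Once the level identity is established, the remaining assertions are immediate. $\preceq_f$ is a weak ordering on $R$ because it is the pullback along $f$ of the linear order on $\{0,\dots,\ell\}$ and hence inherits reflexivity, transitivity, and strong connectedness. It extends $\preceq$ because on monomials $f$ is exactly the level index, so $m \preceq n$ iff $f(m) \leq f(n)$. Finally, preservation under multiplication follows from the level identity together with monotonicity of $g$: if $f(p) \leq f(q)$, then for every $r \in R$ we have $f(pr) = g(f(p), f(r)) \leq g(f(q), f(r)) = f(qr)$, so $pr \preceq_f qr$.
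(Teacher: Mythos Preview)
Your easy direction, the singleton-level case, and the three structural claims about $\preceq_f$ are handled correctly. The gap is the cycle-level case: you say the cancellation pattern is ``sufficiently constrained to ensure at least one level-$k$ monomial in $pq$ survives,'' but you give no actual mechanism for this, and it is precisely the content of the lemma. Knowing that $m_i n_j = m_{i'} n_{j'}$ forces equal exponent sums does not by itself rule out total cancellation of the level-$k$ part --- individual level-$k$ monomials \emph{can} cancel (for instance, with $f(\rx)=1$, $f(\ry)=2$, $f(\rx^2)=f(\rx\ry)=3$ the monomial $\rx^2\ry$ vanishes in $(\rx+\ry)(\rx^2+\rx\ry)$ over $\FF_2$), so some global argument is required.

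The paper does not case-split on singleton versus cycle levels; it handles both at once by picking $m$ to be the lexicographically earliest monomial of $M_i$ occurring in $p$ and $m'$ the earliest of $M_j$ in $q$, and arguing that $mm'$ is produced only once in the expansion of $p\cdot q$, hence survives with nonzero coefficient at level $g(i,j)$. If you prefer to stick with your level decomposition $p=\sum_a p_a$, $q=\sum_b q_b$, the missing observation is that the set $\{(a,b):a\ge i,\ b\ge j,\ g(a,b)=k\}$ is a \emph{product} $A\times B$ (a short consequence of the monotonicity of $g$ from Lemma~\ref{supp:merge-cycle-level}), whence the level-$k$ part of $pq$ factors as $\bigl(\sum_{a\in A}p_a\bigr)\bigl(\sum_{b\in B}q_b\bigr)$ in the ambient polynomial ring; both factors are nonzero and $\FF[\rx_1,\ldots,\rx_n]$ is an integral domain, so the product is nonzero. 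Either device turns your outline into a proof, but as written the cycle-level step is only a hope.
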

\begin{proof}
  Let $p, q \in R$, let $f(p)=i$ and $f(q)=j$,
  and let $m$ respectively $m'$ be the earliest-ordered
  monomial of $M_i$ respectively $M_j$ occurring in $p$
  respectively $q$, according to the lexicographical order.
  By assumption $f(mm')=g(i,j)$. Now either $g(i,j)=\ell$,
  in which case every monomial in the product $p \cdot q$
  is zero in $R$, $pq=0$ holds, and $f(pq)=f(0)=\ell$,
  or otherwise the monomial $mm'$ is produced only once
  in $p \cdot q$, and occurs in the result $pq$ with a non-zero
  coefficient. Thus $pq$ has a monomial $mm'$ with $f(mm')=g(i,j)$,
  and by the properties of $g$ cannot produce a monomial
  $m''$ with $f(m'') < f(mm')$. Thus $f(pq)=g(i,j)$.

  For the rest, clearly $\preceq_f$ is a weak order, which
  by construction agrees with $\preceq$ on monomials.
  Furthermore, it is preserved by multiplication:
  Let $p, q, r \in R$ and assume $p \preceq q$.
  Then by the above $f(pr)=g(f(p),f(r)) \leq 
  f(qr)=g(f(q),f(r))$, hence $pr \preceq qr$.   
\end{proof}

Next, let $\equiv$ be an equivalence relation over $R$. We say
that $\equiv$ is \emph{compatible with $\preceq$} if the following hold:
\begin{enumerate}
\item $\equiv$ respects levels, i.e. if $p \equiv q$ then $f(p)=f(q)$
\item Every equivalence class of $\equiv$ is a coset $p+I$ where $I$
  is a monomial ideal containing all monomials $m$ with $f(m) > f(p)$,
  no monomial $m$ with $f(m) < f(p)$, and not all monomials $m$ with $f(m)=f(p)$
\end{enumerate}
Note that if $i$ is a singleton level, say $M_i=\{m\}$, then the
classes of $\equiv$ within level $i$ are precisely
$am+I$ for $a \in \FF \setminus \{0\}$, where 
$I$ is the monomial ideal containing all monomials $m'$ with
$f(m')>i$.
Recall that $\equiv$ is a \emph{matching} if the following hold.
\begin{itemize}
\item $\equiv$ is a coset partition of $R$
\item $\{0\}$ is an equivalence class of $\equiv$
\item For every $p, q, r \in R$, if $p \equiv q$
  then $pr \equiv qr$
\item For every $p, q, r \in R$, if $p \not \equiv q$
  then either $pr \not \equiv qr$ or $pq=qr=0$
\end{itemize}
Note that the first two conditions are met by assumption. 
We show that in order to verify that $\equiv$ is a matching, it
suffices to check the last two conditions for elements $p, q, r \in R$
such that $f(pr)$ is a cycle level. 

\begin{lemma} \label{lm:verify-bergen}
  Let $R=\FF[\rx_1,\ldots,\rx_n]/I$ be a monomial ring,
  let $\preceq$ be a weak monomial order of $R$, and
  let $f \colon R \to \NN$ be the level map of $\preceq$.
  Let $\equiv$ be an equivalence relation compatible with $\preceq$.
  Assume that for all $p, q, r \in R$ such that $f(pr)$ is a cycle level
  we have $p \equiv q$ if and only if $pr \equiv qr$.
  Then $\equiv$ is a matching.
\end{lemma}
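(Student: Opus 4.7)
}
The plan is to directly verify the four matching conditions. The first two (coset partition and $\{0\}$ being a class) are immediate from compatibility: cosetness is part of the definition, and the class structure at level $\ell$ forces $\{0\}$ to be its own class, since $f^{-1}(\ell)=\{0\}$ and the condition ``$I$ does not contain all monomials of level $f(p)$'' collapses at $p=0$. The substantive work is checking the matching conditions (M3) ``$p\equiv q$ implies $pr\equiv qr$'' and (M4) ``$p\not\equiv q$ implies $pr\not\equiv qr$ or $pr=qr=0$'' for all $p,q,r\in R$, given that both implications are assumed only when $f(pr)$ is a cycle level.

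I would split on whether $f(p)=f(q)$. If $f(p)\neq f(q)$, then $p\not\equiv q$ by compatibility, so (M3) is vacuous. For (M4), compute $f(pr)=g(f(p),f(r))$ and $f(qr)=g(f(q),f(r))$ using Lemma~\ref{supp:f-lifted}. If $f(pr)\neq f(qr)$ then $pr\not\equiv qr$ by the level-respecting property of $\equiv$; if $f(pr)=f(qr)=\ell$ then $pr=qr=0$; otherwise the common value $g(f(p),f(r))=g(f(q),f(r))<\ell$ is a cycle level by Lemma~\ref{supp:merge-cycle-level}(1), and the cycle-level hypothesis yields $pr\not\equiv qr$ since $p\not\equiv q$.

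In the remaining case $f(p)=f(q)=j$, set $i:=g(j,f(r))=f(pr)=f(qr)$. If $i=\ell$, both (M3) and (M4) are trivial; if $i$ is a cycle level, the hypothesis applies directly and yields both directions of the equivalence $p\equiv q\iff pr\equiv qr$. The one remaining sub-case is $i<\ell$ with $i$ a \emph{singleton} level, say $M_i=\{m\}$. Here Lemma~\ref{supp:merge-cycle-level}(2) (contrapositive) forces $j$ and $f(r)$ to be singleton levels as well, say $M_j=\{m_j\}$ and $M_{f(r)}=\{m_k\}$, with $m_jm_k=m$ in $R$. Writing $p=am_j+p'$, $q=bm_j+q'$, $r=cm_k+r'$ where the primed parts lie in strictly higher levels, I would use Lemma~\ref{supp:merge-cycle-level}(1) to argue that any product of a higher-level monomial with anything from $r$ (or of $p$ with a higher-level monomial from $r$) lives at level $>i$, since otherwise $i$ would be forced to be a cycle level. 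Hence the coefficient of $m$ in $pr$ is exactly $ac$ and in $qr$ is exactly $bc$, with $c\neq 0$. Combined with the fact that at the singleton level the classes are precisely the cosets $\alpha m_j+I$ indexed by the coefficient on $m_j$, we conclude $p\equiv q\iff a=b\iff ac=bc\iff pr\equiv qr$.

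The main obstacle is the singleton-level sub-case, because the hypothesis gives no information there; everything has to be extracted from the interplay of compatibility and the properties of $g$ in Lemma~\ref{supp:merge-cycle-level}. The key leverage is that a singleton output level forces both input levels to be singletons and rules out any ``cross-contamination'' from higher-degree terms, so the coefficient arithmetic collapses to a single multiplication by the nonzero scalar $c$.
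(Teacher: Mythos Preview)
Your proposal is correct and follows essentially the same approach as the paper's proof. The only difference is organizational: you split first on whether $f(p)=f(q)$ and then on the type of $f(pr)$, whereas the paper fixes $p,r$ with $f(pr)$ not a cycle level and then treats the $\ell$ case and the singleton case (handling $f(p)\neq f(q)$ inside the latter); both routes invoke Lemma~\ref{supp:merge-cycle-level} in the same way to force singleton input levels and to rule out cross-contamination from the higher-level tails.
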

\begin{proof}
  By assumption, $\equiv$ is a coset partition of $R$, and has $\{0\}$
  as an equivalence class. Thus there are only two ways 
  for $\equiv$ to fail at being a matching: either there are elements
  $p, q, r \in R$ such that $p \equiv q$ but $pr \not \equiv qr$,
  or there are elements $p, q, r \in R$ such that
  $p \not \equiv q$ but $pr \equiv qr$ and $pr \neq 0$. 
  We verify these properties.

  Let $M$ be the set of monomials of $R$, plus zero, and 
  let $M_0=\{1\}$, \ldots, $M_\ell=\{\ell\}$ be the ordered partition  
  of $M$ induced by $\preceq$. 
  Let $p, r \in R$ be such that $f(pr)$ is not a cycle level and
  first assume $f(pr)=\ell$, i.e.\ $pr=0$ in $R$.
  By Lemma~\ref{supp:f-lifted}, $f$ is preserved by multiplication.
  Thus if $q \equiv p$ then $f(qr)=f(pr)=\ell$ and $qr=0$, so
  $pr \equiv qr$. On the other hand, if $p \not \equiv q$
  then either $qr \not \equiv pr$ or $pr = qr = 0$.

  We proceed with the more interesting case that $f(pr)<\ell$
  and $f(pr)$ is a singleton level. By Lemma~\ref{supp:merge-cycle-level},
  both $f(p)$ and $f(r)$ are singleton levels as well.
  Let $M_{f(p)}=\{m\}$ and $M_{f(r)}=\{m'\}$.
  Then by the restriction on $\equiv$ for singleton classes,
  we have $p=am+p'$ and $r=cm'+r'$ for some $a, c \in \FF \setminus \{0\}$,
  where $f(p') > f(p)$ and $f(r') > f(r)$.
  Thus $pr=acmm' + x'$ for some $x'$ with $f(x') > f(pr)$,
  by Lemmas~\ref{supp:f-lifted} and~\ref{supp:merge-cycle-level},
  since $f(pr)$ is a singleton level.
  Now let $q \in R$ be such that $p \equiv q$. 
  Then $q=am+q'$, $f(q') > f(q)$, and by the same arguments
  $qr=acmm' + y'$ for some $y'$ with $f(y') > f(pr)$.
  Thus $pr \equiv qr$. 

  Finally, let $q \in R$ be such that $p \not \equiv q$.
  If $f(p) \neq f(q)$, then since $f(pr)$ is not a cycle level,
  $f(pr) \neq f(qr)$ by Lemmas~\ref{supp:f-lifted} and~\ref{supp:merge-cycle-level}
  and $pr \not \equiv qr$. Otherwise we have
  $q=bm+q'$ for some $b \in \FF \setminus \{0,a\}$, $f(q') > f(q)$.
  Then $qr=cbmm' + y'$ for some $y'$ with $f(y') > f(pr)$,
  and since $a, b, c \in \FF$ are elements of a field and $a \neq b$
  we have $ac \neq bc$ and $pr \not \equiv qr$.
  Together with the assumptions made when $f(pr)$ is a cycle level,
  this shows that $\equiv$ meets all conditions of being a matching.
\end{proof}

\subsubsection{$R_{\rm KNT}$ is Bergen}
\label{sec:counter:KNT}

We recall the definition of $R_{\rm KNT}$. 
Define $f(x,y,z)=12x+15y+20z$ and let $I'$ be the ideal containing all monomials $\rx^a \ry^b \rz^c$ such that $f(a,b,c) > 60$.
Then we define the ring $R_{\rm KNT}=\mathbb{Z}_2[\rx,\ry,\rz]/(I',\rx^2\ry\rz)$.

We show that KNT is Bergen using Lemma~\ref{lm:verify-bergen} and computer support.
We define a weak order over the monomials of $R_{\rm KNT}$ as
  \[
    1 \prec \rx \prec \ry \prec \rz \prec \rx^2 \prec \rx\ry \prec \ry^2 \prec \rx\rz
    \prec \rx^3 \prec \ry\rz \prec \rz^2 \prec \rx^2\ry \prec \rx\ry^2 \prec \ry^3 \prec
    \rx^2\rz \prec \rx^4 \prec M'
  \]
where $M'$ is the set of maximal monomials. In addition, the order over $M'$ is refined as
\[
\rx\ry\rz \prec \{\rx^3\ry, \ry^2\rz, \rx\rz^2\} \prec \{\rx^2\ry^2,\ry\rz^2,\rx^3\rz,\rx\ry^3\}
    \prec \rx^5 \prec \ry^4 \prec \rz^3.
\]
The following can be verified by computer.

\begin{proposition}    
    $\preceq$ is a weak monomial order for $R_{\rm KNT}$. 
\end{proposition}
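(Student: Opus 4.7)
The plan is to verify each of the three defining conditions of a weak monomial order in turn, most of it by direct inspection of the listing. First, reflexivity, transitivity, and strong connectedness of $\preceq$ hold tautologically: the displayed chain specifies an ordered partition of the set $M$ of non-zero monomials into levels (with the bracketed groups $\{\rx^3\ry,\ry^2\rz,\rx\rz^2\}$, $\{\rx^2\ry^2,\ry\rz^2,\rx^3\rz,\rx\ry^3\}$, and so on being levels of size greater than one), together with $\{0\}$ as the top level, and the weak ordering on $M$ is by construction the preorder induced by the level index. Second, the endpoint conditions $1 \prec m \prec 0$ for $m \in M \setminus \{0,1\}$ are immediate from the listing together with the observation that every monomial $\rx^a\ry^b\rz^c$ not appearing explicitly in the chain satisfies either $12a+15b+20c > 60$ or is a multiple of $\rx^2\ry\rz$, and hence equals $0$ in $R_{\rm KNT}$.

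The substantive content lies in preservation under multiplication. The plan is to reduce the universal statement ``$p \preceq q$ implies $pr \preceq qr$ for all $p,q,r \in M$'' to a finite list of checks on the level map $f$. Concretely, it suffices to verify that $f$ is \emph{monotone} in each argument on $M \times M \to \{0,\ldots,\ell\}$, i.e.\ that $f(m)\le f(m')$ implies $f(mr)\le f(m'r)$ for every $m,m',r\in M$. Since $M$ is finite (the exponent triples $(a,b,c)$ with $12a+15b+20c\le 60$ form a bounded lattice region, minus the multiples of $\rx^2\ry\rz$), the full table of products $m\cdot m'$ in $R_{\rm KNT}$ can be tabulated and the monotonicity of $f$ verified by direct enumeration. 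In practice it is enough to check the claim when $r$ ranges over the generators $\rx,\ry,\rz$, since then the general case follows by induction on the total degree of $r$ using the transitivity of $\preceq$.

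The main obstacle—and the step that justifies appealing to computer support—is the careful treatment of the cycle levels inside $M'$. When $m \in M'$ and one multiplies by $\rx$, $\ry$, or $\rz$, the product must either be $0$ in $R_{\rm KNT}$ or land in a strictly later sub-level of $M'$; the order refinement $\rx\ry\rz \prec \{\rx^3\ry,\ry^2\rz,\rx\rz^2\} \prec \{\rx^2\ry^2,\ry\rz^2,\rx^3\rz,\rx\ry^3\} \prec \rx^5 \prec \ry^4 \prec \rz^3$ was chosen precisely so that this remains true despite the irregularity introduced by the extra relation $\rx^2\ry\rz=0$. Explicitly, for each element $m$ of each bracketed group one checks that $m\rx$, $m\ry$, $m\rz$ all lie in a common later level (possibly $\{0\}$), and that two representatives of the same level get mapped by any generator to the same level. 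Once these finite tables of generator-by-level products are confirmed, Lemma~\ref{supp:preserve-levels} (applied with the generator multiplications as the base case) propagates the property to all of $M$, and the proposition follows.
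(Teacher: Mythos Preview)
Your approach is essentially the paper's own: the paper simply states this is ``verified by computer,'' and you have supplied a reasonable outline of what such a verification entails, correctly reducing preservation under multiplication to the case of the three generators via induction on degree.

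One caution: your invocation of Lemma~\ref{supp:preserve-levels} at the end is circular as written, since that lemma's proof in the paper already assumes $\preceq$ is preserved by multiplication. Fortunately your argument does not actually need it: the induction ``$m \preceq m'$ implies $mr' \preceq m'r'$ implies $(mr')g \preceq (m'r')g$'' goes through directly once the generator case is checked, with the boundary cases where $mr'$ or $m'r'$ equals $0$ handled by $0$ being the top element. Also, the phrase ``$m\rx$, $m\ry$, $m\rz$ all lie in a common later level'' is slightly off --- they need not lie in the \emph{same} level as each other; what matters is that for a \emph{fixed} generator $g$, all monomials in a given level map into a single target level, and that this target is weakly monotone in the source level. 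With those two wording fixes, your outline is sound.
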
 

Next, we define an equivalence relation $\equiv$ compatible with $\preceq$. As noted, $\equiv$ is determined for every singleton level (in fact, it has a single class for every singleton level since $R$ is defined over the field $\integers_2$).
For the cycle levels, we define classes as follows. 
For $\{\rx^3\ry,\ry^2\rz,\rx\rz^2\}$ we get the following classes.
\begin{enumerate}
  \item $C_{2:01*}$ containing elements $r$ with monomial $\ry^2\rz$ but
    not $\rx^3\ry$
  \item $C_{2:1*0}$ containing elements $r$ with monomial $\rx^3\ry$ but
    not $\rx\rz^2$
  \item $C_{2:*01}$ containing elements $r$ with monomial $\rx\rz^2$ but
    not $\ry^2\rz$
  \item $C_{2:111}$ containing elements $r$ with all three monomials
    $\rx^3\ry$, $\ry^2\rz$, $\rx\rz^2$
\end{enumerate}
The second cycle level is broken up similarly. 
With the same notation, indexing the monomials 
in the order $\rx^2\ry^2, \ry\rz^2, \rx^3\rz, \rx\ry^3$,
the next cycle level is broken into classes
$C_{3:10**}$,
$C_{3:00*1}$,
$C_{3:*10*}$,
$C_{3:0*10}$,
$C_{3:*111}$ and
$C_{3:1110}$.
Again, the following is verified by computer (using the property
that only few pairs of monomials produce a monomial in a cycle level).

\begin{proposition}    
    The equivalence relation $\equiv$ meets the matching condition
    at all cycle levels, i.e.\ for $p, q, r \in R_{\rm KNT}$ such
    that $pr$ is in a cycle level we have $p \equiv q$
    if and only if $pr \equiv qr$. 
\end{proposition}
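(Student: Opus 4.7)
The plan is to verify the matching condition level-by-level via a finite case analysis, using Lemma~\ref{lm:verify-bergen}. That lemma reduces the verification to triples $(p, q, r)$ with $f(pr)$ a cycle level, and the weak monomial order above leaves only two non-zero cycle levels in $R_{\rm KNT}$, namely $L_2 = \{\rx^3\ry, \ry^2\rz, \rx\rz^2\}$ and $L_3 = \{\rx^2\ry^2, \ry\rz^2, \rx^3\rz, \rx\ry^3\}$. The zero level is handled by Lemma~\ref{lm:verify-bergen} itself and needs no work here.

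First, for each cycle level $L \in \{L_2, L_3\}$, I would compute the set $\cP_L$ of pairs of levels $(i,j)$ with $g(i,j) = L$ from the level-multiplication table induced by $\preceq$. This is a purely combinatorial step: for each target monomial $m \in L$, one lists every factorization $m = m_1 m_2$ with $m_1, m_2 \in R_{\rm KNT}\setminus\{0\}$, and records the pair of levels to which $(m_1, m_2)$ belong. Because the cycle levels sit relatively high in $\preceq$ and monomials with $f(x,y,z) = 12x+15y+20z > 60$ are zero (together with $\rx^2\ry\rz = 0$), only a handful of pairs $(i,j)$ occur, which is the content of the remark that ``only few pairs of monomials produce a monomial in a cycle level''.

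Second, for each $(i,j) \in \cP_L$, I would enumerate the coset representatives of $\equiv$ on levels $i$ and $j$, compute the products in $R_{\rm KNT}$, and read off the class on level $L$. Since every $\equiv$-class is a coset of a monomial ideal and the base field is $\integers_2$, each level has only a few classes; the classes $C_{2:\bullet}$ and $C_{3:\bullet}$ are precisely encoded by subsets of the monomials in $L$. The matching condition then reduces to checking two things for every fixed class representative $r$ at level $j$: (a) multiplying by $r$ maps each class on level $i$ into a single class on level $L$, and (b) distinct classes on level $i$ get sent to distinct classes on level $L$. Both are verified by direct multiplication.

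The main obstacle I expect is the bookkeeping when $p$ or $r$ contains several monomials (corresponding to the ``$*$'' coordinates in the class labels). The same target monomial in $L$ may then receive contributions from several monomial pairs, so one must check that relations such as $\rx^2\ry\rz = 0$ and the degree cutoff interact with the class definitions in the expected way, and that over $\integers_2$ the contributions do not cancel into a different class than predicted. This is exactly the part for which the excerpt defers to a computer check, but the search space is a bounded union of products of small class sets, so mechanical enumeration is both complete and short.
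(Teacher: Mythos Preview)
Your proposal is correct and matches the paper's approach: the paper simply states that the proposition is ``verified by computer (using the property that only few pairs of monomials produce a monomial in a cycle level)'', and what you have written is precisely a description of how such a verification is organized. Your identification of the two non-singleton cycle levels and the reduction via Lemma~\ref{lm:verify-bergen} are exactly the intended structure, and your remark about possible cancellations over $\ZZ_2$ when classes carry ``$*$'' coordinates is the one subtlety the computer check must actually confirm.
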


Thus by Lemma~\ref{lm:verify-bergen}, $\equiv$ is a matching. 
Furthermore, since all non-maximal monomials lie in singleton levels,
for any $p, q \in R$ such that $p \equiv q$ and $p, q$ contain non-maximal monomials,
we have $p, q \prec p-q$. Thus we use the order $\prec$ to define a chain of
monomial ideals $I_i$, eliminating one non-maximal monomial at a time,
until only maximal monomials remain. For any such ideal $I_i$ with non-maximal ideals we let 
$\equiv_i$ be $\equiv$ restricted to $I_i$. 
For the level with only maximal monomials, 
the equivalence relation where all classes have cardinality one has the matching property (since multiplication by any non-units annihilates all elements). 
Thus $R_{\rm KNT}$ is Bergen. 

We remark that the cycle levels of $\preceq$ cannot be resolved into singleton levels due to the following statements.
\begin{align*}
  \ry^2 \equiv \ry^2+\rx\rz & \quad \Rightarrow \quad \ry^2\rz \equiv \ry^2\rz+\rx\rz^2 \\
  \rx^3 \equiv \rx^3+\ry\rz & \quad \Rightarrow \quad \rx^3\ry \equiv \rx^3\ry+\ry^2\rz \\
  \rz^2 \equiv \rz^2+\rx^2\ry &\quad \Rightarrow \quad \rx\rz^2 \equiv \rx\rz^2+\rx^3\ry.
\end{align*}
Since $\preceq$ needs to be preserved by multiplication, we thus force
these monomials to be equivalent in $\preceq$. 
There is a similar cycle over the second cycle level.
(The reason that $\equiv$ is still a matching is that the element $\rx^3\ry+\ry^2\rz+\rx\rz^2$ can never be produced via
multiplication by non-units, or more carefully, $C_{2:111}$ is prime.)

\subsubsection{$R_{347}$ is Bergen}
\label{sec:counter:347}

Recall that $R_{347}$ is the monomial ring $R_{347}=\ZZ_2[\rx,\ry,\rz]/I$ 
where $I$ contains every monomial in the upper half-plane spanned by 
$\rx^3$, $\ry^4$, $\rz^7$, inclusive, except that $\rx\ry\rz^3$ is
non-zero. 
That is, $R_{347}=\FF_2[\rx,\ry,\rz]/I$ where $I$ is generated by

\[\{\rx^3,\ry^4,\rz^7,\rx\ry^2\rz^2,\rx\ry^3,\rx\ry\rz^4,\rx\rz^5,
\rx^2\rz^3,\rx^2\ry^2,\rx^2\ry\rz,
\ry\rz^6,\ry^2\rz^4,\ry^3\rz^2\}.\]

This ring is Bergen despite the zero-set $Z_E(R_{347})$ not being hole-free. 

\begin{table}
  \centering
    \begin{tabular}{|r|r|r|}
\hline
      Level & Ann. & Monomials \\
      \hline
      0& 0 & 000  (unit)\\
      1& 1 & 001 \\
      2& 2 & 010 \\
      3& 3 & 002 \\
      4& 4 & 100 \\
      5& 5 & 011 \\
      6& 6 & 003 \\
      7& 7 & 101 \\
      8& 8 & 020 \\
      9& 8 & 012 \\
      10& 9 & 110 \\
      11& 10& 004 \\
      12& 10& 102 \\
      13& 11& 021 \\
      14& 11& 013 \\
      15& 12& 200 \\
      16& 12& 111 \\
      17& 13& 005 \\
      18& 13& 103 \\
      19& 14& 030 \\
      20& 14& 022 \\
      21& 14& 014, 120, 201 \\
      22& 14& 112 \\
      23& 15& 006 \\
      24& 15& 023, 031, 104, 210 \\
      25& 15& 015, 121, 202 \\
      26& 15& 113 \\
    \hline
    \end{tabular}
  \caption{Partial order levels and annihilator indices for $R_{347}$.
    In the table, a monomial $\rx^a\ry^b\rz^c$ is written as $abc$. 
  }
  \label{tab:347}
\end{table}

We use Lemma~\ref{lm:verify-bergen} to show this.
We refer to Table~\ref{tab:347} for a listing of the annihilator
levels and the levels of the weak monomial order $\preceq$ over $R_{347}$.
Monomials are represented in the table by their exponent vectors,
i.e.\ a monomial $m=\rx^a\ry^b\rz^c$ is represented by the triple $abc$ as shorthand.
Again, we use computer support to show that $\preceq$ is preserved by multiplication. 

We define an equivalence relation over $R_{347}$ compatible with $\preceq$.
Again, the classes within singleton levels are fixed, and we describe the cycle levels.

  For a cycle level $i$ containing monomials $m_1, \ldots, m_d$,
  written in this order in Table~\ref{tab:347}, the string
  $i:b_1 \cdots b_d$ represents the set of all elements $r \in R$
  such that $\ell(r)=i$, with restrictions as follows for each $j=1, \ldots, d$:
  \begin{itemize}
  \item If $b_j=0$, then $r$ does not contain $m_j$.
  \item If $b_j=1$, then $r$ contains $m_j$.
  \item If $b_j=*$ then no constraint is imposed on $m_j$ in $r$.    
  \end{itemize}
  Then $C_{i:b_1 \cdots b_d}$ is the set of all elements $r \in R$
  meeting all indicated constraints. Note that this is a coset class
  compatible with $\preceq$. In this notation, the cycle levels are refined as follows.
  \begin{enumerate}
  \item Level $21$ is broken into classes $C_{21:0*1}$, $C_{21:*10}$,
    $C_{21:1*0}$ and $C_{21:111}$
  \item Level $24$ is broken into classes
    $C_{24:**10}$, $C_{24:*10*}$, $C_{24:100*}$, $C_{24:00*1}$,
    $C_{24:0111}$, $C_{24:1011}$ and $C_{24:1111}$
  \item Level $25$ is broken into classes $C_{25:0*1}$, $C_{25:*10}$,
    $C_{25:1*0}$ and $C_{25:111}$
  \end{enumerate}
  It is easily verified that this is a partition of $R$,
  and again, with computer support (or even manually by ad-hoc arguments) 
  we can verify that it meets the precondition of Lemma~\ref{lm:verify-bergen}:
  for any $p, q, r \in R_{347}$ such that $pr$ lies in a cycle level,
  we have $p \equiv q$ if and only if $pr \equiv qr$.
  Thus by Lemma~\ref{lm:verify-bergen}, $\equiv$ is a matching.

The proof is now easily finished. 

\begin{theorem}
  The ring $R_{347}$ is Bergen. 
\end{theorem}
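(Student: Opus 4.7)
The plan is to assemble the ingredients already developed in the preceding paragraphs into a full Bergen witness, mirroring the argument used for $R_{\rm KNT}$ in Section~\ref{sec:counter:KNT}. Specifically, I take the weak monomial order $\preceq$ on $R_{347}$ given by Table~\ref{tab:347}, together with the equivalence relation $\equiv$ that refines the cycle levels $21$, $24$, $25$ into the coset classes $C_{i : b_1 \cdots b_d}$ listed above. With computer support (or a short case analysis), I verify the hypothesis of Lemma~\ref{lm:verify-bergen}: for every $p, q, r \in R_{347}$ such that $f(pr)$ is a cycle level, $p \equiv q$ holds if and only if $pr \equiv qr$. Invoking Lemma~\ref{lm:verify-bergen} then gives that $\equiv$ is a matching.

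Next I construct the ideal chain. Enumerating the levels of $\preceq$ in order $0, 1, 2, \ldots$, I define ideals $R_{347} = I_0 \supset I_1 \supset \cdots \supset I_{27} = \{0\}$ by letting $I_j$ be the monomial ideal consisting of all elements whose level is at least $j$; equivalently, $I_j$ is generated by those monomials $m$ with $f(m) \geq j$. Since $\preceq$ is a weak monomial order, each $I_j$ is indeed an ideal, and $I_j \supsetneq I_{j+1}$ because levels are non-empty. For each $0 \leq j < 27$ I take $\equiv_j$ to be the restriction of $\equiv$ to $I_j$; this is a coset partition by compatibility of $\equiv$ with $\preceq$, and it is a matching because matching is preserved under restriction to an ideal.

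The crucial absorption property $a \equiv_j b \Rightarrow a - b \in I_{j+1}$ follows from the definition of $\equiv$: two elements in the same $\equiv$-class agree on all monomials of their common level (in the prescribed way dictated by the cycle-level refinement), so their difference is supported only on monomials strictly later in $\preceq$, i.e.\ lying in $I_{j+1}$. For the tail of the chain, once $j$ exceeds the last cycle level each ideal $I_j$ is spanned by monomials forming a single singleton class under $\equiv$, and the matching and absorption properties are immediate. Assembling the chain and equivalence relations exhibits a Bergen witness for $R_{347}$.

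The expected main obstacle is the verification step feeding Lemma~\ref{lm:verify-bergen}, specifically at the cycle levels $21$, $24$, $25$, where the failure of $Z_E(R_{347})$ to be hole-free (caused by the non-zero monomial $\rx\ry\rz^3$) forces several monomials to coalesce into cycle levels and generates cross-dependencies among classes like $C_{24:1111}$. One must check that multiplication by any single monomial $m$ respects the coset partition on each cycle level and on each singleton level that it maps into a cycle level; this is where the careful choice of refinements $C_{21:0*1}, C_{21:*10}, C_{21:1*0}, C_{21:111}$ (and analogues at levels $24$, $25$) is essential, and where computer assistance is genuinely useful, as a naive refinement into singletons would fail the matching condition due to the same kind of cycles $(\ry^2 + \rx\rz)\cdot \rz \equiv \ry^2\rz + \rx\rz^2$ observed in the $R_{\rm KNT}$ analysis.
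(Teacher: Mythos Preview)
Your proposal has a genuine gap at the absorption step for the cycle levels. You claim that ``two elements in the same $\equiv$-class agree on all monomials of their common level,'' but this is false for the cycle-level refinements you just described. For instance, the class $C_{21:0*1}$ imposes no constraint on the second monomial $\rx\ry^2$ (the ``$*$'' position), so two elements $a,b \in C_{21:0*1}$ can differ precisely by $\rx\ry^2$, which lies at level $21$. Hence $a-b \notin I_{22}$, and the absorption condition $a \equiv_{21} b \Rightarrow a-b \in I_{22}$ fails. The same issue arises at levels $24$ and $25$. (Indeed, the compatibility definition explicitly allows an equivalence class $p+I$ where $I$ contains \emph{some} monomials at level $f(p)$; that is exactly what the ``$*$'' entries encode.)

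The paper handles this by \emph{switching to a finer equivalence relation} once we reach $I_{21}$. At that point, the only non-zero non-unit products are $p \cdot \rz$ for $p$ at levels $21$ or $22$, so one can refine levels $21$ and $25$ into all seven singleton classes $C_{21:001}, \ldots, C_{21:111}$ (and analogously at level $25$); this finer partition now satisfies absorption into $I_{22}$ and still has the matching property because so few products remain to check. At $I_{24}$ every element is a sum of maximal monomials, so the all-singleton partition works. The point is that the coarser $\equiv$ is needed at early levels to get matching (the cycles force it), while a finer relation is needed at the cycle levels themselves to get absorption --- and the argument that the finer relation is still matching on the smaller ideal is a separate, non-automatic step that your proposal omits.
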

\begin{proof}
  As chain of ideals, we let $I_i$ be the ideal generated by all
  monomials $m$ at level $\ell(m) \geq i$, and for $i \leq 20$ we can
  simply use the equivalence relation $\equiv$ restricted to $I_i$.
  At $I_{21}$, when every remaining monomial $m$ satisfies $\ell(m) \geq 21$, 
  the only non-zero products are $p \cdot \rz$ for $p$ in level 21 or 22.
  At this point, we get a finer partition with matching structure by
  splitting levels 21 and 25 into all seven non-empty classes
  $C_{21:001}$, \ldots, $C_{21:111}$ (and similarly for level 25).
  It should be clear that this retains matching structure, and it also
  meets the condition that if $p \equiv_{21} q$ then $p-q \in I_{22}$.
  At $I_{22}$ and $I_{23}$ we can use either equivalence relation,
  and at $I_{24}$ all remaining elements are sums of maximal monomials,
  hence the only products over them are unit multiplications and
  the finest possible partition into sets of cardinality one
  has matching structure. 
\end{proof}

\subsection{Lineal and not Monomial}

\label{sec:counter:huneke-swanson}

The following is an example from Huneke and Swanson for an integrally closed ideal~\cite[Example~1.3.3]{HunekeS2006integral}.
Let $\mathbb{F}$ be any finite field and define $R=\mathbb{F}[\rx,\ry]/(\rx^2+\ry^3,\rx\ry^3,\ry^4)$.
We claim that this is lineal.

For this, first note that $I=(\rx^2+\ry^3,\rx\ry^3,\ry^4)$ contains precisely those monomials $\rx^a\ry^b$ such that $3a+2b \geq 8$. 
Hence we may write $I'=(\rx^3,\rx^2\ry,\rx\ry^3,\ry^4)$ and $I=I'+(\rx^2+\ry^3)$,
in which case $R'=\mathbb{F}[\rx,\ry]/I'$ is fully convex, by a linear monomial threshold labelling $f(a,b)=3a+2b$ with threshold $T=8$.
We note that the difference between $R$ and $R'$ is not visible under non-unit multiplication.

\begin{lemma}
    Let $r, s \in R$ such that $rs=0$ and neither of $r$ and $s$ is a unit.
    Then $rs=0$ holds also in $R'$. 
\end{lemma}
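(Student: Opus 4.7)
My plan is to analyze the kernel of the natural surjection $\pi \colon R' \to R$ (where $R' = \FF[\rx,\ry]/I'$ and $I = I' + (\rx^2+\ry^3)$), lift $r$ and $s$ to non-units in $R'$, and show that the only way their product could land in this kernel is for it to be zero.

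First, I compute $\ker \pi$ as an ideal of $R'$. It is generated by $\rx^2+\ry^3$, and since $\rx(\rx^2+\ry^3) = \rx^3 + \rx\ry^3 = 0$ and $\ry(\rx^2+\ry^3) = \rx^2\ry + \ry^4 = 0$ in $R'$, the kernel is the one-dimensional $\FF$-subspace $\FF\cdot(\rx^2+\ry^3)$. Since $R$ and $R'$ are local with maximal ideal $\mm=(\rx,\ry)$ and $\pi$ restricts to a surjection $\mm_{R'} \to \mm_R$, I can choose lifts $\hat r, \hat s \in \mm_{R'}$ of $r, s$. The hypothesis $rs = 0$ in $R$ then translates to
\[\hat r \hat s = \kappa(\rx^2+\ry^3)\]
in $R'$ for some $\kappa \in \FF$, and the goal is to show $\kappa = 0$.

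The heart of the argument is to write $\hat r = a\rx + b\ry + r_2$ and $\hat s = c\rx + d\ry + s_2$ where $r_2, s_2 \in \mm_{R'}^2$, and to compare coefficients of basis monomials on both sides. The degree-two part of the equation gives
\[ac\,\rx^2 + (ad+bc)\,\rx\ry + bd\,\ry^2 = \kappa \rx^2,\]
forcing $ac = \kappa$, $bd = 0$ and $ad+bc=0$. Assuming $\kappa \neq 0$ yields $a,c \neq 0$, and the remaining two equations then force $b=d=0$.
The degree-three contribution to $\hat r \hat s$ comes only from products of a linear term with a quadratic term; since $\rx \cdot \rx^2 = \rx \cdot \rx\ry = 0$ in $R'$, only the $\ry^2$-coefficients of $r_2, s_2$ contribute, yielding a multiple of $\rx\ry^2$ (and no $\ry^3$ term). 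On the right-hand side, however, the degree-three part is $\kappa \ry^3$. Since $\rx\ry^2$ and $\ry^3$ are linearly independent in $R'$, this forces $\kappa = 0$, a contradiction.

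The main obstacle is book-keeping: I need the complete multiplication table of monomials in $R'$ (in particular the vanishing of $\rx^3$, $\rx^2\ry$, $\rx\ry^3$, $\ry^4$ and all higher-degree monomials beyond $\rx\ry^2$ and $\ry^3$) to be sure no further higher-degree terms of $\hat r$ and $\hat s$ can contribute to the $\ry^3$-coefficient. This follows because any such contribution lies in $\mm_{R'}^4 = 0$, so the argument is clean, but I should spell out this vanishing before carrying out the coefficient comparison.
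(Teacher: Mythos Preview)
Your proof is correct. The kernel computation, the choice of lifts in $\mm_{R'}$, the degree decomposition, and the use of $\mm_{R'}^4=0$ are all sound, and the contradiction from the $\ry^3$-coefficient goes through exactly as you describe.

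Your route differs somewhat from the paper's. The paper works with the weighted degree $f(x,y)=3x+2y$ rather than the standard degree: it first discards from $r$ and $s$ all monomials with $f\geq 6$ (these are maximal, so annihilated by any non-unit), leaving only $\ry,\rx,\ry^2,\rx\ry$. It then expands the product in $R$ (where $\rx^2=-\ry^3$), obtaining in particular the merged coefficient $(ar+cp-bq)$ of $\ry^3$, and argues by a short case analysis from $ap=0$ and $aq+bp=0$ that at most one of $ar,cp,bq$ can be non-zero, so the merged coefficient vanishing forces each piece to vanish separately---i.e.\ no cancellation between $\rx^2$ and $\ry^3$ is possible. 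Your approach instead stays in $R'$, uses the standard grading and $\mm_{R'}^4=0$, and pins down directly that once $b=d=0$ the degree-three part of the product lies entirely in $\FF\cdot\rx\ry^2$, so the required $\kappa\ry^3$ is unattainable. Your argument avoids the case split and is a bit cleaner; the paper's argument more visibly exploits the $f$-grading that witnesses $R'$ being fully convex.
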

\begin{proof}
    Define $f(x,y)=3x+2y$ and assume that $r, s \in R$ are non-units such that $rs=0$. 
    Note that if $f(x,y) \geq 6$ then $\rx^x\ry^y$ is a maximal non-zero monomial, hence
    the inclusion of such a term in $r$ or $s$ makes no difference; we consider the case with no such monomials.
    Then we can write
    \[
    rs=(a\ry+b\rx+c\ry^2+d\rx\ry)(p\ry+q\rx+r\ry^2+s\rx\ry)=0=ap\ry^2 + (aq+bp)\rx\ry + (ar+cp-bq)\ry^3 + \ldots,
    \]
    where $\ldots$ has weight at least 7 in $f$. Thus $ap=0$, so $a=0$ or $p=0$, and also $aq+bp=0$.
    In the first case, the second condition requires $b=0$ or $p=0$, in the second case the second condition
    requires $a=0$ or $q=0$. In all cases, the term $(ar+cp-bq)$ has at most one non-zero component.
    Thus no ``cancellation'' is possible here, and the only possible zero products are those
    where all produced monomials occur in $I'$. 
\end{proof}

It immediately follows that $R$ cannot have elements $a, b, c, d \in R$ which violate the magic square property.

\subsection{Non-distributive Annihilator Lattice}
\label{sec:counter:annihilator-lattice}

Recall that the annihilator lattice of a ring is the collection of
annihilator ideals ordered by the subset relation. Thus it has the
operations $\Ann(I) \land \Ann(J) = \Ann(I \cup J)= \Ann(I) \cap \Ann(J)$ 
and $\Ann(I) \lor \Ann(J)$ is the unique smallest annihilator
that contains $\Ann(I) \cup \Ann(J)$. 

On the other hand, the lattice of ideals is the collection of all
ideals ordered by the subset relation, in which case it has operations
$I \land J = I \cap J$ and $I \lor J = I+J$. 
Thus, although the set of annihilators are a subset of the set of
ideals, they do not form a sublattice since normally $\Ann(I)+\Ann(J)$
will not be an annihilator ideal. 

Let $R=\FF[\rx,\ry,\rz]/((\rx,\ry,\rz)^4,\rx\ry\rz)$.
This is Z-hole-free, hence Helly, but the annihilator lattice contains
a diamond sublattice.

\begin{lemma}
  $R$ is $Z$-hole-free.
\end{lemma}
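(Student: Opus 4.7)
My plan is to directly exhibit $Z = Z_E(R)$ and then show that every point of the complement lies outside $\operatorname{cnv}(Z)$ by a one-line projection argument.

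First I would unpack the generators: the ideal $((\rx,\ry,\rz)^4, \rx\ry\rz)$ is monomial, and a monomial $\rx^a\ry^b\rz^c$ lies in it iff $a+b+c \geq 4$ or $\min(a,b,c) \geq 1$. Consequently
\[
  Z = \{(a,b,c) \in \NN^3 : a+b+c \geq 4\} \;\cup\; \{(a,b,c) \in \NN^3 : a,b,c \geq 1\},
\]
and the complement $N = \NN^3 \setminus Z$ consists of the finitely many lattice points with $a+b+c \leq 3$ excluding $(1,1,1)$.

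The key observation I would exploit is that every $p \in N$ has at least one coordinate equal to $0$. Indeed, the only lattice point in the simplex $a+b+c \leq 3$ with all coordinates positive is $(1,1,1)$, and that point lies in $Z$. So to prove $Z$ is hole-free it suffices to show that no point $p = (a,b,c) \in N$ lies in $\operatorname{cnv}(Z)$; by symmetry I may assume $c = 0$, so $a + b \leq 3$.

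The main (and only) step is then a projection argument: if $p = \sum_i \lambda_i z_i$ were a convex combination with $z_i \in Z$ and $\lambda_i > 0$, then the third-coordinate equation $0 = \sum_i \lambda_i (z_i)_3$ combined with $(z_i)_3 \geq 0$ forces $(z_i)_3 = 0$ for every $i$. But the only elements of $Z$ with third coordinate $0$ are the points $(a', b', 0)$ with $a' + b' \geq 4$, since $(1,1,1)$ has third coordinate $1$. Taking the convex combination in the first two coordinates would yield $a + b \geq 4$, contradicting $a+b \leq 3$. Hence $N \cap \operatorname{cnv}(Z) = \emptyset$, and therefore $Z = \operatorname{cnv}(Z) \cap \NN^3$, which is precisely the $Z$-hole-free condition. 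There is no real obstacle here beyond correctly identifying that the ``extra'' point $(1,1,1)$ is the unique positive-coordinate low-degree point and thus cannot contribute to expressing any $p \in N$ as a convex combination.
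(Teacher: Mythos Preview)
Your proof is correct. The paper takes a slightly different (and marginally shorter) route: instead of projecting onto a coordinate hyperplane, it uses the total-degree linear functional $a+b+c$ directly. Since every point of $Z$ has total degree at least $3$, with equality only at $(1,1,1)$, any convex combination of points in $Z$ that is not purely supported on $(1,1,1)$ has total degree strictly greater than $3$; hence any lattice point in $\operatorname{cnv}(Z)$ is either $(1,1,1)$ or has degree $\geq 4$, and in either case lies in $Z$.

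Your argument buys a little more structural insight: the observation that every point of $N$ has a zero coordinate, combined with the fact that $(1,1,1)$ cannot survive projection to a coordinate hyperplane, makes explicit why the ``extra'' point $(1,1,1)$ causes no trouble. The paper's degree argument is more uniform but less geometric. Both are one-step linear-functional arguments and equally valid.
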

\begin{lemma}
  $Z_E(R) \subset \naturals^3$ contains all points of Hamming weight at
  least 4 and in addition the point $(1,1,1)$. Consider a convex
  combination $\sum_i \alpha_i p_i$ of points $p_i \in Z$,
  producing a point $q \in \naturals^3$. 
  Then either this ``combination'' is just $1 \cdot (1,1,1)$
  and produces $q=p$, or the Hamming weight of $q$ is at least 4.
\end{lemma}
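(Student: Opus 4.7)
The plan is to observe that the function $w\colon \RR^3 \to \RR$ given by $w(a,b,c) = a+b+c$ is linear (here ``Hamming weight'' in this context must mean the $L_1$-norm/total degree, since the point $(1,1,1)$ has only weight $3$ in the Hamming sense and one cannot exceed $3$ in $\NN^3$; the statement then matches the fact that $(\rx,\ry,\rz)^4$ kills all monomials of total degree $\geq 4$). This lets us transfer the weight condition through convex combinations by linearity.

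Concretely, I would argue as follows. First I would record that every point $p \in Z_E(R)$ satisfies $w(p) \geq 3$: either $p = (1,1,1)$ with $w(p)=3$, or $w(p) \geq 4$ by the definition of $Z_E(R)$. Next, given $q = \sum_i \alpha_i p_i$ with $\alpha_i \geq 0$ and $\sum_i \alpha_i = 1$, linearity gives
\[
  w(q) \;=\; \sum_i \alpha_i\, w(p_i) \;\geq\; \sum_i \alpha_i \cdot 3 \;=\; 3.
\]
If $w(q) \geq 4$, we are done. Otherwise $w(q) = 3$, in which case equality in the inequality above forces $w(p_i) = 3$ for every $i$ with $\alpha_i > 0$; but the only point of $Z_E(R)$ of $L_1$-weight exactly $3$ is $(1,1,1)$, so all such $p_i$ equal $(1,1,1)$ and hence $q = (1,1,1)$.

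The argument is essentially a one-line application of linearity of $w$ on a carefully chosen halfspace, so there is no real obstacle; the only thing to watch out for is the slightly idiosyncratic use of ``Hamming weight'' to mean total degree rather than the coding-theoretic notion. Once this is clarified, the statement follows immediately, and together with the fact that $(1,1,1) \in Z_E(R)$ this shows $Z_E(R) = \cnv(Z_E(R)) \cap \naturals^3$, establishing $Z$-hole-freeness of $R$.
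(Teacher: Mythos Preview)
Your argument is correct and is exactly the approach the paper has in mind: the paper's own ``proof'' is really just the statement itself (the second lemma appears to be a mislabeled proof environment for the preceding lemma), so your filling in of the linearity-of-total-degree argument is precisely the intended reasoning. Your observation that ``Hamming weight'' here must mean the $L_1$-norm/total degree is also on point and resolves the only potential confusion in the statement.
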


It follows that $R$ has the no cancellations property and is Helly.
By the former, in order to capture all annihilators of $R$ it suffices
to consider $\Ann(m)$ for monomials $m$ in $R$.
These are as follows. Let $M=(\rx,\ry,\rz)$. 

\begin{lemma}
  Let $m=\rx^a\ry^b\rz^c$ be a monomial in $R$.
  Then $\Ann(m)$ contains $M^{4-a-b-c}$, and in addition
  if $\max(a,b,c) \leq 1$ then it contains the monomial
  $\rx^{1-a}\ry^{1-b}\rz^{1-c}$. It contains no other monomials.
\end{lemma}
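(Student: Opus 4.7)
The proof is essentially a direct unpacking of what it means for a monomial product to lie in the ideal $I=(M^4,\rx\ry\rz)$, and the main task is a short two-direction case analysis rather than any deep argument.

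The plan is to observe first that $I$ is a monomial ideal, so for monomials $m=\rx^a\ry^b\rz^c$ and $m'=\rx^{a'}\ry^{b'}\rz^{c'}$ we have $m' \in \Ann(m)$ precisely when $mm' \in I$, and membership of a monomial in $I$ is equivalent to $\deg(mm')\geq 4$ or $\rx\ry\rz \mid mm'$. I would then prove the two claimed containments. If $\deg(m')\geq 4-a-b-c$, then $\deg(mm')\geq 4$ and so $mm'\in M^4 \subseteq I$, giving $M^{4-a-b-c}\subseteq \Ann(m)$. If in addition $\max(a,b,c)\leq 1$, then taking $m'=\rx^{1-a}\ry^{1-b}\rz^{1-c}$ gives $mm'=\rx\ry\rz\in I$, so this extra monomial also annihilates $m$.

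For the opposite direction (there are no other monomial annihilators), I would take any monomial $m'\in\Ann(m)$ not in $M^{4-a-b-c}$, i.e. with $\deg(mm')\leq 3$. Since $mm'\in I$ but $mm'\notin M^4$, the only way is $\rx\ry\rz \mid mm'$, which forces $a+a'\geq 1$, $b+b'\geq 1$, $c+c'\geq 1$. Summing yields $\deg(mm')\geq 3$, and combining with $\deg(mm')\leq 3$ gives equality everywhere: $a+a'=b+b'=c+c'=1$. This forces $a,b,c\in\{0,1\}$ (in particular $\max(a,b,c)\leq 1$) together with $a'=1-a$, $b'=1-b$, $c'=1-c$, which is exactly the single extra monomial named in the statement.

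The argument is completely elementary once one writes out the two generators of $I$, and I do not anticipate any real obstacle: the only point to be careful about is the case analysis around the degree budget, in particular the observation that once we assume $\deg(mm')\leq 3$ the divisibility condition $\rx\ry\rz\mid mm'$ saturates all three exponent sums simultaneously, which is what pins the extra monomial down to a single candidate and forces the side condition $\max(a,b,c)\leq 1$.
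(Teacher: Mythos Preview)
Your proof is correct and follows essentially the same approach as the paper. The paper's own proof simply observes that the monomial annihilators of $m$ are those $\rx^d\ry^e\rz^f$ with $(a+d,b+e,c+f)\in Z_E(R)$ and then says ``the claim follows easily''; you have spelled out exactly that easy case analysis, including the key saturation step where $\deg(mm')\leq 3$ together with $\rx\ry\rz\mid mm'$ forces all three exponent sums to equal~$1$.
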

\begin{proof}
  For any monomial $m$, viewed as a point $(a,b,c) \in \naturals^3$,
  the generators of annihilators of $m$ are simply the monomials
  $\rx^d\ry^e\rz^f$ such that $(a+d,b+e,c+f) \in Z_E(R)$.
  The claim follows easily.
\end{proof}

We can now prove the negative result.

\begin{lemma}
  The annihilators $\Ann(\rx\ry)$, $\Ann(\rx\rz)$, $\Ann(\ry\rz)$
  generate a diamond sublattice in the annihilator lattice of $R$. 
\end{lemma}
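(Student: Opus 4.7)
Since $R$ is Z-hole-free (by the preceding lemma in the appendix), Lemma~\ref{lm:zhf-to-nc} tells us that $R$ has the no cancellations property, and consequently by Lemma~\ref{lm:nc-gives-monomial-ideals} every annihilator of $R$ is a monomial ideal. The plan is therefore to compute the three annihilators and all relevant meets and joins at the level of monomials, and then check the diamond shape directly. Throughout, write $M=(\rx,\ry,\rz)$.

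First, I would compute the three annihilators as monomial ideals. Using the explicit description of $Z_E(R)$ (the points of Hamming weight $\geq 4$ together with $(1,1,1)$), a monomial $m$ lies in $\Ann(\rx\ry)$ iff either $m=\rz$ (giving $\rx\ry\rz=0$) or $\deg m \geq 2$ (forcing degree $\geq 4$). By no cancellations this yields
\[
\Ann(\rx\ry)=(\rz)+M^2,\qquad \Ann(\rx\rz)=(\ry)+M^2,\qquad \Ann(\ry\rz)=(\rx)+M^2.
\]
These three ideals are pairwise distinct and pairwise incomparable (e.g.\ $\rz\in \Ann(\rx\ry)\setminus \Ann(\rx\rz)$ since $\rz\cdot\rx\rz=\rx\rz^2\neq 0$).

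Next I would compute the three pairwise meets. Since the annihilators are monomial ideals, meets are just set-theoretic intersections of their monomial supports. The only degree-$1$ monomial in $\Ann(\rx\ry)$ is $\rz$, in $\Ann(\rx\rz)$ is $\ry$, and in $\Ann(\ry\rz)$ is $\rx$; these are pairwise disjoint, while all three ideals agree above degree $2$. Hence
\[
\Ann(\rx\ry)\wedge \Ann(\rx\rz)=\Ann(\rx\ry)\wedge \Ann(\ry\rz)=\Ann(\rx\rz)\wedge \Ann(\ry\rz)=M^2,
\]
and $M^2=\Ann(\rx^2)$ is indeed an annihilator.

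The slightly trickier step is the joins, since the join in the annihilator lattice of two annihilators $A,B$ is $\Ann(\Ann(A+B))$, not just $A+B$. I would compute $\Ann(\rx\ry)+\Ann(\rx\rz)=(\ry,\rz,\rx^2)$ and then determine its annihilator monomial-by-monomial: a monomial $m$ satisfies $m\ry=0$, $m\rz=0$, and $m\rx^2=0$ simultaneously iff $\deg m \geq 3$ (the degree-$2$ exceptions $\rx\rz$ and $\rx\ry$ respectively fail one of the other conditions). Therefore $\Ann((\ry,\rz,\rx^2))=M^3$, and analogously $\Ann(M^3)=M$, so the annihilator closure gives $\Ann(\rx\ry)\vee \Ann(\rx\rz)=M$. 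By symmetry the same argument applies to the other two pairs, giving $M$ as the common join. Since $\rx\in M\setminus M^2$, the common join and common meet are distinct, so together with the three pairwise incomparable middle elements we obtain a genuine diamond sublattice $M^2 \,<\, \Ann(\rx\ry),\Ann(\rx\rz),\Ann(\ry\rz)\, <\, M$ inside the annihilator lattice of $R$. The main technical care needed is in the join computation, where one must remember to take the double annihilator rather than the ideal sum.
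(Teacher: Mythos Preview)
Your proof is correct and follows essentially the same approach as the paper: compute the three annihilators as $(\rz)+M^2$, $(\ry)+M^2$, $(\rx)+M^2$, show all pairwise meets equal $M^2$, and show all pairwise joins equal $M=\Ann(M^3)$. The only cosmetic difference is that you justify the join via an explicit double-annihilator computation, whereas the paper argues more briefly that $M$ is the unique smallest annihilator containing two distinct linear monomials.
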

\begin{proof}
  Since each of these is multilinear of degree 2, they contain $M^2$
  and in addition respectively $\rz$, $\ry$ and $\rx$. Their pairwise
  intersections are simply $M^2$. Finally, any pair of them yields 
  the annihilator $M=\Ann(M^3)$ since it is the only annihilator that
  contains two distinct linear monomials.
\end{proof}

Since $R$ is not a chain ring, the lattice of ideals itself is also
not distributive. For example, the elements $(\rx^3)$, $(\ry^3)$ and
$(\rx^3+\ry^3)$ induce a diamond sublattice. However, naturally,
these are not annihilator ideals.

\newpage


\newcommand{\etalchar}[1]{$^{#1}$}

\end{document}